\newtheorem{thm}{Theorem}
\newtheorem{lem}{Lemma}
\newtheorem{prop}{Proposition}
\newtheorem{cor}{Corollary}
\theoremstyle{remark} 
	\newtheorem{rem}{Remark}
\theoremstyle{definition} 
	\newtheorem{mydef}{Definition} 
	\newtheorem{exmp}{Example}
\newcommand{\abs}[1]{\left| #1 \right|}
\newcommand{\norm}[1]{\left\| #1 \right\|}
\newcommand{\BigO}[1]{\mathcal{O}{\textstyle\left( #1\right)}}
\newcommand{\defn}{:=}
\newcommand{\ie}{\textit{i.e.,}\,}
\newcommand{\eg}{\textit{e.g.,}\,}
\newcommand{\etc}{\textit{etc.}\, }
\newcommand{\nb}{\textit{n.b.,}\, }
\newcommand{\vs}{\textit{vs.}\;}
\newcommand{\nocontentsline}[3]{}
\newcommand{\tocless}[2]{\bigskip\bgroup\let\addcontentsline=\nocontentsline#1{#2}\egroup}
\newcolumntype{L}{>{$}l<{$}} \newcolumntype{C}{>{$}c<{$}}
\begin{document}

\title{\textbf{Pathogen evolution: \\slow and steady spreads the best}}

\author[1]{Todd L. Parsons}
\author[1,2]{Amaury Lambert}
\author[3]{Troy Day}
\author[4]{Sylvain Gandon}

\affil[1]{Laboratoire de Probabilit\'es et Mod\`eles Al\'eatoires (LPMA), UPMC Univ Paris 06, CNRS UMR 7599,
Paris, France}
\affil[2]{Center for Interdisciplinary Research in Biology (CIRB), Coll\`ege de France, CNRS UMR 7241, INSERM U1050, PSL Research University, Paris, France}
\affil[3]{Department of Biology, Queen's University, Kingston, Canada}
\affil[4]{Centre d'Ecologie Fonctionnelle et Evolutive (CEFE), CNRS UMR 5175, Universit\'e de Montpellier--Universit\'e Paul-Val\'ery Montpellier--EPHE, Montpellier, France}
\date{\today}

\maketitle

\noindent E-mails: TLP - todd.parsons@upmc.fr; AL - amaury.lambert@upmc.fr; TD - tday@mast.queensu.ca; SG -  sylvain.gandon@cefe.cnrs.fr
\bigskip

\defaultbibliography{VirStocBibliography}
\defaultbibliographystyle{apalike}

\begin{bibunit}

\begin{abstract}
The theory of life history evolution provides a powerful framework to understand the evolutionary dynamics of pathogens in both epidemic and endemic situations. This framework, however, relies on the assumption that pathogen populations are very large and that one can neglect the effects of demographic stochasticity. Here we expand the theory of life history evolution to account for the effects of finite population size on the evolution of pathogen virulence. We show that demographic stochasticity introduces additional evolutionary forces that can qualitatively affect the dynamics and the evolutionary outcome. We discuss the importance of the shape of pathogen fitness landscape and host heterogeneity on the balance between mutation, selection and genetic drift. In particular, we discuss scenarios where finite population size can dramatically affect classical predictions of deterministic models. This analysis reconciles Adaptive Dynamics with population genetics in finite populations and thus provides a new theoretical toolbox to study life-history evolution in realistic ecological scenarios.  
\end{abstract}

\noindent{Keywords: epidemiology, life-history evolution, genetic drift, bet hedging, Adaptive Dynamics, vaccination.} 

\tocless\section{Introduction}
\label{sec:intro}

Why are some pathogens virulent and harm their hosts while others have no effect on host fitness? Our ability to understand and predict the evolutionary dynamics of pathogen virulence has considerable implications for public-health management \citep{dieckmann2005adaptive,bull2014,gandon2016}. A classical explanation for pathogen virulence involves trade-offs with other pathogen life-history traits. If certain components of pathogen fitness, such as a high transmission rate or a long duration of transmission, necessarily require that the pathogen incidentally harm its host then virulence is expected to evolve \citep{frank1996models}. A now classical way to develop specific predictions from this hypothesis is to adopt the Adaptive Dynamics formalism \citep{geritz1998evolutionarily,Metz1992,dieckmann2005adaptive,frank1996models}. This approach relies on the assumption that the mutation rate is small so that the epidemiological dynamics occur on a faster timescale than the evolutionary dynamics \citep{anderson1992infectious,frank1996models,alizon2009virulence,cressler2016adaptive}. Under simple epidemiological assumptions (no co-infections with different genotypes) the evolutionarily stable level of virulence maximizes the basic reproduction ratio $R_{0}$ of the pathogen (but see \eg \citet{nowak1994superinfection,van1995dynamics} for more complex epidemiological scenarios). \\

Adaptive Dynamics models allow one to predict long-term evolution but they tell us little about what we should expect to observe if epidemiological and evolutionary processes occur on a similar timescale. Novel theoretical approaches have therefore been developed to address this issue \citep{lenski1994evolution,frank1996models,day2004general,day2006insights,bull2008invasion}. These studies have revealed that, in addition to tradeoffs, the nature of the epidemiological dynamics (e.g., epidemic spread versus endemic disease) can also dictate the type of pathogen that will evolve. For example, pathogens with relatively high virulence can be selected for during epidemic disease spread whereas pathogens with a lower virulence may outcompete such high virulence strains in endemic diseases \citep{lenski1994evolution,frank1996models,day2004general,berngruber2013evolution}. \\

The above-mentioned theory allows one to determine the level of virulence expected to evolve under a broad range of epidemiological scenarios but it still suffers from the fundamental shortcoming of being a deterministic theory. Pathogen population size, however, can be very small (e.g. at the onset of an epidemic or after a vaccination campaign) and demographic stochasticity is likely to affect both the epidemiological and evolutionary dynamics of the disease. If all that such stochasticity did was to introduce random noise then the predictions of deterministic theory would likely suffice. However, several recent studies have demonstrated that this is not the case. For example, \citet{kogan2014} and \citet{humplik2014evolutionary} each used different theoretical approaches to demonstrate that finite population size tends to select for lower virulence and transmission. Likewise, \citet{read2007stochasticity} analyzed the effect of finite population size in a complex epidemiological model with unstable epidemiological dynamics and showed that finite population size could induce an evolutionary instability that may either lead to selection for very high or very low transmission.

Taken together, the existing literature presents a complex picture of the factors that drive virulence evolution and it remains unclear how all of these factors are related to one another and how they might interact. In this paper we develop a very general theory of pathogen evolution that can be used to examine virulence evolution when all of the above-mentioned factors are at play. First, we use an individual based description of the epidemiological process to derive a stochastic description of the evolutionary epidemiology dynamics of the pathogen. This theoretical framework is used to pinpoint the effect of finite population size on the interplay between epidemiology and evolution. Second, we analyze this model under the realistic assumption that the rate of mutation is small so that pathogen evolution can be approximated by a sequence of mutation fixations. We derive the probability of fixation of a mutant pathogen under both weak and strong selection regimes, and for different epidemiological scenarios. Third, we use this theoretical framework to derive the stationary distribution of pathogen virulence resulting from the balance between mutation, selection and genetic drift. This yields new predictions regarding the effect of the shape of pathogen fitness landscape, the size of the population and sources of host heterogeneities on long-term evolution of the pathogen. As the question of virulence evolution can be viewed as a specific example of the more general notion of life history evolution \citep{stearns1992,roff2002} our results should be directly applicable to other life history traits and other organisms as well.

%These studies clearly show that demographic stochasticity can strongly alter the predictions of deterministic models. Yet, it is often difficult to reconcile these different results derived under very different assumptions and using different analytical or numerical approaches. Our objective is to develop a theoretical framework that may help relate multiple perspectives on pathogen evolution. First, we present a general formulation of the model that takes into account finite pathogen population size. When population size is assumed to be very large we recover the deterministic case but finite population size introduces demographic stochasticity and genetic drift. Second, we derive the probability of fixation of a mutant pathogen introduced at the fixation is used to obtain the stationary distribution of pathogen virulence under the effects of mutation, selection and drift. We compare these results with the predictions of fully deterministic models and discuss the effects of demographic stochasticity on pathogen evolution under different epidemiological scenarios.
 
\tocless\section{Model}

We use a classical SIR epidemiological model where hosts can either be susceptible, infected or recovered.  The number of each of these types of hosts is denoted by $N_S$, $N_I$, and $N_R$ respectively.  Because we are interested in the effect of demographic stochasticity the model is derived from a microscopic description of all the events that may occur in a finite host population of total size $N_T=N_S+N_I+N_R$ (the derivation of the model is detailed in  Supplementary Information). It will be useful to explicitly specify the size of the habitat in which the population lives (e.g., the area of the habitat) and so we denote this by the parameter $n$. 

We use $\lambda$ to denote the rate at which new susceptible hosts enter the population \textit{per unit area} and therefore the total rate is given by $\lambda n$. We focus on the case of frequency-dependent transmission; i.e., new infections occur at rate $\frac{\beta}{N_T} N_S N_I$ where $\beta$ is a constant quantifying the combined effects of contact rate among individuals and the probability of pathogen transmission given an appropriate contact occurs. Note, however, that other forms of transmission (\eg density dependent transmission,\citep{mccallum2001should}) yield qualitatively similar results \citep{Parsons2012}. For simplicity we also assume that already infected hosts cannot be reinfected by another pathogen strain (\ie no co-infections). All hosts are assumed to suffer a constant per capita death rate of $\delta$ and infected hosts die at per capita rate $\alpha$ and they recover at per capita rate $\gamma$. Finally, to study pathogen evolution we need to introduce genetic variation in the parasite population. Therefore we consider $d$ pathogen strains which differ in transmission rate $\beta_i$ and virulence $\alpha_i$,with $i\in\{1,...,d\}$. Likewise we use the subscripted variable $N_{I_i}$ to denote the number of hosts infected with strain $i$.

The dynamical system resulting from the above assumptions is a continuous-time Markov process tracking the number of individuals of each type of host. To progress in the analysis we use a diffusion approximation and work with host densities defined as $S=N_S/n$, $I_i=N_{I_i}/n$ and $N=N_T/n$ and we define the total density of infected hosts as $I=\sum_{i=1}^{d}I_i$. When $n$ is sufficiently large these variables can be approximated using a continuous state space and so this model can be described by a system of stochastic differential equations (see Supplementary Information, \S 3).

\tocless\subsection{Deterministic evolution}

\noindent In the limit where the habitat size (and thus the host population size) gets large, demographic stochasticity becomes unimportant and the epidemiological dynamics are given by the following system of ordinary differential equations:
\begin{equation}
\begin{split}
\dot{S}&=\lambda-\frac{\bar{\beta}}{N} SI-\delta S\\
\dot{I}&=\frac{\bar{\beta}}{N} SI-\left(\delta+\bar{\alpha}+\gamma\right)I \\
\dot{N}&=\lambda-\delta N-\bar{\alpha}Y \label{epidemioEq}\\
\end{split}
\end{equation}

\noindent The bars above $\alpha$, $\beta$ and $\gamma$ refer to the mean of the transmission and the virulence distributions of the pathogen population (\textit{i.e.}\, $\bar{\alpha} = \frac{\sum_{i=1}^{d} \alpha_{i} I_{i}}{I}$). In the absence of the pathogen the density of hosts equilibrates at $S_0= \frac{\lambda}{\delta}$. A monomorphic pathogen population ($m=1$, $\bar{\beta}=\beta$ and $\bar{\alpha}=\alpha$) is able to invade this equilibrium if its basic reproduction ratio is $R_0= \frac{\beta}{\delta+\alpha+\gamma} >1$. If this condition is fulfilled the system reaches an endemic equilibrium where $\frac{S_{eq}}{N_{eq}}=\frac{1}{R_0}$, $\frac{I_{\text{eq}}}{{N}_{eq}} = \frac{\delta}{\delta+\gamma} \left(1-\frac{1}{R_0} \right)$ and ${N}_{eq}= \frac{\lambda (\delta+\gamma)}{\delta(\beta-\alpha)}{R}_0 $.

When several strains are present in the population the evolutionary dynamics of the pathogen can be tracked with \citep{day2006insights,day2007applying}:
\begin{equation}
\dot{p_i} =p_i(r_i-\bar{r}) \label{evolEq}\\
\end{equation}
\noindent where $p_i =\frac{I_{i}}{I}$ is the frequency of pathogen $i$. The quantity $r_i=\beta_i \frac{S}{N}-\left(\delta+\alpha_i+\gamma\right)$ is the instantaneous per capita growth rate of strain $i$ and $\bar{r}=\sum_{i=1}^{d} p_i r_i$ is the average per capita growth rate of the pathogen population. When $m=2$ only two strains are competing (a wild-type, strain 1, and a mutant, strain 2) the change in frequency $p_2$ of the mutant strain is given by:
\begin{equation}
\dot{p_2} =p_1p_2\left(\frac{S}{N}\Delta\beta-\Delta\alpha\right) \label{evolEq2strains}\\
\end{equation}
\noindent where $\Delta\beta=\beta_2-\beta_1$ and $\Delta\alpha=\alpha_2-\alpha_1$ are the effects of the mutation on transmission and virulence, respectively.

The above formalization can be used to understand the evolution of pathogen life-history under different scenarios. First, under the classical Adaptive Dynamics assumption that mutation rate is very small one may use a separation of time scales where the epidemiological dynamics reach an endemic equilibrium (set by the resident pathogen, strain 1) before the introduction of a new variant (strain 2) by mutation. In this case evolution favours the strain with the highest basic reproduction ratio: $R_{(0,i)}= \frac{\beta_i}{\delta+\alpha_i+\gamma}$. In other words, evolution favours strains with higher transmission rates and lower virulence. According to the tradeoff hypothesis, however, transmission and virulence cannot evolve independently. For example, the within-host growth rate of pathogens is likely to affect both traits and result in a functional trade-off between transmission and virulence \citep{anderson1992infectious,frank1996models,alizon2009virulence,cressler2016adaptive}. Under this assumption equation \eqref{evolEq2strains} can be used to predict the evolutionary stable virulence strategy (Figure 1). The above model can also be used to predict virulence evolution when the evolutionary and epidemiological dynamics occur on a similar time scale \citep{day2006insights,day2007applying,gandon2007evolutionary}. For instance, these models can be used to understand virulence evolution during an epidemic \citep{lenski1994evolution,frank1996models,day2004general,berngruber2013evolution}. In this case, a pathogen strain $i$ with a lower $R_{0}$ may outcompete other strains if its instantaneous growth rate, $r_i$, is higher.

\begin{figure}[h]
\centering
    \includegraphics[width=0.85\linewidth]{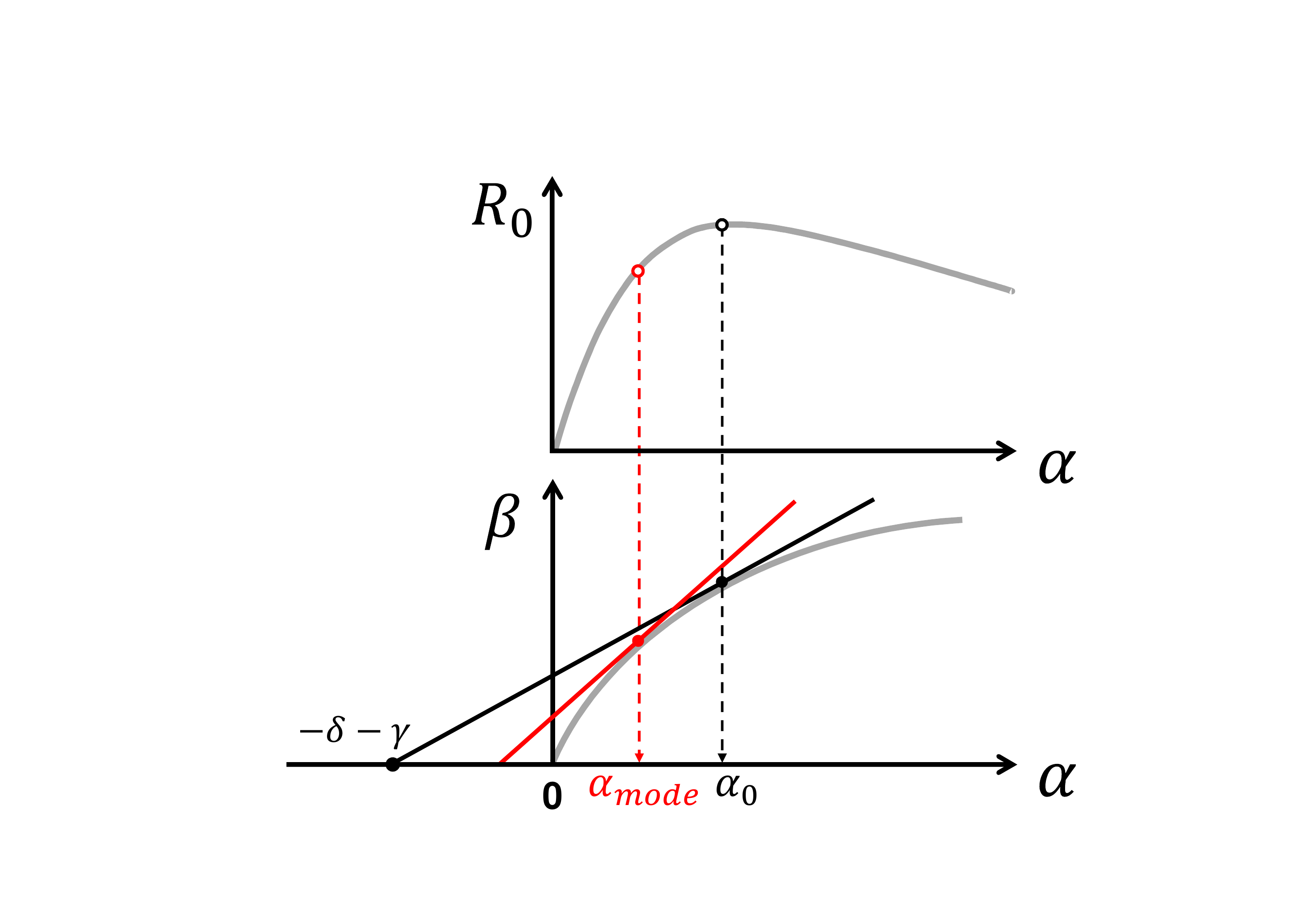} 
\caption{Schematic representation of the effect of finite population size on the
evolution of pathogen virulence. The grey line in the top figure represents the effect of
pathogen virulence, $\alpha$, on $R_0$ (for an asymmetric fitness function). The grey line in the
bottom figure represents the effect of pathogen virulence, $\alpha$, on pathogen transmission,
$\beta$. In the deterministic version of our model the marginal value theorem can be used to find the evolutionary stable (ES)
pathogen virulence, $\alpha_{0}$ (dashed black arrow). In this model ES virulence maximizes $R_0$
in the absence of demographic stochasticity. Finite population size modifies selection
and favours pathogen strategies with lower virulence (see equation \eqref{ESScond}). The mode of the stationary distribution of pathogen virulence is indicated by a dashed red arrow, $\alpha_{\text{mode}}$ (see equation \eqref{StatDisMode}). This geometrical construction indicates that finite population size is expected to favour \textit{slower} strains even if they have a lower $R_0$.}
\end{figure}

\tocless\subsection{Stochastic evolution}

\noindent Finite population size introduces demographic stochasticity and the epidemiological dynamics can be described by the following system of (It\^o) stochastic differential equations:
\begin{equation}
\begin{split}
	dS &=\left(\lambda-\frac{\bar{\beta}}{N} SI-\delta S \right) dt+  \sqrt{\frac{\lambda}{n}} dB_{1}
	- \sqrt{\frac{\delta S}{n}} dB_{2} 
	- \sqrt{\frac{\bar{\beta} SI}{n N}} dB_{3} \\
	dI &=\left(\frac{\bar{\beta}}{N} SI-\left(\delta+\bar{\alpha}+\bar{\gamma}\right)I \right) dt
	+ \sqrt{\frac{\bar{\beta} SI}{n N}} dB_{3} 
	- \sqrt{\frac{(\delta+\bar{\alpha})I}{n}} dB_{4} 
	- \sqrt{\frac{\bar{\gamma}I}{n}} dB_{5}
  \label{epidemioEqStoc}\\
	dN &=\left(\lambda-\delta N-\bar{\alpha}I\right) dt+ \sqrt{\frac{\lambda}{n}} dB_{1}
	- \sqrt{\frac{\delta S}{n}} dB_{2} 
	- \sqrt{\frac{(\delta+\bar{\alpha})I}{n}} dB_{4}   \\
\end{split}
\end{equation}

\noindent where $B_{1},\ldots,B_{5}$ are independent Brownian motions. As expected, when $n\rightarrow\infty$ this set of stochastic differential equations reduces to the deterministic equations in \eqref{epidemioEq}. \\

In finite populations the pathogen, and indeed the host population itself, are destined to extinction with probability 1. The time it takes for this to occur, however, depends critically on the parameter values. For example, in a monomorphic pathogen population (\ie, $m=1$), if $R_{0}$ is larger than one the size of the pathogen population reaches a quasi-stationary distribution which is approximately normal. The mean of this distribution is of order $n$ and its standard deviation of order $\sqrt{n}$ \citep{naasell2001extinction, naasell2007extinction}. The extinction time from the quasi-stationary distribution increases exponentially with $n$ \citep{Barbour1976,naasell2001extinction,naasell2007extinction} and so, in the remainder of the paper we will assume that $n$ is large enough so that we can focus on the dynamics conditional on non-extinction.          

As in the deterministic case, one can study evolutionary dynamics by focusing on the change in strain frequencies. We obtain a stochastic differential equation analogous to \eqref{evolEq} (see Supplementary Information, \S 4):
\begin{equation}
	dp_{i} =\left(p_{i}(r_{i}-\bar{r})-\frac{1}{n I}p_{i}(v_{i}-\bar{v})\right)\, dt
		+ \frac{1}{\sqrt{n I}} \sum_{j=1}^{d} (\delta_{ij}-p_{i})\sqrt{v_{j}p_{j}}\, dB_{j}, 
\label{evolEqStoc}\\
\end{equation}
where $v_i=\beta_i \frac{S}{N}+\left(\delta+\alpha_i+\gamma\right)$ is the variance in the growth rate of strain $i$ (while $r_i$ is the mean) and $\bar{v}=\sum_{i=1}^{d}p_i v_i$ is the average variance in growth rate of the pathogen population.  The first term in equation \eqref{evolEqStoc} is analogous to \eqref{evolEq}. The second term shows that finite population size (\ie when pathogen population size, as measured by the total density of infected hosts, $n I$ is not too large) can affect the direction of evolution. In contrast with the deterministic model, the evolutionary dynamics are not driven exclusively by the expected growth rate $r_i$ but also by a minimization of the variance. This effect is akin to bet-hedging theory stating that a mutant strategy with lower variance in reproduction may outcompete a resident strategy with a higher average instantaneous growth rate \citep{gillespie1974natural,frank1990evolution}. To better understand this effect it is particularly insightful to examine the case $m=2$ when only two strains are competing and the change in frequency $p_2$ of the mutant strain is given by:
\begin{equation}
	dp_{2} =p_1p_2\left(\frac{S}{N}\Delta\beta\left(1-\frac{1}{n I}\right)-\Delta\alpha\left(1+\frac{1}{n I}\right)\right)\, dt
		+ \sqrt{\frac{p_1p_2}{n I} \left(p_{1}v_{2}+p_{2}v_{1}\right)}\, dB, 
\label{evolEqStoc2strains}\\
\end{equation}
\noindent The first term (the drift term) in equation \eqref{evolEqStoc2strains} is similar to \eqref{evolEq2strains} except for the $\frac{1}{n I}$ terms. Those terms are due to the fact that a transmission (or a death) event of the mutant is associated with a change in the number of mutants as well as an increase (decrease) of the total pathogen population size by one individual. This concomitant variation of pathogen population size affects the effective change of the mutant frequency (relative to the change expected under the deterministic model where population size are assumed to be infinite). This effect decreases the benefit associated with higher transmission and increases the cost of virulence. In the long-term this effect (the drift term in \eqref{evolEqStoc}) is thus expected to select for lower virulence. But this long term evolutionary outcome cannot be described by an evolutionary stable state because demographic stochasticity is also expected to generate noise (the diffusion term in \eqref{evolEqStoc}). Indeed, this stochasticity (\ie genetic drift) may lead to the invasion and fixation of strains with lower per capita growth rates. In the following we fully characterize this complex evolutionary outcome with the stationary distribution of pathogen virulence under different epidemiological scenarios.
\\

\tocless\section{Results}

The above theoretical framework embodied by the stochastic differential equations \eqref{epidemioEqStoc} and \eqref{evolEqStoc} subsume the deterministic model and can be used to study the interplay of all the relevant factors affecting virulence evolution. In the following we will assume that pathogen mutation is rare so that evolution can be described, as in classical Adaptive Dynamics, as a chain of fixation of new pathogen mutations. In contrast with Adaptive Dynamics, however, demographic stochasticity may allow deleterious mutations to go to fixation. The analysis of the effect of finite population size requires specific ways to quantify the stochastic fate of a genotype \citep{proulx2002can}. To determine the fate of a new mutation we need to compute the probability of fixation of a mutant pathogen in a resident population. In the absence of selection the fixation probability of a mutant allele depends only on the demography of the population. When the size of the population is fixed and equal to $N$ the fixation probability of a neutral allele is $1/N$. When the fixation probability of a mutant is higher than neutral it indicates that the mutant is selectively favoured. This is particularly useful in many complex situations where the interplay between selection and genetic drift are difficult to disentangle like time varying demography \citep{OttoWhitlock1997, lambert2006} or spatial structure \citep{rousset2004genetic}. In our model, the difficulty arises from (i) the stochastic demography of the pathogen population and (ii) the fact that pathogen life-history traits feed-back on the epidemiological dynamics and thus on the intensity of genetic drift.\\

\tocless\subsection{Stationary distribution of pathogen virulence at equilibrium}

Here we assume, as in the Adaptive Dynamics framework, that the pathogen mutation rate $\mu$ is so low that the mutant pathogen (strain $2$) arises when the resident population (strain $1$) has reached a stationary equilibrium $n I_{\text{eq}}$ (\ie close to the endemic equilibrium derived in the deterministic model). The $R_0$ of the two strains may be written in the following way: $R_{0,2}=R_{0,1}(1+s)$ where $s$ measures the magnitude of selection.\\

When selection is strong (\ie $s \gg \frac{1}{n}$) the probability of fixation of the mutant when $N_{I_{2}}(0)$ mutants are introduced into a resident population at equilibrium is (see Supplementary Information, \S 5.2):
\begin{equation}
U_{\text{strong}} \approx 1- \left(\frac{R_{0,1}}{R_{0,2}}\right)^{N_{I_{2}}(0)} \approx N_{I_{2}}(0) s,\label{PfixStrong}\\
\end{equation}
which may be obtained by approximating the invading strain by a branching process (see Supplementary Information, \S 7.2 for a rigorous justification).  When the mutant and the resident have similar values of $R_0$ (\ie $s$ is of order $\frac{1}{n}$) selection is weak, and  the derivation of the probability of fixation is a much more difficult problem. The classical population genetics approach under the assumption that population size is fixed (or is characterized by a deterministic trajectory independent of mutant frequency) is to use the diffusion equation of mutant frequency to derive the probability of fixation \citep{OttoWhitlock1997, lambert2006}. But in our model, equation \eqref{evolEq2strains} is not autonomous and is coupled with the epidemiological dynamics. To derive the probability of fixation we use a separation of time scale argument to reduce the dimension of the system (see \citep{Parsons2017} for a discussion of the approach). Indeed, if selection is weak the deterministic component of the model sends the system rapidly to the endemic equilibrium. At this point, it is possible to approximate the change in frequency of the mutant by tracking the dynamics of the projection of the mutant frequency on this manifold (see Supplementary Information, \S 5.3). This one dimensional system can then be used to derive the probability of fixation under weak selection. A first order approximation in $s$ and $\sigma$ is:
\begin{equation}
U_{\text{weak}} \approx p+ \frac{p(1-p)}{2}\left(n I_{\text{eq}} s + \sigma\right)\approx n I_2(0)\left(\frac{1}{n I_{\text{eq}}}+\frac{1}{2}\left(s+\frac{\sigma}{n I_{\text{eq}}}\right)\right) \label{PfixWeak}\\
\end{equation}

\noindent where $p=I_2(0)/I_{\text{eq}}$ and $\sigma=\frac{\beta_1-\beta_2}{\beta_2}$. The first term in \eqref{PfixWeak} is the probability of fixation of a single neutral mutation introduced in a pathogen population at the endemic equilibrium $n I_{\text{eq}}$. The second term takes into account the effect due to selection. First, selection may be driven by differences in $R_0$. Second, even if strains have identical $R_0$ (\ie $s=0$) selection may be driven by $\sigma$ which measures the difference in transmission rate. Note, however, that the effect of $s$ rapidly overwhelms the effect of $\sigma$ as pathogen population size $n I_{\text{eq}}$ becomes large (unless $s$ is of order $\frac{1}{n}$). The probability of fixation given in \eqref{PfixWeak} confirms that evolution tends to push towards higher basic reproductive ratio but when the population size is small other forces may affect the evolutionary outcome. In particular, when $n I_{\text{eq}}$ is small, strains with lower $R_0$ can reach fixation. Figure 2 shows the result of stochastic simulations that confirm the approximations \eqref{PfixStrong} and \eqref{PfixWeak} under different epidemiological scenarios. \\

\begin{figure}[h]
\centering
    \includegraphics[width=0.85\linewidth]{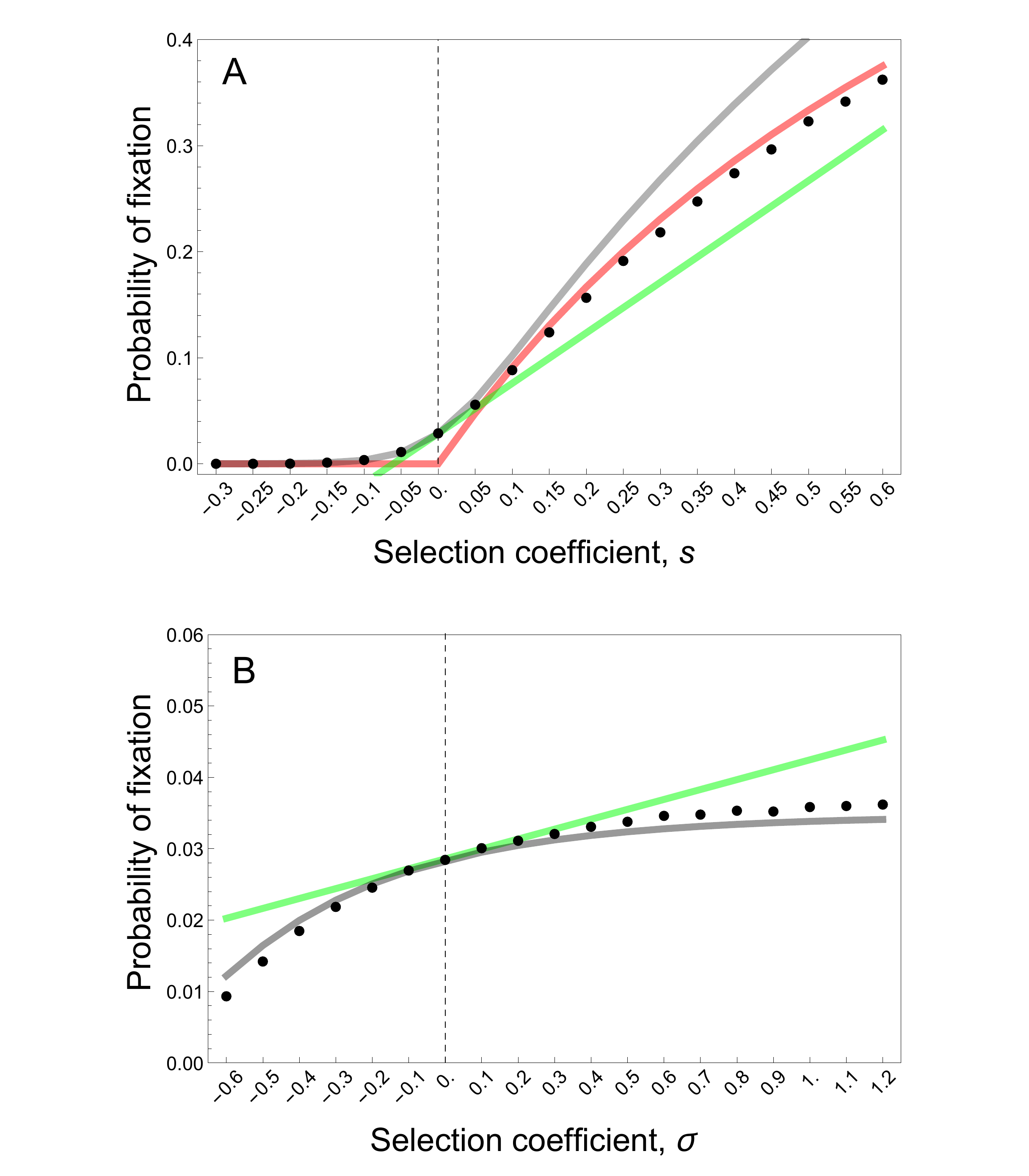} 
\caption{Probability of fixation for (A) different
values of $s$ (strong selection effect) and (B) different values of $\sigma$ (weak selection effect). Simulation results are indicated with a dot, weak selection approximation is indicated with a gray line and its linear approximation (equation $\eqref{PfixWeak}$) is indicated with a green line, the strong selection approximation is indicated with a red line (equation $\eqref{PfixStrong}$). Parameter values of the resident population: $n=100$, $R_0=4$, $\delta=1$, $\alpha=3$, $\gamma=1$, $\lambda=2$, $\beta_1=20$. For the simulation a single mutant (an individual host infected with a mutant pathogen) is introduced at the endemic equilibrium set by the resident pathogen: $S_{\text{eq}}=24$ and $I_{\text{eq}}=35$. $10^6$ simulations are realized for each parameter values and we plot the proportion of the simulations where the mutant goes to fixation.}
\end{figure}

Even though the probability of fixation helps understand the interplay between selection and genetic drift it does not account for any differences in the time to fixation and it is often difficult to measure this probability experimentally as well (but see \citet{Gifford20120310}). What may be more accessible is a characterization of the phenotypic state of the population across different points in time (or in space among replicate populations) - that is, the stationary distribution of the virulence phenotype of the pathogen under the action of mutation, selection and genetic drift \citep{Champagnat+Lambert07,lehmann2012stationary,debarre2016} (Figure 3).\\  

\begin{figure}[h]
\centering
    \includegraphics[width=0.75\linewidth]{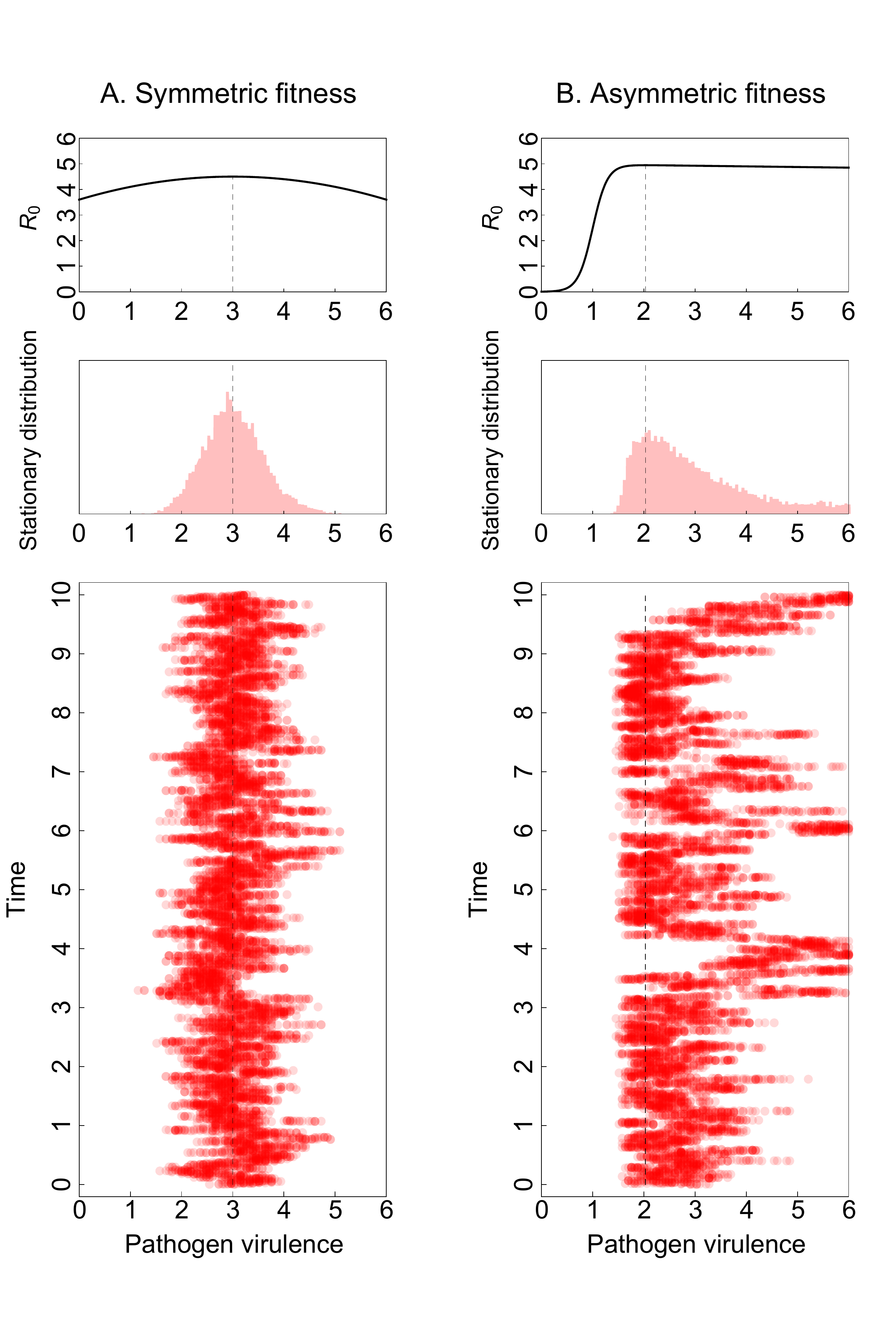} 
\caption{Dynamics of pathogen virulence across time (one time unit on the graph is $10^7$ time steps in the simulation, and a ) and stationary distribution of pathogen virulence for two different fitness landscapes: (A) Symmetric fitness landscape with $\beta(\alpha)=(\delta+\gamma+\alpha) R_{0,\text{max}}\left(1-w(\alpha_0-\alpha)^2\right)$, $R_{0,\text{max}}=4.5$ and $\alpha_0=3$, (B) Asymmetric fitness landscape with $\beta(\alpha)=5(1-0.005 \alpha)/(1+\exp(7(1-\alpha))$. The dashed vertical line indicates the position of $\alpha_0$. Other parameter values: $n=200$,  $\delta=1$, $\alpha=3$, $\gamma=1$, $\lambda=2$, $\mu=0.001$.}
\end{figure}

%The fitness landscape of the pathogen may be defined by the way $R_0$ varies with life-history traits $\alpha$ and $\beta$. When pathogen population size is very large the deterministic model shows that the pathogen should evolve towards a maximization of $R_0$. When the maximum of the fitness landscape  is reached no mutations can invade. In finite pathogen populations, however, the concept of uninvadable strategy is no longer relevant because even the strategy maximizing  $R_0$ can be replaced by a new one with a probability given by equations \eqref{PfixWeak}. Consequently, the phenotypic value of the pathogen is expected to fluctuate over time because new mutations will constantly challenge the resident strategy (Figure 3). The balance between mutation, selection and demographic stochasticity will result in a stationary distribution \citep{champagnat2007evolution,lehmann2012stationary,debarre2016}.\\ 
To derive the stationary distribution of pathogen virulence we first need to impose a trade-off between virulence and contact rate, setting $\beta = \beta(\alpha)$, and introduce the  mutation kernel $K(\alpha_{m},\alpha)$, the probability distribution of mutants with strategy $\alpha_{m}$ from a monomorphic population with strategy $\alpha$. Here we assume that this distribution is Gaussian with a mean equal to the current resident trait value and variance $\nu$. Under the assumption that the mutation rate $\mu$ remains small, pathogen polymorphism is limited to the transient period between the introduction of a mutant and a fixation. The probability of fixation \eqref{PfixWeak} accurately describes the direction of evolution and the evolution of pathogen virulence can then be described by the following Fokker-Planck diffusion equation (see Supplementary Information, \S 6):  
\begin{equation}
\frac{\partial{\psi(\alpha,t)}}{\partial{t}}=-\frac{\mu \nu}{2}  \frac{\partial}{\partial{\alpha}}\left[\left(n I_{\text{eq}}\frac{R'_{0}(\alpha)}{R_{0}(\alpha)}-\frac{\beta'(\alpha)}{\beta(\alpha)}\right) \psi(\alpha,t)\right]+ \frac{\mu \nu}{2} \frac{\partial^2{\psi(\alpha,t)}}{\partial{\alpha^2}}\label{DiffAlpha}\\
\end{equation}
\noindent where $\psi(\alpha,t)$ is the distribution of pathogen virulence and $'$ indicates the derivative with respect to $\alpha$. The drift term of the above equation indicates that deterministic evolution tends to maximize the basic reproduction ratio while finite population size tends to select for lower transmission. Under the classical assumption that pathogen transmission and pathogen virulence are linked by a genetic trade-off one can ask what the level of pathogen virulence is where the drift coefficient is zero. This trait value corresponds to the mode of the stationary distribution of pathogen virulence and is given by the following condition (see Supplementary Information Equation S.45): 
\begin{equation}
\beta'(\alpha)=R_{0}(\alpha)\left(1+\frac{1}{n I_{\text{eq}}(\alpha)-1}\right)\label{ESScond}\\
\end{equation}
\noindent When the pathogen population is very large (\ie $n \rightarrow \infty$) we recover the marginal value theorem while finite population size increases the slope $\beta'(\alpha)$ and reduces the mode of the stationary distribution (see Figure 1). Thus, for a broad range of transmission-virulence trade-off functions, finite population size is expected to decrease virulence and transmission rates. In other words, pathogen avirulence may be viewed as a bet-hedging strategy because even if it reduces the instantaneous growth rate $r_i$, the reduced variance in growth rate $v_i$ is adaptive in finite population size.\\  

\noindent Let us now consider the limiting case when all the pathogen strains have the same $R_0$. This corresponds to a very special case where the fitness landscape is flat. The deterministic model predicts that pathogen life-history variation is neutral near the endemic equilibrium (see \eqref{evolEq}). The probability of fixation \eqref{PfixWeak} shows, however, that selection is \textit{quasi-neutral} and favours pathogens with lower transmission and virulence rates \citep{parsons2007II,Parsons2012,kogan2014,humplik2014evolutionary}. The stationary distribution results from the balance between selection (pushing towards lower values of pathogen traits) and mutation (reintroducing variation). If we focus on virulence and allow variation between a minimal value $\alpha_{min}$ and a maximal value $\alpha_{\text{max}}$ the stationary distribution is (see Supplementary Information Equation S.39):     
\begin{equation}
\psi_{\text{flat}}(\alpha)=
	\frac{1}{\ln{\left(\frac{\delta+\alpha_{\text{max}}+\gamma}{\delta+\alpha_{min}+\gamma}\right)}}
		\frac{1}{(\delta+\alpha+\gamma)}. \label{StatDisFlat}\\
\end{equation}

\noindent It is worth noting that this distribution is independent of the pathogen population size. Indeed, near the endemic equilibrium and when pathogens have the same $R_0$ the probability of fixation \eqref{PfixWeak} is independent of pathogen population size. So this prediction holds even in very large pathogen populations. The time to fixation may, however, be considerably longer in large populations and the assumption that polymorphism is always reduced to the resident and a single mutant may not always hold as pathogen population increases. Yet, stochastic simulations confirm that \eqref{StatDisFlat} correctly predicts the stationary distribution, which is relatively insensitive to pathogen population size but varies with $\delta+\gamma$ (Figure 4A). 

\begin{figure}[h]
\centering
    \includegraphics[width=0.85\linewidth]{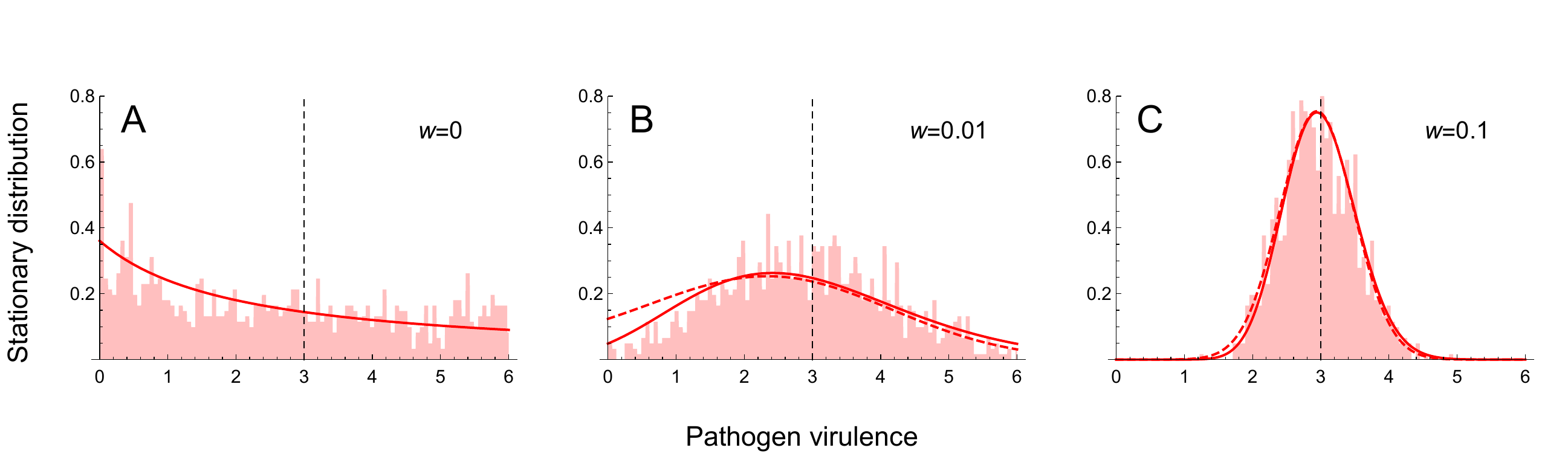} 
\caption{Stationary distribution for symmetric fitness landscapes with increasing strength of selection around the optimum with $\beta(\alpha)=(\delta+\gamma+\alpha) R_{0,\text{max}}\left(1-w(\alpha_0-\alpha)^2\right)$ and $\alpha_0=3$ for three different values of $w$: (A) $w=0$, (B) $0.01$ and (C) $0.1$. Note that when $w=0$ the fitness landscape is flat. The light red histogram indicates results of a stochastic simulation. The red line indicates the stationary distribution of the diffusion approximation (the dashed line indicates the approximation of this distribution, see \eqref{StatDisApprox}). The dashed vertical line indicates the position of $\alpha_0$. Parameter values: $n=200$, $R_{0,\text{max}}=4$, $d=1$, $\alpha=3$, $\gamma=1$, $\lambda=2$,  $\mu=0.001$.}
\end{figure}

Second, we consider a general fitness landscape with a single maximum. It is possible to derive a good approximation for the stationary distribution (see S.44 in the Supplementary Information):
\begin{equation}
\psi_{\text{approx}}(\alpha)=
\frac{\beta(\alpha_{0})}{\beta(\alpha)}	\mathcal{N}(\alpha_{0},\sigma^{2}),
\label{StatDisApprox}\\
\end{equation}

\noindent where $\alpha_{0}$ is the virulence that maximizes $R_{0}$, $I_{\text{eq}}(\alpha_{0})$ is the expected number of infected individuals at the endemic equilibrium when the virulence is $\alpha_{0}$ and $\mathcal{N}(\alpha_{0},\sigma^{2})$ is the Gaussian distribution with mean $\alpha_{0}$ and variance $\sigma^2=\frac{1}{n I_{\text{eq}}(\alpha_{0})|R_{0}''(\alpha_{0})|/(R_{0}(\alpha_{0}))}$. We thus see the effect of the demography is to bias the Gaussian, putting more weight on values of the virulence below $\alpha_{0}$; this becomes more clear when we consider the mode and mean of the (true) stationary distribution
%To better grasp the effect of the different parameters of the model on this distribution we may focus on the mode of the stationary distribution 
(see \S 6.3 in the Supplementary Information):
\begin{equation}
\alpha_{\text{\text{mode}}} =  \alpha_{0} 
		- \frac{\sigma^2}{\delta+\alpha_{0}+\gamma},
\label{StatDisMode}\\
\end{equation}
and
\begin{equation}
\alpha_{\text{\text{mean}}} =  \alpha_{0} - \sigma^2 \left(
		\frac{1}{\delta+\alpha_{0}+\gamma} 
		-\frac{I_{\text{eq}}'(\alpha_{0})}{I_{\text{eq}}(\alpha_{0})} 
		+ \frac{R_{0}'''(\alpha_{0})}{|R_{0}''(\alpha_{0})|}\right)
\end{equation}
respectively. These results indicate that, as expected from the simple optimization approach used above in \eqref{ESScond} and illustrated in Figures 3 and 4, lower pathogen population size tends to decrease pathogen virulence. Yet, the above derivation of the stationary distribution goes far beyond this optimization criterion. First, it predicts accurately the mode of the stationary distribution. In particular it shows that the shape of the fitness landscape may affect the mode of the stationary distribution. The skew of the fitness landscape can have huge effects on the stationary distribution (Figure 3). A positive skew leads to a higher mean virulence and may thus counteract the effect of small pathogen population. In other words, whether demographic stochasticity favours lower of higher virulence depends also on the shape of the fitness landscape. Second, our analysis predicts the amount of variation one may expect to see around this mode. Unlike the criteria used to derive a single optimal strategy our approach predicts accurately the expected variation around this mode (Figures 3 and 4). Note that the population remains monomorphic most of the time (because mutation is assumed to be small) but the variance of the stationary distribution refers to the distribution of phenotypes explored through time (or through space if stochastic evolution is taking place in multiple isolated populations).\\

\subsection{Pathogen evolution after vaccination}

The above analysis relies on the assumption that the epidemiological dynamics are much faster than evolutionary dynamics so that the pathogen population is always at its endemic equilibrium. The present framework can also be used to explore the fate of a mutant pathogen away from this endemic equilibrium. For instance, right after the start of a vaccination campaign the availability of susceptible hosts is going to drop rapidly if the vaccination coverage $f$ is high and if the efficacy of the vaccine is large. This epidemiological perturbation has major consequences on the probability of fixation of a pathogen mutant. Looking at invasion out of endemic equilibrium, we show that a larger vaccination coverage $f$ decreases the probability of fixation of all pathogen mutants (see Supplementary Information \S 5.2.3) but the probability of fixation of strains with low virulence and low transmission rates are less affected than more virulent and transmissible strains. In other words, strains with low turn over rates (\textit{slower} strains) are less likely to be driven to extinction during the drop of the pathogen population size. This extends classical results of populations genetics \citep{OttoWhitlock1997,lambert2006} to situations where the mutations are acting on life history traits and feed back on population dynamics. \textit{Faster} strains are selected for during epidemics while \textit{slower} strains are favoured when the pathogen population is reduced (e.g. after a public health intervention). This is also consistent with the analysis of pathogen evolution based on deterministic models which showed that the direction of selection depends on the epidemiological state of the population \citep{lenski1994evolution,frank1996models,day2004general,lambert2006, day2006insights,day2007applying,berngruber2013evolution}.\\
Vaccination is also meant to induce long-term modifications of the host population. In particular, artificial immunization introduces a heterogeneity between vaccinated and unvaccinated hosts. If the vaccine is perfect, vaccination will act on the epidemiology and will reduce the endemic equilibrium. In a deterministic version of this model, such a perfect vaccine is expected to have no consequences on long-term pathogen evolution \citep{gandon2001,gandon2007evolutionary}. In contrast, when host population size is finite, vaccination is expected to magnify the influence of demographic stochasticity and, as discussed above, to select for lower pathogen virulence and to increase the variance of the stationary distribution (Figure 5). It is also interesting to consider an alternative scenario where vaccinated hosts can be infected but cannot transmit the pathogen. In this case, two types of infected hosts are coexisting: good-quality (na\"ive) hosts and bad-quality (vaccinated) hosts. In this situation, the amount of demographic stochasticity is not governed by the whole pathogen population size but by the size of the population of infected hosts that actually contribute to transmission. In fact all the results derived above apply in this scenario provided that the equilibrium pathogen population size is replaced by the \textit{effective} pathogen size: $n I_{e}=(1-f) n I_{\text{eq}}$. This example illustrates that even for large pathogen population sizes the effect of demographic stochasticity can be important if the \textit{effective} pathogen size is small. When there is variation in reproductive value among individuals the \textit{effective} population size that governs the amount of genetic drift may be substantially lower than the \textit{actual} size of the population \citep{crow1970introduction}. In the context of pathogen evolution this heterogeneity in reproductive value may be driven by variations in infectiousness among hosts. This variation can be induced by public-health interventions (e.g. transmission-blocking vaccines) but it emerges naturally from complex behavioural and/or physiological differences among hosts. For instance, evidence for \textit{superspreading} events where certain individuals can infect unusually large numbers of secondary cases have been found in many human pathogens \citep{woolhouse1997heterogeneities,lloyd2005superspreading}. This heterogeneity is expected to reduce the effective population size and to magnify the effects of demographic stochasticity discussed above. But other factors like temporal fluctuations in population size are also known to modulate the intensity of genetic drift and may affect effective population size \citep{crow1970introduction}. 

\begin{figure}[h]
\centering
    \includegraphics[width=0.85\linewidth]{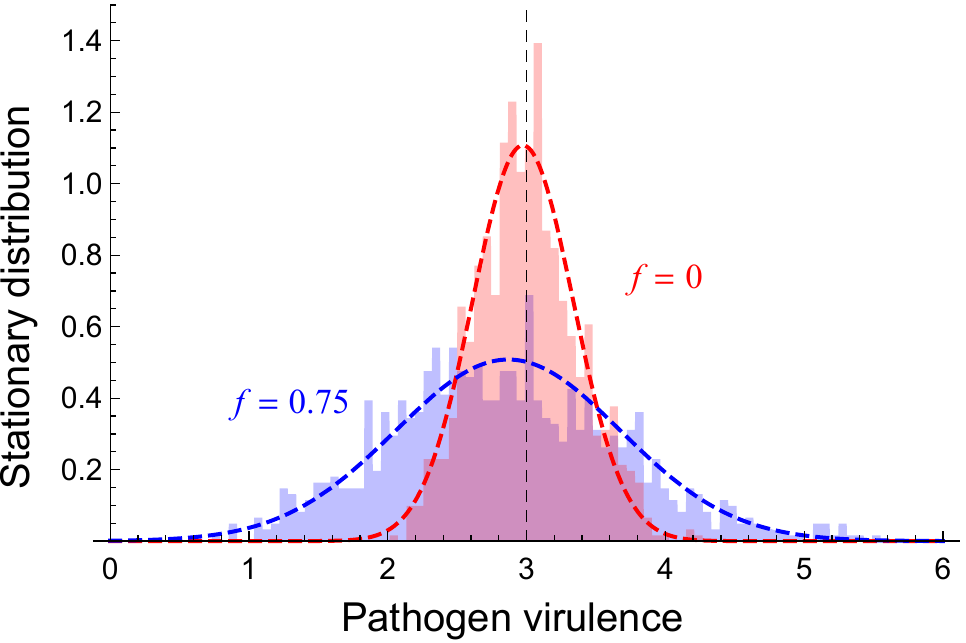} 
\caption{Stationary distribution for symmetric fitness landscapes with increasing vaccination coverage ($f=0$ in red and $f=0.75$ in blue) with $\beta(\alpha)=(\delta+\gamma+\alpha) R_{0,\text{max}}\left(1-w(\alpha_0-\alpha)^2\right)$ and $\alpha_0=3$. The histogram indicates results of a stochastic simulation. The blue and red dashed lines indicate the approximation of the stationary distribution given in \eqref{StatDisApprox} with or without vaccination, respectively. The dashed vertical line indicates the position of $\alpha_0$. Parameter values: $n=1000$, $R_{0,\text{max}}=10$, $\delta=1$, $\alpha=3$, $\gamma=1$, $\lambda=2$, $\mu=0.01$, $w=0.1$}
\end{figure}

\tocless\section{Discussion}

Evolutionary theory has led to the development of different mathematical tools for studying phenotypic evolution in a broad diversity of ecological scenarios \citep{parker1990,roff1993,Otto2007}. For instance, Adaptive Dynamics is a powerful theoretical framework to study life-history evolution when mutation is assumed to be rare so that demographic and evolutionary processes can be decoupled \citep{geritz1998evolutionarily,Metz1992}. This analysis yields evolutionarily stable life-history strategies and captures the ultimate outcome of evolution. But this approach relies on the assumption that population size is infinite and that evolution is deterministic. Finite population size, however, can also affect evolutionary trajectories. In particular, even the fittest genotype can be invaded by a deleterious mutant when population size is reduced. This leads to the collapse of the concept of evolutionarily stable strategy. Here we develop and apply Stochastic Adaptive Dynamics (SAD) \citep{Champagnat+Lambert07,Otto2007, debarre2016}, a new theoretical framework where the evolutionary outcome of life history evolution is studied through the derivation of the stationary distribution of the phenotype under mutation-selection-drift equilibrium. Under the assumption that mutation rate is small, the equilibrium distribution can be derived from a diffusion approximation. In contrast with previous population genetics models, the present framework also allows life-history evolution to affect population size and, consequently, the amount of demographic stochasticity. In other words, this framework retains key features of Adaptive Dynamics but relaxes a major assumption by allowing genetic drift to affect the evolutionary outcome \citep[see also][p.1149]{waxman200520}. As such, our SAD framework is an important step towards a better integration between Adaptive Dynamics and classical population genetics.    
 
We show that finite population size induces a selective pressure towards strains with lower variance in growth rate (but see also \citet{gillespie1974natural,lambert2006}). A simple way to understand this effect is to compare the fate of two strains with the same $R_0$ but with different life-history strategies. The \textit{fast} strain is very transmissible but has a short duration of infection (\eg because of high virulence or high clearance rate). The \textit{slow} strain has a long duration of infection but has a small transmission rate. Since the two strains have the same $R_0$ Adaptive Dynamics predicts that these two strains should coexist. With finite population size, however, the fast strain has a higher probability to go extinct simply because more events happen per unit of time. As in Aesop's Fable ``Slow and steady wins the race'' because the fast strain will reach extinction sooner than the slow strain. Previous studies \citep{Parsons2012,kogan2014,humplik2014evolutionary} pointed out the influence of finite population size on the direction of virulence evolution but they focused mainly on the quasi-neutral case where all the strains have the same $R_0$. \citet{humplik2014evolutionary} did look at scenarios where strains have different $R_0$ but without a derivation of the stationary distribution at mutation-selection-drift equilibrium. We believe that this stationary distribution is key to explore the interaction between finite population size and phenotypic evolution. This distribution yields testable predictions on the mean as well as other moments of the phenotypic distribution. 

The approximation \eqref{StatDisApprox} shows that this distribution is moulded by two main parameters: (i) the pathogen fitness landscape, and (ii) the effective size of the pathogen population. First, the fitness landscape at the endemic equilibrium can be derived from \eqref{evolEqStoc} and depends mainly on the way $R_0$ varies with pathogen life history traits. Under the classical transmission-virulence assumption $R_0$ is maximized for some intermediate virulence. But the shape of the trade-off also affects the shape of the fitness landscape and in particular its symmetry. Second, the effective size $n I_e$ of the pathogen population size depends mainly on the pathogen population size $n I_{\text{eq}}$ at the endemic equilibrium but other factors may reduce the effective pathogen population size as well. For instance, variance in transmission among infected hosts is likely to reduce $n I_e$ below $n I_{\text{eq}}$. One source of heterogeneity in transmissibility may be induced by public-health interventions (\eg vaccination, drug treatments), but intrinsic behavioural or immunological heterogeneities among hosts may induce superspreading transmission routes as well \citep{woolhouse1997heterogeneities,lloyd2005superspreading}. 

When the fitness landscape of the pathogen is symmetric, reducing the effective population size increases the variance of the stationary distribution but decreases also the mean (and the mode) of this distribution. This effect results from the selection for a reduction of the variance identified in \eqref{evolEqStoc}. This is the effect that emerges in the quasi-neutral case. When the fitness landscape is flat this may lead to an important bias towards lower virulence (Figure 4). When the fitness landscape of the pathogen is asymmetric the skewness of the fitness landscape can affect the mean of the stationary distribution when the effective population size  $n I_{e}$ of the pathogen is reduced. More specifically negative (positive) skewness reduces (increases) the mean of the stationary distribution. It is interesting to note that classical functions used to model the trade-off between virulence and transmission tend to generate positive skewness in the fitness landscape \citep{van1995dynamics,frank1996models,alizon2009virulence}. The asymmetry of these fitness functions may thus counteract the effects of stochasticity per se identified in symmetric fitness landscapes. In other words, predictions on the stochastic evolutionary outcome are sensitive to the shape on genetic constraints acting on different pathogen life-history traits. This result is very similar to the deterministic effects discussed in \citep{urban2013asymmetric} on the influence of asymmetric fitness landscapes on phenotypic evolution. Note, however, that the effect analyzed by Urban et al (2013) is driven by environmental effects on phenotypes. In our model, we did not assume any environmental effects and a given genotype is assumed to produce a single phenotype. 

We focused our analysis on the stationary distribution at the endemic equilibrium of this classical SIR model. But we also explore the effect of demographic stochasticity on the transient evolutionary dynamics away from the endemic equilibrium. For instance we recover a classical population genetics result \citep{OttoWhitlock1997,lambert2006} that the probability of fixation of adaptive mutations (\ie with $s \gg \frac{1}{n}$) is increased during epidemics. Beyond the effect of $s$ (\ie differences between the $R_0$ of the two competing strains) differences in life-history traits matter away from the endemic equilibrium. In particular, faster strains have higher probabilities of fixation when the pathogen population is growing (during epidemics) and, conversely, slower strains have higher probabilities of fixation when the pathogen population is crashing. The analysis of scenarios where the epidemiological dynamics is unstable and leads to recurrent epidemics is more challenging but may lead to unexpected evolutionary dynamics \citep{read2007stochasticity}.

We analyzed the effects of demographic stochasticity induced by finite population size but environmental stochasticity may also affect evolution \citep{frank1990evolution,starrfelt2012bet,schreiber2015unifying}. Environmental factors are known to have dramatic impacts on pathogen transmission and it would thus be particularly relevant to expand the current framework to account for the effects of random perturbations of the environment on pathogen evolution \citep{Nguyen2015}. 
 
Another possible extension of this model would be to analyze the effect of demographic stochasticity on the multi-locus dynamics of pathogens. Indeed, the interaction between genetic drift and selection is known to yield complex evolutionary dynamics resulting in the build up of negative linkage disequilibrium between loci. But the analysis of this so-called Hill-Robertson effect is often restricted to population genetics models with fixed population size. The build up of linkage disequilibrium in some epidemiological models has been discussed in some simulation models \citep{althaus2005stochastic,fraser2005hiv}. Our model provides a theoretical framework to explore the effect of finite population size on multi-locus dynamics of pathogens and to generate more accurate predictions on the evolution of drug resistance \citep{day2012evolutionary}. 

Finally, although we have presented our results in the context of pathogen evolution, it is hopefully clear that a very similar theoretical framework could be used to study other examples of life history evolution in the context of demographic stochasticity. Current general life history theory largely neglects the evolutionary consequences of stochasticity arising from small population sizes. Our results suggest that it would be profitable to determine what sorts of insights might be gained for life history evolution more generally by using the type of theoretical framework developed here.

\bigskip

\noindent \textbf{Acknowledgements:} Some of this work was done while TLP was supported by a Fondation Sciences Math\'ematiques de Paris postdoctoral fellowship.  AL thanks the Center for Interdisciplinary Research in Biology (Coll\`ege de France) for funding.  SG thanks the CNRS (PICS and PEPS MPI) for funding and Gauthier Boaglio for his help in the development of the simulation code. Simulations were performed on the Montpellier Bioinformatics Biodiversity cluster.

\clearpage

\putbib

\end{bibunit}

\clearpage

\appendix 

\begin{bibunit}

\setcounter{figure}{0}    
\setcounter{equation}{0}    
\setcounter{page}{1}    
\setcounter{table}{0}

\makeatletter
\renewcommand{\theequation}{S.\@arabic\c@equation}
\renewcommand{\thesection}{\thepart \arabic{section}}
\makeatother

\begin{center}{\Large \textsc{Supplementary Information}\\
\textbf{Pathogen evolution: slow and steady spreads the best}}

\bigskip

Todd L. Parsons, Amaury Lambert, Troy Day and Sylvain Gandon

\bigskip

\today

\end{center}

\begin{spacing}{0.9}
%\maketitle
\tableofcontents

\clearpage

\section*{Glossary of Notation}

\begin{longtable}{L@{\extracolsep{10mm}} l @{\extracolsep{10mm}} C}
	n & system size\\
%	d & number of strains\\
	\lambda^{(n)} n & immigration rate for susceptible individuals; $\lambda^{(n)} = \lambda +\BigO{\frac{1}{n}}$ \\
	\delta^{(n)} & base mortality rate; $\delta^{(n)} = \delta +\BigO{\frac{1}{n}}$\\
	\beta_{i}^{(n)} &  contact rate for strain $i$; $\beta_{i}^{(n)} = \beta_{i} +\BigO{\frac{1}{n}}$\\
	\alpha_{i}^{(n)} & excess mortality for strain $i$; $\alpha_{i}^{(n)} = \alpha_{i} +\BigO{\frac{1}{n}}$\\
	\gamma_{i}^{(n)} & recovery rate for strain $i$; $\gamma_{i}^{(n)} = \gamma_{i} +\BigO{\frac{1}{n}}$\\
	R_{0,i} & $\frac{\beta_{i}}{\delta+\alpha_{i}+\gamma_{i}}$\\
	R^{(n)}_{0,i} & $\frac{\beta^{(n)}_{i}}{\delta^{(n)}+\alpha^{(n)}_{i}+\gamma^{(n)}_{i}} = 
 	R_{0,i}\left(1+\frac{r_{i}}{n}\right) + {\textstyle o\left(\frac{1}{n}\right)}$\\
	S^{(n)}(t) & number of susceptible individuals at time $t$\\
	 I^{(n)}_{i}(t) & number of individuals infected with strain $i$ at time $t$\\
	 R^{(n)}(t) & number of recovered individuals at time $t$\\
	 N^{(n)}(t) & total number of individuals at time $t$\\
	 P^{(n)}_{i}(t) & frequency of individuals infected with strain $i$ at time $t$\\
	 \bm{I}^{(n)}(t) &  $(I^{(n)}_{1}(t),\ldots,I^{(n)}_{d}(t),N^{(n)}(t))$\\
	 \bm{E}^{(n)}(t) & $(S^{(n)}(t),I^{(n)}_{1}(t),\ldots,I^{(n)}_{d}(t))$\\
	 \bar{S}^{(n)}(t) & density of susceptible individuals at time $t$\\
	 \bar{I}^{(n)}_{i}(t) & density of individuals infected with strain $i$ at time $t$\\
	 \bar{R}^{(n)}(t) & density of recovered individuals at time $t$\\
	 \bar{N}^{(n)}(t) & total density of individuals at time $t$\\
	 \bar{\bm{I}}^{(n)}(t) & $(\bar{I}^{(n)}_{i}(t),\ldots,\bar{I}^{(n)}_{d}(t))$\\
	 \bar{\bm{E}}^{(n)}(t) & $(\bar{S}^{(n)}(t),\bar{I}^{(n)}_{i}(t),\ldots,\bar{I}^{(n)}_{d}(t),\bar{N}^{(n)}(t))$\\
	 \hat{S}^{(n)}(t) & density of susceptible individuals at time $nt$\\
	 \hat{I}^{(n)}_{i}(t) & density of individuals infected with strain $i$ at time $nt$\\
	 \hat{R}^{(n)}(t) & density of recovered individuals at time $nt$\\
	 \hat{N}^{(n)}(t) & total density of individuals at time $nt$\\
	 \hat{\bm{I}}^{(n)}(t) & $(\hat{I}^{(n)}_{1}(t),\ldots,\hat{I}^{(n)}_{d}(t))$\\
	 \hat{\bm{E}}^{(n)}(t) & $(\hat{S}^{(n)}(t),\hat{I}^{(n)}_{1}(t),\ldots,\hat{I}^{(n)}_{d}(t),\hat{N}^{(n)}(t))$\\
	 S(t) & asymptotic density of susceptible individuals at time $t$\\
	 I_{i}(t) & asymptotic density of individuals infected with strain $i$ at time $t$\\
	R(t) & asymptotic density of recovered individuals at time $t$\\
	 N(t) & total asymptotic density of individuals at time $t$\\
	 \bm{E}^{\star,i} & endemic equilibrium with resident strain $i$\\ 
	 \bm{I}(t) & $(I_{1}(t),\ldots,I_{d}(t))$\\ 	
	 \bm{E}(t) & $(S(t),I_{1}(t),\ldots,I_{d}(t),N(t))$\\
	 \hat{S}(t) & asymptotic density of susceptible individuals in slow time limit\\
	 \hat{I}_{i}(t) & asymptotic density of individuals infected with strain $i$ in slow time limit\\
	 \hat{R}(t) & asymptotic  density of recovered individuals in slow time limit\\
	 \hat{N}(t) & asymptotic total density of individuals in slow time limit\\
	 \hat{\bm{I}}(t) & $(\hat{I}_{1}(t),\ldots,\hat{I}_{d}(t))$\\
	 \hat{\bm{E}}(t) & $(\hat{S}(t),\hat{I}_{1}(t),\ldots,\hat{I}_{d}(t),\hat{N}(t))$\\
	 P_{i}(t) & asymptotic frequency of individuals infected with strain $i$ in slow time limit\\
\end{longtable}
\end{spacing}

\section{Introduction}

In this SI, we derive the results in the main text.  Where suitable references exist in the literature, we keep the discussion informal, sketching how the results are obtained and referring to the appropriate references for rigorous proofs.  Where they do not, we first give a heuristic derivation for a broader audience, whilst deferring the proofs to the end.

\section{A Stochastic Epidemiological Model with Multiple Pathogen Strains}

We consider a family of random processes $\left(S^{(n)}(t),I^{(n)}_{1}(t),\ldots,I^{(n)}_{d}(t),R^{(n)}(t)\right)$, indexed by a parameter $n$, the ``system size'' \citep{vanKampen1992}, which plays a role similar to the census population size in population genetics (see \eg \citet{Ewens1979,Durrett2009,Etheridge2011}). Similarly to those fixed-population models, we will consider the asymptotic behaviour of our model when $n$ is large.  $S^{(n)}(t)$, $I^{(n)}_{1}(t),\ldots,I^{(n)}_{d}(t)$, and $R^{(n)}(t)$ are the number of susceptible individuals, individuals infected with strain $i=1,\ldots,d$, and recovered individuals, respectively.   We will write $N^{(n)}(t)$ for the total population size at time $t$, so that 
\[
	N^{(n)}(t)  = S^{(n)}(t) + I^{(n)}_{1}(t) + \cdots + I^{(n)}_{d}(t) + R^{(n)}(t).
\]

Using this notation, our compartmental model for the epidemic is represented graphically in Figure \ref{COMPARTMENT}.

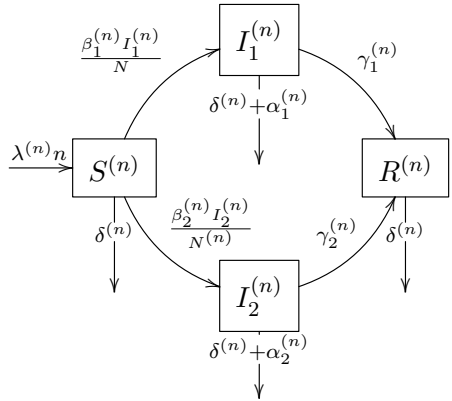
\begin{figure}[h]
\[
	\xymatrix{& & *++[F]{I^{(n)}_{1}} \ar@/^1pc/[dr]^{\gamma^{(n)}_{1}}  \ar[d]|{\delta^{(n)}+\alpha^{(n)}_{1}} & \\
	\ar[r]^{\hspace{-.5cm}\lambda^{(n)} n} & *++[F]{S^{(n)}} \ar@/^1pc/[ur]^{\frac{\beta^{(n)}_{1} I^{(n)}_{1}}{N}} \ar@/_1pc/[dr]^{\frac{\beta^{(n)}_{2} I^{(n)}_{2}}{N^{(n)}}} 
	\ar[d]|{\delta^{(n)}}  & & *++[F]{R^{(n)}}  \ar[d]|{\delta^{(n)}} \\
	& & *++[F]{I^{(n)}_{2}} \ar@/_1pc/[ur]^{\hspace{2cm}\gamma^{(n)}_{2}}  \ar[d]|{\delta^{(n)}+\alpha^{(n)}_{2}} & \\	 
	& & &}
\]
\caption[Compartmental model of a two-strain SIR epidemic]{Compartmental model of a two-strain SIR epidemic.  Arrows indicate transitions between states and are labelled with the corresponding transition rate.  Arrows into empty space indicate deaths.}
\label{COMPARTMENT}
\end{figure}

Equivalently, we may describe our model as a continuous-time Markov chain $\bm{E}$ taking values in $\mathbb{N}^{d+2}$ with transition rates given in Table \ref{RATES}. When a transition occurs at time $t$, we will distinguish between the value $\bm{E}(t-)$ of the Markov chain before the transition and its value $\bm{E}(t)$ after the transition.

All parameters given in Table \ref{RATES} may depend on $n$, but are assumed to have a constant value to first approximation in $n$
\begin{equation}\label{ASSUMPTIONS}
	\alpha^{(n)}_{i} = \alpha_{i} + \BigO{\frac{1}{n}}, 
	\quad \beta^{(n)}_{i} = \beta_{i}+ \BigO{\frac{1}{n}}, 
	\quad \gamma^{(n)}_{i} = \gamma_{i}+ \BigO{\frac{1}{n}},
	\quad \delta^{(n)} = \delta + \BigO{\frac{1}{n}}, 
	\quad \lambda^{(n)} = \lambda + \BigO{\frac{1}{n}}
\end{equation}

Simple calculations using the master equation tell us that in the absence of infected individuals,
the expected value of $N^{(n)}(t)$ is
\[ 	
	\mathbb{E}\left[N^{(n)}(t)\right] = e^{-\delta^{(n)} t} N^{(n)}(0)
		+ \frac{\lambda^{(n)}}{\delta^{(n)}} \left(1-e^{-\delta^{(n)} t}\right) n,
\]
which approaches an equilibrium value of $\frac{\lambda^{(n)}}{\delta^{(n)}} n$ as $t \to \infty$.  Thus, to first approximation, the total population size is proportional to $n$. %if we assume that the arrival rate of new susceptibles is equal to the death rate of uninfected individuals, then the expected number of susceptibles in a disease-free population (and thus the total population size) will eventually equilibrate at $n$.  

\begin{table}
\begin{tabular}{L@{\extracolsep{10mm}}C}
\text{Transition} & \text{Rate}\\
\hline
S^{(n)}(t-) \to S^{(n)}(t) = S^{(n)}(t-)+1 & \lambda^{(n)} n\\
S^{(n)}(t-) \to S^{(n)}(t) = S^{(n)}(t-)-1 & \delta^{(n)} S^{(n)}(t-)\\
S^{(n)}(t-) \to S^{(n)}(t) = S^{(n)}(t-)-1, I^{(n)}_{i}(t-) \to I^{(n)}_{i}(t) = I^{(n)}_{i}(t) + 1 & \frac{\beta^{(n)}_{i} S^{(n)}(t-) I^{(n)}_{i}(t-) }{N^{(n)}(t-)}\\
I^{(n)}_{i}(t-) \to I^{(n)}_{i}(t) = I^{(n)}_{i}(t) - 1 & (\delta^{(n)}+\alpha^{(n)}_{i})I^{(n)}_{i}(t-)\\
I^{(n)}_{i}(t-) \to I^{(n)}_{i}(t) = I^{(n)}_{i}(t) - 1, R^{(n)}(t-) \to R^{(n)}(t) = R^{(n)}(t-)+1 & \gamma^{(n)}_{i} I^{(n)}_{i}(t-)\\
R^{(n)}(t-) \to R^{(n)}(t) = R^{(n)}(t-)-1 & \delta^{(n)} R^{(n)}(t-)
\end{tabular}
\caption{Transition Rates}\label{RATES}
\end{table}

If one knows the values of $S^{(n)}(t)$ and $\bm{I}^{(n)}(t) \defn (I^{(n)}_{1}(t),\ldots,I^{(n)}_{d}(t))$, then given one of $R^{(n)}(t)$ or $N^{(n)}(t)$ one can determine the other.  For our purposes, it is more convenient to track the total population size,  and consider the epidemic 
\[
	\bm{E}^{(n)}(t) \defn \left(S^{(n)}(t),\bm{I}^{(n)}(t),N^{(n)}(t)\right).
\]

In what follows, rather than working with $\bm{E}^{(n)}(t)$ we will focus on the rescaled process 
\[
	\bar{S}^{(n)}(t) \defn \frac{1}{n} S^{(n)}(t), 
	\quad \bar{I}^{(n)}_{i}(t) \defn \frac{1}{n} I^{(n)}_{i}(t), 
	\quad \bar{N}^{(n)}(t) \defn \frac{1}{n} N^{(n)}(t),
\]
and
\[
	\bar{\bm{E}}^{(n)}(t) \defn \left(\bar{S}^{(n)}(t),\bar{\bm{I}}^{(n)}(t),\bar{N}^{(n)}(t)\right).
\]
 $\bar{\bm{E}}^{(n)}(t)$ has the advantage of being a density dependent population process \citep{Kurtz1970,Kurtz1971,Kurtz1978,Kurtz1981} as generalized in \citet{Pollett1990}: the transition rates in \eqref{RATES} depend only on the densities $\bar{S}^{(n)}(t),\bar{\bm{I}}^{(n)}(t),\bar{N}^{(n)}(t)$ and not on the absolute numbers of individuals.  As we discuss below, density dependent population processes have a number of nice features, including a law of large numbers and central limit theorems. 
 
 \begin{rem} 
To simplify our subsequent use of subscripts, we will consider $\bar{\bm{E}}^{(n)}(t)$ as a process taking values in $\mathbb{R}^{d+2}$, the space of points 
\[
	\bm{x} = (x_{0},x_{1},\ldots,x_{d},x_{d+1}),
\]
and use $\bar{S}^{(n)}(t)$ and $\bar{E}^{(n)}_{0}(t)$, \etc interchangeably. 
\end{rem}

\section{Stochastic Differential Equation Formulation}

Here, we introduce a very convenient way of writing our Markov chain as the solution to a stochastic integral equation with the help of simple Poisson processes. 

A Poisson process $P$ is a Markov process making jumps of $+1$ exclusively, and such that $P(0)=0$. A Poisson process $P$ is a called a simple Poisson process if it jumps at constant rate 1. In this case, $(P(at))$ is a Poisson process with rate $a$. This can be generalized by noting that $(P(\int_0^t a(s)\, ds))$ is a time-inhomogeneous Poisson process which jumps at rate $a(t)$ at time $t$. Similarly, there is a unique continuous-time Markov chain $X$ satisfying
\[
X(t) = x_0 + P\left(\int_0^t f(X(s-)) \, ds\right)
\]
and when $X(t-)=x$, $X$ jumps to $x+1$ at rate $f(x)$.

Then it is not difficult to extend this (see Chapter 6, \S 4 in \citet{Ethier+Kurtz86} for details) to our Markov process as follows: 
\begin{align*}
 	S^{(n)}(t) &= \begin{multlined}[t] S^{(n)}(0) + P_{\bm{e}_{0} + \bm{e}_{d+1}}(n\lambda^{(n)} t) \\
		- P_{-\bm{e}_{0} - \bm{e}_{d+1}}\left(\int_{0}^{t} \delta^{(n)} S^{(n)}(s)\, ds\right) 
		- \sum_{i=1}^{d} P_{-\bm{e}_{0} + \bm{e}_{i}}\left(\int_{0}^{t} \frac{\beta^{(n)}_{i} S^{(n)}(s) I^{(n)}_{i}(s) }{N^{(n)}(s)}\, ds\right) \end{multlined}\\
	I^{(n)}_{i}(t) &= \begin{multlined}[t] I^{(n)}_{i}(0) + P_{-\bm{e}_{0} + \bm{e}_{i}}\left(\int_{0}^{t} \frac{\beta^{(n)}_{i} S^{(n)}(s) I^{(n)}_{i}(s) }{N^{(n)}(s)}\, ds\right) \\
		- P_{-\bm{e}_{i} - \bm{e}_{d+1}}\left(\int_{0}^{t}  (\delta^{(n)} + \alpha^{(n)}_{i}) I^{(n)}_{i}(s)\, ds\right)
			 - P_{-\bm{e}_{i}}\left(\int_{0}^{t}  \gamma^{(n)}_{i} I^{(n)}_{i}(s)\, ds\right) \end{multlined}\\
	N^{(n)}(t) &= \begin{multlined}[t] N^{(n)}(0) + P_{\bm{e}_{0} + \bm{e}_{d+1}}(n\lambda^{(n)} t) - P_{-\bm{e}_{0} - \bm{e}_{d+1}}\left(\int_{0}^{t} \delta^{(n)} S^{(n)}(s)\, ds\right) \\
		- P_{-\bm{e}_{d+1}}\left(\int_{0}^{t} \delta^{(n)} \left(N^{(n)}(s) - \sum_{i=1}^{d} I^{(n)}_{i}(s) - S^{(n)}(s)\right)\, ds\right)\\
		- \sum_{i=1}^{d} P_{-\bm{e}_{i} - \bm{e}_{d+1}}\left(\int_{0}^{t} (\delta^{(n)} + \alpha^{(n)}_{i}) I^{(n)}_{i}(s)\, ds\right). \end{multlined}
\end{align*}
where all the processes $P_{\bm{l}}(t)$ are independent, simple Poisson processes, indexed by the corresponding jump $\bm{l}$ of the Markov process $(\bm{E}^{(n)}(t))$ and $\bm{e}_{i}$ is the element of $\mathbb{R}^{d+2}$ with zeros everywhere except a 1 at row $i$.

Changing variables, we get
\begin{align*}
 	\bar{S}^{(n)}(t) &= \begin{multlined}[t] \bar{S}^{(n)}(0) 
	+ \frac{1}{n} P_{\bm{e}_{0} + \bm{e}_{d+1}}(n\lambda^{(n)} t) \\
	- \frac{1}{n} P_{-\bm{e}_{0} - \bm{e}_{d+1}}\left(n \int_{0}^{t} \delta^{(n)} \bar{S}^{(n)}(s)\, ds\right) 
	- \sum_{i=1}^{d} \frac{1}{n} P_{-\bm{e}_{0} + \bm{e}_{i}}\left(n \int_{0}^{t} 
		\frac{\beta^{(n)}_{i} \bar{S}^{(n)}(s) \bar{I}^{(n)}_{i}(s) }{\bar{N}^{(n)}(s)}\, ds\right) 
	\end{multlined}\\
	\bar{I}^{(n)}_{i}(t) &= \begin{multlined}[t] \bar{I}^{(n)}_{i}(0) 
	+ \frac{1}{n} P_{-\bm{e}_{0} + \bm{e}_{i}}\left(n \int_{0}^{t} \frac{\beta^{(n)}_{i} \bar{S}^{(n)}(s) \bar{I}^{(n)}_{i}(s) }{\bar{N}^{(n)}(s)}\, ds\right) \\
		- \frac{1}{n} P_{-\bm{e}_{i} - \bm{e}_{d+1}}\left(n \int_{0}^{t}  (\delta^{(n)} + \alpha^{(n)}_{i}) \bar{I}^{(n)}_{i}(s)\, ds\right)
			 - \frac{1}{n} P_{-\bm{e}_{i}}\left(n \int_{0}^{t}  \gamma^{(n)}_{i} \bar{I}^{(n)}_{i}(s)\, ds\right) \end{multlined}\\
	\bar{N}^{(n)}(t) &= \begin{multlined}[t] \bar{N}^{(n)}(0) + \frac{1}{n} P_{\bm{e}_{0} + \bm{e}_{d+1}}(n\lambda^{(n)} t) - \frac{1}{n} P_{-\bm{e}_{0} - \bm{e}_{d+1}}\left(n \int_{0}^{t} \delta^{(n)} \bar{S}^{(n)}(s)\, ds\right) \\
		- \frac{1}{n} P_{-\bm{e}_{d+1}}\left(n \int_{0}^{t} \delta^{(n)} \left(\bar{N}^{(n)}(s) - \sum_{i=1}^{d} \bar{I}^{(n)}_{i}(s) - \bar{S}^{(n)}(s)\right)\, ds\right)\\
		- \sum_{i=1}^{d} \frac{1}{n} P_{-\bm{e}_{i} - \bm{e}_{d+1}}\left(n \int_{0}^{t} (\delta^{(n)} + \alpha^{(n)}_{i}) \bar{I}^{(n)}_{i}(s)\, ds\right). \end{multlined}
\end{align*}
 %This representation is simply a formalisation of the Gillespie algorithm commonly used to simulate Markov chains.  In the Gillespie algorithm, one samples exponential random variables with the jump rates of the Markov process to obtain the time until the next event; those exponential times are simply the inter-event times for a Poisson process.  Intuitively, the Poisson processes are event clocks that ring (\ie return the value 1) when an event with the corresponding rate has occurred.

This formalism is useful because it will allow us to write each r.h.s. as the sum of a deterministic trend and of a stochastic term with zero expectation. 

Recall that the marginal value $P(t)$ of a simple Poisson process at time $t$ is a Poisson random variable with parameter $t$. In particular, $P(t)-t$ has mean 0 and variance $t$. So if we write 
\[
	\tilde{P}(t) \defn P(t) - t,
\]
we are writing $P(t)$ as the sum of a deterministic trend $t$ and of a stochastic term $\tilde{P}(t)$ with mean 0. If we come back to the example of the Markov process $X$ jumping at rate $f(X)$, we can write
\[
X(t) = x_0 + \int_0^t f(X(s-)) \, ds +M(t),
\]
where we have set 
\[
M(t):= \tilde{P}\left(\int_0^t f(X(s-)) \, ds\right).
\]
In addition, since the increments $\tilde{P}(t+s)-\tilde{P}(t)$ are independent of the past before $t$, have mean 0 and variance $s$, we can write the last equation in differential form
\[
dX(t) =  f(X(t-)) \, dt + dM(t), 
\]
with $dM(t) = U(t) - f(X(t-))\, dt$, where $U(t)$ equals 1 iff $P$ jumps at $\int_0^t f(X(s-)) \, ds$ and equals 0 otherwise. In particular, conditional on $X(t-)=x$, $dM(t)$ has mean 0 and variance $f(x)\,dt$. 
Thus, we also recover the infinitesimal variation of $X$ as the sum of an infinitesimal trend in the dynamics and of a stochastic fluctuation term with zero expectation.

Now let us return to our initial process. We adopt the same notation as previously, for example $\tilde{P}_{-\bm{e}_{0} - \bm{e}_{d+1}}(t) = P_{-\bm{e}_{0} - \bm{e}_{d+1}}(t) - t$ and 
\[
	M^{(n)}_{-\bm{e}_{0} - \bm{e}_{d+1}}(t) \defn \tilde{P}_{-\bm{e}_{0} - \bm{e}_{d+1}}\left(n \int_{0}^{t} \delta^{(n)} \bar{S}^{(n)}(s)\, ds\right).
\]

\begin{prop}
\label{prop:SDE}
The infinitesimal variation of $\bar{\bm{E}}^{(n)}(t)$ can be written as the sum of an infinitesimal deterministic trend and of a stochastic fluctuation term with zero expectation:
\[
d\bar{S}^{(n)}(t)  = {F}^{(n)}_{0}\left(\bar{\bm{E}}^{(n)}(t)\right) dt + \frac{1}{n} dM^{(n)}_{\bm{e}_{0} + \bm{e}_{d+1}}(t) -\frac{1}{n} dM^{(n)}_{-\bm{e}_{0} - \bm{e}_{d+1}}(t)
-\frac{1}{n}  \sum_{i=1}^{d} dM^{(n)}_{-\bm{e}_{0} + \bm{e}_{i}}(t)
\]
\[
d\bar{I}^{(n)}_{i}(t) =  {F}^{(n)}_{i}\left(\bar{\bm{E}}^{(n)}(t)\right) dt + \frac{1}{n} dM^{(n)}_{-\bm{e}_{0} + \bm{e}_{i}}(t)-\frac{1}{n} dM^{(n)}_{-\bm{e}_{i} - \bm{e}_{d+1}}(t)-\frac{1}{n} dM^{(n)}_{-\bm{e}_{i}}(t)
\]
\[
d\bar{N}^{(n)}(t) = {F}^{(n)}_{d+1}\left(\bar{\bm{E}}^{(n)}(t)\right) dt + \frac{1}{n}dM^{(n)}_{\bm{e}_{0} + \bm{e}_{d+1}}(t)
	- \frac{1}{n}dM^{(n)}_{-\bm{e}_{0} - \bm{e}_{d+1}}(t)	-\frac{1}{n} \sum_{i=1}^{d}dM^{(n)}_{-\bm{e}_{i} - \bm{e}_{d+1}}(t)- \frac{1}{n} dM^{(n)}_{-\bm{e}_{d+1}}(t),
\]
%\begin{equation}\label{SDEE}
%\begin{multlined}
%	\bar{\bm{E}}^{(n)}(t) = \bar{\bm{E}}^{(n)}(0)
%	+ \int_{0}^{t} \bm{F}^{(n)}\left(\bar{\bm{E}}^{(n)}(s)\right)\, ds
%	+ \frac{1}{n} (\bm{e}_{0} + \bm{e}_{d+1})M^{(n)}_{\bm{e}_{0} + \bm{e}_{d+1}}(t)
%	- \frac{1}{n} (\bm{e}_{0} + \bm{e}_{d+1}) M^{(n)}_{-\bm{e}_{0} - \bm{e}_{d+1}}(t)\\
%	+ \frac{1}{n}  \sum_{i=1}^{d} (\bm{e}_{i} - \bm{e}_{0}) M^{(n)}_{-\bm{e}_{0} + \bm{e}_{i}}(t)
%	- \frac{1}{n} \sum_{i=1}^{d} (\bm{e}_{i} + \bm{e}_{d+1}) M^{(n)}_{-\bm{e}_{i} - \bm{e}_{d+1}}(t)
%	- \frac{1}{n} \sum_{i=1}^{d} \bm{e}_{i} M^{(n)}_{-\bm{e}_{i}}(t)
%	- \frac{1}{n} \bm{e}_{d+1} M^{(n)}_{-\bm{e}_{d+1}}(t),
%\end{multlined}
%\end{equation}
where
 \begin{align*}
	F^{(n)}_{0}(\bm{x}) &= \lambda^{(n)} - \left(\sum_{i=1}^{d} \beta^{(n)}_{i} \frac{x_{i}}{x_{d+1}}  + \delta^{(n)}\right) x_{0}\\
	F^{(n)}_{i}(\bm{x}) &= \left(\beta^{(n)}_{i}\frac{x_{0}}{x_{d+1}} - (\delta^{(n)}+\alpha^{(n)}_{i}+\gamma^{(n)}_{i})\right) x_{i}\\
	F^{(n)}_{d+1}(\bm{x}) &= \lambda^{(n)} - \delta^{(n)} x_{d+1} - \sum_{i=1}^{d} \alpha^{(n)}_{i} x_{i},
\end{align*}
and $dM^{(n)}_{\bm{e}_{0} + \bm{e}_{d+1}}(t)$, $dM^{(n)}_{-\bm{e}_{0} - \bm{e}_{d+1}}(t)$, $dM^{(n)}_{-\bm{e}_{0} + \bm{e}_{i}}(t)$, $dM^{(n)}_{-\bm{e}_{i} - \bm{e}_{d+1}}(t)$, $dM^{(n)}_{-\bm{e}_{i}}(t)$ and $dM^{(n)}_{-\bm{e}_{d+1}}(t)$, are independent infinitesimal noise terms with mean zero and respective infinitesimal variances\\ 
$n\rho^{(n)}_{\bm{e}_{0} + \bm{e}_{d+1}}\left(\bar{\bm{E}}^{(n)}(t)\right)\, dt$, 
$n\rho^{(n)}_{-\bm{e}_{0} - \bm{e}_{d+1}}\left(\bar{\bm{E}}^{(n)}(t)\right)\, dt$, 
$n\rho^{(n)}_{-\bm{e}_{0} + \bm{e}_{i}}\left(\bar{\bm{E}}^{(n)}(t)\right)\, dt$, 
$n\rho^{(n)}_{-\bm{e}_{i} - \bm{e}_{d+1}}\left(\bar{\bm{E}}^{(n)}(t)\right)\, dt$,\\
$n\rho^{(n)}_{-\bm{e}_{i}}\left(\bar{\bm{E}}^{(n)}(t)\right)\, dt$ and 
$n\rho^{(n)}_{-\bm{e}_{d+1}}\left(\bar{\bm{E}}^{(n)}(t)\right)\, dt$, where
 \begin{align*}
\rho^{(n)}_{\bm{e}_{0} + \bm{e}_{d+1}}(\bm{x})  &=\lambda^{(n)}\\
\rho^{(n)}_{-\bm{e}_{0} - \bm{e}_{d+1}}(\bm{x})&= \delta^{(n)} x_0\\
\rho^{(n)}_{-\bm{e}_{0} + \bm{e}_{i}}(\bm{x})  &=\beta^{(n)}_{i}\frac{x_{0}\, x_i}{x_{d+1}} \\
\rho^{(n)}_{-\bm{e}_{i} - \bm{e}_{d+1}}(\bm{x})  &=(\delta^{(n)} + \alpha^{(n)}_{i}) x_i\\
\rho^{(n)}_{-\bm{e}_{i}}(\bm{x})  &=\gamma^{(n)}_{i} x_i\\
\rho^{(n)}_{-\bm{e}_{d+1}}(\bm{x})  &=\delta^{(n)} \left(x_{d+1}- \sum_{i=0}^{d} x_i\right).
\end{align*}
\end{prop}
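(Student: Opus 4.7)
The plan is to start from the Poisson-integral representation of $\bar{\bm{E}}^{(n)}$ written just above the statement and to apply the identity $P(u) = u + \tilde{P}(u)$ to every simple Poisson process appearing there. Each term of the form
\[
\frac{1}{n}\,P_{\bm{l}}\!\left(n\int_{0}^{t} \rho^{(n)}_{\bm{l}}\!\left(\bar{\bm{E}}^{(n)}(s)\right) ds\right)
\]
then splits as a deterministic integral $\int_{0}^{t}\rho^{(n)}_{\bm{l}}(\bar{\bm{E}}^{(n)}(s))\,ds$ plus the rescaled martingale $\frac{1}{n}M^{(n)}_{\bm{l}}(t)$ defined in the statement. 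The drift coefficient $F^{(n)}_{j}$ is obtained, for each coordinate $j\in\{0,1,\ldots,d+1\}$, by summing $\ell_{j}\,\rho^{(n)}_{\bm{l}}$ over the jump vectors $\bm{l}=(\ell_{0},\ldots,\ell_{d+1})$ that change $\bar{E}^{(n)}_{j}$; the residuals are exactly the noise terms displayed in the proposition.

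Next I would verify the claimed properties of the noise. Each $M^{(n)}_{\bm{l}}$ is a compensated simple Poisson process subjected to an absolutely continuous time change, so it is a mean-zero martingale; since the compensator of $P(u)$ is $u$ and $\langle \tilde P\rangle_{u}=u$, the predictable quadratic variation of $M^{(n)}_{\bm{l}}$ equals $n\int_{0}^{t}\rho^{(n)}_{\bm{l}}(\bar{\bm{E}}^{(n)}(s))\,ds$, which in infinitesimal form gives the variance $n\rho^{(n)}_{\bm{l}}(\bar{\bm{E}}^{(n)}(t))\,dt$ asserted. Independence of the family $\{M^{(n)}_{\bm{l}}\}$ reduces to the independence of the underlying simple Poisson processes $P_{\bm{l}}$, indexed by distinct jumps, via the standard fact that independent Poisson processes remain independent after time changes adapted to a common filtration.

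The remaining work is bookkeeping. For $F^{(n)}_{0}$ I would collect the contributions $+\lambda^{(n)}$, $-\delta^{(n)}\bar S$ and $-\sum_{i}\beta^{(n)}_{i}\bar S\bar I_{i}/\bar N$ from the jumps $\bm{e}_{0}+\bm{e}_{d+1}$, $-\bm{e}_{0}-\bm{e}_{d+1}$ and $-\bm{e}_{0}+\bm{e}_{i}$. For $F^{(n)}_{i}$ the contributions are $+\beta^{(n)}_{i}\bar S\bar I_{i}/\bar N$, $-(\delta^{(n)}+\alpha^{(n)}_{i})\bar I_{i}$ and $-\gamma^{(n)}_{i}\bar I_{i}$. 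For $F^{(n)}_{d+1}$ one sums the death rates over all three host classes, using $\bar R^{(n)} = \bar N^{(n)}-\bar S^{(n)}-\sum_{i}\bar I^{(n)}_{i}$, to get a total death contribution of $\delta^{(n)}\bar N+\sum_{i}\alpha^{(n)}_{i}\bar I_{i}$, leaving the drift $\lambda^{(n)}-\delta^{(n)}\bar N-\sum_{i}\alpha^{(n)}_{i}\bar I_{i}$.

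I do not expect any genuine obstacle: the statement is essentially the semimartingale decomposition of a density-dependent Markov chain, and the content is accounting. The one point requiring care is consistent indexing of jump vectors across coordinates, so that a single Poisson process driving a coupled jump is accounted for exactly once and its martingale part is reused in all affected coordinates. For instance, the process $P_{-\bm{e}_{0}+\bm{e}_{i}}$ must appear in both the $\bar S^{(n)}$ and $\bar I^{(n)}_{i}$ equations (with opposite signs) but not in the $\bar N^{(n)}$ equation; this is what guarantees that the independence conclusion is across distinct jump types $\bm{l}$ rather than across coordinates, and also ensures that the shared-jump correlations between $d\bar S^{(n)}$ and $d\bar I^{(n)}_{i}$ are correctly encoded by the reused $M^{(n)}_{-\bm{e}_{0}+\bm{e}_{i}}$.
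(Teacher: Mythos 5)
Your proposal is correct and follows essentially the same route as the paper: the paper's own derivation consists precisely of writing each time-changed simple Poisson process as compensator plus compensated martingale via $P(u)=u+\tilde P(u)$, collecting the compensators into $\bm{F}^{(n)}$, and reading off the infinitesimal variances from the quadratic variation of $\tilde P$. The only point worth phrasing more carefully is the independence claim: the full time-changed processes $M^{(n)}_{\bm{l}}$ are \emph{not} independent as processes (their time changes are all coupled through $\bar{\bm{E}}^{(n)}$); what holds, and what the proposition asserts, is that the \emph{infinitesimal} increments $dM^{(n)}_{\bm{l}}(t)$ are conditionally independent given the current state, because the driving simple Poisson processes are independent and hence almost surely never jump simultaneously.
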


\begin{rem}
Note that $n\rho^{(n)}_{\bm{l}}\left(\bar{\bm{E}}^{(n)}(t)\right)$, is rate at which the Markov process  $(\bm{E}^{(n)}(t))$ makes a jump $\bm{l}$.
\end{rem}

Setting
\begin{multline}
\label{eqn:Mn}
	\bm{M}^{(n)}(t) := (\bm{e}_{0} + \bm{e}_{d+1})M^{(n)}_{\bm{e}_{0} + \bm{e}_{d+1}}(t)
	- (\bm{e}_{0} + \bm{e}_{d+1}) M^{(n)}_{-\bm{e}_{0} - \bm{e}_{d+1}}(t)
	+ \sum_{i=1}^{d} (\bm{e}_{i} - \bm{e}_{0}) M^{(n)}_{-\bm{e}_{0} + \bm{e}_{i}}(t)\\
	- \sum_{i=1}^{d} (\bm{e}_{i} + \bm{e}_{d+1}) M^{(n)}_{-\bm{e}_{i} - \bm{e}_{d+1}}(t)
	- \sum_{i=1}^{d} \bm{e}_{i} M^{(n)}_{-\bm{e}_{i}}(t)
	- \bm{e}_{d+1} M^{(n)}_{-\bm{e}_{d+1}}(t),
\end{multline}
the result in the proposition can be written more compactly as
\begin{equation}
\label{SDEE}
	d\bar{\bm{E}}^{(n)}(t) = %\bar{\bm{E}}^{(n)}(0)
	 \bm{F}^{(n)}\left(\bar{\bm{E}}^{(n)}(t)\right)\, dt + \frac{1}{n} d\bm{M}^{(n)}(t),
\end{equation}
where 
\[
	\bm{F}^{(n)}(\bm{x}) =(F^{(n)}_{0}(\bm{x}),F^{(n)}_{1}(\bm{x}),\ldots \\ F^{(n)}_{d}(\bm{x}),F^{(n)}_{d+1}(\bm{x})).
\]
Thus, the function $\bm{F}^{(n)}(\bm{x})$ describes the infinitesimal trend in the dynamics, whereas the terms $M^{(n)}_{\bm{l}}(t)$ capture the de-trended fluctuations corresponding to each type of possible event.  This equation is analogous to an It\^o SDE, only now the driving noise is the discontinuous $\bm{M}^{(n)}(t)$, rather than the more familiar Brownian motion. 

We note that for $i=1,\ldots,d$,
\[
	M^{(n)}_{i}(t) = M^{(n)}_{-\bm{e}_{0} + \bm{e}_{i}}(t) - M^{(n)}_{-\bm{e}_{i} - \bm{e}_{d+1}}(t)
		- M^{(n)}_{-\bm{e}_{i}}(t),		
\]
so that $M^{(n)}_{i}(t)$ is independent of $M^{(n)}_{j}(t)$ for all $1 \leq i \neq j \leq d$, whereas
\begin{equation}\label{QVMi}
	\mathbb{E}\left[dM^{(n)}_{i}(t)^{2}\right] = \left(\beta^{(n)}_{i}\frac{S^{(n)}(t)}{N^{(n)}(t)} 
	+ (\delta^{(n)}+\alpha^{(n)}_{i}+\gamma^{(n)}_{i})\right) I^{(n)}_{i}(t),
\end{equation}
two facts that will prove useful in what follows.
 
We define similarly
\[
\rho_1^{(n)}:=\rho^{(n)}_{-\bm{e}_{0} - \bm{e}_{d+1}}, \quad \rho^{(n)}_2:=\rho^{(n)}_{\bm{e}_{0} + \bm{e}_{d+1}}, \quad \rho^{(n)}_D:=\rho^{(n)}_{-\bm{e}_{d+1}} 
\] 
and for $j=1,\ldots, d$
\[
\rho_{3j}^{(n)}:=\rho_{S,j}^{(n)}, \quad \rho_{3j+1}^{(n)}:=\rho_{j,-}^{(n)}, \quad \rho_{3j+2}^{(n)}:=\rho_{j,R}{(n)}. 
\] 

%From the last set of equations, it is easy to see that 
In particular, it makes sense to define $\bm{a}^{(n)}$ as the infinitesimal variance-covariance matrix of $\bar{\bm{E}}^{(n)}(t)$ by
\begin{equation}\label{A}
	 a^{(n)}_{ij}\left(\bar{\bm{E}}^{(n)}(t)\right) \,dt := 
	 	\text{\rm Cov}\left[d\bar{E}^{(n)}_{i}(t), d\bar{E}^{(n)}_{j}(t) \right]= \frac{1}{n^2}\text{\rm Cov}\left[dM^{(n)}_{i}(t), d{M}^{(n)}_{j}(t)\right]. %\middle\vert	\bar{\bm{E}}^{(n)}(t-) = \bm{x} \right],
\end{equation}
Let us compute $\bm{a}^{(n)}$. Because all distinct terms in the definition \eqref{eqn:Mn} of $M^{(n)}$ are independent, all cross terms vanish. For example, for any $1\le i\le d$
\begin{align*}
\text{\rm Cov}&\left[d\bar{E}^{(n)}_{0}(t),d\bar{E}^{(n)}_{i}(t)\right] \\
	&= \begin{multlined} \frac{1}{n^2}\text{\rm Cov}\left[dM^{(n)}_{\bm{e}_{0} + \bm{e}_{d+1}}(t) - dM^{(n)}_{-\bm{e}_{0} - \bm{e}_{d+1}}(t) -  \sum_{j=1}^{d} dM^{(n)}_{S,j}(t),\right.\\
\left. dM^{(n)}_{-\bm{e}_{0} + \bm{e}_{i}}(t)-dM^{(n)}_{-\bm{e}_{i} - \bm{e}_{d+1}}(t)- dM^{(n)}_{-\bm{e}_{i}}(t)\right]
	\end{multlined}\\
	&= \frac{1}{n^2}\text{\rm Cov}\left[
-dM^{(n)}_{-\bm{e}_{0} + \bm{e}_{i}}(t), dM^{(n)}_{-\bm{e}_{0} + \bm{e}_{i}}(t)\right]\\
&= -\frac{1}{n^2}\text{\rm Var}\left[dM^{(n)}_{-\bm{e}_{0} + \bm{e}_{i}}(t)\right] = -\frac{1}{n}\rho^{(n)}_{-\bm{e}_{0} + \bm{e}_{i}}(\bar{\bm{E}}^{(n)}(t))\, dt ,
\end{align*}
which shows that $a_{0i}^{(n)}(\bm{x}) =  -\frac{1}{n}\rho^{(n)}_{-\bm{e}_{0} + \bm{e}_{i}}(\bm{x})$.
We should not be surprised by the fact that this infinitesimal covariance is negative. Each event where a susceptible is infected by strain $i$ has the effect of simultaneously decreasing the number of susceptibles and increasing the number of individuals infected by strain $i$.

By similar calculations, it is easy to see that 
$$
a_{0,d+1}^{(n)}(\bm{x}) = \frac{1}{n}\rho^{(n)}_{\bm{e}_{0} + \bm{e}_{d+1}}\left(\bm{x}\right) +\frac{1}{n}\rho^{(n)}_{-\bm{e}_{0} - \bm{e}_{d+1}}\left(\bm{x}\right),
$$
$$
a_{d+1,d+1}^{(n)}(\bm{x}) = \frac{1}{n}\rho^{(n)}_{\bm{e}_{0} + \bm{e}_{d+1}}\left(\bm{x}\right) +\frac{1}{n}\rho^{(n)}_{-\bm{e}_{0} - \bm{e}_{d+1}}\left(\bm{x}\right)+\frac{1}{n}\sum_{i=1}^d\rho^{(n)}_{-\bm{e}_{i} - \bm{e}_{d+1}}\left(\bm{x}\right)+ \frac{1}{n}\rho^{(n)}_{-\bm{e}_{d+1}}\left(\bm{x}\right),
$$
and for any $1\le i\le d$,
$$
a_{i,d+1}^{(n)}(\bm{x}) =\frac{1}{n}\rho^{(n)}_{-\bm{e}_{i} - \bm{e}_{d+1}}\left(\bm{x}\right).
$$
Also for any $1\le i\not=j\le d$, $a_{ij}^{(n)}(\bm{x}) =0$
%\[
%\text{\rm Cov}\left[d\bar{E}^{(n)}_{i}(t), d\bar{E}^{(n)}_{j}(t)\right]=0
%\]
and 
\begin{align*}
a_{jj}^{(n)}(\bm{x}) = \frac{1}{n}\rho^{(n)}_{S,j}\left(\bm{x}\right) +\frac{1}{n}\rho^{(n)}_{j,-}\left(\bm{x}\right) + \frac{1}{n}\rho^{(n)}_{j,R}\left(\bm{x}\right).
%\text{\rm Cov}\left[d\bar{E}^{(n)}_{j}(t), d\bar{E}^{(n)}_{j}(t)\right] &= \frac{1}{n^2}\text{\rm Var}\left[dM^{(n)}_{S,j}(t)\right] +\frac{1}{n^2}\text{\rm Var}\left[dM^{(n)}_{j,-}(t)\right] +\frac{1}{n^2}\text{\rm Var}\left[dM^{(n)}_{j,R}(t)\right] \\
%	&= \frac{1}{n}\rho^{(n)}_{S,j}\left(\bar{\bm{E}}^{(n)}(t)\right)\, dt +\frac{1}{n}\rho^{(n)}_{j,-}\left(\bar{\bm{E}}^{(n)}(t)\right)\, dt + \frac{1}{n}\rho^{(n)}_{j,R}\left(\bar{\bm{E}}^{(n)}(t)\right)\, dt.
\end{align*}
In other words, for any $0\le i,j\le d+1$,
\begin{equation}\label{AA}
	a^{(n)}_{ij}(\bm{x}) = \begin{cases}
	\frac{1}{n} \left(\frac{\beta_{i}^{(n)} x_{0}x_{j}}{x_{d+1}} + (\delta^{(n)} + \alpha_{j}^{(n)} + \gamma_{j}^{(n)})x_{j} \right)
		& \text{if $i=j$, }\\
	0 & \text{otherwise.}
	\end{cases}
%a^{(n)}_{ij}(\bm{x}) = \frac 1n \sum_{k=1}^D\sigma^{(n)}_{ik}(\bm{x})\sigma^{(n)}_{jk}(\bm{x}).
\end{equation}
Equivalently,
\begin{equation}\label{AAA}
	\bm{a}^{(n)}(\bm{x}) = \frac{1}{n} \sum_{\bm{l}} \bm{l}^{\top}\bm{l} \rho_{\bm{l}}(\bm{x}),
\end{equation}
where the sum is over all possible jumps $\bm{l}$ of the Markov process $(\bm{E}^{(n)}(t))$.

\subsection{Obtaining Equation (3)}

Note that similarly to Brownian motion, the infinitesimal mean of $\frac 1{\sqrt{n}} M^{(n)}_{\bm{l}}(t)$ during the time interval $dt$ is zero and its infinitesimal variance is  $\rho^{(n)}_{\bm{l}} \left(\tilde{\bm{E}}^{(n)}(t)\right) dt$., whereas the jump size $\frac 1{\sqrt{n}}$ tends to 0 as $n \to \infty$ (and thus, $\frac 1{\sqrt{n}} M^{(n)}_{\bm{l}}(t)$ is approximately continuous for large $n$), we see that  for large values of $n$, this noise is approximately equal to a Brownian motion with the same variance:
\[
	\frac 1{\sqrt{n}} M^{(n)}_{\bm{l}}(t) 
	\approx \sqrt{\rho^{(n)}_{\bm{l}} \left(\tilde{\bm{E}}^{(n)}(t)\right)}\, dB_{\bm{l}}(t),
\]
where all Brownian motions $B_{\bm{e}_{0} + \bm{e}_{d+1}}$, $B_{-\bm{e}_{0} - \bm{e}_{d+1}}$, $B_{-\bm{e}_{0} + \bm{e}_{i}}$, $B_{-\bm{e}_{i} - \bm{e}_{d+1}}$, $B_{-\bm{e}_{i}}$ and $B_{-\bm{e}_{d+1}}$ are independent.

This allows us to rewrite the results in Proposition \ref{prop:SDE} in the form of the following diffusion approximation for $n$ large, 

%\textcolor{blue}{We may want to say here that this approximation holds for $n$ is large, don't we?} 
\begin{multline*}
%\begin{medsize} 
d\bar{S}^{(n)}(t)  \approx {F}^{(n)}_{0}\left(\bar{\bm{E}}^{(n)}(t)\right) dt + \frac{1}{\sqrt{n}} \sqrt{\rho^{(n)}_{\bm{e}_{0} + \bm{e}_{d+1}}\left(\bar{\bm{E}}^{(n)}(t)\right)}\,dB_{\bm{e}_{0} + \bm{e}_{d+1}}(t) \\-\frac{1}{\sqrt{n}}\sqrt{\rho^{(n)}_{-\bm{e}_{0} - \bm{e}_{d+1}}\left(\bar{\bm{E}}^{(n)}(t)\right)} \, dB_{-\bm{e}_{0} - \bm{e}_{d+1}}(t) -\frac{1}{\sqrt{n}}  \sum_{i=1}^{d} \sqrt{\rho^{(n)}_{-\bm{e}_{0} + \bm{e}_{i}}\left(\bar{\bm{E}}^{(n)}(t)\right)}\, dB_{-\bm{e}_{0} + \bm{e}_{i}}(t)
\end{multline*}
\begin{multline*}
d\bar{I}^{(n)}_{i}(t) \approx  {F}^{(n)}_{i}\left(\bar{\bm{E}}^{(n)}(t)\right) dt +  \frac{1}{\sqrt{n}} \sqrt{\rho^{(n)}_{-\bm{e}_{0} + \bm{e}_{i}}\left(\bar{\bm{E}}^{(n)}(t)\right)} dB_{-\bm{e}_{0} + \bm{e}_{i}}(t)\\- \frac{1}{\sqrt{n}} \sqrt{\rho^{(n)}_{-\bm{e}_{i} - \bm{e}_{d+1}}\left(\bar{\bm{E}}^{(n)}(t)\right)} dB_{-\bm{e}_{i} - \bm{e}_{d+1}}(t)- \frac{1}{\sqrt{n}} \sqrt{\rho^{(n)}_{-\bm{e}_{i}}\left(\bar{\bm{E}}^{(n)}(t)\right)} dB_{-\bm{e}_{i}}(t)
\end{multline*}
\begin{multline*}
d\bar{N}^{(n)}(t) \approx {F}^{(n)}_{d+1}\left(\bar{\bm{E}}^{(n)}(t)\right) dt +  \frac{1}{\sqrt{n}} \sqrt{\rho^{(n)}_{\bm{e}_{0} + \bm{e}_{d+1}}\left(\bar{\bm{E}}^{(n)}(t)\right)}dB_{\bm{e}_{0} + \bm{e}_{d+1}}(t)\\ -  \frac{1}{\sqrt{n}} \sqrt{\rho^{(n)}_{-\bm{e}_{0} - \bm{e}_{d+1}}\left(\bar{\bm{E}}^{(n)}(t)\right)}dB_{-\bm{e}_{0} - \bm{e}_{d+1}}(t)	- \frac{1}{\sqrt{n}}\sum_{i=1}^d \sqrt{\rho^{(n)}_{-\bm{e}_{i} - \bm{e}_{d+1}}\left(\bar{\bm{E}}^{(n)}(t)\right)} dB_{-\bm{e}_{i} - \bm{e}_{d+1}}(t)\\
-  \frac{1}{\sqrt{n}} \sqrt{\rho^{(n)}_{-\bm{e}_{d+1}}\left(\bar{\bm{E}}^{(n)}(t)\right)}dB_{-\bm{e}_{d+1}}(t)
%\end{medsize} 
\end{multline*}
(see \citet{Kurtz1978} for a rigorous statement). 

Setting
 \[
 	\bar{I}^{(n)}(t) := \sum_{l=1}^{d} \bar{I}^{(n)}_{l}(t),
\]
so that
\[
	d\bar{I}^{(n)}(t) = \sum_{l=1}^{d} d\bar{I}^{(n)}_{l}(t), 
\]
and combining independent Brownian motions, we obtain equation (3) in the main text (\nb to simplify notation in the main text, we use $X$, $Y$ and $Z$ in lieu of $\bar{S}^{(n)}(t)$, $\bar{I}^{(n)}(t)$ and
$\bar{N}^{(n)}(t)$).  

We shall not use this diffusion approximation in the sequel, where we continue to consider the process with discrete jumps, \eqref{SDEE}.

\section{It\^o's Formula and Derivation of Equation (4)}\label{DERIVATION}

As a first application of the SDE representation, we apply It\^o's formula with jumps to our process to obtain an SDE for the proportion of each strain.  

To motivate this, suppose we had a deterministic differential equation 
\[
	\dot{\bm{Y}}(t) = \bm{f}(\bm{Y}(t))
\]
and we let  $X(t)$ be a deterministic real function of $\bm{Y}(t)$, say
\[
	X(t):=g(\bm{Y}(t))
\]
where $g:\mathbb{R}^{d+2}\to \mathbb{R}$ is assumed to be continuously differentiable.% Of course $X^{(n)}$ converges as 
%$n\to\infty$ to $X(t):=g\left(\bm{E}(t)\right)$. If we were only interested in the limiting deterministic case

Then, applying the chain rule, we derive a differential equation satisfied by $X(t)$:
\[
	\dot{X}(t) = \sum_{j=0}^{d+1}\frac{\partial g}{\partial x_{j}}(\bm{Y}(t)) f_{j}(\bm{Y}(t))
\]
or equivalently
\[
	X(t) =  X(0) +\int_0^t \sum_{j=0}^{d+1}\frac{\partial g}{\partial x_{j}}(\bm{Y}(s)) f_{j}(\bm{Y}(s))\, ds
\]

The analogue of the chain rule in the fully stochastic case is the Meyer-It\^o's formula (see \eg \citet{Protter04}).
 \begin{multline}
 \label{ITOP}
	X^{(n)}(t)= X^{(n)}(0)+\int_0^t\sum_{j = 0}^{d+1} 
	\frac{\partial g}{\partial x_{j}}(\bm{E}^{(n)}(s)) F^{(n)}_{j}(\bar{\bm{E}}^{(n)}(s))\\
	+ \frac{1}{2} \sum_{j,k = 0}^{d+1} a^{(n)}_{jk}(\bar{\bm{E}}^{(n)}(s))
		\frac{\partial^2 g}{\partial x_{j}\partial x_{k}} (\bar{\bm{E}}^{(n)}(s))\, ds
	+  \frac{1}{n}\int_0^t \sum_{j = 0}^{d+1} \frac{\partial g}{\partial x_{j}}(\bm{E}^{(n)}(s))\, 
		dM^{(n)}_{j}(s) + \varepsilon^{(n)}(t), 
\end{multline}
where $\bm{a}^{(n)}(\bm{x})$ is the infinitesimal variance-covariance matrix of $\bar{\bm{E}}^{(n)}(t)$ defined in \eqref{AA} and
\begin{multline}\label{ERROR}
	\varepsilon^{(n)}(t) =  \sum_{s < t} g(\bar{\bm{E}}^{(n)}(s)) - g(\bar{\bm{E}}^{(n)}(s-)) 
	- \sum_{j = 0}^{d+1} \frac{\partial g}{\partial x_{j}} (\bar{\bm{E}}^{(n)}(s-)) \Delta \bar{E}^{(n)}_{j}(s) \\
	- \frac{1}{2} \sum_{j,k = 0}^{d+1}  \frac{\partial^2 g}{\partial x_{j}\partial x_{k}}(\bar{\bm{E}}^{(n)}(s-))
	\Delta \bar{E}^{(n)}_{j}(s) \Delta \bar{E}^{(n)}_{k}(s),
\end{multline}
where the sum is over the times $s$ of discontinuity of $\bar{\bm{E}}^{(n)}$. At a time $t$ of discontinuity,
\[
	\Delta \bar{E}^{(n)}_{j}(t) := \bar{E}^{(n)}_{j}(t) - \bar{E}^{(n)}_{j}(t-) 
\]
denotes the magnitude of the jump in $\bar{E}^{(n)}_{j}$ at time $t$. The term $\varepsilon^{(n)}(t)$ correcting for discontinuities distinguishes the more general Meyer-It\^o formula from the familiar It\^o's formula for diffusions. In Section \ref{BOUNDING}, we show that $\varepsilon^{(n)}(t)=\BigO{1/n^2}$. 

Using this, we can derive Equation (4) from the main text.  Let 
\[
	\Pi_{i}(\bm{x}) = \frac{x_{i}}{\sum_{l=1}^{d} x_{l}}
\]
so that
\[
	P^{(n)}_{i}(t) = \Pi_{i}(\bar{\bm{I}}^{(n)}(t)) = \Pi_{i}(\bar{\bm{E}}^{(n)}(t)).
\]
is the proportion of the population infected with strain $i$. Since $P^{(n)}_{i}(t)$ is a deterministic function of $\bar{\bm{E}}^{(n)}(t)$, we can use It\^o's formula \eqref{ITOP} for jump processes. The following statement will be proved rigorously in Section \ref{BOUNDING}.
\begin{prop}
\label{prop:Pi_i}
The fraction of the population infected by strain $i$ satisfies
\begin{equation}\label{PIF}
	P^{(n)}_{i}(t) = \begin{multlined}[t] 
	P^{(n)}_{i}(0) 
	+ \int_{0}^{t} P^{(n)}_{i}(s) \left(r^{(n)}_{i}(\bar{S}^{(n)}(s),\bar{N}^{(n)}(s)) 
	- \sum_{j=1}^{d} r^{(n)}_{j}(\bar{S}^{(n)}(s),\bar{N}^{(n)}(s)) P^{(n)}_{j}(s)\right)\\
	+ \frac{1}{n} \frac{1}{\sum_{l=1}^{d} \bar{I}^{(n)}_{l}(s)} P^{(n)}_{i}(s)
	\left(v^{(n)}_{i}(\bar{S}^{(n)}(s),\bar{N}^{(n)}(s)) 
	- \sum_{j=1}^{d} v^{(n)}_{j}(\bar{S}^{(n)}(s),\bar{N}^{(n)}(s)) P^{(n)}_{j}(s)\right)\, ds\\
	+ \frac{1}{n} \int_{0}^{t} \frac{1}{\sum_{l=1}^{d} \bar{I}^{(n)}_{l}(s)} \sum_{j = 1}^{d} 
	\left(\mathbbm{1}_{\{i=j\}} - P^{(n)}_{i}(s)\right)\, 
		dM^{(n)}_{j}(s) + \varepsilon^{(n)}_{i}(t), 
	\end{multlined}
\end{equation}
where
\[
	r^{(n)}_{i}(x_{0},x_{d+1}) \defn \beta^{(n)}_{i}\frac{x_{0}}{x_{d+1}} - (\delta^{(n)}+\alpha^{(n)}_{i}+\gamma^{(n)}_{i})
\]
gives the Malthusian growth rate of strain $i$, whereas
\[
	v^{(n)}_{i}(x_{0},x_{d+1}) \defn \beta^{(n)}_{i}\frac{x_{0}}{x_{d+1}} 
	+ (\delta^{(n)}+\alpha^{(n)}_{i}+\gamma^{(n)}_{i})
\]
is the infinitesimal variance associated with the growth of strain $i$. In addition, for any $T>0$, there is a constant $C$ such that for all $t\in[0,T]$, $\mathbb{P}\{\vert n^2 \varepsilon^{(n)}_{i}(t)\vert\ge C\}$ vanishes as $n\to\infty$.
\end{prop}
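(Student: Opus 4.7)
The plan is to apply the Meyer-Itô formula \eqref{ITOP} to the function $g = \Pi_i$, where $\Pi_i(\bm{x}) = x_i / \sum_{l=1}^{d} x_l$, composed with the semimartingale $\bar{\bm{E}}^{(n)}(t)$ whose decomposition is given by Proposition 1, and then match the resulting expression term by term with the claim \eqref{PIF}. Because $\Pi_i$ depends only on the coordinates $x_1,\ldots,x_d$, the partial derivatives $\partial \Pi_i/\partial x_0$ and $\partial \Pi_i/\partial x_{d+1}$ vanish identically, which collapses many of the sums in \eqref{ITOP}. Writing $T(\bm{x}) = \sum_{l=1}^{d} x_l$, a direct computation gives, for $j,k \in \{1,\ldots,d\}$,
\[
	\frac{\partial \Pi_i}{\partial x_j}(\bm{x}) = \frac{\delta_{ij} - \Pi_i(\bm{x})}{T(\bm{x})},
	\qquad
	\frac{\partial^2 \Pi_i}{\partial x_j \partial x_k}(\bm{x}) = \frac{2\Pi_i(\bm{x}) - \delta_{ij} - \delta_{ik}}{T(\bm{x})^2}.
\]

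I would then substitute these into each of the three non-error terms of \eqref{ITOP} in turn. For the first-order drift, using $F_j^{(n)}(\bm{x}) = r_j^{(n)}(x_0, x_{d+1})\, x_j$ from Proposition \ref{prop:SDE}, the sum rearranges into the Price-equation expression $P_i^{(n)}(r_i^{(n)} - \bar{r}^{(n)})$ with $\bar{r}^{(n)} = \sum_j r_j^{(n)} P_j^{(n)}$. For the second-order correction the key input is \eqref{AA}: the entries of $\bm{a}^{(n)}$ with indices in $\{1,\ldots,d\}$ are diagonal, with $a_{jj}^{(n)}(\bm{x}) = v_j^{(n)}(x_0, x_{d+1})\, x_j / n$, so only the pure second derivatives $\partial^2 \Pi_i/\partial x_j^2$ contribute, yielding exactly the stated $\frac{1}{n\bar{I}^{(n)}} P_i^{(n)}\bigl(v_i^{(n)} - \bar{v}^{(n)}\bigr)$ term. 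The martingale term is immediate: substituting $\partial \Pi_i/\partial x_j = (\mathbbm{1}_{\{i=j\}} - P_i^{(n)})/\bar{I}^{(n)}$ into $\sum_j \partial_j \Pi_i\, dM_j^{(n)}$ (with $j$ running only over $1,\ldots,d$) produces the integrand in \eqref{PIF}.

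The main obstacle is controlling the residual jump-correction term $\varepsilon_i^{(n)}(t)$ defined in \eqref{ERROR}. Since every jump $\Delta \bar{\bm{E}}^{(n)}(s)$ has components of magnitude $1/n$, Taylor expansion of $\Pi_i$ to third order bounds each summand in \eqref{ERROR} by a constant times $\|D^3 \Pi_i\|_{\infty, K}/n^3$ on any set $K$ where $T = \bar{I}^{(n)}$ is bounded away from zero. The obstacle is precisely this: $\Pi_i$ and its derivatives blow up as $T \to 0$, so the bound is not uniform in $\omega$. My plan is to localize to the event $A_{n,\epsilon} = \bigl\{\inf_{0 \le s \le T} \bar{I}^{(n)}(s) \ge \epsilon\bigr\}$, on which the third derivatives are uniformly bounded. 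The expected number of jumps of $\bm{E}^{(n)}$ on $[0,T]$ is $\BigO{n}$, because each of the six rates in Table \ref{RATES} is $\BigO{n}$ and the densities $\bar{\bm{E}}^{(n)}$ stay $\BigO{1}$ by the density-dependent law of large numbers of \citet{Kurtz1970}. Combining these two estimates gives $|\varepsilon_i^{(n)}(t)| = \BigO{1/n^2}$ in probability on $A_{n,\epsilon}$; the tightness claim $\mathbb{P}\{|n^2 \varepsilon_i^{(n)}(t)| \ge C\} \to 0$ then follows by first choosing $C$ large enough on $A_{n,\epsilon}$ and then taking $\epsilon$ small, using that $\mathbb{P}(A_{n,\epsilon}^c) \to 0$ as $n \to \infty$ for sufficiently small $\epsilon$ by the quasi-stationarity argument invoked in the main text (which keeps us conditional on non-extinction of the pathogen population).
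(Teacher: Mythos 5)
Your proposal follows essentially the same route as the paper's own proof: apply the Meyer--It\^o formula to $\Pi_i$, compute the same first and second partial derivatives, exploit the diagonality of $\bm{a}^{(n)}$ on the infected coordinates so that only the terms $\partial^2\Pi_i/\partial x_j^2$ survive, and bound $\varepsilon^{(n)}_i$ by combining the $\frac{1}{n}$ jump magnitude, the $\BigO{n}$ jump count (controlled via the Poisson representation and Chebyshev), and the vanishing of the second-order Taylor remainder. Your explicit localization on the event $\{\inf_{s\le T}\bar{I}^{(n)}(s)\ge\epsilon\}$ is a slightly more careful rendering of the paper's implicit appeal to smoothness of $\Pi_i$ away from the origin; the argument is otherwise identical.
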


To obtain Equation (4) in the main text, we omit the lower order error term $\varepsilon^{(n)}_{i}(t)=\BigO{1/n^2}$, recall that 
\[
	I^{(n)}(t) = \sum_{l=1}^{d} I^{(n)}_{l}(t) = n \sum_{l=1}^{d} \bar{I}^{(n)}_{l}(t) 
\]
gives the total number of infectives, and observe that, similarly to the previous section,
\[
	\frac{1}{n} dM^{(n)}_{i} \approx 
	\frac{1}{\sqrt{n}} \sqrt{v^{(n)}_{i}(\bar{S}^{(n)}(s),\bar{N}^{(n)}(s)) \bar{I}^{(n)}_{i}(t)}\, dB_{i}(t)
\]
for independent Brownian motions $B_{1},\ldots,B_{d}$, so that 
\[
	\frac{1}{n} \frac{1}{\sum_{l=1}^{d} \bar{I}^{(n)}_{l}(s)}\, dM^{(n)}_{j}
	\approx \frac{1}{\sqrt{I^{(n)}(t)}} \sqrt{v^{(n)}_{i}(\bar{S}^{(n)}(s),\bar{N}^{(n)}(s))P^{(n)}_{i}(t)}\, \, 	
	dB_{i}(t).
\]
%
%\begin{equation}
%	dP^{(n)}_{i}(t) \approx \begin{multlined}[t] 
%	\left(P^{(n)}_{i}(s) \left(r^{(n)}_{i}(\bar{S}^{(n)}(s),\bar{N}^{(n)}(s)) 
%	- \sum_{j=1}^{d} r^{(n)}_{j}(\bar{S}^{(n)}(s),\bar{N}^{(n)}(s)) P^{(n)}_{j}(s)\right)\right.\\
%	\left. + \frac{1}{n} \frac{1}{\sum_{l=1}^{d} \bar{I}^{(n)}_{l}(s)} P^{(n)}_{i}(s)
%	\left(v^{(n)}_{i}(\bar{S}^{(n)}(s),\bar{N}^{(n)}(s)) 
%	- \sum_{j=1}^{d} v^{(n)}_{j}(\bar{S}^{(n)}(s),\bar{N}^{(n)}(s)) P^{(n)}_{j}(s)\right)\right)\, dt\\
%	+ \frac{1}{n} \frac{1}{\sum_{l=1}^{d} \bar{I}^{(n)}_{l}(t)} \sum_{j = 1}^{d} 
%	\left(\mathbbm{1}_{\{i=j\}} - P^{(n)}_{i}(s)\right)\, 
%	\left(	dB_{S,j}(t) + dB_{j,-}(t) + dB_{j,R}(t) \right)		 
%	\end{multlined}
%\end{equation}

%\textcolor{blue}{We will need to rewrite the above equation \eqref{PIF} in the main text (for equation (4)). I change the notation and used $r$ and $var$ above. Let me know if you are ok or not with this notation.}

\section{Probability of Fixation of a Mutant Pathogen}

In this section, we will be interested in the long time behaviour of our multi-strain stochastic epidemics. In particular, we tackle the problem of predicting which strains will be outcompeted and which strains will fix.

\subsection{A Deterministic Limit and its Asymptotic Analysis}\label{ASYMPT}

We begin this section with a result stating the convergence to a deterministic dynamical system as $n\to\infty$.

\begin{prop}[Theorem 2.2, \citet{Kurtz1978}]\label{KURTZ}
If
\[
	\bar{S}^{(n)}(0) \to S(0), \quad 
	\bar{I}^{(n)}_{i}(0) \to I_{i}(0), 
	\quad \text{and} \quad \bar{N}^{(n)}(0) \to N(0)
\]
as $n \to \infty$, then for any fixed $T > 0$, with probability 1,
\begin{equation}\label{KURTZEQ}
	\sup_{t \leq T} \|\bar{\bm{E}}^{(n)}(t) - \bm{E}(t)\| \to 0,
\end{equation}
where 
%$\bar{S}^{(n)}(t)$, $\bar{I}^{(n)}_{i}(t)$, and $\bar{N}^{(n)}(t)$ converge almost surely in in $C([0,T],\mathbb{R}^{d+2})$ (the set of continuous functions from  $[0,T]$ to $\mathbb{R}^{d+2}$ with the supremum norm) to deterministic functions 
$ \bm{E}(t):=(S(t), I_{1}(t),\ldots, I_{d}(t), N(t))$ is the solution to the following system of ordinary differential equations: 
\begin{subequations}\label{SYSTEM}
 \begin{align}
 	\dot{S}(t) &= \lambda - \left(\sum_{i=1}^{d} \beta_{i}\frac{I_{i}(t)}{N(t)} + \delta\right) S(t), \label{S}\\ 
	\dot{I}_{i}(t) &= \left(\beta_{i}\frac{S(t)}{N(t)} - (\delta+\alpha_{i}+\gamma_{i})\right)I_{i}(t), \label{I}\\
	\dot{N}(t) &= \lambda - \delta N(t) - \sum_{i=1}^{d} \alpha_{i} I_{i}(t), \label{N}
\end{align}
\end{subequations}
with initial conditions $S(0)$, $\bm{I}(0)$, and $N(0)$.
\end{prop}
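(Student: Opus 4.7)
The plan is to apply Kurtz's classical functional law of large numbers for density-dependent population processes, which reduces to verifying three ingredients: (i) convergence of the rate functions, (ii) local Lipschitz regularity of the limiting drift, and (iii) uniform vanishing of the martingale fluctuations. The starting point is the semimartingale representation already derived in Proposition~\ref{prop:SDE}, which I rewrite in integrated form:
\[
\bar{\bm{E}}^{(n)}(t) = \bar{\bm{E}}^{(n)}(0) + \int_{0}^{t} \bm{F}^{(n)}\!\left(\bar{\bm{E}}^{(n)}(s)\right)\, ds + \frac{1}{n}\bm{M}^{(n)}(t),
\]
with $\bm{M}^{(n)}$ as in \eqref{eqn:Mn}. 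The ODE system \eqref{SYSTEM} is precisely $\dot{\bm{E}}(t) = \bm{F}(\bm{E}(t))$, where $\bm{F}$ denotes the $n\to\infty$ limit of $\bm{F}^{(n)}$ obtained by sending $\alpha^{(n)}_i,\beta^{(n)}_i,\gamma^{(n)}_i,\delta^{(n)},\lambda^{(n)}$ to their limiting values under the $\BigO{1/n}$ assumptions in \eqref{ASSUMPTIONS}.

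The first step is to control the martingale term. For each jump vector $\bm{l}$, the process $M^{(n)}_{\bm{l}}$ is a mean-zero càdlàg martingale with predictable quadratic variation $n\int_{0}^{t} \rho^{(n)}_{\bm{l}}(\bar{\bm{E}}^{(n)}(s))\,ds$. Since on any event where $\bar{\bm{E}}^{(n)}$ remains in a compact set $K$ the rates $\rho^{(n)}_{\bm{l}}$ are uniformly bounded, Doob's maximal inequality yields
\[
\mathbb{E}\!\left[\sup_{t\le T} \left\|\tfrac{1}{n}\bm{M}^{(n)}(t\wedge \tau_K)\right\|^{2}\right] \le \frac{C(K,T)}{n},
\]
where $\tau_K$ is the first exit time of $\bar{\bm{E}}^{(n)}$ from $K$. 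A Borel--Cantelli argument along a geometric subsequence, combined with a maximal inequality between subsequence points (the standard Kurtz strengthening), upgrades this to almost sure uniform convergence of $n^{-1}\bm{M}^{(n)}$ to zero on $[0,T\wedge\tau_K]$.

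The second step is the Gronwall comparison. Decomposing
\[
\bar{\bm{E}}^{(n)}(t) - \bm{E}(t) = \bigl(\bar{\bm{E}}^{(n)}(0) - \bm{E}(0)\bigr) + \int_{0}^{t}\!\bigl[\bm{F}^{(n)}(\bar{\bm{E}}^{(n)}(s)) - \bm{F}(\bar{\bm{E}}^{(n)}(s))\bigr]\, ds + \int_{0}^{t}\!\bigl[\bm{F}(\bar{\bm{E}}^{(n)}(s)) - \bm{F}(\bm{E}(s))\bigr]\, ds + \tfrac{1}{n}\bm{M}^{(n)}(t),
\]
and using $\sup_{\bm{x}\in K}\|\bm{F}^{(n)}(\bm{x})-\bm{F}(\bm{x})\|=\BigO{1/n}$ together with the local Lipschitz property of $\bm{F}$ on $K$ with constant $L_K$, one obtains
\[
\sup_{s\le t\wedge\tau_K}\|\bar{\bm{E}}^{(n)}(s)-\bm{E}(s)\| \le \eta_n + L_K \int_{0}^{t}\sup_{u\le s\wedge\tau_K}\|\bar{\bm{E}}^{(n)}(u)-\bm{E}(u)\|\, ds,
\]
with $\eta_n \to 0$ almost surely. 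Gronwall's inequality then gives the desired uniform convergence up to $T\wedge\tau_K$.

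The main obstacle, and what makes this more than a verbatim application of Kurtz, is that $\bm{F}$ is not globally Lipschitz: the term $\beta_i x_0 x_i / x_{d+1}$ is singular as $x_{d+1}\downarrow 0$. This is handled by a localization argument. One first observes that the deterministic solution $\bm{E}(t)$ starting from $N(0)>0$ satisfies $\dot{N} \ge \lambda - (\delta + \max_i \alpha_i) N$, so $N(t)$ is bounded below by a strictly positive constant $c_T$ on $[0,T]$, and of course bounded above by $N(0)\vee \lambda/\delta$. Choose $K$ to be a compact subset of $\{x_{d+1}\ge c_T/2\}$ that contains a neighborhood of the trajectory $\{\bm{E}(t):t\in[0,T]\}$. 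On the event $\{\tau_K > T\}$, the above Gronwall argument applies and yields \eqref{KURTZEQ}. Finally one shows $\mathbb{P}(\tau_K \le T)\to 0$: if $\bar{\bm{E}}^{(n)}$ exited $K$ before $T$, then by continuity of $\bm{E}$ one would have $\sup_{t\le \tau_K}\|\bar{\bm{E}}^{(n)}(t)-\bm{E}(t)\| \ge \mathrm{dist}(\bm{E}([0,T]),\partial K) > 0$, contradicting the Gronwall bound valid up to $\tau_K$ for $n$ large enough. This bootstrapping closes the argument and delivers the almost sure uniform convergence claimed in \eqref{KURTZEQ}.
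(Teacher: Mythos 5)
Your proof is correct, but note that the paper does not actually prove this proposition at all: it is stated as a citation to Theorem 2.2 of \citet{Kurtz1978} (with the earlier remark that $\bar{\bm{E}}^{(n)}$ is a density-dependent population process in the sense of Kurtz as generalized by \citet{Pollett1990}), so what you have written is a self-contained reconstruction of the cited theorem's proof rather than an alternative to an argument in the text. Your reconstruction follows the standard route --- semimartingale decomposition, maximal inequality for the compensated Poisson noises, Gronwall, localization --- and the one genuinely non-boilerplate ingredient is exactly the right one: the drift term $\beta_i x_0 x_i / x_{d+1}$ is only locally Lipschitz, failing near $\{x_{d+1}=0\}$, and your a priori bounds $c_T \le N(t) \le N(0)\vee \lambda/\delta$ on the deterministic flow plus the exit-time bootstrap are precisely what justifies invoking the locally Lipschitz version of the theorem (this is the content of the Pollett-style generalization the paper leans on). Two small points worth tightening. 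First, your Borel--Cantelli step along a geometric subsequence only makes sense because the processes for different $n$ are coupled through the \emph{same} driving Poisson processes $P_{\bm{l}}$, with $n$ entering only through the time change $M^{(n)}_{\bm{l}}(t)=\tilde P_{\bm{l}}\bigl(n\int_0^t\rho^{(n)}_{\bm{l}}\,ds\bigr)$; without that coupling there is no way to interpolate between subsequence indices, since distinct $n$ label distinct processes, not distinct times of one process. The cleaner statement is $\sup_{u\le U}\frac1n\bigl|\tilde P_{\bm{l}}(nu)\bigr|\to 0$ a.s.\ by the strong law for Poisson processes, which then dominates the stopped martingale term uniformly on $[0,\tau_K\wedge T]$. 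Second, in the final contradiction you should allow for the fact that $\bar{\bm{E}}^{(n)}$ exits $K$ by a jump of size $\BigO{1/n}$, so the distance achieved at $\tau_K$ is $\mathrm{dist}(\bm{E}([0,T]),\partial K)-\BigO{1/n}$, which is still positive for $n$ large and the argument closes.
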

Note that the result in the previous proposition can be written more compactly as
\[
	\dot{\bm{E}} = \bm{F}(\bm{E}),
\] 
where 
\[
	\bm{F}(\bm{x}) = \lim_{n \to \infty} \bm{F}^{(n)}(\bm{x}).
\]
%In the following, we will stick to the finite $n$ fully stochastic process, while keeping in mind that this process converges to the deterministic solution of the previous ODE if we were to let $n\to\infty$.

While we continue to work with the finite $n$ fully stochastic process, the bifurcation structure of the deterministic system \eqref{SYSTEM} will guide our analysis of the stochastic model.  In particular, the steady states of this model, together with the degenerate case that arises when stability is exchanged between fixed points, give rise to two regimes that correspond to strong and weak selection in classical population genetics.  To be explicit, let 
\[
	R_{0,i} \defn \frac{\beta_{i}}{\delta+\alpha_{i}+\gamma_{i}}.
\]
be the basic reproduction number of strain $i$.  $R_{0,i}$ is the expected total number of new infections caused by a single infected individual, assuming an unlimited supply of susceptibles.

If $R_{0,i} \neq R_{0,j}$ for all $1 \leq i \neq j \leq d$, the equations \eqref{SYSTEM} have $d+1$ fixed points, one at $\bm{0}$ and one at the $d$ equilibria where the population is infected by a single strain 
\[
	\bm{E}^{\star,i} 
		\defn (S^{\star,i},I^{\star,i}_{i},\ldots,I^{\star,i}_{d},N^{\star,i}),
\]
where 
\begin{equation}\label{EQ}	
	S^{\star,i} \defn \frac{\lambda}{\delta R_{0,i}}
		\left(1-\frac{\alpha_{1}(R_{0,i}-1)}{\beta_{1}-\alpha_{1}}\right), \quad
	I^{\star,i}_{j} \defn \begin{cases} 
		\frac{\lambda(R_{0,i}-1)}{\beta_{i}-\alpha_{i}} & \text{if $i = j$, and}\\ 
		0 &  \text{otherwise,}
	\end{cases} \quad \text{and} \quad
	N^{\star,i} \defn R_{0,i}S^{\star}.
\end{equation}

When $d=1$, it is shown in \citet{VargasDeLeon2011} when $\delta > \alpha_{1}$ that: if $R_{0,1} > 1$ then unique endemic equilibrium of the strain, $\bar{\bm{E}}^{\star,1}$ is globally asymptotically stable, whereas if $R_{0,i} \leq 1$, the disease-free equilibrium $\bm{0}$ is globally asymptotically stable.  The stability of fixed points is slightly more subtle when there is more than one strain.

\begin{mydef}
\label{dfn:selection} We distinguish between two regimes of selection.
\begin{itemize}
\item[(i)] The \emph{strong selection} case, when $R_{0,1} > R_{0,i}$ for all $i > 1$, $\delta > \alpha_{1}$ and $R_{0,1} > 1$;
\item[(ii)] The \emph{weak selection} case, $R_{0,1} = R_{0,i} = R_{0}^{\star}$ for $i \leq m$, whilst $R_{0}^{\star} > R_{0,j}$ for $j> m$.
\end{itemize}
\end{mydef}

\begin{prop}
The long term behavior of the deterministic system \eqref{SYSTEM} differs according to the selection regime.
\begin{itemize}
\item[(i)] In the \emph{strong selection} case, the equilibrium state $\bar{\bm{E}}^{\star,1}$ with strain 1 endemic and all other strains extinct is globally asymptotically stable from any initial condition for which $I_{1}(0) > 0$.
\item[(ii)] In the \emph{weak selection} case, we arrive at a degenerate situation in which deterministic coexistence of strains $1,\ldots,d$ is possible. Strains $m+1,\ldots,d$ will eventually disappear, whereas all points $\bm{x} \in \mathbb{R}^{d+2}_{+}$ such that 
\begin{equation}\label{OMEGA}
\begin{gathered}
	\sum_{i=1}^{m} (\beta_{i}-\alpha_{i}) x_{i} = \lambda (R_{0}^{\star} - 1),\\ 
	x_{m+1} = \cdots = x_{d} = 0,\\
	x_{d+1} = \frac{1}{\delta}\left(\lambda - \sum_{i=1}^{m} \alpha_{i} x_{i}\right),\\
	 x_{d+1} = R_{0}^{\star} x_{0}
\end{gathered}
\end{equation}
are fixed points for the system \eqref{SYSTEM}.  The set $\Omega$ of such points is globally attracting, but no point in $\Omega$ is an attracting fixed point. 
\end{itemize}
\end{prop}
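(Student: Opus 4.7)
The plan is to treat both regimes by the same two-step reduction: first extinguish the strictly inferior strains by a common competitive-exclusion argument, then analyse the resulting one- or $m$-strain system. The unifying tool is the pairwise functional
$$V_{ij}(t)=\log I_j(t)-\frac{\beta_j}{\beta_i}\log I_i(t),$$
defined whenever $I_i,I_j>0$. Differentiating along \eqref{I} and using $\beta_k=R_{0,k}(\delta+\alpha_k+\gamma_k)$, a short computation gives
$$\dot V_{ij}=\frac{\beta_j}{R_{0,i}}-(\delta+\alpha_j+\gamma_j)=\frac{(\delta+\alpha_j+\gamma_j)(R_{0,j}-R_{0,i})}{R_{0,i}},$$
which is a state-independent constant, strictly negative whenever $R_{0,j}<R_{0,i}$. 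Since \eqref{N} yields $\dot N\leq \lambda-\delta N$, the total density $N(t)$ and hence every $I_k(t)\leq N(t)$ remain bounded; exponentiating $V_{ij}$ therefore forces $I_j(t)\to 0$ at exponential rate whenever $R_{0,j}<R_{0,i}$.

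For the strong selection case, apply the above with $i=1$ and $j\geq 2$: every inferior strain is extinguished exponentially, so the $\omega$-limit of the full trajectory lies in the $\omega$-limit of the autonomous one-strain subsystem obtained from \eqref{S}--\eqref{N} by setting $I_j\equiv 0$ for $j\geq 2$. Under the hypotheses $\delta>\alpha_1$ and $R_{0,1}>1$, \citet{VargasDeLeon2011} establishes global asymptotic stability of $\bm{E}^{\star,1}$ for that subsystem via a Volterra Lyapunov function. Lifting this to the full $d$-strain system is routine by the theory of asymptotically autonomous semiflows, provided one also verifies local attractivity of $\bm{E}^{\star,1}$ transverse to the invariant subspace $\{I_j=0,\;j\geq 2\}$; this is immediate since the linearisation of \eqref{I} at $\bm{E}^{\star,1}$ restricted to the $I_j$-direction has eigenvalue $r_j^\star=(\delta+\alpha_j+\gamma_j)(R_{0,j}-R_{0,1})/R_{0,1}<0$.

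For the weak selection case, the same functional $V_{ij}$ with $i\leq m$ and $j>m$ extinguishes the strictly inferior strains, so we may restrict attention to the surviving $m$ strains, for which $\beta_i=R_0^\star(\delta+\alpha_i+\gamma_i)$. At any point of the set \eqref{OMEGA}: the identity $x_{d+1}=R_0^\star x_0$ gives $\beta_iS/N=\delta+\alpha_i+\gamma_i$, hence $\dot I_i=0$; the identity $x_{d+1}=(\lambda-\sum\alpha_ix_i)/\delta$ gives $\dot N=0$; and a short algebraic manipulation then shows that the first identity $\sum(\beta_i-\alpha_i)x_i=\lambda(R_0^\star-1)$ is precisely what is needed for $\dot S=0$. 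Thus $\Omega$ is an $(m-1)$-dimensional manifold of fixed points.

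For global attraction, decompose the reduced dynamics in coordinates $(S,N,I,\bm{p})$ with $I=\sum_{i\leq m}I_i$ and $p_i=I_i/I$. Because $\bar\beta=R_0^\star(\delta+\bar\alpha+\bar\gamma)$, the triple $(S,I,N)$ obeys an SIR-type system with state-dependent parameters $\bar\alpha,\bar\gamma$, while the replicator equation reads $\dot p_i=p_i(\alpha_i+\gamma_i-\bar\alpha-\bar\gamma)(R_0^\star S/N-1)$ and vanishes on $\{R_0^\star S/N=1\}$. I would then construct a Volterra Lyapunov function $W$ modelled on the one of \citet{VargasDeLeon2011}, with reference equilibrium $(S^\dagger,N^\dagger,I^\dagger)$ determined by the instantaneous composition $\bm{p}$, and show $\dot W\leq 0$ with zero set forcing $R_0^\star S/N=1$ and $\dot N=0$; LaSalle's invariance principle then confines the $\omega$-limit to $\Omega$. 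On $\Omega$ the identity $R_0^\star S/N=1$ also kills $\dot p_i$, so compositions freeze and no single point of $\Omega$ is individually attracting---the limit composition depends on initial data, which is the deterministic signature of the $R_0$-degeneracy. The hard part is the construction and verification of $W$ when the reference equilibrium depends on $\bm{p}$: one must show that $\dot W$ is non-positive uniformly in the compositional direction, and that the largest invariant subset of $\{\dot W=0\}$ is exactly $\Omega$ rather than something strictly larger. Case (i) is, by contrast, largely bookkeeping once \citet{VargasDeLeon2011} is invoked.
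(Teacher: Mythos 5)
Your competitive-exclusion step is the same argument as the paper's, merely written in logarithmic form: your $V_{ij}$ is (up to the factor $\beta_j$) exactly the quantity whose exponential the paper integrates to get $I_{i}(t) = I_{i}(0)\bigl(I_{1}(t)/I_{1}(0)\bigr)^{\beta_{i}/\beta_{1}} e^{-\beta_{i}\frac{R_{0,1}-R_{0,i}}{R_{0,1}R_{0,i}}t}$, and your constant $\dot V_{ij}$ is correct. For part (i) you are in fact more careful than the paper, which simply cites Bremermann--Thieme for the exclusion and leaves the lift of the Vargas-De-Le\'on single-strain Lyapunov result to the full system implicit; your explicit appeal to asymptotically autonomous semiflows plus the transverse eigenvalue $r_j^\star<0$ is a legitimate way to close that. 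Your direct verification that every point of $\Omega$ is a fixed point is also correct (the paper asserts it without computation), and your algebra checks out: $x_{d+1}=R_0^\star x_0$ kills $\dot I_i$, the third identity kills $\dot N$, and the first identity is exactly the condition $\dot S=0$.

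The genuine gap is in your treatment of the two remaining claims of part (ii). For non-attractivity of individual points of $\Omega$ you do not need the Lyapunov machinery at all: the paper observes that every vector tangent to $\Omega$ is a null eigenvector of the Jacobian of $\bm{F}$ at any $\bm{x}\in\Omega$ — equivalently, $\Omega$ is a continuum of equilibria, so every neighbourhood of $\bm{x}\in\Omega$ contains other fixed points that never move, and $\bm{x}$ cannot be asymptotically stable. Your remark that ``compositions freeze'' gestures at this but is routed through the unbuilt function $W$. For global attraction of $\Omega$, your proposed Volterra Lyapunov function with a composition-dependent reference equilibrium $(S^\dagger,N^\dagger,I^\dagger)$ is precisely the step you flag as ``the hard part'' and do not carry out; as written, there is no guarantee that $\dot W\le 0$ holds when the reference point drifts with $\bm{p}$ (the time derivative of $W$ acquires extra terms from $\dot{\bm p}$ that a fixed-equilibrium Volterra function does not have), so this step would need real work and could fail in the form you describe. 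To be fair, the paper itself essentially asserts global attraction of $\Omega$ after establishing the extinction of strains $m+1,\dots,d$; but a correct completion should either prove convergence of $(S,I,N)$ to the slow manifold directly or honestly record that this is inherited from the single-strain analysis applied to the aggregated variables, rather than resting on an unconstructed $W$.
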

\begin{proof}
Point (i) follows by a direct adaptation of the result in \citet{Bremermann1989}: rearranging \eqref{I}, we see that
\[
	\frac{1}{\beta_{i}} \frac{\dot{I}_{i}(t)}{I_{i}(t)} + \frac{1}{R_{0,i}} = \frac{S(t)}{N(t)} = \frac{1}{\beta_{1}} \frac{\dot{I}_{1}(t)}{I_{1}(t)} + \frac{1}{R_{0,1}},
\]
so that
\[
	\left(\frac{I_{i}(t)}{I_{i}(0)}\right)^{\frac{1}{\beta_{i}}} e^{\frac{1}{R_{0,i}}t} = \left(\frac{I_{1}(t)}{I_{1}(0)}\right)^{\frac{1}{\beta_{1}}} e^{\frac{1}{R_{0,1}}t},
\]
and, recalling that $I_{1}(t)$ is bounded for all $t > 0$, we see that for all $i > 1$
\[
	I_{i}(t) = I_{i}(0)\left(\frac{I_{1}(t)}{I_{1}(0)}\right)^{\frac{\beta_{i}}{\beta_{1}}} e^{-\beta_{i}\frac{R_{0,1} - R_{0,i}}{R_{0,1}R_{0,i}}t} \to 0 
\]
as $t \to \infty$. 

The same argument shows that in the weak selection case, strains $m+1,\ldots,d$ will eventually disappear, whereas all points in $\Omega$ are fixed points. Moreover, all vectors $\bm{u}$ tangent to $\Omega$, \ie such that 
\[
	\sum_{i=1}^{m} (\beta_{i}-\alpha_{i}) u_{i} = 0, \quad
	u_{m+1} = \cdots = u_{d} = 0, 
	\quad \text{and} \quad u_{d+1} = R_{0}^{\star} u_{0},
\]
are all eigenvectors to the Jacobian of $\bm{F}$ -- evaluated at any $\bm{x} \in \Omega$ -- 
corresponding to the eigenvalue 0.  %The remaining eigenvalues at $\bm{x} \in \Omega$ are$-\delta$, 
%\[
%	\beta_{i}\left(\frac{1}{R_{0}^{\star}} - \frac{1}{R_{0,i}}\right)
%\]
%for $m < i \leq d$, and a pair of complex-conjugate eigenvalues with real part
%\[
%	\frac{1}{2}
%\]
Thus, while $\Omega$ is a globally attracting set, no point in $\Omega$ is an attracting fixed point. 
\end{proof}
We now turn to the computation of the fixation probability of a novel strain in the fully stochastic system. Informed by the previous statement, we will consider two cases, strong and weak selection, where the dynamics of the process -- and thus our approach to the fixation probabilities -- are qualitatively different.  We will then show, despite the difference in the approaches, and in the expressions for the fixation probability thereby obtained, that our two results for the fixation probability agree on all intermediate scalings, and may thus be combined (heuristically) via the method of matched asymptotic expansions, to obtain a single expression valid across all scales.

\subsection{The Strong Selection Case}\label{STRONG}

We begin by recalling that for the deterministic approximation \eqref{SYSTEM} to apply, we required that $\bar{S}^{(n)}(0) \to S(0)$, $\bar{I}^{(n)}_{i}(0) \to I_{i}(0)$ and $\bar{N}^{(n)}(0) \to N(0)$ as $n \to \infty$.  Unpacking this assumption, we see that 
\[
	I^{(n)}_{i}(0) = n I_{i}(0) + o(n),
\]
\ie that a non-trivial portion of the population is already infected with strain $i$.  

For any strain with $I^{(n)}_{i}(0) \ll n$, $\bar{I}^{(n)}_{i}(0) \to 0$, and thus $I_{i}(t) \equiv 0$ for all $t \leq T$, for any fixed $T > 0$: until  $\BigO{n}$ individuals are infected, strain $i$  is effectively invisible to the deterministic approximation on any finite time interval.  This is not to say that the strain is absent, but rather, if we sample individuals from the population uniformly at random, the probability of sampling an individual infected with strain $i$ is zero.

%Consider the case when a new strain appears with $I^{(n)}_{i}(0) \ll n$; in light of the results in \ref{AA}, we will assume that there is only a single resident strain, strain 1.   We have already observed that when $I^{(n)}_{2}(0) \ll n$, $I_{2}(t) \equiv 0$.  

We will consider the case when a fixed number $k$ of strain 2 individuals invade an established resident population.  For our purposes, a strain $i$ is established if it is initially present in macroscopic numbers, \ie
\[
	\bar{I}^{(n)}_{i}(0) \to I_{i}(0) > 0.
\]
In light of the results in \ref{ASYMPT}, we will assume that there is only a single resident strain, strain 1.  We will first consider the case when the resident strain is in endemic equilibrium, and then generalise to the case when the resident strain, whilst still present in macroscopic numbers, is initially away from equilibrium.  

Should the invading strain, strain 2, exceed $\varepsilon n$ individuals, for any $\varepsilon > 0$, we arrive again in the domain of applicability of the deterministic approximation ($\bar{I}^{(n)}_{2}(0) \to I_{2}(0) > \varepsilon > 0$).   If the reproductive number of the invader is greater than that of the resident, $R_{0,2} > R_{0,1}$, then for $n$ sufficiently large, the dynamics are essentially deterministic, and with high probability (\ie tending to 1 as $n \to \infty$) the process will in finite time $T_{\varepsilon}$ enter an  $\varepsilon$-neighbourhood of the fixed point 
$\bm{E}^{\star,2}$ for arbitrarily small $\varepsilon > 0$.  Once the process reaches this new equilibrium, we will see the resident strain is no longer viable, and subsequently disappears.

\subsubsection{Invasion at the Resident Endemic Equilibrium}

If we start at the endemic equilibrium of the resident strain 1,  $\bm{E}^{\star,1}$, then until  $I^{(n)}_{2}$ exceeds $\varepsilon n$, the epidemic process $\bm{E}^{(n)}(t)$ will remain close to that point.  We thus have  
\[
	\frac{S^{(n)}(t)}{N^{(n)}(t)} =
	\frac{\bar{S}^{(n)}(t)}{\bar{N}^{(n)}(t)}  \approx \frac{1}{R_{0,1}},
\]
and to first approximation, strain 2 has per-host transmission and clearance/mortality rates of 
\[
	\frac{\beta_{2}}{R_{0,1}} \quad \text{and} \quad \delta+\alpha_{2}+\gamma_{2}.
\]
This latter is a birth and death process (see \eg  \citet{Bartlett1955}) which will go extinct with probability 
\[
	q=\frac{\delta+\alpha_{2}+\gamma_{2}}{\frac{\beta_{2}}{R_{0,1}}} 
	= \frac{R_{0,1}}{R_{0,2}}
\]
if $R_{0,2} > R_{0,1}$, and with probability $q=1$ otherwise. This probability of extinction $q$ is for a single initial individual infected with strain 2, and becomes $q^k$ for $k$ initial individuals.
  Now, a birth and death process either goes extinct or grows arbitrarily large, so with probability $1-\frac{R_{0,2}}{R_{0,1}}$ it will eventually exceed $\varepsilon n$.

Similarly, when we have reached a neighbourhood of $\bm{E}^{\star,2}$ the transmission and clearance/mortality rates of strain 1 are approximately
\[
	\frac{\beta_{1}}{R_{0,2}} \quad \text{and} \quad \delta+\alpha_{1}+\gamma_{1}.
\]
Since $R_{0,2} > R_{0,1}$, this is a subcritical birth-death process which goes extinct with probability 1. Thus, invasion implies replacement, where for our purposes, the process \textit{invades} if it exceeds $\varepsilon n$ individuals for some fixed $\varepsilon > 0$.  

To summarise, we have heuristically derived 

\begin{prop}[Strong Selection]\label{SSPF}
Consider a population infected with $2$ strains such that $R_{0,2} > R_{0,1}$.
\begin{enumerate}[(i)]
\item \label{SSPF1} {\rm [Macroscopic initial frequencies]} If $\bar{I}^{(n)}_{2}(0) \to I_{2}(0) > 0$, then strain $1$ will go extinct with high probability.   
\item \label{SSPF2}  {\rm [Novel strain in small number of copies, resident at endemic equilibrium]} Suppose that $\bar{I}^{(n)}_{1}(0) \to \bar{I}^{\star,1}$, and that $I^{(n)}_{2}(0) = k$ for some fixed positive integer $k$.   Then, for any $\varepsilon > 0$
\begin{equation}\label{STRONGAPPROX}
		\lim_{n \to \infty} \mathbb{P}\left\{I^{(n)}_{1}(t) = 0\; \text{and}\; 
		\|\bar{\bm{E}}^{(n)}(t) - \bar{\bm{E}}^{\star,2}\| < \varepsilon\; 
		\text{for some}\; t < \infty \right\} 
		= 1- \left(\frac{R_{0,1}}{R_{0,2}}\right)^{k}.
\end{equation}
In other words, the event that strain 1 becomes extinct asymptotically coincides with the event that strain 2 invades, which happens with a probability asymptotically equal to the probability of survival $1- \left(\frac{R_{0,1}}{R_{0,2}}\right)^{k}$ of a time-homogeneous birth-death process; on this event, the system reaches in finite time the deterministic equilibrium \eqref{EQ} with only strain 2 endemic. 
\end{enumerate}
\end{prop}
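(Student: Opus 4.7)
I would prove (i) as a straightforward consequence of the law of large numbers (Proposition \ref{KURTZ}) together with global stability. Since $I_{2}(0)>0$ and $R_{0,2}>R_{0,1}$, the deterministic trajectory $\bm{E}(t)$ satisfies $\bm{E}(t)\to\bm{E}^{\star,2}$ as $t\to\infty$ by the Bremermann--Thieme style argument already invoked; in particular $I_{1}(t)\to 0$. Pick $T$ with $\|\bm{E}(T)-\bm{E}^{\star,2}\|<\varepsilon/2$. Proposition \ref{KURTZ} gives $\sup_{t\le T}\|\bar{\bm{E}}^{(n)}(t)-\bm{E}(t)\|\to 0$ a.s., so with probability $\to 1$ the chain lies in the $\varepsilon$-neighbourhood of $\bm{E}^{\star,2}$ at time $T$. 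Inside that neighbourhood the per-capita rates for strain $1$ are within $\BigO{\varepsilon}$ of $\beta_{1}/R_{0,2}$ and $\delta+\alpha_{1}+\gamma_{1}$, whose ratio is $R_{0,1}/R_{0,2}<1$; a standard stochastic domination shows that $I_{1}^{(n)}$ is bounded above by a subcritical birth--death process as long as the trajectory stays near $\bm{E}^{\star,2}$, and the latter happens on time scales of order $e^{cn}$ by the quasi-stationarity arguments of N{\aa}sell cited in the main text. Hence strain $1$ dies out with probability tending to $1$.

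For (ii) the plan is a two-sided coupling of $I_{2}^{(n)}$ with linear birth--death processes while the invading strain is rare, following Ball and Kurtz. Fix small parameters $\eta,\varepsilon>0$ and set $\tau=\inf\{t:\|\bar{\bm{E}}^{(n)}(t)-\bm{E}^{\star,1}\|\ge\eta \text{ or } \bar I_{2}^{(n)}(t)\ge\varepsilon\}$. On $[0,\tau)$, the rate at which strain $2$ infects a susceptible lies in $[\beta_{2}/R_{0,1}-C\eta,\beta_{2}/R_{0,1}+C\eta]$ and the clearance/mortality rate is within $C\eta+C/n$ of $\delta+\alpha_{2}+\gamma_{2}$. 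Using the same driving Poisson processes that appear in the SDE representation of Proposition \ref{prop:SDE}, I would couple $I_{2}^{(n)}$ between two linear birth--death processes $Z^{\pm}_{k}(t)$ with $Z^{-}(0)=Z^{+}(0)=k$, adjusting rates by $\pm C\eta$, so that $Z^{-}(t)\le I_{2}^{(n)}(t)\le Z^{+}(t)$ on $[0,\tau)$. The extinction probabilities of $Z^{\pm}$ equal $\bigl((\delta+\alpha_{2}+\gamma_{2}\pm C\eta)/(\beta_{2}/R_{0,1}\mp C\eta)\bigr)^{k}$ and converge to $(R_{0,1}/R_{0,2})^{k}$ as $\eta\downarrow 0$.

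Next I would show $\tau$ is almost surely determined by the $\varepsilon$-crossing, not by the $\eta$-excursion of the resident. As long as $\bar I_{2}^{(n)}<\varepsilon$, the coordinate $(\bar S^{(n)},\bar I_{1}^{(n)},\bar N^{(n)})$ is an $\BigO{\varepsilon}$ perturbation of the single-resident-strain SIR diffusion whose deterministic skeleton has $\bm{E}^{\star,1}$ as a globally attracting equilibrium. Standard large-deviation/Freidlin--Wentzell estimates, or equivalently a Lyapunov+martingale argument using the functional in \citet{VargasDeLeon2011}, give that $\|\bar{\bm{E}}^{(n)}-\bm{E}^{\star,1}\|<\eta$ for times of order $e^{cn}$ with probability $\to 1$ once $\eta$ is fixed and $\varepsilon$ small. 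Hence $Z^{\pm}$ have enough time to realise the branching-process dichotomy: either they die out (and so does $I_{2}^{(n)}$), or they exceed $\varepsilon n$ (and so $I_{2}^{(n)}$ reaches $\varepsilon n$ on the survival event). On the survival event, restart the process from $\bar I_{2}^{(n)}\approx\varepsilon$ and apply part (i) to conclude the trajectory enters any neighbourhood of $\bm{E}^{\star,2}$ and strain $1$ is driven extinct. Sending $\eta\downarrow 0$ and then $\varepsilon\downarrow 0$ yields the claimed limit $1-(R_{0,1}/R_{0,2})^{k}$.

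The main obstacle is the middle paragraph: making the coupling window long enough that the supercritical birth--death processes can exhibit their asymptotic dichotomy, which requires sharper control than a plain Kurtz LLN on $[0,T]$ gives. The cleanest route is either (a) an exit-time estimate from an attracting equilibrium via a quadratic Lyapunov function and Dynkin's formula, giving an exit time of order $e^{cn}$, or (b) quoting N{\aa}sell's quasi-stationary results. The other technical wrinkle is that $Z^{\pm}$ are not standard branching processes once rates depend on $\bar S^{(n)}/\bar N^{(n)}$, so the coupling must be realised through the common Poisson driving noise and monotonicity of the infection rate in $\bar S^{(n)}/\bar N^{(n)}$, which is immediate from the SDE representation.
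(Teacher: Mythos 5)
Your proposal is correct, and its skeleton coincides with the paper's: part (i) by driving the process into an $\varepsilon$-neighbourhood of the winning equilibrium and dominating the losing strain by a subcritical birth--death process, and part (ii) by a two-sided coupling of $I^{(n)}_{2}$ with linear birth--death processes whose rates are $\pm\eta$-perturbations of $\beta_{2}\bar{S}^{\star,1}/\bar{N}^{\star,1}$ and $\delta+\alpha_{2}+\gamma_{2}$, followed by the branching dichotomy and a restart through part (i) once strain 2 reaches $\varepsilon n$. Where you genuinely diverge is in how the environment is confined near $\bar{\bm{E}}^{\star,1}$ during the coupling window. You invoke Freidlin--Wentzell exit times or quasi-stationarity to get confinement on time scales of order $e^{cn}$; the paper observes that confinement up to any $\ln n \ll t_{n}\ll n$ suffices (the subcritical bound dies out, and the supercritical bound reaches $\varepsilon n$ on the survival event, within time $O(\ln n)$), and obtains this weaker statement elementarily: linearise $\bm{F}$ at the fixed point, represent $\Xi^{(n)}=\bar{\bm{E}}^{(n)}-\bar{\bm{E}}^{\star,1}$ by Duhamel's principle, and bound the stochastic convolutions $\int_{0}^{t}e^{-\alpha(t-s)}\,dM^{(n)}_{\bm{l}}(s)$ by Doob's inequality (Lemma \ref{intest}), which yields a bound of order $t_{n}/n$; this is precisely why the restriction $t_{n}\ll n$ appears and why no exponential exit-time estimate is needed. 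A second, related point: in part (ii) the fixed point $\bar{\bm{E}}^{\star,1}$ is hyperbolic for the full two-strain vector field (strain 2 supplies an unstable direction), so a Lyapunov or large-deviation argument for the full system does not apply as stated; the paper resolves this by splitting $\mathbb{R}^{d+2}=E_{S}\oplus E_{X}$ into the stable and unstable subspaces of the Jacobian and running the Duhamel--Doob estimate on the stable projection $P_{S}\Xi^{(n)}$ only, which is cleaner than your proposal to treat the resident coordinates as an $\BigO{\varepsilon}$-perturbed autonomous subsystem, since that perturbation is random and path-dependent. Both routes reach the same conclusion; yours purchases a far stronger confinement estimate with heavier machinery, while the paper's purchases exactly what is needed with martingale inequalities alone.
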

We give a rigorous proof of this result in Section \ref{PROOF}, and compare \eqref{STRONGAPPROX} to fixation probabilities estimated from simulated epidemics in 
Figure \ref{STRONGAPPROXFIGURE:b}.

\subsubsection{Invasion Away From the Resident Endemic Equilibrium}\label{STRONGTRANSIENT}

In this section, we consider exactly the same setting as previously when a novel strain in small number of copies appears when the resident strain is at macroscopic initial frequency (\ie $I^{(n)}_{2}(0) = k$, $\bar{I}^{(n)}_{1}(0) \to \bar{I}_{1}(0) > 0$) but we relax the assumption that the resident strain is at endemic equilibrium. 
As previously, we use a branching process approximation to determine the probability that $I^{(n)}_{2}(t) > \varepsilon n$, for some small $\varepsilon > 0$, for some finite $t > 0$.  However, now, this branching process will no longer be time-homogeneous, since it will evolve within the changing environment imposed by the deterministic dynamics of the resident strain. Indeed, rather than assume that $\frac{\bar{S}^{(n)}(t)}{\bar{N}^{(n)}(t)}  \approx \frac{1}{R_{0,1}}$, we will use the law of large numbers to conclude that 
\[
	\frac{\bar{S}^{(n)}(t)}{\bar{N}^{(n)}(t)}  \approx \frac{S(t)}{N(t)}
\]
where $S(t)$ and $N(t)$ are determined via the reduced (deterministic) system 
\begin{subequations}\label{REDUCED}
 \begin{align}
 	\dot{S}(t) &= \lambda - \left(\sum_{i=1}^{d} \beta_{i}\frac{I_{i}(t)}{N(t)} + \delta\right) S(t),\\ 
	\dot{I}_{1}(t) &= \left(\beta_{1}\frac{S(t)}{N(t)} - (\delta+\alpha_{1}+\gamma_{1})\right)I_{1}(t),\\
	\dot{N}(t) &= \lambda - \delta N(t) - \sum_{i=1}^{d} \alpha_{i} I_{i}(t),
\end{align}
\end{subequations}
(\ie $I_{2}(t) \equiv 0$) with initial conditions 
\[
	S(0) = \lim_{n \to \infty} \bar{S}^{(n)}(0), \quad
	I_{1}(0) = \lim_{n \to \infty} \bar{I}^{(n)}_{1}(0), \quad \text{and} \quad
	N(0) = \lim_{n \to \infty} \bar{N}^{(n)}(0),
\]

We thus approximate the number of individuals infected with strain 2 by replacing the stochastic quantities $\bar{S}^{(n)}(t)$ and $\bar{N}^{(n)}(t)$ by their deterministic approximations, and the number of infectives with the novel strain 2 by a time-inhomogeneous birth and death process with per-host transmission and clearance/mortality rates of 
\[
	\beta_{2}\frac{S(t)}{N(t)} \quad \text{and} \quad \delta+\alpha_{2}+\gamma_{2}.
\]

As before, the probability this branching process reaches $\varepsilon n$ is exactly 1 less the probability of extinction for this time-inhomogeneous branching process, which is a classical result:

\begin{thm}[\citet{Kendall1948b}]\label{KENDALL}
Let $Z(t)$ be a continuous time linear birth-death process with time-varying birth rate $\lambda(t)$ and death rate $\mu(t)$, i.e., such that
\begin{gather*}
	\mathbb{P}\left\{Z(t+h) = k+1\mid Z(t) = k\right\} = \lambda(t)kh + o(h)\\
	\mathbb{P}\left\{Z(t+h) = k-1\mid Z(t) = k\right\} = \mu(t)kh + o(h)
\end{gather*}
Then, $m(t) \defn \mathbb{E}\left[Z(t)\middle\vert Z(0) = 1\right] = e^{\int_{0}^{t} \lambda(u) - \mu(u)\, du}$, 
{\footnotesize \[
	\mathbb{P}\left\{Z(t) > 0\right\} = \frac{1}{1+\int_{0}^{t} e^{\int_{0}^{s} \mu(u)-\lambda(u)\, du} \mu(s)\,ds},
\]}
and, the  probability of extinction in finite time, $q$, is
{\footnotesize \[
	q = \frac{J}{1+J},	
\]}
which is equal to 1 if and only if the integral $J := \int_{0}^{\infty} e^{\int_{0}^{s} \mu(u)-\lambda(u)\, du} \mu(s)\,ds$ diverges.  	
\end{thm}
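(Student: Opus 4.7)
The plan is to derive the probability generating function $F(s,t) = \mathbb{E}[s^{Z(t)} \mid Z(0) = 1]$ and read off all three identities from it. For the mean $m(t)$ I would sidestep the PGF entirely: the infinitesimal description gives $\mathbb{E}[Z(t+h) - Z(t) \mid Z(t)] = (\lambda(t) - \mu(t))Z(t) h + o(h)$, so $m'(t) = (\lambda(t) - \mu(t))m(t)$ with $m(0) = 1$, yielding the exponential formula immediately.

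For the survival and extinction probabilities, the forward Kolmogorov equations $p_k'(t) = (k-1)\lambda(t) p_{k-1}(t) - k(\lambda(t)+\mu(t)) p_k(t) + (k+1)\mu(t) p_{k+1}(t)$, after multiplying by $s^k$ and summing over $k$, collapse into the first-order linear PDE
\begin{equation*}
\frac{\partial F}{\partial t} = (s-1)(\lambda(t) s - \mu(t))\frac{\partial F}{\partial s}, \qquad F(s,0) = s,
\end{equation*}
which I would solve by the method of characteristics. Along curves satisfying $\dot{s} = -(s-1)(\lambda(t) s - \mu(t))$, $F$ is constant, so $F(s,t) = s(0)$ where $s(\cdot)$ is the characteristic passing through $(s,t)$, with terminal condition $s(t) = s$.

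The main obstacle is the nonlinear (Riccati) characteristic ODE. The key trick is the M\"obius substitution $s = 1 + 1/u$, after which a short calculation reduces the characteristic equation to the linear ODE $\dot{u} = (\lambda(t) - \mu(t)) u + \lambda(t)$. With integrating factor $1/W(t)$, where $W(t) = \exp\int_0^t (\lambda(\tau) - \mu(\tau))\, d\tau = m(t)$, this integrates to $u(0) = u(t)/W(t) - \int_0^t \lambda(\tau)/W(\tau)\, d\tau$; setting $u(t) = 1/(s-1)$ to invert the characteristic and evaluating at $s = 0$ then produces an explicit formula for $F(0,t)$, and hence for $\mathbb{P}\{Z(t) > 0\} = 1 - F(0,t) = 1/(1/W(t) + \int_0^t \lambda(\tau)/W(\tau)\, d\tau)$.

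The final step is a cosmetic integration by parts to match the form in the statement: using $\frac{d}{d\tau}(1/W(\tau)) = (\mu(\tau) - \lambda(\tau))/W(\tau)$, the $\lambda$-integral rewrites as $\int_0^t \mu(\tau)/W(\tau)\, d\tau - 1/W(t) + 1$, and the $1/W(t)$ terms cancel in the denominator, leaving $\mathbb{P}\{Z(t) > 0\} = 1/(1 + J(t))$ with $J(t) = \int_0^t e^{\int_0^s (\mu-\lambda)(u)\, du} \mu(s)\, ds$. Sending $t \to \infty$, the extinction probability is $q = \lim_{t \to \infty} J(t)/(1 + J(t)) = J/(1+J)$, which equals $1$ exactly when the integral $J$ diverges. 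The only subtle step is the Riccati linearization; everything else is routine ODE calculus and bookkeeping.
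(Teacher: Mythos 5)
Your derivation is correct, and every step checks out: the PDE $\partial_t F=(s-1)(\lambda(t)s-\mu(t))\partial_s F$ is the right one, the substitution $s=1+1/u$ does linearise the characteristic Riccati equation to $\dot u=(\lambda-\mu)u+\lambda$, and the integration by parts converts the $\lambda$-integral into the $\mu$-integral of the stated formula. Note that the paper offers no proof of this statement at all --- it is quoted from \citet{Kendall1948b} --- so there is nothing internal to compare against; your probability-generating-function/method-of-characteristics argument is essentially Kendall's original one. The only point worth making explicit is the passage to $t\to\infty$: since $\{Z(t)=0\}$ is an increasing family of events, $q=\lim_{t\to\infty}\mathbb{P}\{Z(t)=0\}=\lim_{t\to\infty}J(t)/(1+J(t))$ by continuity of measure, and because $J(t)$ is nondecreasing this limit is $J/(1+J)$ (interpreted as $1$ when $J=\infty$), which gives the divergence criterion.
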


Using the last statement for our infection process, we have

\begin{cor}\label{FIXNONEQ}
Consider a single individual infected with strain 2 entering a population where strain 1 is endemic but not necessarily at equilibrium.  Then, as $n \to \infty$, the probability strain 2 dies out is 
\begin{equation}\label{Q}
	q = \frac{J}{1+J} ,
\end{equation}
where 
\begin{equation}\label{QINTEGRAL}
	J:=\int_{0}^{\infty} e^{-\int_{0}^{s} \beta_{2} \frac{S(u)}{N(u)}-(\delta+\alpha_{2}+\gamma_{2})\, du} (\delta+\alpha_{2}+\gamma_{2})\, ds
\end{equation}
More generally, if strain 2 is initially in $k$ copies, the probability strain 2 fixes is asymptotic to $1-q^{k}$.
\end{cor}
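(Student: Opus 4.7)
The plan is to reduce the statement to a direct application of Kendall's Theorem (Theorem~\ref{KENDALL}) via a coupling argument identical in spirit to the one used in Proposition~\ref{SSPF}\eqref{SSPF2}, but carried out in a non-autonomous environment. Concretely, while $I_{2}^{(n)}(t) \ll n$, strain 2 is demographically invisible to the reduced system \eqref{REDUCED}, so Proposition~\ref{KURTZ} applied to the $(d-1)$-strain process (strain~2 absent) gives $\sup_{t \le T}\|(\bar{S}^{(n)}(t),\bar{I}^{(n)}_{1}(t),\bar{N}^{(n)}(t)) - (S(t),I_{1}(t),N(t))\| \to 0$ almost surely for every finite $T$. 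In particular, the strain-2 infection rate per host, $\beta_{2}\bar{S}^{(n)}(t)/\bar{N}^{(n)}(t)$, converges uniformly on $[0,T]$ to the deterministic time-varying rate $\beta_{2} S(t)/N(t)$.

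Next I would construct, on the same probability space, a time-inhomogeneous linear birth-death process $Z(t)$ started from $Z(0)=k$ with birth rate $\lambda(t) = \beta_{2}\,S(t)/N(t)$ and death rate $\mu(t) = \delta+\alpha_{2}+\gamma_{2}$, coupled to $I_{2}^{(n)}(t)$ through a common Poisson-process representation as in Proposition~\ref{prop:SDE}. On the event $\{I_{2}^{(n)}(s) \le \varepsilon n \text{ for all } s \le T\}$, the discrepancy between the rates driving $I_{2}^{(n)}$ and those driving $Z$ is $o(1)$ as $n \to \infty$, so a standard Gronwall-type estimate for coupled jump processes (as in Chapter~11 of \citet{Ethier+Kurtz86}) yields $\sup_{t\le T}|I_{2}^{(n)}(t)-Z(t)| \to 0$ in probability. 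Kendall's formula then gives the extinction probability of $Z$ in the form \eqref{Q}--\eqref{QINTEGRAL}, and independence of the $k$ initial lineages makes the survival probability $1-q^{k}$.

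To pass from finite $T$ to the statement of the corollary, I would exploit the dichotomy coming from Proposition~\ref{SSPF}\eqref{SSPF2}: a supercritical branching process either goes extinct in finite time or grows beyond any threshold $\varepsilon n$ in finite time, and by Proposition~\ref{SSPF}\eqref{SSPF1} once $\bar{I}_{2}^{(n)}$ reaches macroscopic size, the deterministic system carries it (with probability tending to one) to the endemic equilibrium $\bm{E}^{\star,2}$, after which strain~1 behaves as a subcritical birth-death process and dies out. Because strain~1, absent strain~2, is globally attracted to $\bm{E}^{\star,1}$ under the assumption $\delta>\alpha_{1}$, the environmental ratio $S(t)/N(t)$ converges to $1/R_{0,1}$ and the integrand in $J$ eventually behaves like $(\delta+\alpha_{2}+\gamma_{2})e^{-(\beta_{2}/R_{0,1}-(\delta+\alpha_{2}+\gamma_{2}))s}$; this shows $J<\infty$ iff $R_{0,2}>R_{0,1}$, recovering the classical threshold and matching the equilibrium case of \eqref{STRONGAPPROX}.

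The main obstacle is the uniformity of the coupling over unbounded time. Kurtz's law of large numbers is a finite-horizon statement, whereas Kendall's extinction probability is defined as $t\to\infty$. The fix is to truncate at a time $T_{\varepsilon}$ after which either (i) the branching process has already hit $0$, or (ii) it has exceeded $\varepsilon n$ and the nonlinear deterministic dynamics take over; one then shows that the probability of ``straddling'' this dichotomy for arbitrarily long times is negligible, using that the resident deterministic trajectory converges exponentially fast to its endemic equilibrium so that the tail of $J$ is exponentially small. Handling this tail carefully, together with the standard branching-process estimate that survival implies $Z(t)/m(t)$ converges almost surely to a nonzero limit, is where the bulk of the technical work lies.
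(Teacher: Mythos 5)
Your proposal follows essentially the same route as the paper: approximate strain 2 by a time-inhomogeneous birth--death process whose rates are driven by the reduced deterministic system \eqref{REDUCED}, apply Kendall's theorem for its extinction probability, and make this rigorous by transplanting the coupling argument of Section \ref{PROOF} (the paper itself only sketches this step, deferring the details to \citet{Parsons2012}). One minor imprecision: on the event $\{I^{(n)}_{2}(s)\le\varepsilon n,\ s\le T\}$ the discrepancy between the stochastic and deterministic environments is $\BigO{\varepsilon}+o(1)$ rather than $o(1)$, so the coupling must be a two-sided sandwich between birth--death processes with $\varepsilon$-perturbed rates, with $\varepsilon\to 0$ taken last, exactly as in the paper's proof of Proposition \ref{SSPF}.
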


The argument is, \textit{mutatis mutandis}, that of Section \ref{PROOF}; for details, we refer the reader to \citet{Parsons2012}.

While we can not evaluate $J$ analytically, we can evaluate it numerically.
More generally, we will compute $U(t) = \mathbb{P}\left\{Z(t) > 0\right\}$ numerically, and observe that it rapidly converges to equilibrium.  Rather than evaluate the integral directly, we find it more convenient to use the ordinary differential equations used in \citet{Kendall1948b} to derive \eqref{Q}.  Using the notation of Theorem \ref{KENDALL} (and \citet{Kendall1948b}), this is obtained via a system of two equations, 
\begin{align*}
	\dot{U}(t) &= - \mu(t) U(t)V(t)\\
	\dot{V}(t) &= (\mu(t) - \lambda(t)) V(t) - \mu(t) V(t)^{2},
\end{align*}
where $V(t) = \mathbb{P}\left\{Z(t)=1\mid Z(t) > 0\right\}$.
\begin{rem}
Whilst we will not use it in the sequel, we note that the auxiliary function $V(t)$ allows one to fully characterise the branching process $Z(t)$: the number of individuals alive at time $t$ is given by a modified geometric distribution with parameter $1-V(t)$:
\[
	\mathbb{P}\{Z(t) = k\} = \begin{cases}
		1-U(t) & \text{if $k = 0$, and}\\
		U(t)V(t)(1-V(t))^{k-1} & \text{if $k \geq 1$.}
	\end{cases}
\]
\end{rem}

For our epidemic model, this gives us the system
%\begin{subequations}
\begin{align*}
 	\dot{S}(t) &= \lambda - \left(\sum_{i=1}^{d} \beta_{i}\frac{I_{i}(t)}{N(t)} + \delta\right) S(t),\\ 
	\dot{I}_{1}(t) &= \left(\beta_{1}\frac{S(t)}{N(t)} - (\delta+\alpha_{1}+\gamma_{1})\right)I_{1}(t),\\
	\dot{N}(t) &= \lambda - \delta N(t) - \sum_{i=1}^{d} \alpha_{i} I_{i}(t),\\
	\dot{U}(t) &= - (\delta+\alpha_{2}+\gamma_{2}) U(t)V(t)\\
	\dot{V}(t) &= \left((\delta+\alpha_{2}+\gamma_{2}) - \beta_{2}\frac{S(t)}{N(t)}\right) V(t) 
		- (\delta+\alpha_{2}+\gamma_{2}) V(t)^{2},
\end{align*}
%\end{subequations}
which is easily computed numerically.

In Figure \ref{STRONGNUMERICFIGURE}, we show the consequences of choosing initial conditions away from equilibrium under two scenarios.  We fix the values  $R_{0,1} = 4$ and $R_{0,2} = 6$, where the increase in $R_{0,2}$ is achieved either by increasing the contact rate $\beta_{2}$ while holding the virulence, $\alpha_{2}$, fixed (solid curves) or by reducing the virulence $\alpha_{2}$ while holding the contact rate, $\beta_{2}$, fixed (dashed curves).  In both cases, we see that the change in the fixation probability (the asymptotic value of $U(t)$) is most visible when the number of individuals is infected with the resident strain is varied (Figure \ref{STRONGNUMERICFIGURE:a}), whereas it is less pronounced when the number of susceptibles is varied (Figure \ref{STRONGNUMERICFIGURE:b}), and relatively small when the population size is varied (Figure \ref{STRONGNUMERICFIGURE:c}).  We note that the strain that achieves the higher value $R_{0,2}$ via an elevated virulence has a higher probability of fixation than the strain with a higher contact rate when there is a surplus of resident-strain (green curves, \ref{STRONGNUMERICFIGURE:a}) and a lower fixation probability when there are fewer (blue curves, \ref{STRONGNUMERICFIGURE:a}).  This relation is inverted when the number of susceptibles is varied: when there are more susceptibles, the less virulent strain has a lower fixation probability than the strain with higher contact rate (green curves, \ref{STRONGNUMERICFIGURE:b}) and higher fixation probability when susceptible hosts are limited  (blue curves, \ref{STRONGNUMERICFIGURE:b}).  We will quantify these changes for small perturbations away from the endemic equilibrium in the next section.

\begin{figure}[h] 
  \begin{subfigure}[t]{0.33\linewidth}
    \centering
    \includegraphics[width=0.85\linewidth]{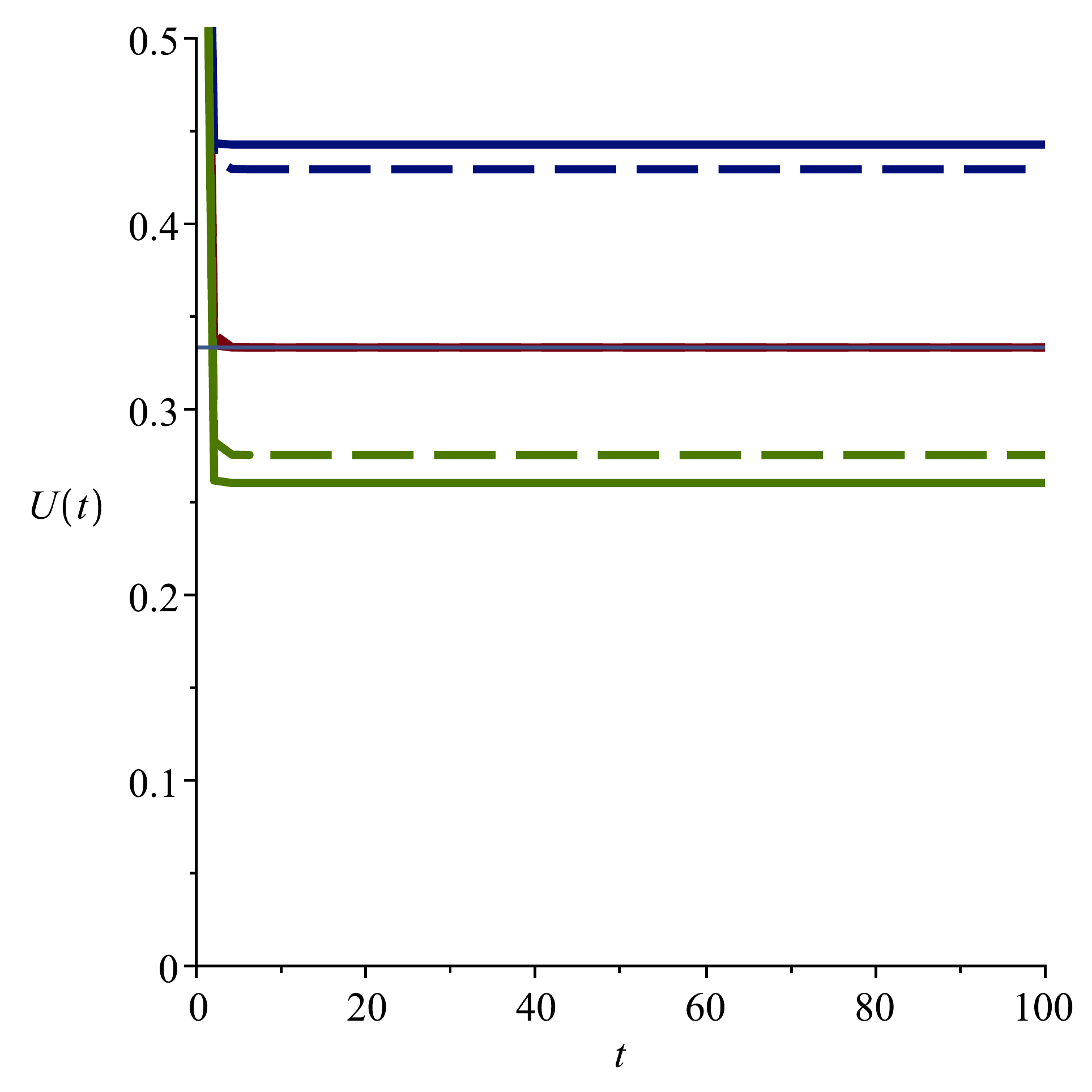} 
    \caption{varying $I_{1}(0)$} 
   \label{STRONGNUMERICFIGURE:a} 
    \vspace{4ex}
  \end{subfigure}%% 
  \begin{subfigure}[t]{0.33\linewidth}
    \centering
    \includegraphics[width=0.85\linewidth]{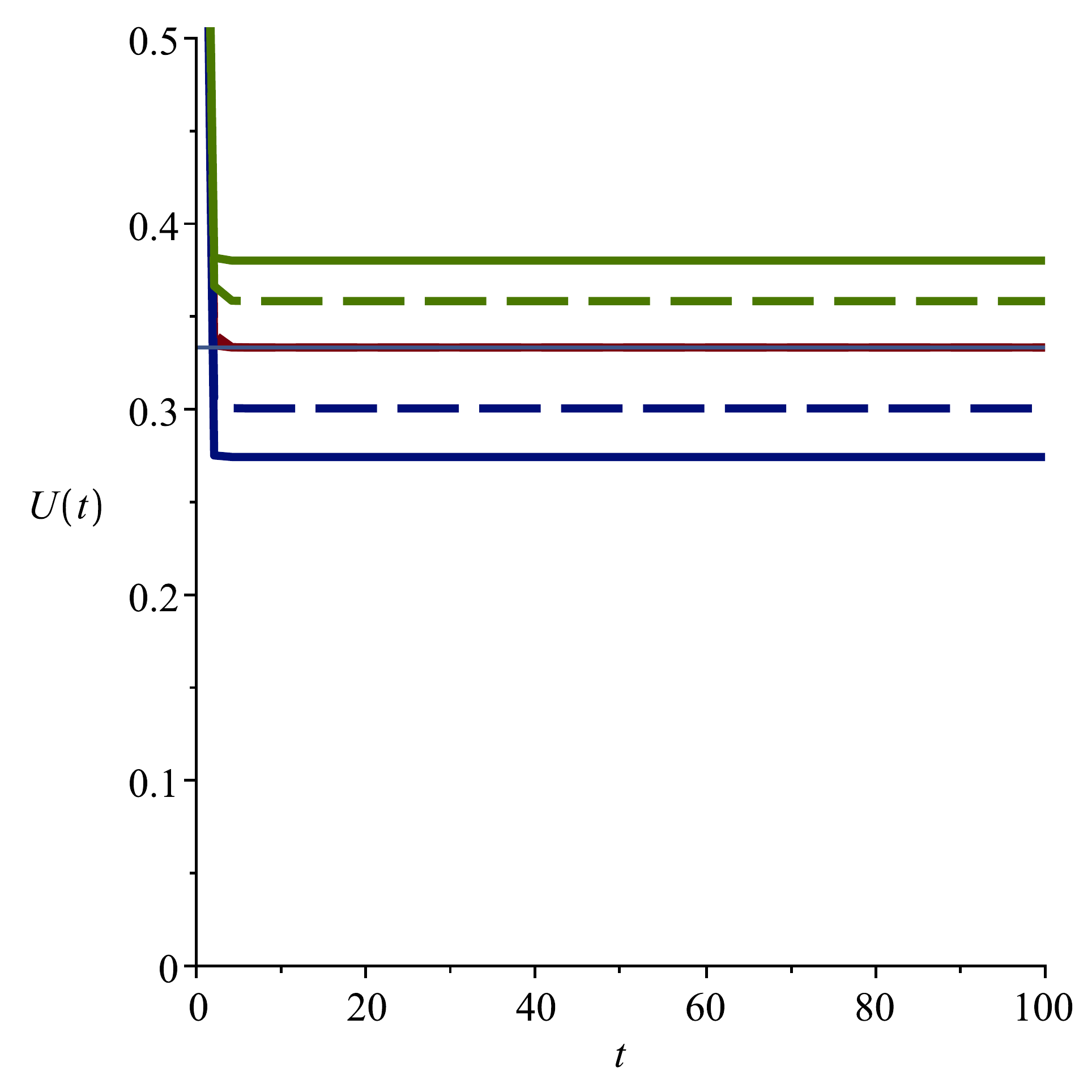} 
    \caption{varying $S(0)$} 
    \label{STRONGNUMERICFIGURE:b} 
    \vspace{4ex}
  \end{subfigure} 
  \begin{subfigure}[t]{0.33\linewidth}
    \centering
    \includegraphics[width=0.85\linewidth]{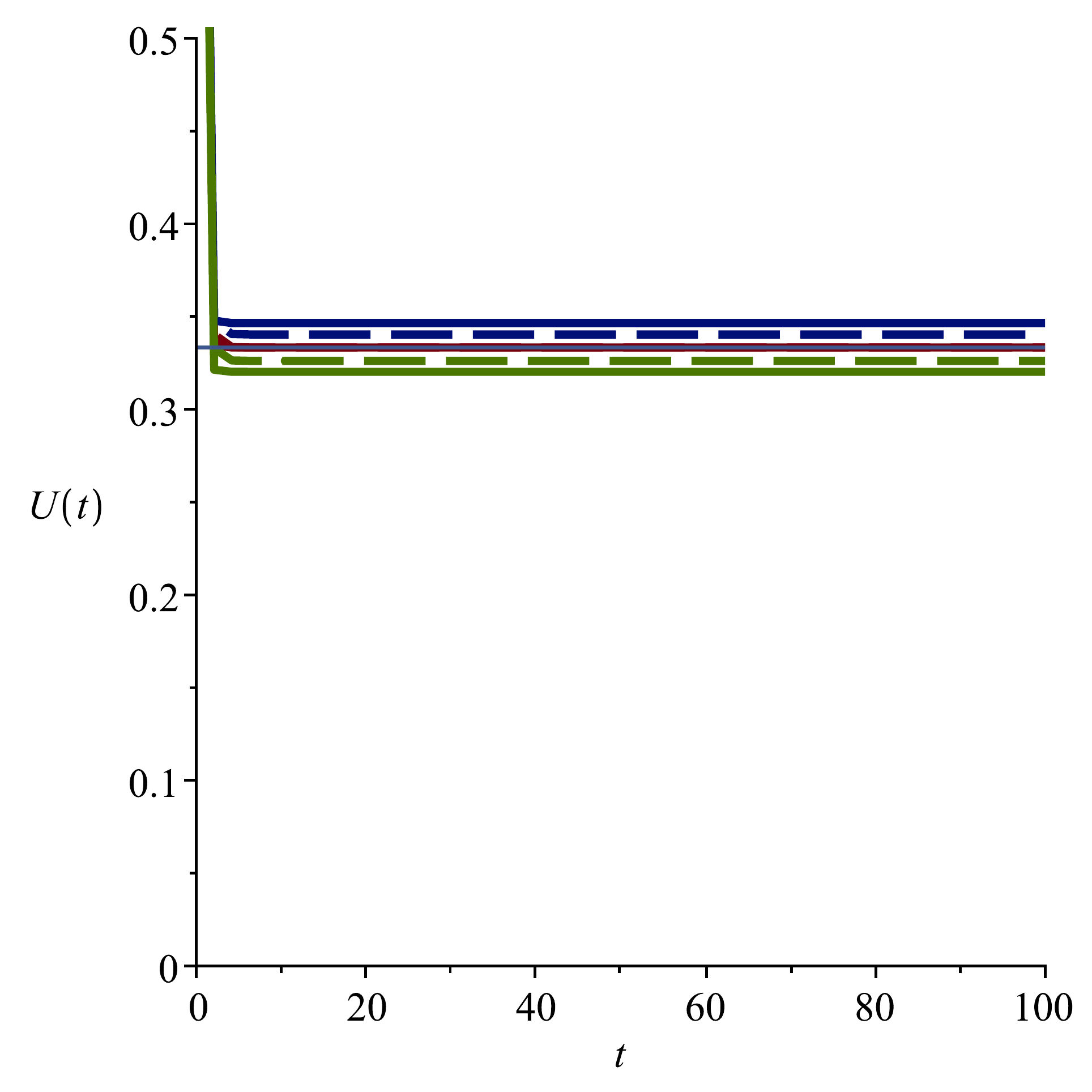} 
    \caption{varying $N(0)$} 
    \label{STRONGNUMERICFIGURE:c} 
  \end{subfigure}%%
    \caption{We assume a single mutant invading a resident population away from equilibrium.  We have set $R^{(n)}_{0,2}= \frac{3}{2} R^{(n)}_{0,1}$, so that the mutant is under strong favourable selection.  We implement this in two ways: first, by setting $\beta_{2} = \frac{3}{2} \beta_{1}$ (solid curves) and secondly, by setting $\alpha_{2} = \frac{\beta_{2}}{R^{(n)}_{0,2}}-\delta-\gamma_{2}$ (dashed curves).  For all curves, $\lambda = 2$, $\delta = 1$, $\beta_{1} = 20$, $\alpha_{1} = 3$, and $\gamma_{1} = \gamma_{2} = 1$.  When $\beta_{2}$ is varied, $\alpha_{2} = \alpha_{1} = 3$; when $\alpha_{2}$ is varied, $\beta_{2} = \beta_{1} = 20$.  For these parameters, the reduced system \eqref{REDUCED} has an equilibrium at $S^{\star,1} = \frac{4}{17}$, $I^{\star,1}_{1} = \frac{6}{17}$, $N^{\star,1} = \frac{16}{17}$.  (a) shows the effect of varying the initial number of individuals infected with the resident strain, $I_{1}(0) = I^{\star,1}_{1} = \frac{6}{17}$ (red), $I_{1}(0) = \frac{3}{17}$ (blue) and $I_{1}(0) = \frac{9}{17}$ (green).   (a) shows the effect of varying the initial number of susceptible individuals, $S(0) = S^{\star,1} = \frac{4}{17}$ (red), $S(0) = \frac{1}{17}$ (blue) and $S(0) = \frac{7}{17}$ (green).    (c) shows the effect of varying the initial population size, $N(0) = N^{\star,1} = \frac{16}{17}$ (red), $N(0) = \frac{13}{17}$ (blue) and $N(0) = \frac{19}{17}$ (green).  The probability of fixation for single mutant at the endemic equilibrium is $\frac{1}{11}$, which is shown in all panels by the azure line.}
  \label{STRONGNUMERICFIGURE} 
\end{figure}

\subsubsection{Invasion Near Endemic Equilibrium}

One way to study the effect of the initial state of the resident population on the invasion of a novel strain is to consider small perturbations near the endemic equilibrium of the resident,
\[	
	S^{\star} = \frac{\lambda}{\delta R_{0,1}}
		\left(1-\frac{\alpha_{1}(R_{0,1}-1)}{\beta_{1}-\alpha_{1}}\right), \quad
	I^{\star}_{1} =
		\frac{\lambda(R_{0,1}-1)}{\beta_{1}-\alpha_{1}} \quad \text{and} \quad
	N^{\star} = R_{0,i}S^{\star}.
\]
More specifically we consider
\[
	S(0) = S^{\star} + \varepsilon s(0), \quad
	I_{1}(0) = I_1^{\star} + \varepsilon i_{1}(0), \quad \text{and} \quad 
	N(0) = N^{\star} + \varepsilon n(0)
\] 
for some small dimensionless constant $\varepsilon > 0$ (independent of $n$ -- we consider perturbations that are a positive fraction of the population), and compute an expansion in $\varepsilon$ of the probability of invasion $1-q$ of strain 2 starting from one single infected with this novel strain. 

The proof of the next statement can be found in Section  .
\begin{prop}
\label{prop:noneq}
Set $\Lambda := \beta_{2} \left(\frac{1}{R_{0,1}} -\frac{1}{R_{0,2}}\right)$. Then the expansion of the invasion probability $1-q$ in the initial deviation $\varepsilon$ to endemic equilibrium of strain 1 is given by
\begin{multline}\label{NONEQ}
	1- q  = 1 - \frac{R_{0,1}}{R_{0,2}} 
	- \varepsilon \left(1 - \frac{R_{0,1}}{R_{0,2}}\right)^{2}	
	K
	\left(R_{0,1} \Lambda s(0) - \Lambda n(0)-(\beta_{1}-\alpha_{1}) i_1(0)\right)
	+ \BigO{\varepsilon^{2}}
\end{multline}
where
\begin{multline*}
K=\frac{(\beta_{1}-\alpha_{1})\beta_{2}^{2}\delta}
	{\left(\left(\Lambda+\delta\right)R_{0,1}\alpha_{1}
	-\left(\delta R_{0,1}-\Lambda\right)\beta_{1})\Lambda R_{0,1}
	-\beta_{1}\delta(\beta_{1}-\alpha_{1})(R_{0,1}-1)\right)}\\
\end{multline*}
\end{prop}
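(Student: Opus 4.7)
The starting point is Corollary \ref{FIXNONEQ}, which gives $q = J/(1+J)$ with $J$ defined in \eqref{QINTEGRAL}. I would first verify the zeroth-order term: at the endemic equilibrium of strain 1 one has $S^\star/N^\star = 1/R_{0,1}$, so $\beta_{2} S/N - (\delta+\alpha_{2}+\gamma_{2}) \equiv \Lambda$. This makes the inner integrand constant and gives
\[
  J_{0} = \frac{\delta+\alpha_{2}+\gamma_{2}}{\Lambda} = \frac{R_{0,1}}{R_{0,2}-R_{0,1}},
  \qquad 1-q_{0} = \frac{1}{1+J_{0}} = 1-\frac{R_{0,1}}{R_{0,2}},
\]
which matches the leading term.

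For the correction, I would linearise the reduced deterministic system \eqref{REDUCED} around $\bm{X}^{\star} = (S^{\star},I_{1}^{\star},N^{\star})$. Writing $\bm{X}(t) = \bm{X}^{\star} + \varepsilon\,\bm{y}(t) + O(\varepsilon^{2})$, the vector $\bm{y}(t) = (s(t),i_{1}(t),n(t))$ satisfies $\dot{\bm{y}} = A\bm{y}$ where $A$ is the Jacobian of \eqref{REDUCED} at $\bm{X}^{\star}$ (a fixed $3\times 3$ matrix whose entries involve $\beta_{1}$, $\alpha_{1}$, $\gamma_{1}$, $\delta$, and $R_{0,1}$). Taylor expansion of $S/N$ around $\bm{X}^{\star}$ yields
\[
  \beta_{2}\,\frac{S(u)}{N(u)} - (\delta+\alpha_{2}+\gamma_{2}) = \Lambda + \varepsilon\,\beta_{2}\,\phi(u) + O(\varepsilon^{2}),
  \qquad \phi(u) := \frac{s(u)}{N^{\star}} - \frac{S^{\star}\,n(u)}{(N^{\star})^{2}}.
\]
Inserting this into \eqref{QINTEGRAL}, expanding the exponential to first order in $\varepsilon$, and swapping the order of integration via Fubini, I obtain
\[
  J = J_{0} - \varepsilon\,\frac{\beta_{2}(\delta+\alpha_{2}+\gamma_{2})}{\Lambda}\,\widehat{\phi}(\Lambda) + O(\varepsilon^{2}),
  \qquad \widehat{\phi}(\Lambda) := \int_{0}^{\infty} e^{-\Lambda u}\,\phi(u)\,du.
\]
Then $1-q = 1/(1+J) = (1-q_{0}) + \varepsilon(1-q_{0})^{2}\,\frac{\beta_{2}(\delta+\alpha_{2}+\gamma_{2})}{\Lambda}\,\widehat{\phi}(\Lambda) + O(\varepsilon^{2})$, which already produces the prefactor $(1-R_{0,1}/R_{0,2})^{2}$ announced in the statement.

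The remaining work is to evaluate $\widehat{\phi}(\Lambda)$ in closed form. Since $\bm{y}(u) = e^{Au}\bm{y}(0)$, its Laplace transform at $\Lambda$ is $(\Lambda I - A)^{-1}\bm{y}(0)$, so $\widehat{\phi}(\Lambda)$ is a linear functional in $(s(0),i_{1}(0),n(0))$ whose coefficients are rational functions of $\Lambda$ and the resident parameters. I would read off the top and bottom rows of $(\Lambda I - A)^{-1}$ (using the cofactor formula), combine them with the weights $1/N^{\star}$ and $-S^{\star}/(N^{\star})^{2}$ coming from the definition of $\phi$, and simplify using the equilibrium identities $\beta_{1}S^{\star}/N^{\star} = \delta+\alpha_{1}+\gamma_{1}$, $N^{\star} = R_{0,1}S^{\star}$, and $I_{1}^{\star} = \lambda(R_{0,1}-1)/(\beta_{1}-\alpha_{1})$. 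After clearing denominators, the linear combination should collapse exactly to the stated form $-K\bigl(R_{0,1}\Lambda s(0) - \Lambda n(0) - (\beta_{1}-\alpha_{1}) i_{1}(0)\bigr)$, where the common denominator becomes the expression in the denominator of $K$.

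The main obstacle is purely algebraic: inverting $\Lambda I - A$ and recognising the specific grouping of terms that the paper writes for $K$. The computation is entirely mechanical but the resident-equilibrium identities must be applied at the right moments to keep the expressions compact, and one has to be careful about a global sign arising because the $\varepsilon$-correction enters $J$ with a minus sign but is re-inverted when passing from $J$ to $1-q = 1/(1+J)$. A minor side check is convergence of $\widehat{\phi}(\Lambda)$, which holds whenever $\Lambda$ exceeds the spectral abscissa of $A$; since the endemic equilibrium of strain 1 is locally asymptotically stable (all eigenvalues of $A$ have negative real part) and $\Lambda>0$, this is automatic.
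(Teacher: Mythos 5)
Your plan is correct and follows essentially the same route as the paper's proof: linearise the reduced system about $\bm{E}^{\star,1}$ so that $\bm{e}(t)=e^{tA}\bm{e}(0)$, expand the exponential in \eqref{QINTEGRAL} to first order in $\varepsilon$, reduce the double integral to the resolvent $(\Lambda I-A)^{-1}$ (the paper integrates $\int_{0}^{s}e^{uA}\,du$ first rather than applying Fubini, but the two are trivially equivalent), and then carry out the mechanical algebra to extract $K$. Your zeroth-order value $J_{0}=R_{0,1}/(R_{0,2}-R_{0,1})$, the $(1-R_{0,1}/R_{0,2})^{2}$ prefactor from expanding $1/(1+J)$, and the convergence remark via stability of $A$ all match the paper's argument.
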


To first order, as expected, we obtain the probability $1 - \frac{R_{0,1}}{R_{0,2}}$ of fixation of strain 2 invading the strain 1 initially at equilibrium. The higher order terms allow us to study the effect of the change in population size. When the novel strain carries beneficial mutations (i.e. $R_{0,2} > R_{0,1}$) demographic perturbations that result in initial population growth of the pathogen population (i.e. $s(0) > 0$, $i_{1}(0) < 0$, $n(0) < 0$) increase  the probability of fixation of the novel strain. This effect is related to classical population genetics results (see e.g. \citet{Ewens1967, OttoWhitlock1997}) on probability of fixation in populations of changing size.

Note, however, that away from equilibrium, for given values of $R_{0,1}$ and $R_{0,2}$, the fixation probability depends also on the transmission $\beta_{2}$ of the novel strain. In other words, the life history traits of the novel strain can also affect the evolutionary outcome. If the novel strain is more transmissible than the resident (i.e. $\beta_{2} > \beta_{1}$)  its probability of fixation is favoured when the perturbation leads to an initial increase of the pathogen population. For instance, a high number of susceptible hosts (i.e. $s(0) > 0$) increases the probability of fixation of a transmissible strain. This effect results from a form of $r$ \vs $K$ selection (\citep{Pianka1970, Pianka1972, Reznick2002}), where fast-growing (high transmission and high virulence) strains are favoured in undersaturated environments, whereas long-lived (low virulence and low transmission) strains are favoured in over-saturated environments. The effect of epidemiology on evolution is very apparent in the analysis of the deterministic dynamics (see equation (**) in the main text). The above \eqref{NONEQ} provides a stochastic treatment of the influence of epidemiology on the ultimate evolutionary outcome.

Furthermore, we note that 
\[ 	
	K = -	\frac{\beta_{2}^2}{\beta_{1}(R_{0,1}-1)}
	+ \BigO{1 - \frac{R_{0,1}}{R_{0,2}}}
\]
and
\[ 
	\Lambda = \frac{\beta_{2}}{R_{0,1}}\left(1 - \frac{R_{0,1}}{R_{0,2}}\right),
\]
so that
\begin{equation}\label{NONEQRED}
	1-q  = 1 - \frac{R_{0,1}}{R_{0,2}} 
	- \varepsilon \left(1 - \frac{R_{0,1}}{R_{0,2}}\right)^{2} 
	\frac{\beta_{2}^2(\beta_{1}-\alpha_{1}) i_1(0)}{\beta_{1}(R_{0,1}-1)}	
	+ \BigO{\varepsilon^{2},\left(1 - \frac{R_{0,1}}{R_{0,2}}\right)^{3}}.
\end{equation}
Thus, if $R_{0,1}$ and $R_{0,2}$ are close, we see that the difference in the fixation probability from that at equilibrium is, to lowest order, proportional to $ i_1(0)$, and is positive if $i_{1}(0) < 0$, and negative if $i_{1}(0) > 0$.  Moreover, the difference is proportional to $\beta_{2}^{2}$, so that if $i_{1}(0) < 0$, there is an advantage to larger values of $\beta_{2}$, whereas if $i_{1}(0) > 0$, the fixation probability is maximized by minimizing $\beta_{2}$.

\begin{exmp}[Vaccination]
Suppose that starting at time $t  = 0$, a fraction $f$ of the incoming susceptibles to a population at the endemic equilibrium are immunized.  Then, the rate of incoming susceptibles is reduced from $\lambda$ to $\lambda(1-f)$.  Recalling the values of $S^{\star}$, $I^{\star}_{1}$, and $R^{\star}$
above, we see that post-vaccination, the population has a new equilibrium, 
\[	
	S^{\star}_{f} = S^{\star} + \frac{\alpha_{1} \lambda f}{\delta (\beta_{1}-\alpha_{1})}, \quad
	I^{\star}_{1,f} = I^{\star}_{1} - \frac{\lambda R_{0,1} f}{\beta_{1}-\alpha_{1}}
		\quad \text{and} \quad
	N^{\star}_{f} = N^{\star} + \frac{\alpha_{1} \lambda R_{0,1} f}{\delta (\beta_{1}-\alpha_{1})}
\]
In particular, the initial condition differs from the new equilibrium by a perturbation of magnitude $f$ with 
\[ 
	s(0) = - \frac{\alpha_{1} \lambda}{\delta (\beta_{1}-\alpha_{1})}, \quad 
	i_{1}(0) = \frac{\lambda R_{0,1}}{\beta_{1}-\alpha_{1}}, \quad \text{and} \quad 
	n(0) = - \frac{\alpha_{1} \lambda R_{0,1}}{\delta (\beta_{1}-\alpha_{1})}.
\]
Moreover, \eqref{NONEQ} is otherwise independent of $\lambda$, and thus $f$, so that when $f$ is not too large, we may use it to determine the fixation probability of a new strain that emerges at approximately the time when vaccination commences.  In particular, if we assume if $R_{0,1}$ and $R_{0,2}$ are close, we can use \eqref{NONEQRED} to see that the fixation probability of an invading strain is
\[
	1 - \frac{R_{0,1}}{R_{0,2}} 
	- f \left(1 - \frac{R_{0,1}}{R_{0,2}}\right)^{2} 
	\frac{\beta_{2}^2\lambda R_{0,1}}{\beta_{1}(R_{0,1}-1)}	
	+ \BigO{f^{2},\left(1 - \frac{R_{0,1}}{R_{0,2}}\right)^{3}},
\]
which is lower than the fixation probability at equilibrium, and decreasing with increasing contact rate.  Thus, vaccination reduces the probability of new strains arising, in particular more virulent strains.
\end{exmp} 
 
\subsection{The Weak Selection Case}\label{WEAK}

Recall that weak selection corresponds to the case when $R_{0,i} = R_{0,j} = R_{0}^{\star}$ for $1 \leq i,j \leq m$.  We hasten to clarify, however, that 
\[
	R^{(n)}_{0,i}= \frac{\beta^{(n)}_{i}}{\delta^{(n)}+\alpha^{(n)}_{i}+\gamma^{(n)}_{i}},
\]
so our assumptions \eqref{ASSUMPTIONS} only impose that 
\[
	R^{(n)}_{0,i}= R_{0}^{\star}\left(1+\frac{r_{i}}{n}\right) 
		+ {\textstyle o\left(\frac{1}{n}\right)}
\]
\ie $R^{(n)}_{0,i}$ and $R^{(n)}_{0,j}$ are allowed to differ by $\BigO{\frac{1}{n}}$ terms; as we shall see below, this is analogous to the weak selection limit of classical population genetics, and the values $r_{i}$ will appear as selection coefficients in a diffusion approximation.

In this case, we have a separation of timescales: there is a fast time-scale, in which \eqref{KURTZEQ} tells us that the stochastic process approximately follows the trajectories of \eqref{SYSTEM} arbitrarily closely to an arbitrarily small neighbourhood of $\Omega$.  Then, as we discuss below, there is a slow-time scale, in which, having arrived at $\Omega$, the stochastic process remains near this critical manifold.  

Let
\begin{gather*}
	\hat{S}^{(n)}(t) \defn \bar{S}^{(n)}(nt) = \frac{1}{n} S^{(n)}(nt), \\
	\quad \hat{I}^{(n)}_{i}(t) \defn \bar{I}^{(n)}_{i}(nt) = \frac{1}{n} I^{(n)}_{i}(nt), \\
	\hat{N}^{(n)}(t) \defn \bar{N}^{(n)}(nt) = \frac{1}{n} N^{(n)}(nt), \\
\end{gather*}
and let 
\[
	\hat{\bm{E}}^{(n)}(t) \defn (\hat{S}^{(n)}(t),\hat{\bm{I}}^{(n)}_{i}(t),\hat{N}^{(n)}(t)),
\]
where, as before, we let $\hat{\bm{I}}^{(n)}(t) = \left(\hat{I}^{(n)}_{1}(t),\ldots,\hat{I}^{(n)}_{m}(t)\right)$, and note that 
\[
	(\hat{S}^{(n)}(0),\hat{\bm{I}}^{(n)}(0),\hat{N}^{(n)}(0))
	= (\bar{S}^{(n)}(0),\bar{\bm{I}}^{(n)}(0),\bar{N}^{(n)}(0)).
\]
Here, rescaling time by $n$ is analogous to the passage to so-called ``coalescent time'' or ``generation time'', which is used to derive the diffusion limit of the Wright-Fisher model in classical population genetics.  

Recalling \eqref{SDEE}, the SDE for $\hat{\bm{E}}^{(n)}(t)$ is then 
\begin{align*}
	\hat{\bm{E}}^{(n)}(t) &= \hat{\bm{E}}^{(n)}(0)
	+ \int_{0}^{nt} \bm{F}^{(n)}\left(\hat{\bm{E}}^{(n)}(s)\right)\, ds + \frac{1}{n} \bm{M}^{(n)}(nt)\\
	&= \hat{\bm{E}}^{(n)}(0)
	+ \int_{0}^{t} n \bm{F}^{(n)}\left(\hat{\bm{E}}^{(n)}(s)\right)\, ds + \frac{1}{n} \bm{M}^{(n)}(nt).
\end{align*}
Thus, in the slow time scale, the drift is accelerated by a factor of $n$, causing the process to move rapidly to the critical manifold $\Omega$; as $n \to \infty$, this movement becomes instantaneous, and the process immediately jumps to $\Omega$ at time $t = 0$.  Moreover, stochastic fluctuations away from $\Omega$ are restored instantaneously, so the process becomes ``trapped'' on $\Omega$ as $n \to \infty$.
The following statement formalizes this idea using the projection $\bm{\pi}$ defined as follows.
Let $\bm{E}(t,\bm{x}) = (S(t,\bm{x}),\bm{I}(t,\bm{x}),N(t,\bm{x}))$ be the solution to \eqref{SYSTEM} with initial conditions $S(0) = x_{0}$, $I_{i}(0) = x_{i}$, and $N(0) = x_{d+1}$ and let 
% (\ie $\bm{I}(t,\bm{x}) = (I_{1}(t,\bm{x}),\ldots,I_{m}(t,\bm{x}))$) 
\[
	\bm{\pi}(\bm{x}) := \lim_{t \to \infty} \bm{E}(t,\bm{x}),
\]
\ie $\bm{\pi}(\bm{x})$ is the point on $\Omega$ at which the trajectory of \eqref{SYSTEM} starting from $\bm{x}$ meets $\Omega$.  

%Recall  the independent Brownian motions (taking $d=m$) defined in Eq \eqref{eqn:indepMBs}, the infinitesimal variance-covariance matrix $a_{ij}^{(n)}$ defined in Eq \eqref{A} and its expression given by Eq \eqref{AA} in terms of the maps $\sigma_{ij}^{(n)}$ specified in Eq \eqref{sigma}.
\begin{prop}
\label{prop:weaklimit}
As $n \to \infty$, $(\hat{\bm{E}}^{(n)}(t))\Rightarrow (\hat{\bm{E}}(t))\footnote{A family of random variables 
	$\{X^{(n)}\}$ taking values in a space $S$ is said to \textit{converge weakly} to $X$ if 
	\[
		\lim_{n \to \infty} \mathbb{E}[f(X^{(n)})] = \mathbb{E}[f(X)]
	\]
	for all $f \in C(S)$; the values $\mathbb{E}[f(X)]$ completely characterise the distribution of 
	$X$.  Weak convergence is denoted by
	\[
		X^{(n)} \Rightarrow X.
	\]
	Here, $S$ is the Skorokhod space $\mathbb{D}_{\mathbb{R}^{d+2}}[0,\infty)$ of right-
	continuous functions from $[0,\infty)$ to $\mathbb{R}^{d+2}$ with left limits; the interested 
	reader is referred to \citet{Billingsley1968} for a very readable account of weak convergence on 
	$\mathbb{D}$.} $ 
where the latter is a diffusion on the manifold $\Omega$, solution to the system of stochastic differential equations
\begin{equation}\label{WEAKLIMIT}
	d\hat{E}_{i}  
	%\sum_{j = 0}^{d+1} \frac{\partial\pi_{i}}{\partial x_{j}} d\hat{E}_{j}(t)
	%+ \frac{1}{2} \sum_{j,k} a_{jk}(\hat{\bm{E}})\frac{\partial^{2}\pi_{i}}{\partial x_{j}\partial x_{k}}\\
	= \sum_{j = 0}^{d+1} \frac{\partial\pi_{i}}{\partial x_{j}} (\hat{\bm{E}}) f_{j}(\hat{\bm{E}})\, dt
	+ \frac{1}{2} \sum_{j = 0}^{d+1} \sum_{k=0}^{d+1} \frac{\partial^{2}\pi_{i}}{\partial x_{j}\partial x_{k}} (\hat{\bm{E}}) a_{jk}(\hat{\bm{E}})\, dt
	+ \sum_{j = 0}^{d+1} \sum_{k=1}^{D} \frac{\partial\pi_{i}}{\partial x_{j}}(\hat{\bm{E}}) \sigma_{jk}(\hat{\bm{E}})
	\, dB_{k}(t)
\end{equation}
where the $(B_k)$ denote  $D=3(m+1)$ independent standard Brownian motions,
\[
	\bm{f}(\bm{x})  := \lim_{n \to \infty} n\left(\bm{F}^{(n)}(\bm{x}) - \bm{F}(\bm{x})\right),
\]
and $\bm{\sigma}(\bm{x})$ %:=\lim_{n\to\infty}\bm{\sigma}^{(n)}(\bm{x})$ 
is the $(d+1) \times D$ matrix\\

\resizebox{\linewidth}{!}{
%\hspace{-2.65cm}
%\bm{\sigma}(\bm{x}):=	 
$%\begin{medsize} 
\begin{bmatrix} 
	\sqrt{\lambda} & -\sqrt{\delta x_{0}} & -\sqrt{\frac{\beta_{1} x_{0} x_{1}}{x_{d+1}}} & 0 & 0 & -\sqrt{\frac{\beta_{2} x_{0}x_{2}}{x_{d+1}}}  &\cdots & 0 & -\sqrt{\frac{\beta_{d} x_{0}x_{d}}{x_{d+1}}}&0&0&0 \\
	0 & 0 & \sqrt{\frac{\beta_{1} x_{0} x_{1}}{x_{d+1}}} & - \sqrt{(\delta\!+\!\alpha_{1})x_{1}}
		& -\sqrt{\gamma_{1} x_{1}} & 0  & \cdots& \cdots&\cdots&\cdots&0&0\\
	\vdots&\vdots &\ddots &\ddots & \ddots  & \ddots& \ddots& \ddots& \ddots& \ddots&\ddots& \vdots \\
	0&0 &0 &\cdots&\cdots&\cdots & \cdots& 0& \sqrt{\frac{\beta_{d} x_{0} x_{d}}{x_{d+1}}} & - \sqrt{(\delta\!+\!\alpha_{d})x_{d}} 
		& -\sqrt{\gamma_{d} x_{d}}& 0 \\
	\sqrt{\lambda} & -\sqrt{\delta x_{0}} & 0 & - \sqrt{(\delta\!+\!\alpha_{1}) x_{1}} & 0 & 0 & \cdots&\cdots&0& - \sqrt{(\delta\!+\!\alpha_{d})x_{d}} &0&\scriptscriptstyle-\sqrt{\delta \left(x_{d+1}- \sum_{i=0}^{d} x_i\right)}
	\end{bmatrix} 
%\end{medsize}
$,}\\

\noindent and
\[
	\bm{a}(\bm{x}) = \lim_{n \to \infty}n \bm{a}^{(n)}(\bm{x}) =\bm{\sigma}(\bm{x})\bm{\sigma}(\bm{x})^{\top} .
\]
%and we note that $\bm{\sigma}(\bm{x})\bm{\sigma}(\bm{x})^{\top} = \bm{a}(\bm{x})$. 
In other words, for each $i$
\begin{multline}\label{PROJECTED}
d\hat{\bm{E}}_i(t) = 	\bm{D}\pi_i(\hat{\bm{E}}(t)) \bm{f}(\hat{\bm{E}}(t)) \, dt 
+	\bm{D}\pi_i(\hat{\bm{E}}(t)) \,\bm{\sigma}(\hat{\bm{E}}(t))\, d\bm{B}(t)\\
+ \frac12 {\rm Tr}\left[\bm{\sigma}^{\top}(\hat{\bm{E}}(t)) \bm{H}\pi_i(\hat{\bm{E}}(t))\,\bm{\sigma}(\hat{\bm{E}}(t))\right]\, dt,
\end{multline}
where $D\pi_i$ denotes the gradient vector of $\pi_i$, $H\pi_i$ its Hessian matrix, and $\rm Tr$ denotes the trace operator.
\end{prop}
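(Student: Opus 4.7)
The plan is to prove convergence by combining a slow-fast averaging argument, in the spirit of Katzenberger's adiabatic reduction for SDEs with large restoring drift, with the Meyer-Itô formula \eqref{ITOP} applied to the projection $\bm{\pi}$. The structural fact driving the whole proof is that $\bm{\pi}$ is invariant along orbits of $\bm{F}$: since $\bm{\pi}(\bm{E}(t,\bm{x}))=\bm{\pi}(\bm{x})$ for every $t\ge 0$, differentiating at $t=0$ gives
\[
\bm{D}\bm{\pi}(\bm{x})\,\bm{F}(\bm{x})=0\quad\text{for all }\bm{x}.
\]
This identity is what cancels the otherwise divergent $O(n)$ drift in the accelerated equation $d\hat{\bm{E}}^{(n)}=n\bm{F}^{(n)}(\hat{\bm{E}}^{(n)})\,dt+\tfrac{1}{n}\,d\bm{M}^{(n)}(nt)$ once one composes with $\bm{\pi}$.

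First, I would apply the Meyer-Itô formula \eqref{ITOP} componentwise, with $g=\pi_i$, to the time-changed process $\hat{\bm{E}}^{(n)}$. Expanding $\bm{F}^{(n)}=\bm{F}+\tfrac{1}{n}\bm{f}+o(1/n)$ and using $\bm{D}\pi_i\cdot\bm{F}\equiv 0$ together with $n\bm{a}^{(n)}\to\bm{a}=\bm{\sigma}\bm{\sigma}^{\top}$, the drift of $\pi_i(\hat{\bm{E}}^{(n)})$ collapses to $\bm{D}\pi_i\cdot\bm{f}+\tfrac{1}{2}\mathrm{Tr}\!\bigl(\bm{\sigma}^{\top}\bm{H}\pi_i\,\bm{\sigma}\bigr)$ evaluated at $\hat{\bm{E}}^{(n)}$, plus $o(1)$. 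The martingale part $\int_0^{t}\bm{D}\pi_i(\hat{\bm{E}}^{(n)})\,d\bigl(\tfrac{1}{n}\bm{M}^{(n)}(ns)\bigr)$ has predictable quadratic variation $\bm{D}\pi_i\,n\bm{a}^{(n)}\,\bm{D}\pi_i^{\top}\,dt$ converging to $\bm{D}\pi_i\,\bm{\sigma}\bm{\sigma}^{\top}\bm{D}\pi_i^{\top}\,dt$, so by the martingale central limit theorem of Rebolledo it converges to $\int_0^t\bm{D}\pi_i\,\bm{\sigma}\,d\bm{B}$; the jump-correction residual $\varepsilon^{(n)}_i$ is $O(1/n^2)$ as noted after \eqref{ERROR} and contributes nothing in the limit. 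This yields exactly \eqref{PROJECTED}.

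Second, I would establish tightness of $(\hat{\bm{E}}^{(n)})$ in $\mathbb{D}_{\mathbb{R}^{d+2}}[0,\infty)$ together with concentration on $\Omega$. Tightness of the projected coordinates $\bm{\pi}(\hat{\bm{E}}^{(n)})$ follows from Aldous-Rebolledo applied to the previous display, since both drift and quadratic variation are bounded uniformly on compact sets. Concentration on $\Omega$ requires showing that the transverse coordinates of $\hat{\bm{E}}^{(n)}$ relax to $\Omega$ essentially instantaneously in slow time: parametrizing a tubular neighborhood of $\Omega$ by its normal bundle and invoking the uniform exponential attraction of $\Omega$ under $\bm{F}$ (the transverse eigenvalues of $\bm{DF}|_{\Omega}$ all have strictly negative real part, as follows from the stability analysis in Section~\ref{ASYMPT}), the transverse components satisfy an SDE with restoring drift of order $-n\lambda_{\Omega}$ and noise of order $O(1)$, hence live in an $O(n^{-1/2})$ tube around $\Omega$ for every $t>0$ with probability tending to one. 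Combined with continuity of $\bm{\pi}$, any weak limit is supported on $\Omega$, and uniqueness then follows from the martingale problem associated with \eqref{PROJECTED} on the smooth submanifold $\Omega$, whose coefficients are smooth and whose diffusion matrix is nondegenerate on the tangent bundle by direct inspection of $\bm{\sigma}$.

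The hardest step is this transverse concentration estimate with enough quantitative control to kill the cross terms arising when one expands $n\bm{D}\pi_i\cdot\bm{F}$ to second order near $\Omega$: although this quantity vanishes on $\Omega$, it is $O(\mathrm{dist}(\cdot,\Omega))$ nearby and multiplied by $n$, so $O(n^{-1/2})$ transverse excursions naively generate $O(n^{1/2})$ spurious drift after composing with $\pi_i$. Following Katzenberger, the fix is to replace $\bm{\pi}$ in the Itô computation by a first-order corrected projection $\bm{\pi}+\tfrac{1}{n}\bm{\Phi}$, where $\bm{\Phi}$ is chosen so that the next-order terms in the transverse expansion cancel; once this correction is in place, the rest of the argument reduces to routine stochastic averaging.
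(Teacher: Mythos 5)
Your identification of the limit follows essentially the same route as the paper: apply the Meyer--It\^o formula \eqref{ITOP} with $g=\pi_{i}$ to the accelerated process, kill the $\BigO{n}$ drift via the flow-invariance identity $\bm{D}\bm{\pi}\,\bm{F}\equiv 0$, replace $n\bm{F}^{(n)}$ by $n(\bm{F}^{(n)}-\bm{F})\to\bm{f}$, and invoke the martingale central limit theorem to turn the compensated Poisson noises into the Brownian integrals of \eqref{WEAKLIMIT}. Where you differ is that you sketch a self-contained tightness and concentration argument (Aldous--Rebolledo for the projected coordinates, an $\BigO{n^{-1/2}}$ transverse tube estimate, and uniqueness via the martingale problem on $\Omega$), whereas the paper simply cites \citet{Katzenberger1991} for the weak convergence itself and only carries out the identification step. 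That extra material is reasonable in outline, though the nondegeneracy-of-$\bm{\sigma}$ claim for uniqueness is asserted rather than checked.

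Your final paragraph, however, manufactures a difficulty that does not exist and then ``fixes'' it with machinery you do not need. You worry that $n\,\bm{D}\pi_{i}\cdot\bm{F}$ ``vanishes on $\Omega$'' but is ``$\BigO{\mathrm{dist}(\cdot,\Omega)}$ nearby,'' so that $\BigO{n^{-1/2}}$ transverse excursions would produce $\BigO{n^{1/2}}$ spurious drift, requiring a corrected projection $\bm{\pi}+\tfrac{1}{n}\bm{\Phi}$. This contradicts the identity you yourself established in your opening: since $\bm{\pi}(\bm{E}(t,\bm{x}))=\bm{\pi}(\bm{x})$ for \emph{every} $\bm{x}$ in the basin of attraction (not merely for $\bm{x}\in\Omega$), differentiating in $t$ at $0$ gives $\bm{D}\bm{\pi}(\bm{x})\bm{F}(\bm{x})=0$ identically off the manifold as well. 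It is $\bm{F}$ alone that is $\BigO{\mathrm{dist}(\cdot,\Omega)}$ near $\Omega$; the contraction $\bm{D}\pi_{i}\cdot\bm{F}$ is exactly zero everywhere, which is precisely why the flow-map projection (rather than, say, an orthogonal projection) is the right object to compose with. The only surviving drift of order one is $n\,\bm{D}\pi_{i}\cdot(\bm{F}^{(n)}-\bm{F})$, which is bounded by the assumptions \eqref{ASSUMPTIONS}. No corrector $\bm{\Phi}$ is needed, and the ``hardest step'' you describe is vacuous; the genuinely delicate issues in a rigorous proof are the ones the paper delegates to Katzenberger, namely tightness and the control of the error term $\varepsilon^{(n)}_{i}(nt)$ over the $\BigO{n}$-length time window.
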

%As we will show later, $\bm{D}\bm{\pi}(\hat{\bm{E}}(t))$ is the projection onto the tangent space to $\Omega$ along the tangents to the trajectories of the deterministic flow.
%We remark that the drift terms in these equations correspond to the vector fields in \eqref{SYSTEM}, with the original processes replaced by their time rescaled equivalents, and  multiplied by $n$; thus, the drift, which is in the direction of $\Omega$, becomes arbitrarily strong as $n \to \infty$, except when the process already lies in $\Omega$, in which case the drift vanishes.  This explains both why the limiting process is confined to $\Omega$, and also why the limit is non-trivial - the noise terms remain.  They also prevent us from directly taking the limit in the equations above.

This diffusion can be understood as the result of stochastic fluctuations around $\Omega$ immediately followed by a strong deterministic drift towards $\Omega$. 
%To simplify our explanation, let us assume that all coefficients of the dynamics do not depend on $n$, that is, $\alpha^{(n)}_{i} = \alpha_{i}$, $\beta^{(n)}_{i} = \beta_{i}$, etc. In particular, $F^{(n)}_{i} = F_{i}$, so that $\bm{f}\equiv \bm 0$, and the last displayed equation becomes
%\begin{equation}
%\label{f-equals-zero}
%d\hat{\bm{E}}_i(t) = 	\bm{D}\pi_i(\hat{\bm{E}}(t))\, \bm{\sigma}(\hat{\bm{E}}(t))\, d\bm{B}(t)
%+ \frac12 {\rm Tr}\left[\bm{\sigma}^{\top}(\hat{\bm{E}}(t)) \bm{H}\pi_i(\hat{\bm{E}}(t))\,\bm{\sigma}(\hat{\bm{E}}(t))\right]\, dt.
%\end{equation}
%\textcolor{red}{To better understand the above equation recall the diffusion approximation \eqref{Diff-approx}}. After acceleration by $n$, it reads
%$$
%d\hat{\bm{E}}^{(n)}(t) = 	n \bm{F}\left(\hat{\bm{E}}^{(n)}(t)\right)\, dt +\bm{\sigma}\left(\hat{\bm{E}}^{(n)}(t)\right) d\bm{B}(t),
%$$

As can be seen from Proposition \ref{KURTZ}, the drift pushes the process very rapidly onto $\Omega$, so that in the limit, the process lives permanently in $\Omega$. Now to understand the interplay between the deterministic dynamics towards $\Omega$ and the stochastic fluctuations around $\Omega$, it is useful to think of the dynamics in two steps. Suppose that starting from a point $\hat{\bm{E}}(t-) \in \Omega$, the process $\bm{E}$ has a jump $\bm{l}$.  Then, the rescaled process $(\hat{\bm{E}}^{(n)}(t))$ has a jump $\frac{1}{n} \bm{l}$.

In a second step, it is immediately projected back to the manifold by the drift \textit{at the new location}, so:
\[
	d\hat{\bm{E}}(t) = \bm{\pi}\left(\hat{\bm{E}}(t-) + \frac{1}{n} \bm{l}\right) - \hat{\bm{E}}(t-)
\]
Thus, expanding the $i$-th component of the r.h.s. of the last equation and recalling that $\bm{\pi}\left(\hat{\bm{E}}(t-)\right) = \hat{\bm{E}}(t-)$ yields
\begin{align*}
	d\hat{E}^{(n)}_{i}(t)&= \pi_{i}\left(\hat{\bm{E}}^{(n)}(t-) + \frac{1}{n} \bm{l}\right)
	 - E^{(n)}_{i}(t-)\\
	&= \frac{1}{n} \sum_{j} \frac{\partial \pi_{i}}{\partial x_{j}}(\hat{\bm{E}}^{(n)}(t-)) l_{i}
	+  \frac{1}{n^{2}} \sum_{j} \sum_{k} \frac{1}{2}
		\frac{\partial^{2} \pi_{i}}{\partial x_{j} \partial x_{k}}(\hat{\bm{E}}^{(n)}(t-)) l_{j}l_{k} 
		+ {\textstyle o\left(\frac{1}{n^{2}}\right)}.
\end{align*}

To determine $\mathbb{E}[d\hat{E}^{(n)}_{i}(t)]$ (\ie the $i$\textsuperscript{th} component of the drift in the diffusion approximation) we need only sum this over all possible jumps $\bm{l}$, weighted by their probabilities:
\begin{multline*}
\mathbb{E}[d\hat{E}^{(n)}_{i}(t)] \\
= \sum_{\bm{l}} \left(\frac{1}{n} \sum_{j} \frac{\partial \pi_{i}}{\partial x_{j}}(\bm{E}^{(n)}(t-)) l_{j}
+  \frac{1}{n^{2}} \sum_{j} \sum_{k} \frac{1}{2}
	\frac{\partial^{2}  \pi_{i}}{\partial x_{j} \partial x_{k}}(\bm{E}^{(n)}(t-)) l_{j}l_{k} 
	+ {\textstyle o\left(\frac{1}{n}\right)}\right) n \rho_{\bm{l}}(\bm{E}^{(n)}(t-))\, d(nt)\\
= n \sum_{j} \frac{\partial \pi_{i}}{\partial x_{j}}(\hat{\bm{E}}^{(n)}(t-)) F^{(n)}_{j}(\hat{\bm{E}}(t-))\, dt 
		+ \frac{1}{2} \sum_{j} \sum_{k} \frac{\partial^{2} \pi_{i}}{\partial x_{j} \partial x_{k}}
		(\hat{\bm{E}}^{(n)}(t-)) a_{jk}(\hat{\bm{E}}^{(n)}(t-))\, dt,
\end{multline*}
where, because we have rescaled time, $d(nt)$ replaces $dt$ in probability of a jump at $t$.  Recalling $\bm{F}(\hat{\bm{E}}(t)) = 0$ and \eqref{AAA}, this yields the first term of the previous equation yields the first term of Eq \eqref{WEAKLIMIT}.

A picture (Figure \ref{GEOMETRYFIGURE}) more immediately explains the emergence of the variance induced drift: unless the flow lines are parallel, jumps of identical magnitude and direction will be returned to the manifold $\Omega$ at different distances from the initial point, as one moves along the manifold:

\begin{figure}[h] 
    \centering
    \includegraphics[width=\linewidth]{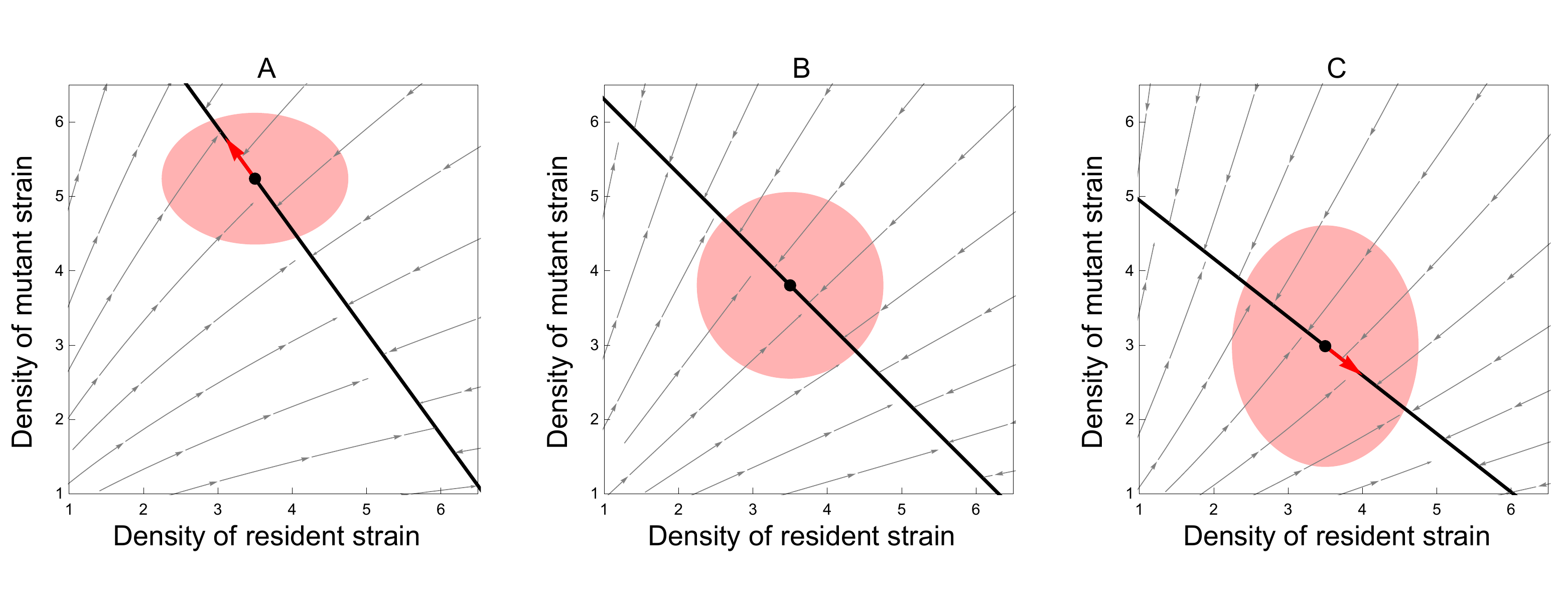} 
    \caption{Dynamics of the densities of the resident and the mutant strain in the phase plane when the two strains have the same basic reproduction number $R_{0}$, but under three different scenarios. (A) The mutant strain has a lower virulence than the resident. (B) The two strains have the same virulence. (C) The mutant strain has a higher virulence than the resident. The deterministic trajectories are shown as grey arrows that point towards the manifold $\Omega$ (the black line). The light red ellipsoid has axes proportional to the infinitesimal variance of the jumps that displace each strain from a given point on the manifold (black dot).  The combination of the effect of stochasticity and the fast deterministic return to the manifold generates a drift (red arrow) that favours the strain with the lower virulence. Parameter values of the resident:  $\beta_{1} = 10$,  $\alpha_{1} = 2$,  $\delta = 0.05$,  $\gamma = 0.5$. Virulence of the mutant:  $\alpha_{2} = 1.25$, 2, and 2.75 in A, B and C, respectively.} 
   \label{GEOMETRYFIGURE} 
\end{figure}

Of course the rigorous way of obtaining the result is to use It\^o's formula as done in the proof.
\begin{proof}
The weak convergence $\hat{\bm{E}}^{(n)}\Rightarrow \hat{\bm{E}}$ is proven in \citet{Katzenberger1991} (see \citet{Parsons2017} for an informal, applications-oriented discussion).

To characterise the limit $\hat{\bm{E}}$, we shall make use of $\bm{\pi}$. Unfortunately, $\bm{\pi}(\bm{x})$ is impossible to compute analytically, but we can still use it to obtain an SDE for $\hat{\bm{E}}(t)$.  We first observe that if $\bm{F}$ is twice-continuously differentiable, then $\bm{\pi}$ is as well \citet{Hirsch+Smale74}.  The continuity of $\bm{\pi}$ then tells us that $\bm{\pi}(\hat{\bm{E}}^{(n)}(t)) \Rightarrow \bm{\pi}(\hat{\bm{E}}(t))$ as well.   Since $\bm{\pi}$ has first and second derivatives, we may apply It\^o's formula (see Section \ref{DERIVATION}) to $\bm{\pi}(\hat{\bm{E}}^{(n)}(t))$:
 \begin{multline}\label{ITOPI}
	\pi_{i}(\hat{\bm{E}}^{(n)}(t)) = \pi_{i}(\hat{\bm{E}}^{(n)}(0))\\
	 + \int_{0}^{t} \sum_{j = 0}^{d+1} n
	\frac{\partial \pi_{i}}{\partial x_{j}}(\bm{E}^{(n)}(s)) F^{(n)}_{j}(\hat{\bm{E}}^{(n)}(s))
	+ \frac{1}{2} \sum_{j,k = 0}^{d+1} na^{(n)}_{jk}(\hat{\bm{E}}^{(n)}(s))
		\frac{\partial \pi_{i}}{\partial x_{j}\partial x_{k}} (\bm{E}^{(n)}(s))\, ds\\
	+ \int_{0}^{t}  \frac{1}{n} \sum_{j = 0}^{d+1} \frac{\partial \pi_{i}}{\partial x_{j}}(\bm{E}^{(n)}(s))\, 
		dM^{(n)}_{j}(ns) + \varepsilon^{(n)}_{i}(nt) 
\end{multline}
where, as before, $a^{(n)}_{jk}(\bm{x})$ is given by \eqref{AA} and $\varepsilon^{(n)}_{i}(nt)$ is a smaller order error term.
 
On first inspection, it might appear that the drift term, which is multiplied by $n$, explodes as $n \to \infty$; however, from the definition of $\bm{\pi}$, we see that $\bm{\pi}(\bm{E}(t,\bm{x})) = \bm{\pi}(\bm{x})$, and thus, 
\[
	0 = \frac{d}{dt}\bigg\vert_{t = 0} \bm{\pi}(\bm{E}(t,\bm{x}))
	 =  \sum_{j = 0}^{d+1} 
	\frac{\partial \pi_{i}}{\partial x_{j}}(\bm{x}) F_{j}(\bm{x}),
\]
and the terms of order $\BigO{n}$ vanish identically.

We can thus replace $\bm{F}^{(n)}$ by $\bm{F}^{(n)} - \bm{F}$ in \eqref{ITOPI}, leaving 
\[
	 \sum_{j = 0}^{d+1} \frac{\partial \pi_{i}}{\partial x_{j}}(\bm{E}^{(n)}(s)) 
		n \left(F^{(n)}_{j}(\hat{\bm{E}}^{(n)}(s)) - F_{j}(\hat{\bm{E}}^{(n)}(s))\right),
\]	
which remains bounded, as our assumptions \eqref{ASSUMPTIONS} guarantee that 
$\bm{F}^{(n)} - \bm{F}$ is $\BigO{\frac{1}{n}}$.
 
 Next, we recall \eqref{eqn:Mn}
 \begin{multline*}
 	\bm{M}^{(n)}(t) =  (\bm{e}_{0} + \bm{e}_{d+1})M^{(n)}_{\bm{e}_{0} + \bm{e}_{d+1}}(t)
	-  (\bm{e}_{0} + \bm{e}_{d+1}) M^{(n)}_{-\bm{e}_{0} - \bm{e}_{d+1}}(t)\\
	+ \sum_{i=1}^{d} (\bm{e}_{i} - \bm{e}_{0}) M^{(n)}_{-\bm{e}_{0} + \bm{e}_{i}}(t)
	- \sum_{i=1}^{d} (\bm{e}_{i} + \bm{e}_{d+1}) M^{(n)}_{-\bm{e}_{i} - \bm{e}_{d+1}}(t)
	- \sum_{i=1}^{d} \bm{e}_{i} M^{(n)}_{-\bm{e}_{i}}(t)
	- \bm{e}_{d+1} M^{(n)}_{-\bm{e}_{d+1}}(t),
 \end{multline*}
 where, for example, 
 \[
	M^{(n)}_{-\bm{e}_{0} - \bm{e}_{d+1}}(t) = \tilde{P}_{-\bm{e}_{0} - \bm{e}_{d+1}}\left(\int_{0}^{t} \delta^{(n)} S^{(n)}(s)\, ds\right).
\]
Thus, 
 \begin{align*}
	\frac{1}{n} M^{(n)}_{-\bm{e}_{0} - \bm{e}_{d+1}}(nt) &= \tilde{P}_{-\bm{e}_{0} - \bm{e}_{d+1}}\left(\int_{0}^{nt} \delta^{(n)} S^{(n)}(s)\, ds\right)\\
	&= \frac{1}{n} \tilde{P}_{-\bm{e}_{0} - \bm{e}_{d+1}}\left(n^{2} \int_{0}^{t} \delta^{(n)} \bar{S}^{(n)}(n s)\, ds\right).
\end{align*}
The latter is a stochastic process with jumps of order $\frac{1}{n}$ and variance 
\[
	\int_{0}^{t} \delta^{(n)} \bar{S}^{(n)}(n s)\, ds = 
		\int_{0}^{t} \delta^{(n)} \hat{E}^{(n)}_{0}(s)\, ds.
\]
Thus, as $n \to \infty$, $\frac{1}{n} M^{(n)}_{-\bm{e}_{0} - \bm{e}_{d+1}}(nt)$ approaches a continuous stochastic process with variance 
\[
	\int_{0}^{t} \delta \hat{E}_{0}(s)\, ds.
\]
The martingale central limit theorem (see \eg \citet{Ethier+Kurtz86}) tells us that the only stochastic process with these properties is a Brownian motion with the same variance, 
\[
	\int_{0}^{t} \sqrt{\delta \hat{E}_{0}(s)}\, dB_{-\bm{e}_{0} - \bm{e}_{d+1}}(s).
\]
(\ie $B_{-\bm{e}_{0} - \bm{e}_{d+1}}(t)$ is a standard Brownian motion with mean 0 and variance $t$).

Proceeding similarly, in the limit, we may replace all the terms $M^{(n)}_{\bm{l}}$ with integrals of  independent Brownian motions, so that as $n \to \infty$,  $\frac{1}{n} \bm{M}^{(n)}(nt)$ aproaches
 \[
 	\int_{0}^{t} \sigma(\hat{\bm{E}}(s))\, d\bm{B}(s)
\]
where
\[
	\bm{B}(t) = (B_{\bm{e}_{0} \!+\! \bm{e}_{d\!+\!1}}(t),B_{\!-\!\bm{e}_{0} \!-\! \bm{e}_{d\!+\!1}}(t),
	B_{S,1}(t),B_{1,\!-\!}(t),B_{1,R},(t),\ldots,B_{S,m}(t),B_{m,\!-\!}(t),B_{m,R}(t),B_{\!-\!\bm{e}_{d\!+\!1}}(t))
\]
is an ordered list of the $D$ Brownian motions corresponding to the $D$ noises $M^{(n)}_{\bm{l}}(t)$ and $\bm{\sigma}(\bm{x})$ is as in the statement. Taking the limit as $n \to \infty$ on both sides of \eqref{ITOPI} and recalling that $\bm{\pi}(\hat{\bm{E}}(t)) = \hat{\bm{E}}(t),$  we obtain \eqref{WEAKLIMIT}.
\end{proof}

While the drift terms seem rather mysterious, they may be interpreted geometrically.  We first observe that 

\begin{prop}
$(\bm{D}\bm{\pi})(\bm{x})$ is the projection onto the tangent space to $\Omega$ at $\bm{x}$,  $T_{\bm{x}} \Omega$.
\end{prop}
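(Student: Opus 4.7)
The core idea is to exploit the single identity $\bm{\pi}\circ\bm{\pi}=\bm{\pi}$ and differentiate it. Because every $\bm{y}\in\Omega$ is a fixed point of the flow of $\bm{F}$, the definition $\bm{\pi}(\bm{x})=\lim_{t\to\infty}\bm{E}(t,\bm{x})$ immediately gives $\bm{\pi}(\bm{\pi}(\bm{x}))=\bm{\pi}(\bm{x})$ for every $\bm{x}$. Applying the chain rule (using the smoothness of $\bm{\pi}$ recalled just before the proposition) at a point $\bm{x}\in\Omega$, where $\bm{\pi}(\bm{x})=\bm{x}$, yields
\[
(\bm{D}\bm{\pi})(\bm{x})\,(\bm{D}\bm{\pi})(\bm{x}) \;=\; (\bm{D}\bm{\pi})(\bm{x}),
\]
so $(\bm{D}\bm{\pi})(\bm{x})$ is idempotent, i.e.\ a linear projection. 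It then remains only to identify its image and kernel.

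For the image, since $\bm{\pi}$ takes values in $\Omega$, its differential at any $\bm{x}\in\Omega$ maps $\mathbb{R}^{d+2}$ into $T_{\bm{x}}\Omega$; so $\operatorname{Im}(\bm{D}\bm{\pi}(\bm{x}))\subseteq T_{\bm{x}}\Omega$. Conversely, given $\bm{v}\in T_{\bm{x}}\Omega$, choose a smooth curve $\gamma:(-\varepsilon,\varepsilon)\to\Omega$ with $\gamma(0)=\bm{x}$ and $\gamma'(0)=\bm{v}$. Because $\bm{\pi}$ is the identity on $\Omega$, $\bm{\pi}(\gamma(t))=\gamma(t)$; differentiating at $t=0$ gives $(\bm{D}\bm{\pi})(\bm{x})\bm{v}=\bm{v}$. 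This shows both that $\operatorname{Im}(\bm{D}\bm{\pi}(\bm{x}))=T_{\bm{x}}\Omega$ and that $(\bm{D}\bm{\pi})(\bm{x})$ restricts to the identity there, which together with idempotency identifies it uniquely as a projection onto $T_{\bm{x}}\Omega$.

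To make the projection fully explicit (and match its action on the noise in \eqref{WEAKLIMIT}) I would then identify the kernel as the stable subspace of $D\bm{F}(\bm{x})$. Because $\Omega$ consists of fixed points, $D\bm{F}(\bm{x})$ has $T_{\bm{x}}\Omega$ as its zero eigenspace, while the attractivity of $\Omega$ (established in the analysis of \eqref{SYSTEM}) forces the remaining eigenvalues to have negative real parts, giving a complementary stable subspace $E^s_{\bm{x}}$. For $\bm{v}\in E^s_{\bm{x}}$, the variational equation along the constant trajectory at $\bm{x}$ gives $\bm{E}(t,\bm{x}+\varepsilon\bm{v})=\bm{x}+\varepsilon e^{tD\bm{F}(\bm{x})}\bm{v}+O(\varepsilon^{2})$, which converges to $\bm{x}$ as $t\to\infty$. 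Hence $\bm{\pi}(\bm{x}+\varepsilon\bm{v})-\bm{x}=O(\varepsilon^{2})$ and $(\bm{D}\bm{\pi})(\bm{x})\bm{v}=0$, so $E^s_{\bm{x}}\subseteq\ker(\bm{D}\bm{\pi}(\bm{x}))$; a dimension count using $\mathbb{R}^{d+2}=T_{\bm{x}}\Omega\oplus E^s_{\bm{x}}$ forces equality.

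\textbf{Main obstacle.} The one delicate point is the passage from the convergence of $\bm{E}(t,\bm{x}+\varepsilon\bm{v})$ to $\bm{x}$ for each fixed $\varepsilon$ to a uniform-in-$t$ estimate that legitimizes taking $t\to\infty$ and $\varepsilon\to 0$ in either order — in other words, showing that linearization at $\bm{x}$ captures the leading behavior of $\bm{\pi}$ on the stable fibre. This is standard stable-manifold theory for a normally hyperbolic manifold of equilibria (Hirsch--Pugh--Shub), and it is exactly what underlies the smoothness of $\bm{\pi}$ already invoked from \citet{Hirsch+Smale74}, so I would cite it rather than redo it.
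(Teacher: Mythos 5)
Your proof is correct and follows essentially the same route as the paper's: idempotency from differentiating $\bm{\pi}\circ\bm{\pi}=\bm{\pi}$ at a point of $\Omega$, and the action as the identity on $T_{\bm{x}}\Omega$ via curves lying in $\Omega$. The only divergence is in showing $\operatorname{Im}\left((\bm{D}\bm{\pi})(\bm{x})\right)\subseteq T_{\bm{x}}\Omega$: you invoke the general fact that the differential of a smooth map into a submanifold lands in its tangent space, whereas the paper differentiates $\bm{F}\circ\bm{\pi}\equiv\bm{0}$ and uses that $\ker\left((\bm{D}\bm{F})(\bm{x})\right)=T_{\bm{x}}\Omega$ for $\bm{x}\in\Omega$ --- both are fine. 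Your further identification of the kernel with the stable subspace of $(\bm{D}\bm{F})(\bm{x})$ is correct (and, as you note, rests on the normal hyperbolicity underlying the smoothness of $\bm{\pi}$), but it goes beyond what the paper proves, which stops at establishing that $(\bm{D}\bm{\pi})(\bm{x})$ is a projection with image $T_{\bm{x}}\Omega$.
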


\begin{proof}
We first observe that, since $\bm{\pi}(\bm{x}) \in \Omega$ for all $\bm{x}$, we must have 
\[
	\bm{\pi}(\bm{\pi}(\bm{x})) = \bm{\pi}(\bm{x}).
\]
If, moreover, $\bm{x} \in \Omega$, we also have $\bm{\pi}(\bm{\pi}(\bm{x})) = \bm{x}$, so taking derivatives on left and right, using the chain rule, we have that
\[
	(\bm{D}\bm{\pi})(\bm{\pi}(\bm{x}))(\bm{D}\bm{\pi})(\bm{x}) = \mathbb{I},
\]
where $\mathbb{I}$ denotes the identity matrix.  Now, since $\bm{x} \in \Omega$, the right hand side is equal to 
\[
	(\bm{D}\bm{\pi})(\bm{x})(\bm{D}\bm{\pi})(\bm{x}),
\]
so we have that $(\bm{D}\bm{\pi})(\bm{x})$ is a projection.  It remains to see that it is a projection onto the tangent space.  We will do so by showing it's image contains, and is contained by, the tangent space.  

For the former, we recall that a vector $\bm{X}$ is in the tangent space to $\Omega$ if and only if there exists a parametric curve $\sigma_{\bm{x},\bm{X}}(t)$ such that 
\begin{enumerate}[(i)]
\item $\sigma_{\bm{x},\bm{X}}(0) = \bm{x}$,
\item $\dot{\sigma}_{\bm{x},\bm{X}}(0) = \bm{X}$, and,
\item  $\sigma_{\bm{x},\bm{X}}(t) \in \Omega$ for all $t \in \mathbb{R}$.
\end{enumerate}
We then have $\pi(\sigma_{\bm{x},\bm{X}}(t)) = \sigma_{\bm{x},\bm{X}}$, and thus
\[
	(\bm{D}\bm{\pi})(\bm{x}) \bm{X} = \frac{d}{dt}\bigg\vert_{t=0} \pi(\sigma_{\bm{x},\bm{X}}(t))
	 = \dot{\sigma}_{\bm{x},\bm{X}}(0) = \bm{X},
\]
and thus $T_{\bm{x}} \Omega$ is in the image of $(\bm{D}\bm{\pi})(\bm{x})$.  

On the other hand, since $\bm{\pi}(\bm{x}) \in \Omega$, we have $\bm{F}(\bm{\pi}(\bm{x})) = \bm{0}$, and again, taking derivatives using the chain rule, we have 
\[
	(\bm{D}\bm{F})(\bm{\pi}(\bm{x})) (\bm{D}\bm{\pi})(\bm{x}) = \bm{0},
\]
so that if $\bm{x} \in \Omega$, we have $(\bm{D}\bm{F})(\bm{x}) (\bm{D}\bm{\pi})(\bm{x}) = \bm{0}$ and thus 
\[
	(\bm{D}\bm{F})(\bm{x}) (\bm{D}\bm{\pi})(\bm{x})\bm{X} = \bm{0}	
\]
\ie the image of $(\bm{D}\bm{\pi})(\bm{x})$ is contained in the kernel of $(\bm{D}\bm{F})(\bm{x})$, which we have already observed is $T_{\bm{x}} \Omega$.  Thus, $\text{Im}\left((\bm{D}\bm{\pi})(\bm{x})\right) = T_{\bm{x}} \Omega$.
\end{proof}

Thus, the drift vector $(\bm{D}\bm{\pi})(\bm{x})\bm{f}(\bm{x})$ from \eqref{PROJECTED} %\
%\[
%	\bm{b}(\bm{x}) \defn 
%	\begin{bmatrix} s_{1}(\bm{x}) x_{1} \\ \vdots \\ s_{d}(\bm{x}) x_{d} \end{bmatrix}
%\]
is the projection of the vector $\bm{f}(x)$
%\[
%	\hat{\bm{b}}(\bm{x}) \defn
%	\begin{bmatrix} \hat{s}_{1}(\bm{x}) x_{1} \\ \vdots \\ \hat{s}_{d}(\bm{x}) x_{d} \end{bmatrix}
%\]
onto the tangent space to $\Omega$.  %looking at It\^o's formula, \eqref{ITOPI}, we see that
%\[
%	  \bm{b}(\bm{x}) = (\bm{D}\bm{\pi})(\bm{x})\hat{\bm{b}}(\bm{x}).
%\]
This is an immediate consequence of the strong drift: in the absence of constraints, the process would
move (on average) in the direction of this vector, whose components are the relative fitness of each strain, multiplied by the density of that strain.  However, density limitation prevents unlimited growth, confining the process to the manifold $\Omega$, and thus the direction of motion to the tangent plane, and the strains experience a drift that is the best approximating vector to their unconstrained growth rates.
\subsubsection{Computing the derivatives of $\bm{\pi}$}

To complete our derivation of the equations for the limiting process $\hat{\bm{E}}(t)$, we must compute the derivatives of the $\pi_{i}$.  
\begin{prop}
Let $\bm{x} \in \Omega$ and $0\le i\le d+1$. The first partial derivatives of $\bm{\pi_i}$ at $\bm{x}$ are given by $\frac{\partial \pi_{i}}{\partial x_{k}}=0$ if $k=0$ or $d+1$, otherwise by
\[
	\frac{\partial \pi_{i}}{\partial x_{k}} = \mathbbm{1}_{\{i=k\}} -  \frac{(\beta_{k}-\alpha_{k})\beta_{i} x_{i}}{\sum_{j=1}^{m} (\beta_{j}-\alpha_{j})\beta_{j} x_{j}}.
\]
The second partial derivatives $\bm{\pi_i}$ at $\bm{x}$ are given for any $k,n$ both different from $0$ and $d+1$, by:
\begin{multline*}
	\frac{\partial^{2} \pi_{i}}{\partial x_{k}\partial x_{n}} 
	= \frac{\beta_{i}}{\sum_{j=1}^{m} (\beta_{j}-\alpha_{j}) \beta_{j} x_{j}}\left(-(\beta_{k}-\alpha_{k}) \mathbbm{1}_{\{n=i\}}-(\beta_{n}-\alpha_{n}) \mathbbm{1}_{\{k=i\}}\right.\\
	+\left.\frac{(\beta_{k}-\alpha_{k})(\beta_{n}-\alpha_{n}) x_{i}}{\sum_{j=1}^{m}  (\beta_{j}-\alpha_{j}) \beta_{j} x_{j}} \left(\beta_{k}+\beta_{n}+\beta_{i}
	-\frac{\sum_{j=1}^{m} (\beta_{j}-\alpha_{j}) \beta_{j}^{2} x_{j}}{\sum_{j=1}^{m} (\beta_{j}-\alpha_{j}) \beta_{j} x_{j}}\right)\right).
\end{multline*}
\end{prop}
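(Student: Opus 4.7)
The plan is to exploit the fact that the deterministic system \eqref{SYSTEM} admits, in the weak-selection regime, $m-1$ explicit first integrals, and to determine $\bm{\pi}$ implicitly from these together with the defining equations of $\Omega$. Since $R_{0,i} = R_{0,j} = R_0^\star$ for $i,j \in \{1,\ldots,m\}$, the identity $\dot{I}_i/(\beta_i I_i) = S/N - 1/R_0^\star$ is independent of $i$, which shows that
$$\phi_i(\bm{x}) \defn \frac{\log x_i}{\beta_i} - \frac{\log x_1}{\beta_1}, \qquad i=2,\ldots,m,$$
are conserved along trajectories of \eqref{SYSTEM}. Since strains $i>m$ die out, $\pi_i \equiv 0$ for such $i$; combining the $m-1$ conservation identities $\phi_i(\bm{\pi}(\bm{x})) = \phi_i(\bm{x})$ with the $d-m$ vanishing conditions, the linear constraint $\sum_{j=1}^m (\beta_j-\alpha_j)\pi_j(\bm{x}) = \lambda(R_0^\star - 1)$, and the two equations expressing $\pi_0$ and $\pi_{d+1}$ in terms of $(\pi_1,\ldots,\pi_m)$, one obtains a complete implicit characterization of $\bm{\pi}(\bm{x})$ in a neighbourhood of $\Omega$.

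For the first derivatives at $\bm{x} \in \Omega$, where $\pi_j(\bm{x}) = x_j$, I differentiate the identity $\phi_i(\bm{\pi}(\bm{x})) = \phi_i(\bm{x})$ with respect to $x_k$ to obtain
$$\frac{1}{\beta_i x_i}\left(\frac{\partial \pi_i}{\partial x_k} - \delta_{ik}\right) = \frac{1}{\beta_1 x_1}\left(\frac{\partial \pi_1}{\partial x_k} - \delta_{1k}\right),$$
and denote this common value by $g_k(\bm{x})$. Hence $\partial \pi_i/\partial x_k = \delta_{ik} + \beta_i x_i g_k(\bm{x})$ for $i \in \{1,\ldots,m\}$. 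Differentiating the linear constraint $\sum_{j=1}^m (\beta_j - \alpha_j)\pi_j = \text{const}$ gives $\sum_{i=1}^m (\beta_i - \alpha_i)\,\partial \pi_i/\partial x_k = 0$, which pins $g_k$ down. For $k \in \{0, d+1\}$ the right-hand sides of the conservation identities are independent of $x_k$, forcing $g_k = 0$ and all first derivatives to vanish; for $k \in \{1,\ldots,d\}$ the equation is solved explicitly to recover the stated formula. The derivatives of $\pi_0$ and $\pi_{d+1}$ then follow from $\pi_0 = \pi_{d+1}/R_0^\star$ and $\pi_{d+1} = (\lambda - \sum_{j=1}^m \alpha_j \pi_j)/\delta$.

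For the second derivatives, I differentiate the conservation identity twice, producing four contributions: two linear in the second derivatives of $\pi_i$ and $\pi_1$, and two quadratic in the first derivatives, equated to $-\delta_{ik}\delta_{in}/(\beta_i x_i^2) + \delta_{1k}\delta_{1n}/(\beta_1 x_1^2)$. Evaluating at $\bm{x} \in \Omega$ and substituting the first-derivative formula, the same structural observation recurs: the quantity
$$\frac{1}{\beta_i x_i}\left(\frac{\partial^2 \pi_i}{\partial x_k\partial x_n} - A_{ikn}(\bm{x})\right)$$
is independent of $i$ for $i \in \{1,\ldots,m\}$, where $A_{ikn}$ collects the known lower-order expressions. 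Its common value is then determined by differentiating the linear constraint twice, which yields $\sum_{i=1}^m (\beta_i - \alpha_i)\,\partial^2 \pi_i/\partial x_k \partial x_n = 0$; substituting back and simplifying produces the claimed formula. The cases where $k$ or $n$ equals $0$ or $d+1$ are handled by exactly the same argument as for the first derivatives: since neither $\phi_i$ nor the linear constraint involves $x_0$ or $x_{d+1}$, the analogue of $g_k$ is forced to vanish, and these second derivatives are zero.

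The main obstacle is the algebraic bookkeeping at second order: one has to combine the quadratic-in-first-derivatives contributions with the second-derivative of the right-hand side $\delta_{ik}/(\beta_i x_i)$ and with the expansion of $A_{ikn}$ to recognize the symmetric structure $\beta_k + \beta_n + \beta_i - \sum_j (\beta_j - \alpha_j)\beta_j^2 x_j\,/\sum_j (\beta_j - \alpha_j)\beta_j x_j$ that appears in the final expression. A useful sanity check is that both sides of the proposed formula are symmetric under $k \leftrightarrow n$, which is clear from the form of $\phi_i$ and the constraint but far from obvious after expansion.
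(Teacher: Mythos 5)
Your proposal is correct and follows essentially the same route as the paper: your first integrals $\phi_i$ are just a repackaging of the paper's identity $\frac{1}{\beta_i}\ln(\pi_i(\bm{x})/x_i) = \frac{1}{\beta_j}\ln(\pi_j(\bm{x})/x_j)$ obtained from $dI_i/dI_j = \beta_i I_i/(\beta_j I_j)$, and the subsequent steps (differentiating the conserved quantities, observing that $\frac{1}{\beta_i x_i}(\partial\pi_i/\partial x_k - \delta_{ik})$ is independent of $i$, and pinning down the common value via the differentiated linear constraint $\sum_i(\beta_i-\alpha_i)\,\partial\pi_i/\partial x_k = 0$, including the vanishing for $k=0,d+1$) are exactly the paper's. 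The second-order argument likewise mirrors the paper's double differentiation of the same two identities.
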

\begin{proof}
We recall that under the weak selection hypothesis, 
\[
	\dot{I}_{i}(t) = \beta_{i} \left(\frac{S(t)}{N(t)} -\frac{1}{R^{\star}_{0}}\right) I_{i}(t),
\]
so that
\[
	\frac{dI_{i}}{dI_{j}} = \frac{\beta_{i} I_{i}}{\beta_{j} I_{j}}.
\]
We can solve this to obtain 
\[
	\frac{1}{\beta_{i}} \ln{\left(\frac{I_{i}(t)}{I_{i}(0)}\right)} = \frac{1}{\beta_{j}} \ln{\left(\frac{I_{j}(t)}{I_{j}(0)}\right)},
\]
for all $i, j$, \ie
\[
	\frac{1}{\beta_{i}} \ln{\left(\frac{I_{i}(t,\bm{x})}{x_{i}}\right)} = \frac{1}{\beta_{j}} \ln{\left(\frac{I_{j}(t,\bm{x})}{x_{j}}\right)},
\]
and, taking the limit as $t \to \infty$,
\begin{equation}\label{FLOWS}
		\frac{1}{\beta_{i}} \ln{\left(\frac{\pi_{i}(\bm{x})}{x_{i}}\right)} = \frac{1}{\beta_{j}} \ln{\left(\frac{\pi_{j}(\bm{x})}{x_{j}}\right)}.
\end{equation}
Taking derivatives, we then have 
\begin{gather*}
	\frac{1}{\beta_{i}} \left(\frac{1}{\pi_{i}} \frac{\partial \pi_{i}}{\partial x_{k}} - \frac{1}{x_{i}} \mathbbm{1}_{\{k=i\}} \right) = 
		\frac{1}{\beta_{j}} \left(\frac{1}{\pi_{j}} \frac{\partial \pi_{j}}{\partial x_{k}} - \frac{1}{x_{j}} \mathbbm{1}_{\{k=j\}}\right),\\
	\frac{1}{\beta_{i} \pi_{i}} \frac{\partial \pi_{i}}{\partial x_{0}} = \frac{1}{\beta_{j} \pi_{j}} \frac{\partial \pi_{j}}{\partial x_{0}}\\
	\frac{1}{\beta_{i} \pi_{i}} \frac{\partial \pi_{i}}{\partial x_{d+1}} = \frac{1}{\beta_{j} \pi_{j}} \frac{\partial \pi_{j}}{\partial x_{d+1}},	
\intertext{and}
\begin{multlined}
	\frac{1}{\beta_{i}} \left(-\frac{1}{\pi_{i}^{2}} \frac{\partial \pi_{i}}{\partial x_{k}} \frac{\partial \pi_{i}}{\partial x_{n}} + \frac{1}{\pi_{i}} \frac{\partial^{2} \pi_{i}}{\partial x_{k}\partial x_{n}}
		+ \frac{1}{x_{i}^{2}} \mathbbm{1}_{\{k=i\}}\mathbbm{1}_{\{n=i\}} \right)\\
	= \frac{1}{\beta_{j}} \left(-\frac{1}{\pi_{j}^{2}} \frac{\partial \pi_{j}}{\partial x_{k}} \frac{\partial \pi_{j}}{\partial x_{n}} + \frac{1}{\pi_{j}} \frac{\partial^{2} \pi_{j}}{\partial x_{k}\partial x_{n}}
		+ \frac{1}{x_{j}^{2}} \mathbbm{1}_{\{j=k\}}\mathbbm{1}_{\{j=n\}} \right).
	\end{multlined}
\end{gather*}

Moreover, using \eqref{OMEGA} we have that 
\[
	\sum_{i=1}^{m} (\beta_{i}-\alpha_{i}) \pi_{i}(\bm{x}) = \lambda (R_{0}^{\star} - 1), 
\]
so that 
\[
	\sum_{i=1}^{m} (\beta_{i}-\alpha_{i}) \frac{\partial \pi_{i}}{\partial x_{k}} 
	= \sum_{i=1}^{m} (\beta_{i}-\alpha_{i}) \frac{\partial \pi_{i}}{\partial x_{0}} 
	= \sum_{i=1}^{m} (\beta_{i}-\alpha_{i}) \frac{\partial \pi_{i}}{\partial x_{d+1}} = 0 
\]
and
\[
	\sum_{i=1}^{m} (\beta_{i}-\alpha_{i}) \frac{\partial^{2} \pi_{i}}{\partial x_{k}\partial x_{n}} = 0. 
\]
Together, these equations give us systems of linear equations that may be solved for the various derivatives of $\pi_{i}(\bm{x})$.  To illustrate, consider $\frac{\partial \pi_{i}}{\partial x_{0}}$; from the above, we have that 
\[
	0 = \sum_{i=1}^{m} (\beta_{i}-\alpha_{i}) \frac{\partial \pi_{i}}{\partial x_{0}} 
	= (\beta_{1}-\alpha_{1}) \frac{\partial \pi_{1}}{\partial x_{0}} + \sum_{i=2}^{m} (\beta_{i}-\alpha_{i}) \frac{\beta_{i} \pi_{i}}{\beta_{1} \pi_{1}} \frac{\partial \pi_{1}}{\partial x_{0}}
	= \left(\sum_{i=1}^{m} (\beta_{i}-\alpha_{i}) \frac{\beta_{i} \pi_{i}}{\beta_{1} \pi_{1}}\right) \frac{\partial \pi_{1}}{\partial x_{0}},
\]
whence $\frac{\partial \pi_{1}}{\partial x_{0}} = 0$, and thus $\frac{\partial \pi_{i}}{\partial x_{0}} = 0$ for all $i$.  Proceeding in the same manner, we find $\frac{\partial \pi_{i}}{\partial x_{d+1}} = 0$ as well, and thus that all second derivatives of $\pi_{i}(\bm{x})$ involving $x_{0}$ or $x_{d+1}$ vanish identically, whilst
\[
	\frac{\partial \pi_{i}}{\partial x_{k}} = \frac{\pi_{i}}{x_{i}}  \mathbbm{1}_{\{i=k\}} 
		- \frac{ \pi_{k}}{ x_{k}} \frac{(\beta_{k}-\alpha_{k})\beta_{i}\pi_{i}}{\sum_{j=1}^{m} (\beta_{j}-\alpha_{j})\beta_{j} \pi_{j}}.
\]
We shall only need to evaluate these for $\bm{x} \in \Omega$, where $\hat{\bm{E}}(t)$ is trapped.  For such $\bm{x}$, the first derivatives simplify to the expression given in the statement, since $\bm{\pi}(\bm{x}) = \bm{x}$ for $\bm{x} \in \Omega$.
Similar calculations lead to the second partial derivatives.\end{proof}

\subsubsection{Reduced Diffusion}

We can use the results of the previous section to provide semi-explicit expressions for the SDE satisfied by $\hat{\bm{E}}$ and displayed in Proposition \ref{prop:weaklimit}.
\begin{prop}
Unlike the full stochastic SIR model, the weak selection limit $\hat{\bm{E}}$ can be completely characterised by a system of equations that depend only on the variables $\hat{I}_{1}, \ldots, \hat{I}_{d}$:
\begin{equation}\label{WEAKLIMIT2}
	d\hat{I}_{i}  = s_{i}(\bm{I}(t))\hat{I}_{i}(t)\, dt 
	+ \frac{1}{\sqrt{R_{0}^{\star}}}
		\sum_{k=1}^{m} \left(\mathbbm{1}_{\{i=k\}} - \frac{(\beta_{k}-\alpha_{k})\beta_{i}\hat{I}_{i}(t)}
		{\sum_{j=1}^{m} (\beta_{j}-\alpha_{j})\beta_{j}\hat{I}_{j}(t)}\right)\sqrt{2 \beta_{k}\hat{I}_{k}(t)} \, dB_{k}(t).
\end{equation}
where
\begin{equation}\label{SELECTION}
	s_{i}(\bm{x}) =   
	\hat{s}_{i}(\bm{x})
	- \frac{\beta_{i} x_{i}}{\sum_{j=1}^{d} (\beta_{j}-\alpha_{j}) \beta_{j} x_{j}}
	 \sum_{j=1}^{d} (\beta_{j}-\alpha_{j}) \hat{s}_{j}(\bm{x})
%	 \right.\\
%	\left.- \frac{2}{I_{e}(\bm{p})}
%	\frac{\sum_{m=1}^{d} \beta_{m}p_{m}}{\sum_{m=1}^{d}(\beta_{m}-\alpha_{m}) \beta_{m} p_{m}}
%	\left((\beta_{i}-\alpha_{i})-
%	\frac{ \sum_{m=1}^{d} (\beta_{m}-\alpha_{m})^{2} \beta_{m} p_{m}}
%	{\sum_{m=1}^{d} (\beta_{m}-\alpha_{m}) \beta_{m} p_{m}}\right)
%	\right)	
\end{equation}
for
\begin{equation}\label{SELECTION2}
	\hat{s}_{i}(\bm{x}) \defn \frac{\beta_{i}}{R_{0}^{\star}} \left(r_{i} 
	- \frac{1}{\sum_{j=1}^{d} (\beta_{j}-\alpha_{j}) \beta_{j} x_{j}}
	\left(2(\beta_{i} - \alpha_{i})
	- \frac{ \sum_{j=1}^{d} (\beta_{j}-\alpha_{j})^{2} \beta_{j} x_{j}}
	{\sum_{j=1}^{d} (\beta_{j}-\alpha_{j}) \beta_{j} x_{j}}\right)\right),
\end{equation}
and $dB_{1}(t),\ldots,dB_{m}(t)$ are independent Brownian motions.
\end{prop}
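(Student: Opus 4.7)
The plan is to start from the projected SDE for $\hat{\bm{E}}$ established in Proposition \ref{prop:weaklimit},
\[
  d\hat{\bm{E}}_i(t) = \bm{D}\pi_i(\hat{\bm{E}})\,\bm{f}(\hat{\bm{E}})\, dt
  + \bm{D}\pi_i(\hat{\bm{E}})\,\bm{\sigma}(\hat{\bm{E}})\, d\bm{B}(t)
  + \tfrac{1}{2}\,\mathrm{Tr}\bigl[\bm{\sigma}^{\top}\bm{H}\pi_i\,\bm{\sigma}\bigr]\, dt,
\]
and evaluate each of the three terms explicitly on $\Omega$ for an infective coordinate $i\in\{1,\ldots,m\}$, using the formulas for the derivatives of $\pi_i$ established in the preceding proposition. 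The claim that the reduced system depends only on $\hat{I}_1,\ldots,\hat{I}_d$ is essentially free: on $\Omega$ the coordinates $\hat{S}$ and $\hat{N}$ are recovered from the infective densities via $x_{d+1}=(\lambda-\sum_j\alpha_j x_j)/\delta$ and $x_0=x_{d+1}/R_0^\star$, while the weak-selection limit forces $\hat{I}_{m+1}\equiv\cdots\equiv\hat{I}_d\equiv 0$. Throughout, the facts that $\partial\pi_i/\partial x_0=\partial\pi_i/\partial x_{d+1}=0$ on $\Omega$ and that all second derivatives of $\pi_i$ involving $x_0$ or $x_{d+1}$ vanish allow us to drop the susceptible and total-population rows of $\bm{f}$, $\bm{\sigma}$ and $\bm{H}\pi_i$ from the outset.

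For the first drift term, since $\bm{F}\equiv 0$ on $\Omega$ we have $\bm{f}=\lim_n n\bm{F}^{(n)}$; expanding $R_{0,i}^{(n)}=R_0^\star(1+r_i/n)+o(1/n)$ and using $\beta_i^{(n)}=\beta_i+O(1/n)$ yields $f_k(\bm{x})=\beta_k r_k x_k/R_0^\star$ for $1\le k\le m$. Setting $\Sigma:=\sum_j(\beta_j-\alpha_j)\beta_j x_j$ and substituting the explicit first derivatives of $\pi_i$ gives
\[
  \sum_{k=1}^{m}\tfrac{\partial\pi_i}{\partial x_k}\,f_k
  \;=\;\tfrac{\beta_i r_i x_i}{R_0^\star}
  -\tfrac{\beta_i x_i}{R_0^\star\,\Sigma}\sum_{k=1}^{m}(\beta_k-\alpha_k)\beta_k r_k x_k,
\]
which accounts for the $r_i$-dependent pieces of $s_i$. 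For the Hessian correction, the covariance $\bm{a}$ in Proposition \ref{prop:SDE} is diagonal on the infective block and satisfies $a_{kk}=\beta_k x_0 x_k/x_{d+1}+(\delta+\alpha_k+\gamma_k)x_k=2\beta_k x_k/R_0^\star$ on $\Omega$, so $\tfrac12\mathrm{Tr}[\bm{\sigma}^\top\bm{H}\pi_i\,\bm{\sigma}]=\sum_{k=1}^{m}(\beta_k x_k/R_0^\star)\,\partial^2\pi_i/\partial x_k^2$. Plugging in the explicit second derivatives and separating the $k=i$ term from the rest, the contribution splits into a piece proportional to $-(\beta_i-\alpha_i)$ and one involving $\sum_k(\beta_k-\alpha_k)^2\beta_k x_k$; adding the first-derivative drift and regrouping produces exactly $s_i(\bm{x})\,x_i$ in the form stated, with the ``replicator-like'' subtraction $-\frac{\beta_i x_i}{\Sigma}\sum_j(\beta_j-\alpha_j)\hat{s}_j$ arising because $\bm{D}\pi_i(\bm{x})$ is (as already shown) the projector onto $T_{\bm{x}}\Omega$, which enforces the conservation law $\sum_j(\beta_j-\alpha_j)x_j=\lambda(R_0^\star-1)$ tangentially.

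For the diffusion term, inspection of $\bm{\sigma}$ shows that the three independent Brownian motions associated with strain $k$ (infection, death, recovery) affect only row $k$ among the infective rows, with coefficients whose squared sum equals $a_{kk}=2\beta_k x_k/R_0^\star$ on $\Omega$. By orthogonality of the independent Brownian increments these three sources may be aggregated into a single standard Brownian motion $B_k$, and since strains $j\neq k$ contribute through disjoint Brownian motions the resulting $B_1,\ldots,B_m$ are mutually independent. The noise term therefore becomes
\[
  \sum_{k=1}^{m}\tfrac{\partial\pi_i}{\partial x_k}\,\tfrac{1}{\sqrt{R_0^\star}}\sqrt{2\beta_k\hat{I}_k}\,dB_k,
\]
which matches \eqref{WEAKLIMIT2} once the explicit formula for $\partial\pi_i/\partial x_k$ is inserted.

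The main obstacle is the algebraic bookkeeping in the Hessian step: the explicit formula for $\partial^2\pi_i/\partial x_k^2$ is unwieldy, and verifying that its weighted sum combines with the first-derivative drift to give precisely $s_i\,x_i=(\hat{s}_i-\frac{\beta_i x_i}{\Sigma}\sum_j(\beta_j-\alpha_j)\hat{s}_j)\,x_i$ requires careful term-by-term matching. The cleanest route is to expand the right-hand side of the statement using the definitions of $s_i$ and $\hat{s}_i$ and show that each of its terms — the $\beta_i r_i x_i/R_0^\star$ term, the $(\beta_i-\alpha_i)$ correction, the $\sum_j(\beta_j-\alpha_j)^2\beta_j x_j$ correction, and the $\sum_j(\beta_j-\alpha_j)\beta_j r_j x_j$ mean — arises from the corresponding piece of the sum of the first-derivative and Hessian contributions. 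An alternative is to apply It\^o's formula with jumps directly to the constrained coordinates $(\hat{I}_1,\ldots,\hat{I}_m)$ after eliminating $\hat{S}$ and $\hat{N}$ via the defining relations of $\Omega$; this avoids the explicit Hessian of $\pi_i$ at the cost of having to carry out the constrained calculation by hand.
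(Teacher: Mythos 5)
Your proposal is correct and follows essentially the same route as the paper: substitute the explicit first and second derivatives of $\bm{\pi}$, the limit $\bm{f}$, and the on-$\Omega$ values $a_{kk}=2\beta_k x_k/R_0^{\star}$ into the projected SDE of Proposition~\ref{prop:weaklimit}, then aggregate the three independent noise sources per strain into a single Brownian motion $B_k$. The paper's own proof is simply a terser version of this substitution; your added detail on the term-by-term matching of the drift with $s_i(\bm{x})x_i$ is consistent with what the paper leaves implicit.
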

\begin{proof}
In the previous section, we observed that $\frac{\partial \pi_{i}}{\partial x_{0}} = \frac{\partial \pi_{i}}{\partial x_{d+1}} = 0$, and thus any second partial derivative with respect to $x_{0}$ or $x_{d+1}$ vanishes as well.   Moreover, for $i = 1,\ldots,d$, 
\begin{align*}
	n\left(F^{(n)}_{i}(\bm{x}) - F_{i}(\bm{x})\right) 
	&= n\left(\beta^{(n)}_{i} \left(\frac{x_{0}}{x_{d+1}} - \frac{1}{R^{(n)}_{0,i}}\right) x_{i}
	- \beta_{i} \left(\frac{x_{0}}{x_{d+1}} - \frac{1}{R_{0}^{\star}}\right)x_{i}\right)\\
	&\to - \frac{\beta_{i} r_{i} x_{i}}{R_{0}^{\star}},
\end{align*}
Moreover, for $\bm{x} \in \Omega$, we have 
\[
	\frac{x_{0}}{x_{d+1}} = \frac{1}{R_{0}^{\star}} 
	\left(= \frac{\delta + \alpha_{j} + \gamma_{j}}{\beta_{j}}\right), 
\]	
so that, from \eqref{AA}, we obtain in the limit $n \to \infty$ 
\[
	a_{jk}(\bm{x}) = \begin{cases}
		\frac{2 \beta_{i} x_{j}}{R_{0}^{\star}} & \text{if $j = k$, and}\\
		0 & \text{otherwise.}
	\end{cases}
\]
Similarly, for $1 \leq j,k \leq d$,  $\sigma_{jk}(\bm{x})$ depends on $x_{0}$ or $x_{d+1}$ only via 
the ratio $\frac{x_{0}}{x_{d+1}}$ which is identically equal to $\frac{1}{R_{0}^{\star}}$ on $\bm{x} \in \Omega$.  

Substituting these and the first derivatives into \eqref{WEAKLIMIT} allows us to complete the description of the weak limit $\hat{\bm{E}}(t)$, exploiting the fact that the triples of Brownian motions $B_{-\bm{e}_{0} + \bm{e}_{i}}$, $B_{-\bm{e}_{i} - \bm{e}_{d+1}}$ and $B_{-\bm{e}_{i}}$ and $B_{S,j}$, $B_{j,-}$ and $B_{j,R}$ are independent for $i \neq j$ to combine each triple into a single Brownian motion.
\end{proof}

\subsubsection{Frequency Process}

Repeating the argument of Section \ref{DERIVATION}, we can use the functions $\Pi_{i}$ to finding an equation for the frequency of strain $i$,
\[
	P_{i}(t) = \frac{\hat{I}_{i}(t)}{\sum_{j = 1}^{d} \hat{I}_{j}(t)} 
	%\left(= \lim_{n \to \infty} \frac{I^{(n)}_{i}(t)}{\sum_{j = 1}^{d} I^{(n)}_{j}(t)}.\right),
\]
where, because the limiting process is a diffusion, the standard It\^o formula applies.  We omit the lengthy calculations this entails, and present simply the result.  

For our process $\bm{P}(t)$, we find that
\begin{multline}\label{REDUCEDFREQUENCY}
	dP_{i}(t) = b_{i}(\bm{P}(t))\, dt 
	+  \frac{1}{\sqrt{R_{0}^{\star}}} \frac{1}{\sqrt{I_{e}(\bm{P}(t))}}  
		\frac{1}{\sum_{j=1}^{d} (\beta_{j}-\alpha_{j}) \beta_{j} P_{j}(t)}\\
		\times
	\sum_{j=1}^{m} (\mathbbm{1}_{\{i=j\}} - P_{i}(t)) \sum_{k=1}^{m} (\beta_{k}-\alpha_{k})
	\left(\beta_{k}P_{k}(t)\sqrt{2\beta_{j}P_{j}(t)}\, dB_{j}(t) 
	- \beta_{j}P_{j}(t)\sqrt{2\beta_{k}P_{k}(t)}\, dB_{k}(t)\right).
\end{multline}
where
\[
	b_{i}(\bm{p}) \defn p_{i}\left(s_{i}(I_{e}(\bm{p}) \bm{p}) 
		- \sum_{m=1}^{d} s_{m}(I_{e}(\bm{p}) \bm{p}) p_{m}\right),
\]
for $\bm{s}(\bm{x})$ as defined by \eqref{SELECTION} and \eqref{SELECTION2},
and where, if $\bm{p} \in \Delta_{d}$ corresponds to the point $\bm{x} \in \Omega$, \ie  
\[
	p_{i} \defn \frac{x_{i}}{\sum_{j = 1}^{d} x_{j}}, 
\]
then
\[
	I_{e}(\bm{x}) = \sum_{j = 1}^{d} x_{j}.
\]
Writing 
\[
	p_{i} \defn \frac{x_{i}}{\sum_{j = 1}^{d} x_{j}} = \frac{x_{i}}{I_{e}(\bm{x})}, 
\]
 and recalling that 
\[
	\lambda (R_{0}^{\star} - 1) = 
	\sum_{i=1}^{d} (\beta_{i}-\alpha_{i}) x_{i} = 
	\sum_{i=1}^{d} (\beta_{i}-\alpha_{i}) I_{e}(\bm{x}) p_{i}
\]
we see that we can explicitly express $I_{e}$ as a function of $\bm{p}$:
\[
	I_{e}(\bm{p}) 
	= \frac{\lambda (R_{0}^{\star} - 1)}{\sum_{i=1}^{d} (\beta_{i}-\alpha_{i}) p_{i}}.
\]

\begin{rem}
The notation above has been deliberately chosen to recall the Wright-Fisher diffusion in population genetics, with $s_{i}(\bm{p})$ and $I_{e}(\bm{p})$ a frequency-dependent selection coefficient and an effective population size, respectively.  

To understand the motivation for the notation $I_{e}$, which is meant to recall the effective population sizes used in population genetics, we note that 
\[
	I_{e} =  \lim_{n \to \infty} \frac{1}{n} \sum_{j = 1}^{d} I^{(n)}_{j}(t),
\]
so that $I_{e}(\bm{x})$ is approximately the total density of infected individuals when the diffusion limit is at the point $\bm{x} \in \Omega$, or, equivalently, when the frequencies of the various strains is $\bm{p}$. 
\end{rem}

\begin{rem} 
As before, vector $\bm{b}(\bm{p})$ may be interpreted geometrically as the projection of the vector 
\[
	\begin{bmatrix} s_{1}(I_{e}(\bm{p}) \bm{p}) p_{1} \\ \vdots \\ s_{d}(I_{e}(\bm{p}) \bm{p}) p_{d} 
		\end{bmatrix}
\]
onto the simplex $\Delta_{d}$. 
\end{rem}

%\[
%	\sum_{j = 0}^{d+1} \frac{\partial\pi_{i}}{\partial x_{j}} f_{j}(\hat{\bm{E}})
%	+ \frac{1}{2} \sum_{j = 0}^{d+1} \sum_{k=0}^{d+1} a_{jk}(\hat{\bm{E}})
%		\frac{\partial^{2}\pi_{i}}{\partial x_{j}\partial x_{k}} 
%\]

\subsubsection{Results for $d = 2$}

If we have $d = 2$ strains, then, since $P_{1}(t) + P_{2}(t) = 1$, it is sufficient to consider the frequency of the invading strain, strain 2.  Writing $P(t) \defn P_{2}(t)$, the results of the previous section tell us that the generator of $P(t)$\footnote{The generator of $\bm{P}(t)$ is the operator on the space of continuous functions on the $d$-simplex 
\[
	\Delta_{d} = \left\{\bm{p} : \sum_{i = 1}^{d} p_{i} = 1\right\}
\]
defined by 
\[
	\mathcal{L}f(\bm{p}) \defn 
	\lim_{t \downarrow 0} 
		\frac{\mathbb{E}\left[ f(\bm{P}(t)) \middle\vert \bm{P}(0) = \bm{p}\right] - f(\bm{p})}{t}.
\]
We recall that if the diffusion process $\bm{P}(t)$ has SDE
\[
	d\bm{P}(t) = \bm{b}(\bm{P}(t)))\, dt + \bm{\varsigma}(\bm{P}(t)))\, d\bm{B}(t),
\]
then 
\[
	\mathcal{L}f(\bm{p})  = \sum_{i = 1}^{d} b_{i}(\bm{p}) \frac{\partial f}{\partial p_{i}} 
	+ \frac{1}{2} \sum_{i = 1}^{d} \sum_{i = 1}^{d} a_{ij}(\bm{p}) 
		\frac{\partial^{2} f}{\partial p_{i} \partial p_{j}} 
\]
where $\bm{a}(\bm{x}) = \bm{\varsigma}(\bm{x})\bm{\varsigma}(\bm{x})^{\top}$ is the variance-covariance matrix for $d\bm{P}(t)$, and the probability density function for $\bm{P}(t)$, say $f(t,\bm{p})$, satisfies the Kolmogorov backward equation
\[
	\frac{\partial}{\partial t} f(t,\bm{p}) = \mathcal{L} f(t,\bm{p}).
\]}
is 
\[
	\mathcal{L}f(p) = b(b) f'(p) + \frac{1}{2} a(p) f''(p),
\]
where
\begin{multline*}
	b(p) \defn %\frac{1}{R_{0}^{\star}}
%(r_{2}-r_{1})\frac{\beta_{2}\beta_{1}\left((\beta_{2}-\alpha_{2})p+(\beta_{1}-\alpha_{1})(1-p)\right)}
%{\left((\beta_{2}-\alpha_{2})\beta_{2}p+(\beta_{1}-\alpha_{1})\beta_{1}(1-p)\right)}p(1-p)\\
%- \frac{1}{R_{0}^{\star}} \frac{1}{I_{e}(p)} 
%\frac{\beta_{2}\beta_{1}\left((\beta_{2}-\alpha_{2})p+(\beta_{1}-\alpha_{1})(1-p)\right)
%\left((\beta_{2}-\alpha_{2})\beta_{2}-(\beta_{1}-\alpha_{1})\beta_{1}\right)}
%{\left((\beta_{2}-\alpha_{2})\beta_{2}p+(\beta_{1}-\alpha_{1})\beta_{1}(1-p)\right)^{2}} p(1-p)\\
%- \frac{1}{R_{0}^{\star}} \frac{1}{I_{e}(p)} 
%\frac{\beta_{2}^{2}\beta_{1}^{2}
%\left((\beta_{2}-\alpha_{2})-(\beta_{1}-\alpha_{1})\right)
%\left((\beta_{2}-\alpha_{2})p+(\beta_{1}-\alpha_{1})(1-p)\right)^{2}}
%{\left((\beta_{2}-\alpha_{2})\beta_{2}p+(\beta_{1}-\alpha_{1})\beta_{1}(1-p)\right)^{3}} p(1-p)\\
%+ \frac{2}{R_{0}^{\star}}\frac{1}{I_{e}(p)} 
%\frac{\beta_{2}\beta_{1}\left((\beta_{2}-\alpha_{2})-(\beta_{1}-\alpha_{1})\right)
%\left((\beta_{2}-\alpha_{2})p+(\beta_{1}-\alpha_{1})(1-p)\right)(\beta_{2}p+\beta_{1}(1-p))}
%{\left((\beta_{2}-\alpha_{2})\beta_{2}p+(\beta_{1}-\alpha_{1})\beta_{1}(1-p)\right)^{2}}p(1-p)\\
	\frac{1}{R_{0}^{\star}}
	(r_{2}\!-\!r_{1})\frac{\beta_{2}\beta_{1}\left((\beta_{2}\!-\!\alpha_{2})p\!+\!(\beta_{1}\!-\!\alpha_{1})(1\!-\!p)\right)}
	{\left((\beta_{2}\!-\!\alpha_{2})\beta_{2}p\!+\!(\beta_{1}\!-\!\alpha_{1})\beta_{1}(1\!-\!p)\right)}p(1\!-\!p)\\
	\!-\! \frac{1}{R_{0}^{\star}} \frac{1}{I_{e}(p)}
	\frac{\beta_{2}\beta_{1}
	\left((\beta_{2}\!-\!\alpha_{2})p\!+\!(\beta_{1}\!-\!\alpha_{1})(1\!-\!p)\right)(\beta_{2}p\!+\!\beta_{1}(1\!-\!p))}
	{\left((\beta_{2}\!-\!\alpha_{2})\beta_{2}p\!+\!(\beta_{1}\!-\!\alpha_{1})\beta_{1}(1\!-\!p)\right)^{2}} 
	\times\left(\left((\beta_{2}\!-\!\alpha_{2})\beta_{2}\!-\!(\beta_{1}\!-\!\alpha_{1})\beta_{1}\right)\right.\\
	\left. \!-\!\left((\beta_{2}\!-\!\alpha_{2})\!-\!(\beta_{1}\!-\!\alpha_{1})\right)\left(\frac{\beta_{2}\beta_{1}
	\left((\beta_{2}\!-\!\alpha_{2})p\!+\!(\beta_{1}\!-\!\alpha_{1})(1\!-\!p)\right)}
	{\left((\beta_{2}\!-\!\alpha_{2})\beta_{2}p\!+\!(\beta_{1}\!-\!\alpha_{1})\beta_{1}(1\!-\!p)\right)}
	\!-\! 2(\beta_{2}p\!+\!\beta_{1}(1\!-\!p))\right)\right) 
	p(1\!-\!p)
\end{multline*}
and
\[
	a(p) \defn \frac{2}{R_{0}^{\star}}\frac{1}{I_{e}(p)} 
	\frac{\beta_{2}\beta_{1}\left((\beta_{2}-\alpha_{2})p+(\beta_{1}-\alpha_{1})(1-p)\right)^{2}
	(\beta_{2}p+\beta_{1}(1-p))}
	{\left((\beta_{2}-\alpha_{2})\beta_{2}p+(\beta_{1}-\alpha_{1})\beta_{1}(1-p)\right)^{2}}p(1-p).
\]

The generator allows us to compute many quantities of interest for the process $P(t)$.  In particular, if $h(p)$ is the probability of fixation of strain 1 given $P(0)=p$, then $h(p)$ satisfies the boundary problem
\begin{gather*}
	\mathcal{L}h(p) = 0\\
	h(0) = 0\\
	h(1) =1 
\end{gather*}
(see \eg \citet{Ewens1979,Durrett2009,Etheridge2011}).  This may be solved to give 
\begin{equation}\label{WEAKAPPROX}
	h(p) = \frac{\int_{0}^{p} e^{-2\int \frac{b(q)}{a(q)}\, dq}\, dq}
		{\int_{0}^{1} e^{-2\int \frac{b(q)}{a(q)}\, dq}\, dq}. 
\end{equation}

Let $\tilde{h}(p)$ be the numerator of this fraction.  Substituting the expressions for $a(p)$ and $b(p)$ and some simplification yields 
\begin{multline*}
	\tilde{h}(p) \defn \int_{0}^{p} \frac{(\beta_{2}q\!+\!\beta_{1}(1\!-\!q))^{\!-\!\frac{(\beta_{2}\!+\!\beta_{1})
		\left((\beta_{2}\!-\!\alpha_{2})\!-\!(\beta_{1}\!-\!\alpha_{1})\right)}
		{\beta_{2}\alpha_{1}\!-\!\beta_{1}\alpha_{2}}}
		\left((\beta_{2}\!-\!\alpha_{2})q\!+\!(\beta_{1}\!-\!\alpha_{1})(1\!-\!q)\right)
		^{2\!-\!\frac{\left((\beta_{2}\!-\!\alpha_{2})\beta_{2}\!-\!(\beta_{1}\!-\!\alpha_{1})\beta_{1}\right)}
		{\beta_{2}\alpha_{1}\!-\!\beta_{1}\alpha_{2}}}}
		{\left((\beta_{2}\!-\!\alpha_{2})\beta_{2}q\!+\!(\beta_{1}\!-\!\alpha_{1})\beta_{1}(1\!-\!q)\right)}\\
	\times e^{\!-\! (r_{2} \!-\! r_{1}) \int_{0}^{q} I_{e}(u)
	\frac{\left((\beta_{2}\!-\!\alpha_{2})\beta_{2}u\!+\!(\beta_{1}\!-\!\alpha_{1})\beta_{1}(1\!-\!u)\right)}
	{\left((\beta_{2}\!-\!\alpha_{2})u\!+\!(\beta_{1}\!-\!\alpha_{1})(1\!-\!u)\right)
	(\beta_{2}u\!+\!\beta_{1}(1\!-\!u))}\, du}\, dq,
\end{multline*}
whereas 
\[
	h(p) = \frac{\tilde{h}(p)}{\tilde{h}(1)}.
\]
For ease of notation, we will write
\[
	 \phi(q) =  \int_{0}^{q} I_{e}(u)
	\frac{\left((\beta_{2}-\alpha_{2})\beta_{2}u+(\beta_{1}-\alpha_{1})\beta_{1}(1-u)\right)}
	{\left((\beta_{2}-\alpha_{2})u+(\beta_{1}-\alpha_{1})(1-u)\right)
	(\beta_{2}u+\beta_{1}(1-u))}\, du
\]
and
\[
	g(q) = \frac{(\beta_{2}q+\beta_{1}(1-q))^{-\frac{(\beta_{2}+\beta_{1})
		\left((\beta_{2}-\alpha_{2})-(\beta_{1}-\alpha_{1})\right)}
		{\beta_{2}\alpha_{1}-\beta_{1}\alpha_{2}}}
		\left((\beta_{2}-\alpha_{2})q+(\beta_{1}-\alpha_{1})(1-q)\right)
		^{2-\frac{\left((\beta_{2}-\alpha_{2})\beta_{2}-(\beta_{1}-\alpha_{1})\beta_{1}\right)}
		{\beta_{2}\alpha_{1}-\beta_{1}\alpha_{2}}}}
		{\left((\beta_{2}-\alpha_{2})\beta_{2}q+(\beta_{1}-\alpha_{1})\beta_{1}(1-q)\right)},
\]
so that 
\[
	\tilde{h}(p) = \int_{0}^{p} g(q) e^{-(r_{2} - r_{1}) \phi(q)}\, dq.
\]

We can evaluate this expression numerically, but we will be particularly interested in a number of special cases, when we can obtain analytical approximations to $\tilde{h}(p)$.  

\begin{enumerate}[(i)]
\item When $r_{2} = r_{1}$ (or, more generally, when $R_{0,i} 
= R_{0}^{\star}\left(1+ o\left(\frac{1}{n}\right)\right)$) we can give an explicit closed form for $\tilde{h}(p)$, and thus $h(p)$:
\begin{multline}\label{WEAKNOSTRONG}
	\tilde{h}(p) \propto \frac{(\beta_{2}p\!+\!\beta_{1}(1\!-\!p))^{1\!-\!\frac{(\beta_{2}\!+\!\beta_{1})
		\left((\beta_{2}\!-\!\alpha_{2})\!-\!(\beta_{1}\!-\!\alpha_{1})\right)}
		{\beta_{2}\alpha_{1}\!-\!\beta_{1}\alpha_{2}}}
		\left((\beta_{2}\!-\!\alpha_{2})p\!+\!(\beta_{1}\!-\!\alpha_{1})(1\!-\!p)\right)
		^{3\!-\!\frac{\left((\beta_{2}\!-\!\alpha_{2})\beta_{2}\!-\!(\beta_{1}\!-\!\alpha_{1})\beta_{1}\right)}
		{\beta_{2}\alpha_{1}\!-\!\beta_{1}\alpha_{2}}}}
		{\left((\beta_{2}\!-\!\alpha_{2})p\!+\!(\beta_{1}\!-\!\alpha_{1})(1\!-\!p)\right)}\\
		\!-\! \frac{(\beta_{2}\!+\!\beta_{1})^{1\!-\!\frac{(\beta_{2}\!+\!\beta_{1})
		\left((\beta_{2}\!-\!\alpha_{2})\!-\!(\beta_{1}\!-\!\alpha_{1})\right)}
		{\beta_{2}\alpha_{1}\!-\!\beta_{1}\alpha_{2}}}
		\left((\beta_{2}\!-\!\alpha_{2})\!+\!(\beta_{1}\!-\!\alpha_{1})\right)
		^{3\!-\!\frac{\left((\beta_{2}\!-\!\alpha_{2})\beta_{2}\!-\!(\beta_{1}\!-\!\alpha_{1})\beta_{1}\right)}
		{\beta_{2}\alpha_{1}\!-\!\beta_{1}\alpha_{2}}}}
		{\left((\beta_{2}\!-\!\alpha_{2})\!+\!(\beta_{1}\!-\!\alpha_{1})\right)}.
\end{multline}
This expression is not, however, especially illuminating.  

\begin{figure}[h] 
    \centering
    \includegraphics[width=0.65\linewidth]{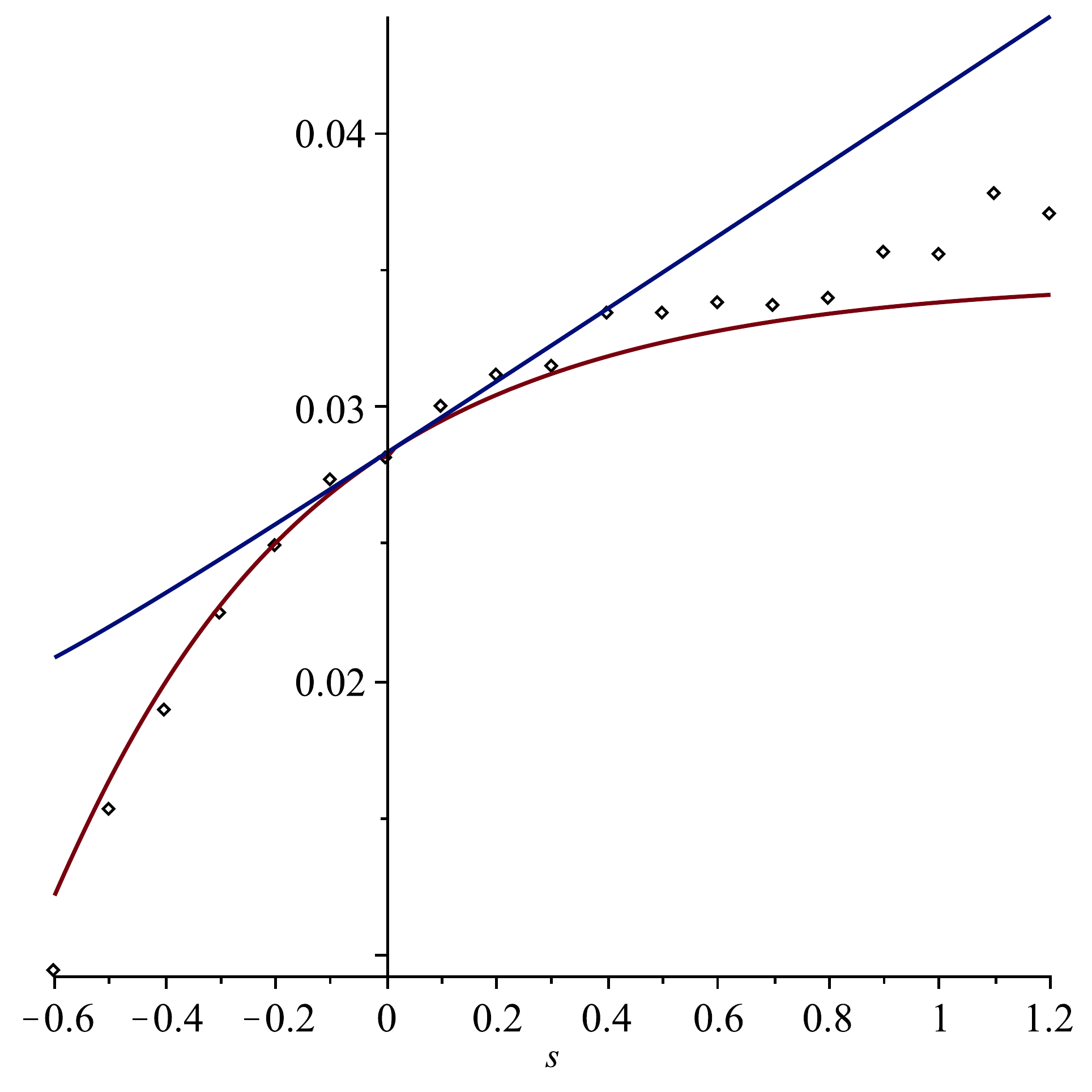} 
    \caption{We compare \eqref{WEAKNOSTRONG} (red curve) and its order $p^2$ approximation \eqref{WEAKAPPROX2} (blue line) with fixation probabilities for a single mutant invader obtained via simulating the Markov chain with rates given by Table \ref{RATES} (black diamonds).  We vary $\beta_{2}$ by setting $\beta_{1} = \beta_{2}(1+\sigma)$, while holding $R_{0,1} = R_{0,2}$ by setting $\alpha_{2} = \frac{\delta+\alpha_{1}+\gamma}{1 + \sigma} - \delta - \gamma$.  The other parameters are fixed at $n = 100$, $R_{0}^{\star} = 4$, $\lambda = 2$, $\delta = 1$, $\beta_{1} = 20$, and $\alpha_{1} = 3$.} 
   \label{WEAKAPPROXFIGURE} 
\end{figure}

\item We shall principally be interested in the case when $p$ is small, in which case we can Taylor expand $\tilde{h}(p)$ as 
\begin{align*}
	\tilde{h}(p) &= \tilde{h}(0) + \tilde{h}'(0) p + \frac{1}{2} \tilde{h}''(0) p^{2} + \BigO{p^{3}}\\
	&= g(0) p +  \frac{1}{2} (g'(0) + g(0) (r_{2} - r_{1}) \phi'(0)) p^{2} + \BigO{P^{3}}\\
	&= g(0)\left(p+\frac{1}{2} \left(\frac{g'(0)}{g(0)} + (r_{2} - r_{1}) I_{e}(0)\right) p^{2}\right) 
	+ \BigO{p^{3}}
\end{align*}

Unfortunately, this does not yield an estimate of the normalising constant, $\tilde{h}(1)$.  To obtain this, 
we consider the case when $\beta_{2} - \beta_{1}$ and $\alpha_{2} - \alpha_{1}$ are small.  While this is a restrictive assumption, it will allow us to consider the long-term evolution in the framework of adaptive dynamics, where mutational changes are assumed to be very small.  To this end, we introduce $\sigma$ and $\theta$ such that 
\[
	\beta_{2} = \beta_{1} (1+\sigma) \quad \text{and} \quad \alpha_{2} = \alpha_{1} (1+\theta).
\] 

We then have that 
\[
	\frac{g'(0)}{g(0)} = \sigma + \BigO{2},  
	\frac{g(p)}{g(0)} = 1 - \sigma p + \BigO{2},
	\phi'(p) = - I_{e}(p) + \BigO{2},
\]
and
\begin{align*}
	\frac{\tilde{h}(1)}{g(0)} 
	&= \int_{0}^{1} (1 - \sigma p) e^{-(r_{2}-r_{1}) \int_{0}^{p} I_{e}(q)\, dq}\, dp + \BigO{2}\\
	&= 1 - \frac{\sigma}{2} - (r_{2}-r_{1}) I_{e}(0)  +\BigO{2},
\end{align*}
where $\BigO{2}$ is used to denote terms of order $\BigO{\sigma^{2}}$, $\BigO{\theta^{2}}$, or 
$\BigO{\sigma\theta}$.  

Then, to order $p^{2}$, the fixation probability is 
\begin{equation}\label{WEAKAPPROX2}
	h(p) \defn \frac{\tilde{h}(p)}{\tilde{h}(1)} = p + \frac{1}{2}(\sigma + (r_{2}-r_{1}) I_{e}(0))p(1-p)
	+\BigO{2},
\end{equation}
which may be written informally in terms of the original parameters as  
\[
	h(p) \defn \frac{\tilde{h}(p)}{\tilde{h}(1)} = p + 
	\frac{1}{2} \left(\frac{\beta_{1}-\beta_{2}}{\beta_{2}} 
	+ \left(1 - \frac{R^{(n)}_{0,1}}{R^{(n)}_{0,2}}\right) n I_{e}(0)\right)p(1-p)
	+\BigO{2}
\]
to lowest order.  Here we have used 
\[
	r_{2}-r_{1} =  n \frac{R^{(n)}_{0,2}- R^{(n)}_{0,1}}{R_{0}^{\star}} + o(1)
	= n \frac{R^{(n)}_{0,2}- R^{(n)}_{0,1}}{R^{(n)}_{0,2}}
		\left(1+\frac{r_{2}}{n} + o(1)\right) + o(1).
\]

In practice, we are most interested in the case when a single individual carries the mutant strain, so 
$p = \frac{1}{n I_{e}(0)} = \frac{1}{I^{(n)}(0)}$.  While our proofs -- which assume that $p$ is independent of $n$, and thus that the number of invading individuals is proportional to $n$ -- do not justify taking this value for $p$, we find that the expression for the fixation probability obtained by taking $p = \frac{1}{I^{(n)}(0)}$, which to lowest order is
\begin{equation}\label{PFIXWEAK}
	\frac{1}{I^{(n)}(0)} + 
	\frac{1}{2} \left( \frac{1}{I^{(n)}(0)} \frac{\beta_{1}-\beta_{2}}{\beta_{2}} 
	+ 1 - \frac{R^{(n)}_{0,1}}{R^{(n)}_{0,2}}\right), 
\end{equation}
agrees extremely well with simulations (Figure \ref{STRONGAPPROXFIGURE:c}) -- another example of the so-called ``unreasonable effectiveness of mathematics' \citep{Wigner1960} -- and will use it to investigate the long term evolution of the virulence in Section \ref{AD}.

\end{enumerate}

%\textcolor{blue}{In the simulations I only tried to check $\eqref{PFIXWEAK}$ . But it may be good to compare simulation results with the above result. This may help obtain a better fit when the strength of selection is higher, right?}.

\subsubsection{Transient Dynamics}

Whilst in previous sections, we have considered the case when the invader arrives when the resident is near to its endemic equilibrium, the projection results in \citet{Katzenberger1991} allow us to consider the process given any initial condition in $(\mathbb{R}_{+}^{2})^{\circ}$: when started from an arbitrary point in this set, say $(x,\bm{y},z)$, the process jumps instantaneously to the point $\bm{\pi}(x,\bm{y},z) \in \Omega$, at which point the relative frequencies of the two strains evolve according to \eqref{REDUCEDFREQUENCY}. In particular, the fixation probability of the novel strain 2 is given by 
\[
	h(p(x,\bm{y},z)),
\]
where $p(x,\bm{y},z)$ is the ``post-projection'' frequency of strain 2,
\[
	p(x,\bm{y},z) = \frac{\pi_{2}(x,\bm{y},z)}{\pi_{2}(x,\bm{y},z)+\pi_{1}(x,\bm{y},z)}
	= \left(1+\frac{\pi_{1}(x,\bm{y},z)}{\pi_{2}(x,\bm{y},z)}\right)^{-1} 
\]

Unfortunately, a closed analytical expression for $\bm{\pi}(x,\bm{y},z)$ is not available, but we can nonetheless characterise it implicitly via \eqref{OMEGA} and \eqref{FLOWS}, which for $d = 2$
give us a pair of equations that may be solved numerically to yield $\bm{\pi}(x,\bm{y},z)$:
\[
	(\beta_{1}-\alpha_{1}) \pi_{1}(x,\bm{y},z) + (\beta_{2}-\alpha_{2}) \pi_{2}(x,\bm{y},z)
	= \lambda (R_{0}^{\star} - 1)
\]
and 
\[
		\frac{1}{\beta_{2}} \ln{\left(\frac{\pi_{2}(x,\bm{y},z)}{y_{2}}\right)} 
		= \frac{1}{\beta_{1}} \ln{\left(\frac{\pi_{1}(x,\bm{y},z)}{y_{1}}\right)}.
\]
Rewriting this as 
\[
	\frac{\pi_{2}(x,\bm{y},z)^{\beta_{1}}}{\pi_{1}(x,\bm{y},z)^{\beta_{2}}} 
		= \frac{y_{2}^{\beta_{1}}}{y_{1}^{\beta_{2}}},
\]
we note that there are at least two cases when one can solve these analytically.  If $\beta_{2} = \beta_{1}$, then we have 
\[
	\frac{\pi_{2}(x,\bm{y},z)}{\pi_{1}(x,\bm{y},z)} = \frac{y_{2}}{y_{1}},
\]
so the relative frequencies of remain unchanged in the jump to the diffusion:
\[
	\frac{y_{2}}{y_{2}+y_{1}} = \left(1+\frac{y_{1}}{y_{2}}\right)^{-1} 
	= \left(1+\frac{\pi_{1}(x,\bm{y},z)}{\pi_{2}(x,\bm{y},z)}\right)^{-1} 
	= p(x,\bm{y},z)
\]	

Alternately, if $\beta_{2} = 2\beta_{1}$, then 
\begin{equation}\label{RATIO2}
	\pi_{1}(x,\bm{y},z) = \frac{y_{1}^{2}}{y_{2}} \pi_{2}(x,\bm{y},z)^{2},
\end{equation}
and substituting this into \eqref{OMEGA} gives a quadratic equation for $\pi_{2}(x,\bm{y},z)$:
\[
	 (\beta_{1}-\alpha_{1}) y_{1}^{2} \pi_{2}(x,\bm{y},z)^{2}
	 + (\beta_{2}-\alpha_{2}) y_{2} \pi_{2}(x,\bm{y},z)
	 - \lambda (R_{0}^{\star} - 1) y_{2} = 0,
\]
which may be solved to yield
\[
	\pi_{2}(x,\bm{y},z) = \frac{-(\beta_{2}-\alpha_{2}) y_{2} 
		+ \sqrt{(\beta_{2}-\alpha_{2})^{2} y_{2}^{2} + 4\lambda (R_{0}^{\star} - 1) 
		(\beta_{1}-\alpha_{1}) y_{2} y_{1}^{2}}}{2 (\beta_{1}-\alpha_{1}) y_{1}^{2}},
\]
and thus, using \eqref{RATIO2}
\[
	\frac{\pi_{1}(x,\bm{y},z)}{\pi_{2}(x,\bm{y},z)}
	= \frac{y_{1}^{2}}{y_{2}} \pi_{2}(x,\bm{y},z)	
	= \frac{-(\beta_{2}-\alpha_{2}) y_{2} 
		+ \sqrt{(\beta_{2}-\alpha_{2})^{2} y_{2}^{2} + 4\lambda (R_{0}^{\star} - 1) 
		(\beta_{1}-\alpha_{1}) y_{2} y_{1}^{2}}}{2 (\beta_{1}-\alpha_{1}) y_{2}}
\]
and 
\[
	 p(x,\bm{y},z) = \left(1-\frac{(\beta_{2}-\alpha_{2}) y_{2} 
		- \sqrt{(\beta_{2}-\alpha_{2})^{2} y_{2}^{2} + 4\lambda (R_{0}^{\star} - 1) 
		(\beta_{1}-\alpha_{1}) y_{2} y_{1}^{2}}}{2 (\beta_{1}-\alpha_{1}) y_{2}}\right)^{-1}.
\] 

Similarly, if $\beta_{1} = 2\beta_{2}$, we obtain a quadratic equation for $\pi_{1}(x,\bm{y},z)$, which we may solve to obtain 
\[
	 p(x,\bm{y},z) = 1-\left(1-\frac{(\beta_{1}-\alpha_{1}) y_{1} 
		- \sqrt{(\beta_{1}-\alpha_{1})^{2} y_{1}^{2} + 4\lambda (R_{0}^{\star} - 1) 
		(\beta_{2}-\alpha_{2}) y_{1} y_{2}^{2}}}{2 (\beta_{2}-\alpha_{2}) y_{1}}\right)^{-1}.
\] 

We use these latter expressions to observe the effect of population size on the probability of fixation; replacing $y_{1}$ and $y_{2}$ by $n_{0} (1-p_{0})$ and $n_{0} p_{0}$, we can vary the initial normalized (by $n$) number of infected individuals, 
\[
	n_{0} = y_{1} + y_{2}, 
\]
while holding the initial frequency of the mutant strain fixed.  This gives us the projected frequency as a function of the initial frequency, $ p(p_{0},n_{0})$. This, as we observed above, may in turn be used in conjunction with \eqref{WEAKAPPROX} to determine the fixation probability of the mutant.  In Figure \ref{WEAKAPPROXTRANSIENTFIGURE}, we plot the fixation probability for mutant strains (solid curves)  starting at $p_{0} = 0.1$ (green), $p_{0} = 0.5$ (red) and  $p_{0} = 0.9$ (blue) as a function of the initial population density $n_{0}$.  The correspondingly coloured dashed lines plot $p_{0}$, which would also be the mutant fixation probability in the absence of selective pressures.  These dashed lines intersect the black dash-dot line at $I_{e}(p(p_{0},n_{0}))$ \ie the point where the trajectory started from $p_{0},n_{0}$ intersects $\Omega$.  All values of $n_{0}$ to the left of this point correspond to initial points below $\Omega$ (\ie closer to the origin), whereas those to the right correspond to points above $\Omega$.  Figure \ref{WEAKAPPROXTRANSIENTFIGURE:a} shows the case $\beta_{1} = 2\beta_{2}$ whereas Figure \ref{WEAKAPPROXTRANSIENTFIGURE:b} shows the case $\beta_{2} = 2\beta_{1}$.  For all values of $p_{0}$, there is a cross-over point below $\Omega$ where the mutant goes from fixation probability that is less ($\beta_{1} = 2\beta_{2}$) or greater ($\beta_{1} = 2\beta_{1}$) than the neutral prediction to greater or less, respectively, than the neutral prediction.  Thus, more virulent/higher contact rate strategies are relatively advantageous in smaller populations, but become disadvantageous as population size approaches equilibrium, whilst conversely, less virulent/lower contact rate strategies are advantageous near equilibrium, when hosts are in limited supply, and disadvantageous in populations where hosts are abundant, consistent with the pattern we saw in the strong selection case, Section \ref{STRONGTRANSIENT} (as we discuss below, one must approach these figures with care for small values of $n_{0}$, as they neglect the possibility of one or other of the strains will go extinct before the population arrives near $\Omega$.)  

\begin{figure}[h] 
	\begin{subfigure}[t]{0.5\linewidth}
    		\centering
 	 		\includegraphics[width=0.85\linewidth]{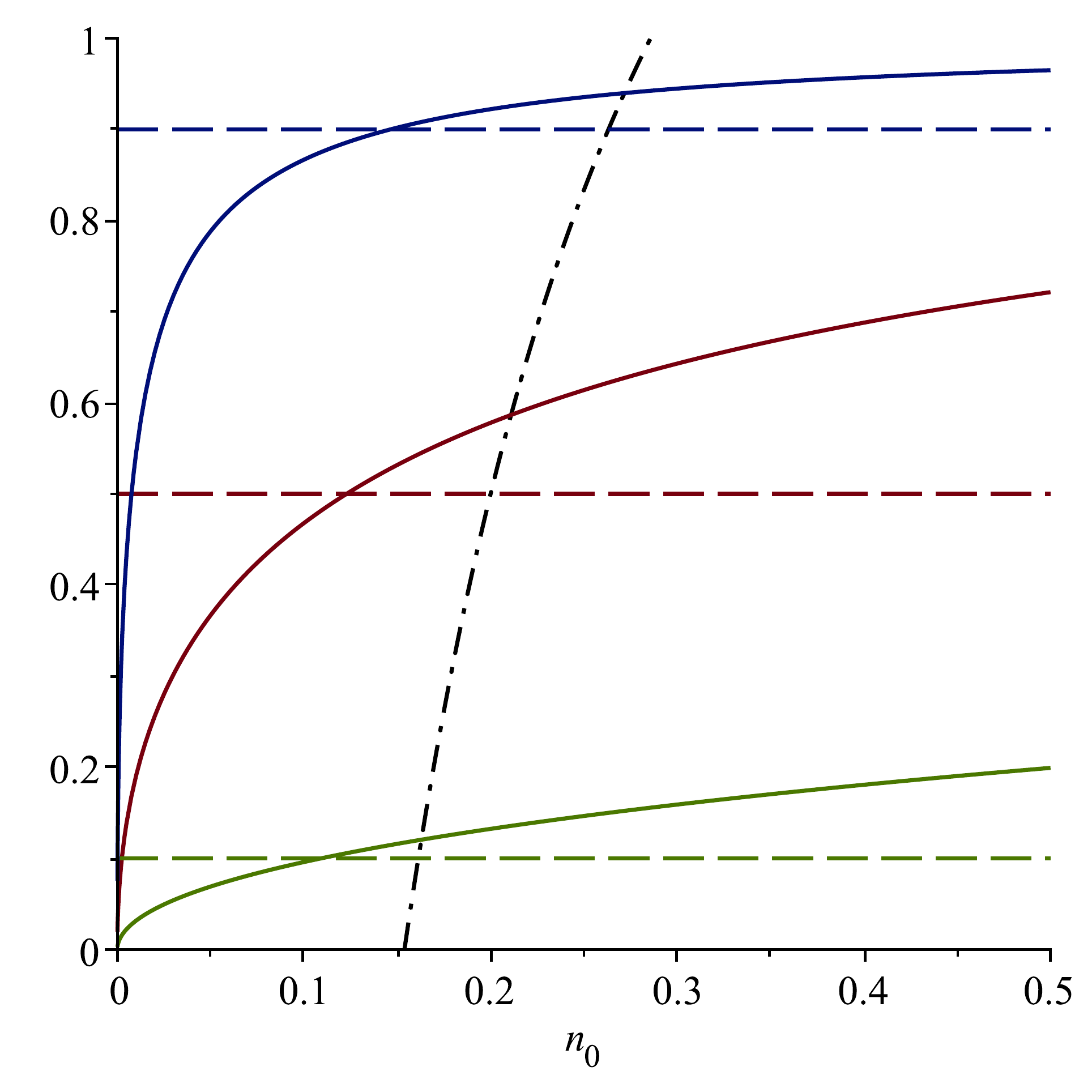} 
    			\caption{$\beta_{1} = 2\beta_{2}$} 
   			\label{WEAKAPPROXTRANSIENTFIGURE:a} 
    		%\vspace{4ex}
  		\end{subfigure}%% 
		\begin{subfigure}[t]{0.5\linewidth}
    		\centering
 	 		\includegraphics[width=0.85\linewidth]{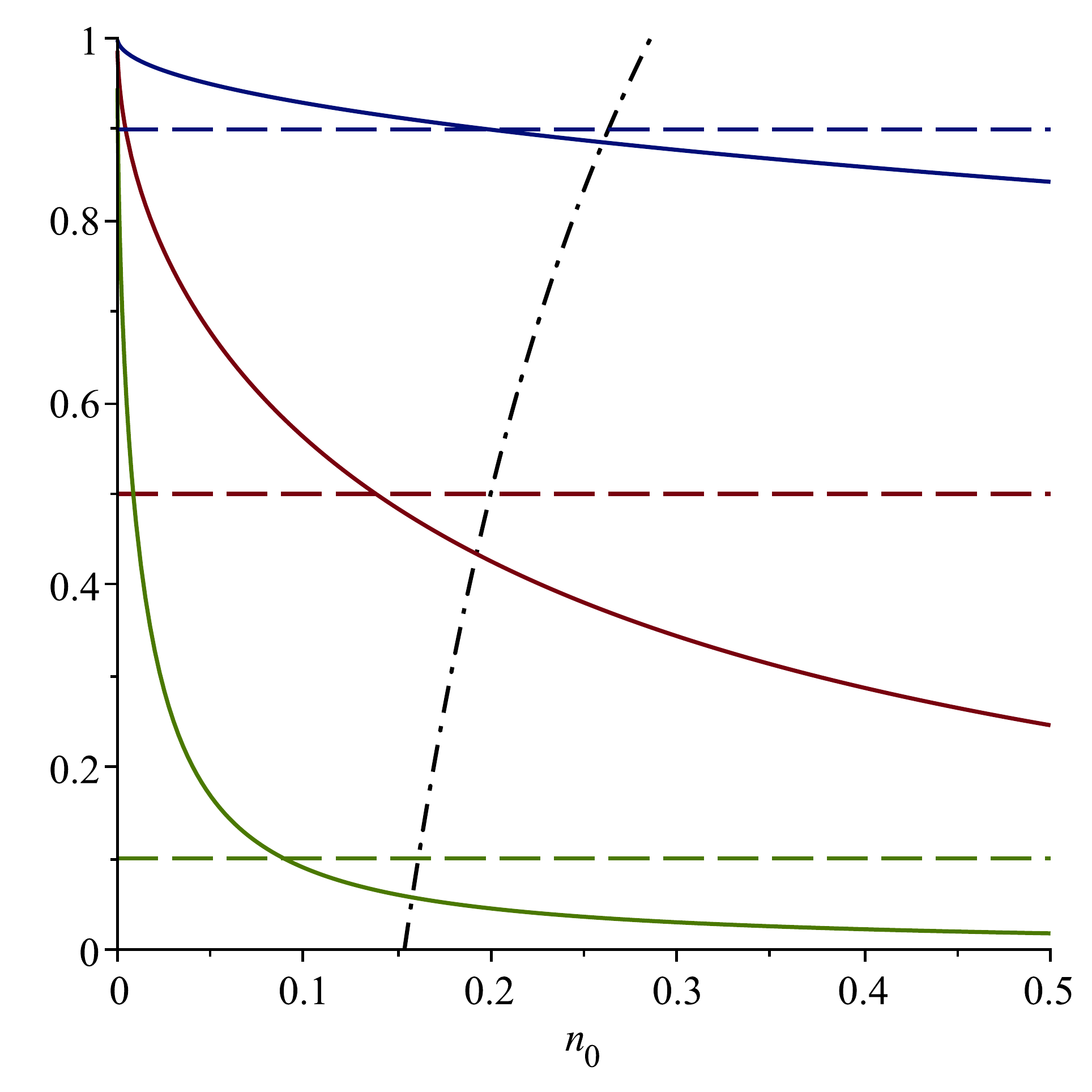} 
    			\caption{$\beta_{2} = 2\beta_{1}$} 
   			\label{WEAKAPPROXTRANSIENTFIGURE:b} 
    		%\vspace{4ex}
  		\end{subfigure}%% 

    \caption{We plot the fixation probability of mutant strains with initial frequencies $p_{0} = 0.1$ (green), $p_{0} = 0.5$ (red) and  $p_{0} = 0.9$ (blue) as a function of the initial population size.  The black dash-dot curve indicates the position of the manifold $\Omega$. We hold $R_{0,1} = R_{0,2}$ by setting $\alpha_{2} = \frac{\delta+\alpha_{1}+\gamma}{1 + \sigma} - \delta - \gamma$.  The other parameters are fixed at $R_{0}^{\star} = 2$, $\lambda = 2$, $\delta = 1$, $\beta_{1} = 10$, $\alpha_{1} = 3$, and $\gamma_{1}=\gamma_{2}=1$.} 
   \label{WEAKAPPROXTRANSIENTFIGURE} 
\end{figure}

\begin{rem}
We briefly note that this simple approach becomes invalid when $I^{(n)}_{i}(0) = o(n)$ for some $n$ 
(and thus $(x,\bm{y},z) \in \partial \mathbb{R}_{+}^{2}$); in this case, stochastic effects will lead to the rapid extinction of the rare strain.  Unfortunately, our previous branching process approach does not translate directly: the heuristic approximation gives an asymptotically critical branching process, whereas our upper and lower bounds become supercritical and subcritical branching processes respectively, so that the resulting ``sandwiching'' of the actual process is uninformative. We are currently considering a refined approach for this case.
\end{rem}

\subsection{Reconciling the Strong and Weak Selection Results}

On first inspection, our expressions for the strong and weak selection limits have little in common.  In Section \ref{STRONG}, we saw that if $R^{(n)}_{0,i}\to R_{0,i}$ and $R_{0,2}
\neq R_{0,1}$, then, if $I^{(n)}_{2}(0) \to I_{2}(0)$, the fixation probability was 
\[
	\begin{cases}
		1-\left(\frac{R_{0,1}}{R_{0,2}}\right)^{I_{2}(0)} 
			& \text{if $R_{0,2} > R_{0,1}$, and}\\
		0 & \text{otherwise}
	\end{cases}
\]
On the other hand, in Section \ref{WEAK}, we assume that 
\[
	R^{(n)}_{0,i}= R_{0}^{\star}
		\left(1+\frac{r_{i}}{n}\right) + {\textstyle o\left(\frac{1}{n}\right)}
\] 
and $\bar{I}^{(n)}(0) = \frac{1}{n} I^{(n)}_{2}(0) \to I_{2}(0)$, and derive a quite different appearing expression for the fixation probability.

More generally, we might consider the intermediary scalings: let $1 \ll \kappa_{n} \ll n$, and suppose that 
\[
	R^{(n)}_{0,i}= R_{0}^{\star}
		\left(1+\frac{r_{i}}{\kappa_{n}}\right) + {\textstyle o\left(\frac{1}{\kappa_{n}}\right)}
\] 
whilst 
\[
	\lim_{n \to \infty} \frac{I^{(n)}_{2}(0)}{\kappa_{n}} = \iota.  
\]
Substituting these into our expression for strong selection, we see that, provided $r_{2} > r_{1}$, we have that the probability of fixation is 
\[
	1-\left(\frac{\left(1+\frac{r_{1}}{\kappa_{n}}\right) + {\textstyle o\left(\frac{1}{\kappa_{n}}\right)}}
	{\left(1+\frac{r_{2}}{\kappa_{n}}\right) + {\textstyle o\left(\frac{1}{\kappa_{n}}\right)}}\right)
	^{\kappa_{n} \frac{I^{(n)}(0)}{\kappa_{n}}} 
		\to 1- e^{-(r_{2}-r_{1})\iota}
\]
as $n \to \infty$ (\nb that when $\kappa_{n} \gg n$, $\frac{I^{(n)}_{2}(0)}{\kappa_{n}} \to 0$, so trivially the probability of fixation is 0).

On the other hand, we can begin with our expression for the fixation probability under weak selection, which written in terms of the original parameters $R^{(n)}_{0,2}$ and $R^{(n)}_{0,1}$, was proportional to 
\begin{multline*}
	\tilde{h}(p) = \int_{0}^{p} \frac{(\beta_{2}q\!-\!\beta_{1}(1\!-\!q))^{\!-\!\frac{(\beta_{2}\!+\!\beta_{1})
		\left((\beta_{2}\!-\!\alpha_{2})\!-\!(\beta_{1}\!-\!\alpha_{1})\right)}
		{\beta_{2}\alpha_{1}\!-\!\beta_{1}\alpha_{2}}}
		\left((\beta_{2}\!-\!\alpha_{2})q\!+\!(\beta_{1}\!-\!\alpha_{1})(1\!-\!q)\right)
		^{2\!-\!\frac{\left((\beta_{2}\!-\!\alpha_{2})\beta_{2}\!-\!(\beta_{1}\!-\!\alpha_{1})\beta_{1}\right)}
		{\beta_{2}\alpha_{1}\!-\!\beta_{1}\alpha_{2}}}}
		{\left((\beta_{2}\!-\!\alpha_{2})\beta_{2}q\!+\!(\beta_{1}\!-\!\alpha_{1})\beta_{1}(1\!-\!q)\right)}\\
	\times e^{\!-\! n\left(\frac{R^{(n)}_{0,2}\!-\! R^{(n)}_{0,1}}{R_{0}^{\star}} 
	 \!+\! {\textstyle o\left(\frac{1}{n}\right)}\right)
	 	\frac{\left((\beta_{2}\!-\!\alpha_{2})\beta_{2}q\!+\!(\beta_{1}\!-\!\alpha_{1})\beta_{1}(1\!-\!q)\right)}
	{\beta_{2}\beta_{1}\left((\beta_{2}\!-\!\alpha_{2})q\!+\!(\beta_{1}\!-\!\alpha_{1})(1\!-\!q)\right)
	(\beta_{2}q\!+\!\beta_{1}(1\!-\!q))}\, dq}\, dq
\end{multline*}

Replacing $R^{(n)}_{0,2}$ and $R^{(n)}_{0,1}$  with the intermediary scalings, our expression becomes
\begin{multline*}
	 \int_{0}^{p} \frac{(\beta_{2}q\!-\!\beta_{1}(1\!-\!q))^{\!-\!\frac{(\beta_{2}\!+\!\beta_{1})
		\left((\beta_{2}\!-\!\alpha_{2})\!-\!(\beta_{1}\!-\!\alpha_{1})\right)}
		{\beta_{2}\alpha_{1}\!-\!\beta_{1}\alpha_{2}}}
		\left((\beta_{2}\!-\!\alpha_{2})q\!+\!(\beta_{1}\!-\!\alpha_{1})(1\!-\!q)\right)
		^{2\!-\!\frac{\left((\beta_{2}\!-\!\alpha_{2})\beta_{2}\!-\!(\beta_{1}\!-\!\alpha_{1})\beta_{1}\right)}
		{\beta_{2}\alpha_{1}\!-\!\beta_{1}\alpha_{2}}}}
		{\left((\beta_{2}\!-\!\alpha_{2})\beta_{2}q\!+\!(\beta_{1}\!-\!\alpha_{1})\beta_{1}(1\!-\!q)\right)}\\
	\times e^{\!-\!\frac{n}{\kappa_{n}}(r_{2} \!-\! r_{1} \!+\!o(1)) \int I_{e}(q)
	\frac{\left((\beta_{2}\!-\!\alpha_{2})\beta_{2}q\!+\!(\beta_{1}\!-\!\alpha_{1})\beta_{1}(1\!-\!q)\right)}
	{\beta_{2}\beta_{1}\left((\beta_{2}\!-\!\alpha_{2})q\!+\!(\beta_{1}\!-\!\alpha_{1})(1\!-\!q)\right)
	(\beta_{2}q\!+\!\beta_{1}(1\!-\!q))}\, dq}\, dq.
\end{multline*}

We can find a large $n$ asymptotic expression for this probability using a pair of lemmas:

\begin{lem}\label{LAPLACE1}
Suppose that $\phi(x)$ and $g(x)$ are an increasing continuously differentiable function and a continuous function on $[a,b]$ ($-\infty < a < b < \infty$) respectively, and that $\phi'(a) \neq 0$.  Then,
\[
	\lim_{M \to \infty}  
	\frac{\int_{a}^{b} g(x) e^{-M \phi(x)}\, dx}{\frac{g(a) e^{-M \phi(a)}}{M \phi'(a)}} = 1.
\]
\end{lem}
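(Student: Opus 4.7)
This is a standard Laplace-type (endpoint) asymptotic, so the plan is to reduce it to the elementary identity $\int_0^\infty e^{-Mu}\,du = 1/M$ via a change of variable that linearizes the phase near $a$. First I would observe that since $\phi$ is increasing and $C^1$ with $\phi'(a)\neq 0$, necessarily $\phi'(a)>0$, and $\phi$ is a $C^1$ bijection of $[a,b]$ onto $[\phi(a),\phi(b)]$ with continuous inverse $\psi:=\phi^{-1}$, differentiable wherever $\phi'>0$. Substituting $u=\phi(x)-\phi(a)$ gives
\[
\int_a^b g(x) e^{-M\phi(x)}\,dx \;=\; e^{-M\phi(a)} \int_0^{\phi(b)-\phi(a)} h(u)\, e^{-Mu}\,du,
\]
where $h(u) := g(\psi(u+\phi(a)))/\phi'(\psi(u+\phi(a)))$ is continuous at $u=0$ with $h(0)=g(a)/\phi'(a)$.

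Next I would localize to a neighborhood of $u=0$. Fix $\varepsilon > 0$ and, using continuity of $h$ at $0$, choose $\delta \in (0,\phi(b)-\phi(a))$ so small that $|h(u)-h(0)| \le \varepsilon |h(0)|$ on $[0,\delta]$. On the tail $[\delta,\phi(b)-\phi(a)]$, $h$ is bounded by some constant $C$ (being continuous on a compact set), so
\[
\left| \int_\delta^{\phi(b)-\phi(a)} h(u) e^{-Mu}\,du \right| \;\le\; C\, \frac{e^{-M\delta}}{M},
\]
which is exponentially smaller than the target scale $1/M$ and hence negligible. On $[0,\delta]$, a two-sided bound gives
\[
(1-\varepsilon) h(0)\,\frac{1-e^{-M\delta}}{M} \;\le\; \int_0^\delta h(u) e^{-Mu}\,du \;\le\; (1+\varepsilon) h(0)\,\frac{1-e^{-M\delta}}{M}.
\]

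Multiplying by $M$ and letting $M\to\infty$, the factor $1-e^{-M\delta}\to 1$, so $M\int_0^{\phi(b)-\phi(a)} h(u) e^{-Mu}\,du$ is eventually sandwiched between $(1-2\varepsilon)h(0)$ and $(1+2\varepsilon)h(0)$. Since $\varepsilon$ was arbitrary, this limit equals $h(0) = g(a)/\phi'(a)$, and dividing by the target asymptotic $g(a) e^{-M\phi(a)}/(M\phi'(a))$ yields the claimed ratio limit of $1$. I do not anticipate any substantive obstacle; the only mild wrinkle is the degenerate case $g(a)=0$, where the ratio in the statement is ill-defined and one would either exclude it by hypothesis or replace the relative bound on $|h(u)-h(0)|$ by an absolute modulus of continuity, after which the same $o(1/M)$ tail estimate still shows the integral is $o(1/M)$, consistent with the target.
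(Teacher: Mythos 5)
Your proposal is correct in substance and follows the same strategy as the paper's proof: localize to a neighbourhood of the left endpoint, control the error there by a sandwich argument, show the tail is exponentially negligible, and reduce to an elementary exponential integral. The only difference is cosmetic in where the linearization happens: the paper Taylor-expands the exponent in the $x$ variable and sandwiches $e^{-M(\phi(x)-\phi(a))}$ between $e^{-M(\phi'(a)\pm\varepsilon)(x-a)}$ on $[a,a+\delta]$, whereas you first substitute $u=\phi(x)-\phi(a)$ to reach the Watson's-lemma normal form $\int h(u)e^{-Mu}\,du$ and then sandwich $h$ near $0$. Both land on the same computation.

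One small caveat on your route relative to the stated hypotheses: the lemma only assumes $\phi$ is increasing, $C^1$, and $\phi'(a)\neq 0$, which does not preclude $\phi'$ vanishing at interior points of $(a,b)$. Your function $h(u)=g(\psi(u+\phi(a)))/\phi'(\psi(u+\phi(a)))$ is then singular at the corresponding values of $u$, so the claim that $h$ is continuous, hence bounded, on the compact tail $[\delta,\phi(b)-\phi(a)]$ does not follow. The fix is immediate and costs nothing: estimate the tail in the original variable, where $|g|\le B$ and $\phi(x)\ge\phi(a)+\eta$ for $x\ge a+\delta'$, giving a bound $B(b-a)e^{-M\eta}e^{-M\phi(a)}=o(e^{-M\phi(a)}/M)$ — which is exactly what the paper does. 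The local analysis near $u=0$ is unaffected, since continuity of $\phi'$ and $\phi'(a)>0$ guarantee $\phi'>0$ on a neighbourhood of $a$. Your closing remark about the degenerate case $g(a)=0$ is apt; the paper's statement implicitly assumes $g(a)\neq 0$ as well.
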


\begin{proof}
Fix $\varepsilon > 0$ such that $\phi'(a) > \varepsilon$.  Using Taylor's theorem, we may write 
\[
	\phi(x) = \phi(a) + \phi'(a)(x-a) + R(x)(x-a),
\]
where $R(x) \to 0$ as $x \to a$.  Fix $\delta > 0$ such that 
\[
	|R(x)| < \varepsilon \quad \text{and} \quad |g(x) - g(a)| < \varepsilon 
\]
for all $x$ such that $x - a < \delta$. Finally, choose $\eta > 0$ such that $\phi(x) > \phi(a) + \eta$ for all 
$x$ such that $x - a \geq \delta$ and $B$ such that $|g(x)| < B$ for all $x \in [a,b]$.  Then, 
\begin{align*}
	\int_{a}^{b} g(x) e^{-M \phi(x)}\, dx 
	&= e^{-M \phi(a)}\int_{a}^{b} g(x) e^{-M (\phi(x)-\phi(a))}\, dx\\
	&= e^{-M \phi(a)}\left(\int_{a}^{a+\delta} g(x) e^{-M (\phi(x)-\phi(a))}\, dx
		+ \int_{a+\delta}^{b} g(x) e^{-M (\phi(x)-\phi(a))}\, dx\right).
\end{align*}
Now,
\[
	\abs{\int_{a+\delta}^{b} g(x) e^{-M (\phi(x)-\phi(a))}\, dx}
		\leq \int_{a+\delta}^{b} B e^{-M \eta}\, dx \to 0
\]
as $M \to \infty$, whereas
\begin{multline*}
	(g(a) - \varepsilon)  \int_{a}^{a+\delta} e^{-M (\phi'(a) + \varepsilon)(x-a)}\, dx 
	\leq \int_{a}^{b} g(x) e^{-M (\phi(x)-\phi(a))}\, dx  \\
	\leq (g(a) + \varepsilon)  \int_{a}^{a+\delta} e^{-M (\phi'(a) - \varepsilon)(x-a)}\, dx. 
\end{multline*}
Now, letting $y = M(x-a)$, we have
\[
	 \int_{a}^{a+\delta} e^{-M (\phi'(a) - \varepsilon)(x-a)}\, dx 
	 =  \frac{1}{M}\int_{0}^{M \delta} e^{-(\phi'(a) - \varepsilon)y}\, dy
\]
whilst 
\[
	\int_{0}^{M \delta} e^{-(\phi'(a) - \varepsilon)y}\, dy 
	= \frac{1}{\phi'(a) - \varepsilon}\left(1 - e^{-(\phi'(a) - \varepsilon) M\delta} \right) 
	\to \frac{1}{\phi'(a) - \varepsilon}\
\]
as $M \to \infty$, and similarly for the lower bound.  

Since $\varepsilon > 0$ can be chosen arbitrarily small, the result follows.
\end{proof}

\begin{lem}\label{LAPLACE2}
Let $\phi(x)$, $g(x)$, and $[a,b]$ be as above.  Then,
\[
	\lim_{M \to \infty}  \frac{\int_{a}^{a+\frac{X}{M}} g(x) e^{-M \phi(x)}\, dx}
		{\frac{g(a) e^{-M \phi(a)}}{M \phi'(a)}}
		= 1 - e^{-\phi'(a)X}.
\]
\end{lem}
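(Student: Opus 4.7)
The plan is to reduce to an elementary computation by the change of variable $y = M(x-a)$, mirroring the strategy of Lemma \ref{LAPLACE1} but now on a shrinking interval of length $X/M$ rather than a fixed interval $[a,b]$. On this scale, both the integrand $g$ and the nonlinear remainder in $\phi$ become asymptotically constant, while the leading-order exponential contribution is exactly the one that produces a finite, explicit value rather than a Laplace-type ``pinned at the endpoint'' value.

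First I would change variables. Setting $y = M(x-a)$ (so $dx = dy/M$ and the upper limit becomes $X$), the integral in the numerator rewrites as
\begin{equation*}
\int_{a}^{a+\frac{X}{M}} g(x) e^{-M\phi(x)}\, dx
	= \frac{e^{-M\phi(a)}}{M} \int_{0}^{X} g\!\left(a+\tfrac{y}{M}\right) \exp\!\left(-M\bigl[\phi(a+\tfrac{y}{M})-\phi(a)\bigr]\right) dy.
\end{equation*}
Next I would Taylor-expand $\phi$ about $a$. Writing $\phi(a+h) = \phi(a) + \phi'(a) h + R(h) h$ with $R(h) \to 0$ as $h \to 0$, we have
\begin{equation*}
	M\bigl[\phi(a+\tfrac{y}{M})-\phi(a)\bigr] = \phi'(a)\,y + R(\tfrac{y}{M})\,y.
\end{equation*}
Since $y \in [0,X]$ is bounded, $y/M \to 0$ uniformly, so $R(y/M) \to 0$ uniformly on $[0,X]$; likewise $g(a+y/M) \to g(a)$ uniformly on $[0,X]$ by continuity of $g$. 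The integrand on $[0,X]$ is therefore bounded (for $M$ large) by an integrable function independent of $M$, namely $(|g(a)|+1)\,e^{-(\phi'(a)-\varepsilon)y}$ for any small $\varepsilon > 0$.

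Then I would apply dominated convergence to pass to the limit under the integral and obtain
\begin{equation*}
	\int_{0}^{X} g\!\left(a+\tfrac{y}{M}\right) e^{-\phi'(a)y - R(y/M)y}\, dy
	\;\longrightarrow\; g(a)\int_{0}^{X} e^{-\phi'(a)y}\, dy
	= \frac{g(a)}{\phi'(a)}\bigl(1 - e^{-\phi'(a)X}\bigr).
\end{equation*}
Multiplying by the prefactor $e^{-M\phi(a)}/M$ and dividing by the normalising quantity $\frac{g(a)e^{-M\phi(a)}}{M\phi'(a)}$ yields the claimed limit $1-e^{-\phi'(a)X}$.

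The only mild subtlety is justifying the interchange of limit and integral uniformly in $M$; the main obstacle, if any, is keeping track of the sign of $\phi'(a)$ (assumed nonzero, and positive here since $\phi$ is increasing) so that the exponential dominator is genuinely integrable. Everything else is routine Taylor expansion and dominated convergence, and no additional hypotheses beyond those already in Lemma \ref{LAPLACE1} are needed.
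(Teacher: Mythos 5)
Your proposal is correct and follows essentially the same route as the paper's proof: the substitution $y = M(x-a)$, the Taylor expansion $\phi(a+h) = \phi(a) + \phi'(a)h + R(h)h$, and passage to the limit inside the integral over the fixed compact interval $[0,X]$. The only difference is that you spell out the justification (uniform convergence of the integrand plus dominated convergence, which is more than enough on a bounded interval) where the paper simply asserts the limit.
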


\begin{proof}
By direct computation, we have
\begin{align*}
\int_{a}^{a+\frac{X}{M}} g(x) e^{-M \phi(x)}\, dx	
	&= \int_{0}^{X} g(a+\frac{y}{M}) e^{-M \phi(a+\frac{y}{M})}\, dy\\
	&= \frac{1}{M} \int_{0}^{X} g(a+\frac{y}{M}) 
		e^{-M (\phi(a)+\phi'(a)\frac{y}{M}+R(a+\frac{y}{M}) \frac{y}{M}}\, dy
\end{align*}
so that as $M \to \infty$, 
\[
	\frac{\int_{a}^{a+\frac{X}{M}} g(x) e^{-M \phi(x)}\, dx}
		{\frac{g(a) e^{-M \phi(a)}}{M \phi'(a)}}
	\to \phi'(a) \int_{0}^{X} e^{-\phi'(a)y}\, dy. 
\]
The result follows.
\end{proof}

To apply the lemmas here, we take $a = 0$ and $b = 1$, $M_{n} = \frac{n}{\kappa_{n}}$, and, as before,
\[
	g(p) =\frac{(\beta_{2}p\!-\!\beta_{1}(1\!-\!p))^{\!-\!\frac{(\beta_{2}\!+\!\beta_{1})
		\left((\beta_{2}\!-\!\alpha_{2})\!-\!(\beta_{1}\!-\!\alpha_{1})\right)}
		{\beta_{2}\alpha_{1}\!-\!\beta_{1}\alpha_{2}}}
		\left((\beta_{2}\!-\!\alpha_{2})p\!+\!(\beta_{1}\!-\!\alpha_{1})(1\!-\!p)\right)
		^{2\!-\!\frac{\left((\beta_{2}\!-\!\alpha_{2})\beta_{2}\!-\!(\beta_{1}\!-\!\alpha_{1})\beta_{1}\right)}
		{\beta_{2}\alpha_{1}\!-\!\beta_{1}\alpha_{2}}}}
		{\left((\beta_{2}\!-\!\alpha_{2})\beta_{2}p\!+\!(\beta_{1}\!-\!\alpha_{1})\beta_{1}(1\!-\!p)\right)}
\]
whereas we now take a slightly different definition for $\phi(p)$, which now has an $o(1)$ correction:
\[
	 \phi(p) = (r_{2} - r_{1} +o(1)) \int I_{e}(p)
	\frac{\left((\beta_{2}-\alpha_{2})\beta_{2}p+(\beta_{1}-\alpha_{1})\beta_{1}(1-p)\right)}
	{\beta_{2}\beta_{1}\left((\beta_{2}-\alpha_{2})p+(\beta_{1}-\alpha_{1})(1-p)\right)
	(\beta_{2}p+\beta_{1}(1-p))}\, dp
\]
so that
\[
	\phi'(0) \to I_{e}(0) (r_{2} - r_{1}) 
\]
as $n \to \infty$.  Then, using Lemma \ref{LAPLACE1} we conclude that $\tilde{h}(1)$ is asymptotically equivalent to 
\[
	\frac{g(0) e^{-M_{n} \phi(0)}}{M_{n} \phi'(0)}.
\]

Now, to consider the numerator when we start with $I^{(n)}_{2}(0) \sim \iota \kappa_{n}$ individuals of the invading strain, we recall that 
\[
	p = \lim_{n \to \infty} P^{(n)}_{2}(0) 
	= \lim_{n \to \infty}\frac{I^{(n)}_{2}(0)}{I^{(n)}_{2}(0) + I^{(n)}_{1}(0)}
	= \lim_{n \to \infty} \frac{I^{(n)}_{2}(0)}{I_{e}(P^{(n)}_{2}(0)) n}  
\]
so, to apply Lemma \ref{LAPLACE2}, we will take 
\[
	X \defn X^{(n)} = M_{n} \frac{I^{(n)}_{2}(0)}{I_{e}(P^{(n)}_{2}(0)) n}
	= \frac{I^{(n)}_{2}(0)}{I_{e}(P^{(n)}(0)) \kappa_{n}} 
\]
\ie so that $\frac{X^{(n)}}{M_{n}} \sim p$.

We note that $P^{(n)}(0) \propto \frac{\kappa_{n} \iota}{n} \to 0$ as $n \to \infty$, so
\[
	\lim_{n \to \infty} X^{(n)} = \frac{\iota}{I_{e}(0)}.
\]
Thus, applying Lemma \ref{LAPLACE2}, we have 
\[
	{\textstyle \tilde{h}\left( \frac{I^{(n)}_{2}(0)}{I_{e}(P^{(n)}_{2}(0)) n}\right)} \sim 
		\frac{g(0) e^{-M_{n} \phi(0)}}{M_{n} \phi'(0)}\left(1 - e^{-(r_{2}-r_{1})\iota}\right),
\]
and the probability of fixation obtained from the weak selection expression is again asymptotic to
\[
	1 - e^{-(r_{2}-r_{1})\iota}. 
\]

While this is not a rigorous proof, it does demonstrate heuristically that the weak and strong selection expressions for the fixation probability agree to first order when applied across the intermediate selective regimes.  In particular, we can use the method of matched asymptotic expansions (see \eg \citet{Hinch91,Kevorkian96}) to combine our two solutions into a single expression valid across all scales, by summing the expressions for strong and weak selection and subtracting their common limit, where all are expressed in the unscaled (\ie strong selection parameters):
\begin{equation}\label{MATCHEDAPPROX}
	\left[1-\left(\frac{R^{(n)}_{0,1}}{R^{(n)}_{0,2}}\right)^{I^{(n)}_{2}(0)}
	+ {\textstyle h\left(\frac{I^{(n)}_{2}(0)}{I_{e}(0) n}\right)} 
	- \left(1 - e^{\left(\frac{R^{(n)}_{0,1}}{R^{(n)}_{0,2}}-1\right)I^{(n)}_{2}(0)}
	\right)\right]^{+},
\end{equation}
where $[x]^{+} = \max\{x,0\}$ and we have used that
\[
	\frac{R^{(n)}_{0,1}}{R^{(n)}_{0,2}}  - 1 
	= \frac{1+\frac{r_{1}}{\kappa_{n}}}{1+\frac{r_{2}}{\kappa_{n}}} - 1
	= \frac{1}{\kappa_{n}} \frac{r_{1}-r_{2}}{1+\frac{r_{2}}{\kappa_{n}}} 
	= \frac{1}{\kappa_{n}} (r_{1}-r_{2}) + \BigO{\frac{1}{\kappa_{n}^{2}}},
\]
so that
\[ 
	(r_{1}-r_{2}) \iota \sim  (r_{1}-r_{2}) \frac{I^{(n)}_{2}(0)}{\kappa_{n}}
		\sim \left(\frac{R^{(n)}_{0,1}}{R^{(n)}_{0,2}}-1\right)I^{(n)}_{2}(0).
\]

We illustrate how these approximations compare to a simulated epidemic in Figure \ref{STRONGAPPROXFIGURE}.

%\textcolor{blue}{What I don't understand here is how do we deal with the case where $R_{0,2} < R_{0,1}$. The first term will become negative... Should we assume that this strong selection term becomes zero (as in the strong selection case) in this situation?} 

\begin{figure}
  \begin{subfigure}[b]{0.5\linewidth}
    \centering
    \includegraphics[width=0.85\linewidth]{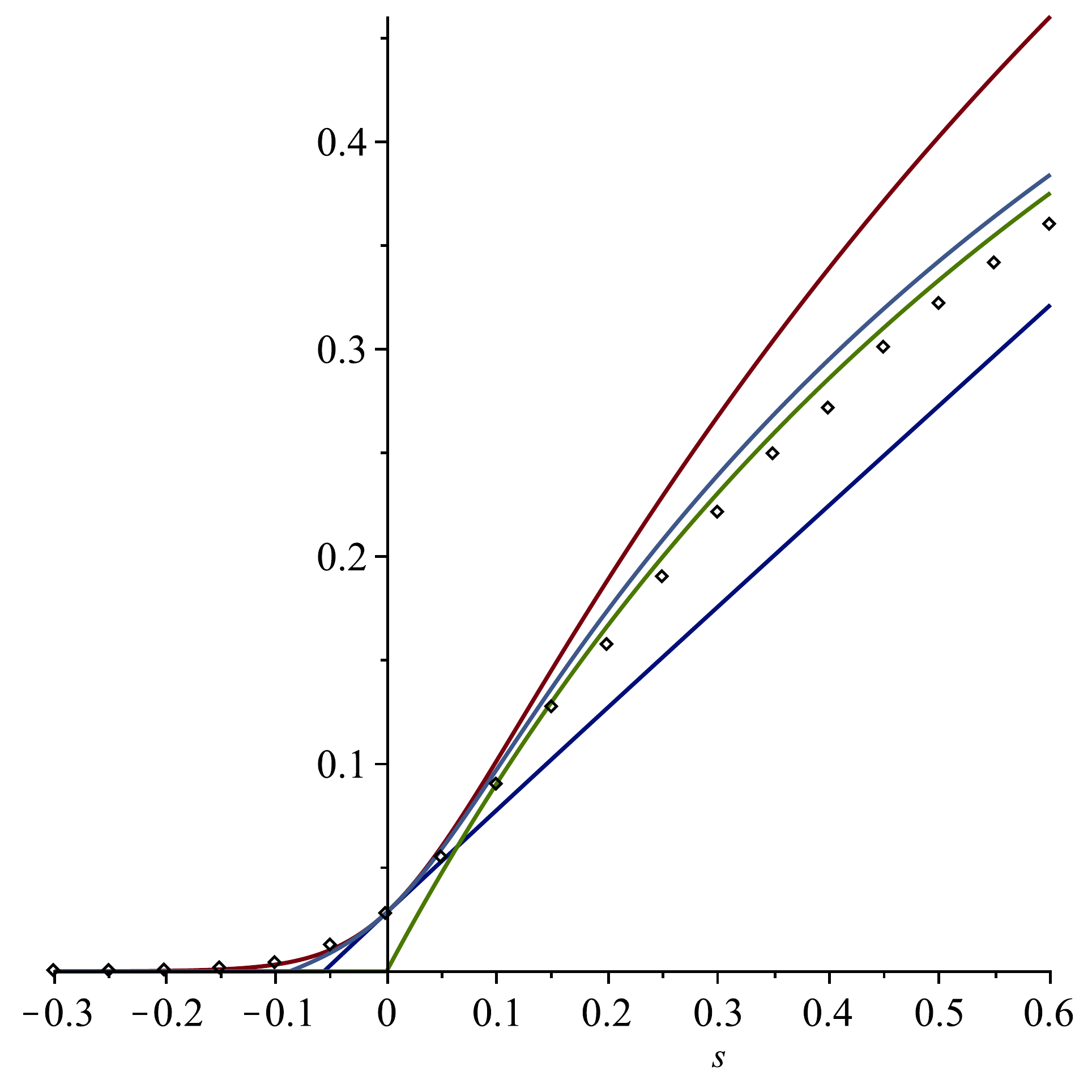} 
    \caption{all approximations} 
   \label{STRONGAPPROXFIGURE:a} 
    \vspace{4ex}
  \end{subfigure}%% 
  \begin{subfigure}[b]{0.5\linewidth}
    \centering
    \includegraphics[width=0.85\linewidth]{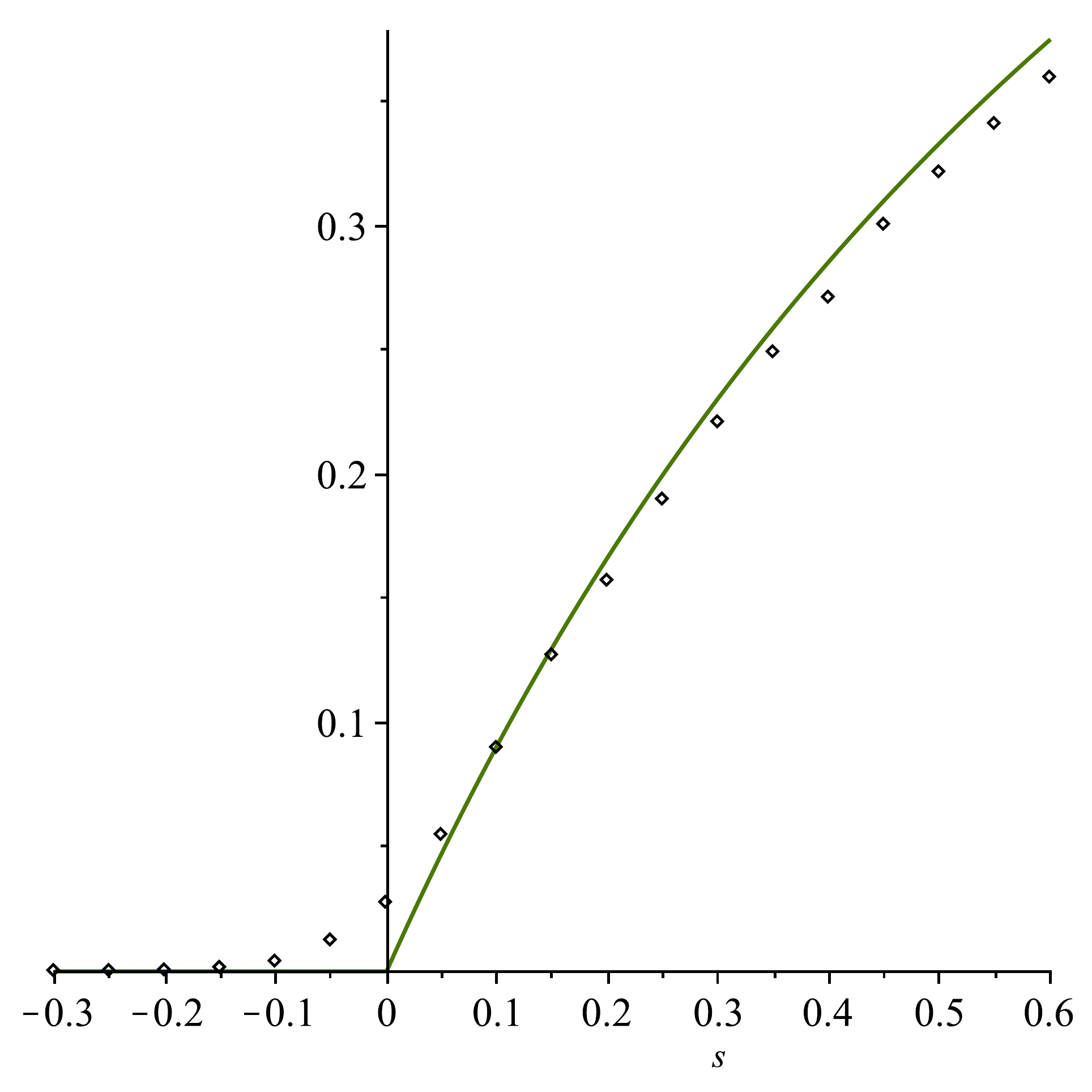} 
    \caption{strong selection} 
    \label{STRONGAPPROXFIGURE:b} 
    \vspace{4ex}
  \end{subfigure} 
  \begin{subfigure}[b]{0.5\linewidth}
    \centering
    \includegraphics[width=0.85\linewidth]{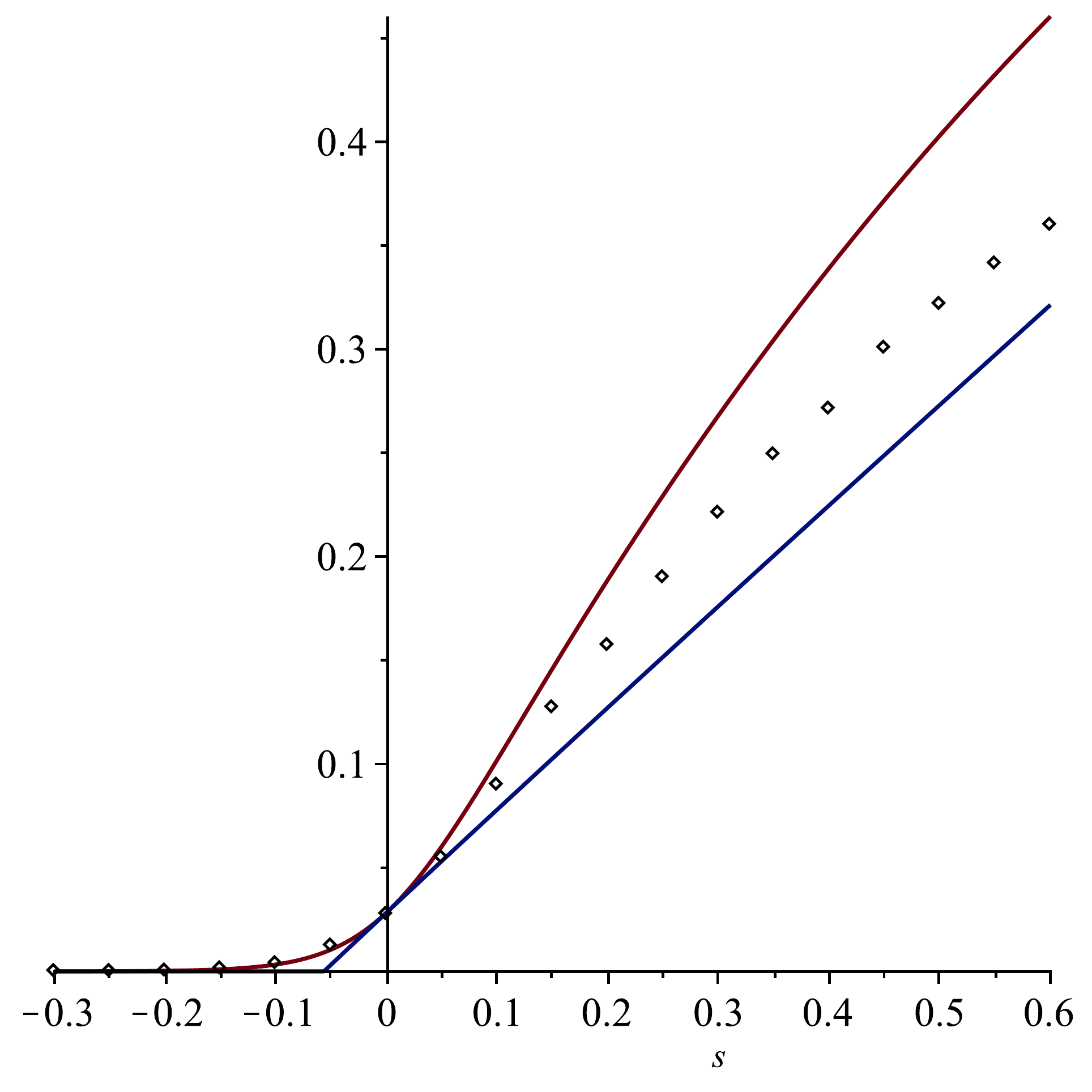} 
    \caption{weak selection} 
    \label{STRONGAPPROXFIGURE:c} 
  \end{subfigure}%%
  \begin{subfigure}[b]{0.5\linewidth}
    \centering
    \includegraphics[width=0.85\linewidth]{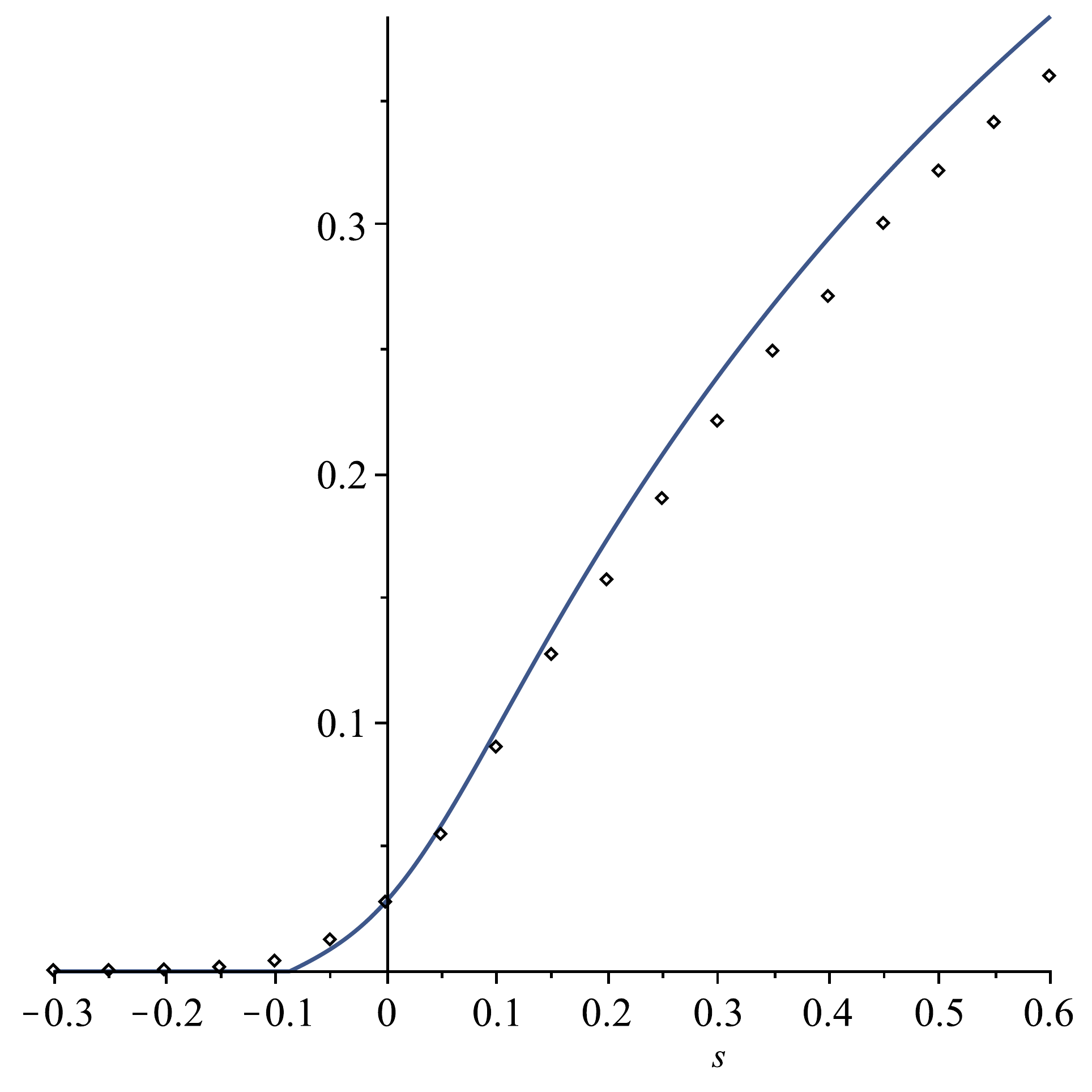} 
    \caption{matched} 
    \label{STRONGAPPROXFIGURE:d} 
  \end{subfigure} 
  \caption{We compare our various approximations to fixation probabilities obtained by Markov chain simulations (black diamonds), where we assume a single mutant invading a resident population at equilibrium.  We have implemented selection on the invading  strain by setting $\beta_{2} = \beta_{1}(1+s)$, so that $R^{(n)}_{0,2}= R^{(n)}_{0,1}(1+s)$.  The other parameters are fixed at $n = 100$, $R^{(n)}_{0,1}= R_{0}^{\star} = 4$, $\lambda = 2$, $\delta = 1$, $\beta_{1} = 20$,  and $\alpha_{1} = 3$.   (b) shows strong selection approximation \eqref{STRONGAPPROX} (green curve), (c) shows the weak selection approximation \eqref{WEAKAPPROX} (red curve) and its second order approximation \eqref{WEAKAPPROX} (blue line), (d) shows the matched  asymptotic approximation \eqref{MATCHEDAPPROX}, whereas (a) overlays all the approximations for comparison.  As would be expected, the strong and weak approximations do well in their corresponding parameter regimes, but poorly elsewhere, whereas the matched asymptotic provides a compromise, performing worse than the weak or strong approximations at the respective extremes, but interpolating between them for intermediate values of $s$.}
  \label{STRONGAPPROXFIGURE} 
\end{figure}

\clearpage

\section{Adaptive Dynamics}\label{AD}

Using the expressions for the fixation probability derived above, we can use the framework of adaptive dynamics to investigate the long-term evolution of strains.  In what follows, we give an informal discussion of the derivation of the canonical \textit{diffusion} for the process, a generalisation of the canonical equation of adaptive dynamics which allows us to consider the influence of random drift on phenotypic evolution.  We refer the reader to \citet{Otto2007} for a more extensive discussion aimed at a biological audience and to \citet{Champagnat+Lambert07} for a mathematically rigorous derivation of the canonical equation.

We briefly recall the assumptions of adaptive dynamics in the context of our epidemic models; throughout, we assume that a novel mutant strain with virulence $\alpha'$ invades a population which is at the endemic equilibrium with a resident strain $\alpha$.   

We assume a tradeoff between transmissibility and virulence, so that the contact rate of a strain depends on its virulence according to some fixed function $\beta(\alpha)$.  For our numerical investigations, we take 
\[
	\beta(\alpha) = (\delta+\alpha+\gamma)(R_{0,\text{max}} - w(\alpha-\alpha_{0})^{2}),
\]
where $w$ is a parameter that determines the ``flatness'' of the fitness landscape.

The reproductive number is a function of the virulence, 
\[
	R_{0}(\alpha) = \frac{\beta(\alpha)}{\delta+\alpha+\gamma}.
\]
We will  assume that there is a value, $\alpha_{0}$, for the virulence that maximises $R_{0}(\alpha)$.

Under these assumptions, the density of ndividuals infected with the resident strain at the endemic equilibrium is
\[
	I_{\text{eq}}(\alpha) \sim \frac{\lambda(R_{0}(\alpha)-1)}{\beta(\alpha)-\alpha}
\]
$I_{\text{eq}}(\alpha)$ is non-zero on a range $(\alpha_{min},\alpha_{max})$; outside of this range, the pathogen goes extinct.

We then have, using \eqref{PFIXWEAK}, that the fixation probability of a mutant strain of virulence $\alpha'$ arising in a single individual in a population in which a strain of virulence $\alpha'$, is
\[
	S(\alpha,\alpha') \sim \frac{1}{n I_{\text{eq}}(\alpha)} + 
	\frac{1}{2} \left(\frac{1}{n I_{\text{eq}}(\alpha)} \frac{\beta(\alpha)-\beta(\alpha')}{\beta(\alpha')} 
	+ 1 - \frac{R_{0}(\alpha)}{R_{0}(\alpha')}\right) + \BigO{\abs{\alpha-\alpha'}^{2}}.
\]

To introduce the evolutionary dynamics, we assume that mutations occur in individuals with virulence $\alpha$ at a per-capita rate $\epsilon \eta(\alpha)$, where $\epsilon > 0$ is a dimensionless parameter that we will take to 0.  This will ensure that, with high probability, fixation occurs before a second novel mutation can arise.  The population is thus assumed to be monomorphic (\ie all individuals have the same virulence) between invasion events.  

Finally, we assume mutations have small effects, and are unbiased in direction, so that a mutation in a strain of virulence $\alpha$ gives rise to a new strain of virulence $\alpha' \approx \alpha$.   We will assume that given a mutation occurs in an individual  $\alpha$, the offspring has virulence $\alpha'$ with probability $K(\alpha,\alpha')$; we will further assume that 
\[
	\int_{\alpha_{min}}^{\alpha_{max}} (\alpha - \alpha')^{k} K(\alpha,\alpha')\, d\alpha'
	 	= \begin{cases}
			0 & \text{if $k = 1$,}\\
			\varepsilon \nu(\alpha) & \text{if $k = 2$, and,}\\
			o(\varepsilon) & \text{otherwise,} 
		\end{cases}
\] 
where $\varepsilon$ is a dimensionless parameter which we will take to 0; this limit of small mutational effects allows us to ignore terms of order $\BigO{\abs{\alpha-\alpha'}^{2}}$ in the fixation probability.

We now pass from the individual based model to the trait substitution sequence \citep{Metz1992,Dieckmann1996}: we have seen that whenever a new strain arises, either the  mutant or resident strain will rapidly go extinct.  Until a new mutant arises, the population will be composed entirely of individuals of the surviving strain.  Let $A_{\epsilon}(t)$ be a random variable giving the virulence of the strain that survived the last competition event prior to time $t$.  The population is thus entirely composed of the strain $A_{\epsilon}(t)$ except for times $t$ in the short intervals when two strains are competing. If we pass to a ``mutational time scale'', $\frac{t}{\epsilon}$, as $\epsilon \to 0$ the duration of these intervals shrinks to 0, and we are left with a process in which novel mutations either fix or disappear instantly, so that the population is only observed with a single strain at equilibrium.  

Formally, as $\epsilon \to 0$, $A_{\epsilon}(\frac{t}{\epsilon}) \Rightarrow A(t)$, a continuous time Markov chain that jumps from virulence $\alpha$ to $\alpha'$ when a strain of virulence $\alpha'$ successfully invades a population with resident virulence $\alpha$.    The process $A(t)$ has generator (recall, the generator is the operator $\mathcal{L}$ defined by 
$\mathcal{L}f(\alpha) \defn \frac{d}{dt}\big\vert_{t=0} \mathbb{E}[f(A(t)) \vert A(0) = \alpha]$):
\[
	\mathcal{L}f(\alpha) = n \eta(\alpha) I_{\text{eq}}(\alpha)
		\int_{\alpha_{min}}^{\alpha_{max}} K(\alpha,\alpha') s(\alpha, \alpha')
			\left(f(\alpha')-f(\alpha)\right)\, d\alpha'
\]

Now, consider the time rescaled process $\hat{A}_{\varepsilon} \defn A\left(\frac{t}{\varepsilon}\right)$; 
this has generator
\[
	\hat{\mathcal{L}}_{\varepsilon}f(\alpha) = \frac{n}{\varepsilon} \eta(\alpha) I_{\text{eq}}(\alpha)
		\int_{\alpha_{min}}^{\alpha_{max}} K(\alpha,\alpha') s(\alpha, \alpha')
			\left(f(\alpha')-f(\alpha)\right)\, d\alpha'.
\]
Taylor expanding $s(\alpha, \alpha') \left(f(\alpha')-f(\alpha)\right)$ in $\alpha'$ about $\alpha$, this is equal to 
\begin{multline*}
	\frac{n}{\varepsilon} \eta(\alpha) I_{\text{eq}}(\alpha)
		\int_{\alpha_{min}}^{\alpha_{max}} K(\alpha,\alpha') \left[\left(
		\frac{\partial s}{\partial \alpha'}(\alpha, \alpha) f'(\alpha) + \frac{1}{2}
		s(\alpha, \alpha') f''(\alpha)\right)(\alpha - \alpha')^{2} + \cdots\right]\, d\alpha'\\
	= n \eta(\alpha) I_{\text{eq}}(\alpha) \nu(\alpha) \frac{\partial s}{\partial \alpha'}(\alpha, \alpha) f'(\alpha)
	 + \frac{n}{2} \eta(\alpha) I_{\text{eq}}(\alpha) \nu(\alpha) s(\alpha, \alpha) f''(\alpha) + o(\varepsilon).
\end{multline*}
Thus, as $\varepsilon \to 0$, $\hat{A}_{\varepsilon}(t)$ converges to a limiting diffusion $\hat{A}(t)$ with advective coefficient
\begin{equation}\label{DRIFT}
	\mu(\alpha) = n \eta(\alpha) I_{\text{eq}}(\alpha) \nu(\alpha) 
		\frac{\partial}{\partial \alpha'} s(\alpha, \alpha)
	= \frac{\eta(\alpha) \nu(\alpha)}{2} 
		\left(n I_{\text{eq}}(\alpha)\frac{R_{0}'(\alpha)}{R_{0}(\alpha)} 
		- \frac{\beta'(\alpha)}{\beta(\alpha)}\right)
\end{equation}
and diffusion coefficient 
\begin{equation}\label{DIFFUSION}
	\sigma^{2}(\alpha) = n \eta(\alpha) I_{\text{eq}}(\alpha) \nu(\alpha) s(\alpha, \alpha) 
		=  \eta(\alpha)\nu(\alpha).
\end{equation}
and generator
\[
	\hat{\mathcal{L}}f(\alpha) = \mu(\alpha)f'(\alpha) + \frac{1}{2} \sigma^{2}(\alpha) f''(\alpha).
\]
The process $\hat{A}(t)$ is our canonical diffusion.

\subsection{Stationary Distribution}

Using \eqref{DRIFT} and \eqref{DIFFUSION}, we may compute the stationary distribution 
$\psi$ of $\hat{A}(t)$; this stationary distribution describes the long-term behaviour of the virulence, after any ``memory'' of the initial state has been lost.    Given any subset $\mathcal{A} \subset (\alpha_{min},\alpha_{max})$, $\psi(\mathcal{A})$ gives the proportion of time that the virulence is in the set $\mathcal{A}$, or equivalently, the probability that at some random sampling time $t$, the virulence takes a value in  $\mathcal{A}$. $\psi$ is characterised by the relation 
\[
	\int_{\alpha_{min}}^{\alpha_{max}} \hat{\mathcal{L}}f(\alpha) \psi(d\alpha) = 0.
\]
In particular, if $\psi(d\alpha)$ has a density, which, in a slight abuse of notation, we write as $\psi(\alpha)$,
\[
	\hat{\mathcal{L}}^{*} \psi = 0,
\]
where $\hat{\mathcal{L}}^{*}$, defined by
\[
	\hat{\mathcal{L}}^{*} f(\alpha) = -\frac{d}{d\alpha} \left[\mu(\alpha)f(\alpha)\right]
	+ \frac{1}{2} \frac{d^{2}}{d\alpha^{2}}\left[\sigma^{2}(\alpha)f(\alpha)\right],
\]
is the adjoint operator to $\hat{\mathcal{L}}$ (and thus, 
\[
	\frac{d}{dt} f(\alpha,t) = \hat{\mathcal{L}}^{*} f(\alpha,t)
\]
is the Fokker-Planck equation for the probability density of $\hat{A}(t)$, $f(\alpha,t)$).

Thus,
\begin{equation}\label{STATIONARY}
	\psi(\alpha) = \frac{1}{Z}
	\frac{1}{\sigma^{2}(\alpha)} e^{\int \frac{2\mu(\alpha)}{\sigma^{2}(\alpha)}\, d\alpha}\, d\alpha
\end{equation}
where $Z = \int_{\alpha_{min}}^{\alpha_{max}}\frac{1}{\sigma^{2}(\alpha)} 
e^{\int \frac{2\mu(\alpha)}{\sigma^{2}(\alpha)}\, d\alpha}\, d\alpha$ is a normalising constant.

From the previous, we have
\begin{equation}\label{INTEGRAND}
%	\begin{aligned}
	\frac{2\mu(\alpha)}{\sigma^{2}(\alpha)}
%	&= n I_{\text{eq}}(\alpha) \left(\frac{\beta'(\alpha)}{\beta(\alpha)}-\frac{1}{\delta+\alpha+\gamma}\right) - \frac{\beta'(\alpha)}{\beta(\alpha)}\\
	= n I_{\text{eq}}(\alpha)\frac{R_{0}'(\alpha)}{R_{0}(\alpha)} - \frac{\beta'(\alpha)}{\beta(\alpha)}.
%	\end{aligned}
\end{equation}

Unfortunately, we can only compute its integral analytically in the case of a ``flat landscape'', when $R_{0}(\alpha) \equiv R_{0}$, independently of $\alpha$.  In this case, we have $\beta(\alpha) = R_{0}(\delta+\alpha+\gamma)$, $\beta'(\alpha) = R_{0}$, and
\[
	\frac{2\mu(\alpha)}{\sigma^{2}(\alpha)}
	= - \frac{\beta'(\alpha)}{\beta(\alpha)} 
	= - \frac{1}{\delta+\alpha+\gamma}
\]
which has integral $-\ln{(\delta+\alpha+\gamma)}$, so that 
\[
	\frac{1}{\sigma^{2}(\alpha)} e^{\int \frac{2\mu(\alpha)}{\sigma^{2}(\alpha)}\, d\alpha}\, d\alpha = \frac{1}{\sigma^{2}(\alpha)} \frac{1}{\delta+\alpha+\gamma}.
\]
In particular, in the case when $\eta(\alpha)$ and $\nu(\alpha)$ (and thus $\sigma^{2}(\alpha)$) are constants independent of $\alpha$, then we can integrate this to obtain $Z$ and thus a closed expression for the stationary distribution:
\begin{equation}\label{FLATLANDSCAPE}
	\psi(\alpha) =  
	 	\frac{1}{\ln{\left(\frac{\delta+\alpha_{max}+\gamma}{\delta+\alpha_{min}+\gamma}\right)}}
		\frac{1}{(\delta+\alpha+\gamma)}.
\end{equation}
	
In the next section we will show how one may obtain an analytical approximation in the large $n$ limit.   

\subsection{Asymptotic Approximation to Stationary Distribution}

The stationary distribution \eqref{STATIONARY} lends itself to an approximation by Laplace's method (see \eg \citet{Erdelyi:1956yu}), which tells us that if $\phi(x)$ is a twice differentiable function with a unique local maximum attained at $x_{0} \in (a,b)$, and $g(x)$ is continuous, then 
\[
	\int_{a}^{b} g(x) e^{n \phi(x)}\, dx = \sqrt{-\frac{2\pi}{n \phi''(x_{0})}} g(x_{0}) e^{n \phi(x_{0})}
	\left(1+ \BigO{\frac{1}{n}}\right).
\]
(To order $\frac{1}{n}$, one has
\begin{multline}\label{HIGHERORDER}
	\int_{a}^{b} g(x) e^{n \phi(x)}\, dx = \sqrt{-\frac{2\pi}{n \phi''(x_{0})}}  e^{n \phi(x_{0})}
	\Bigg(g(x_{0})\\
		+ \frac{1}{n}\left(\frac{1}{2} \frac{g''(x_{0})}{\phi''(x_{0})} 
		+ \frac{1}{8}\frac{g^{(4)}(x_{0})}{(\phi''(x_{0}))^{2}} 	
		+ \frac{1}{2} \frac{g'(x_{0})\phi^{(3)}(x_{0})}{(\phi''(x_{0}))^{2}}
		+ \frac{5}{24} \frac{g(x_{0})(\phi^{(3)}(x_{0}))^{2}}{(\phi''(x_{0}))^{3}}\right)
	+ \BigO{\frac{1}{n^{2}}}\Bigg);
\end{multline}
see \eg \citet{Bender+Orszag78}).

We can apply this to our stationary distribution by first observing from \eqref{INTEGRAND} that
\[
	 \frac{1}{\sigma^{2}(\alpha)} e^{\int \frac{2\mu(\alpha)}{\sigma^{2}(\alpha)}\, d\alpha}
	= g(\alpha) e^{n \phi(\alpha)}
	%n \frac{\lambda(R_{0}(\alpha)-1)}{\beta(\alpha) - \alpha}
		%\left(\frac{\beta'(\alpha)}{\beta(\alpha)}-\frac{1}{\delta+\alpha+\gamma}\right) 
		%-  \frac{\beta'(\alpha)}{\beta(\alpha)}
\]
%A(x) = 2n \left(\frac{\delta+x}{\beta(x)} - \ln{\left(\frac{\delta+x}{\beta(x)}\right)}\right) -  2\ln{\beta(x)}, 
% \frac{d}{d\alpha} 
for
\begin{equation}\label{PHI}
	\phi(\alpha) = \int I_{\text{eq}}(\alpha) \frac{R_{0}'(\alpha)}{R_{0}(\alpha)}\, d\alpha
\end{equation}
and 
\[
	g(\alpha) = \frac{1}{\sigma^{2}(\alpha)\beta(\alpha)},
\]
and we can thus apply Laplace's formula to compute $Z$. 

To find our $\alpha_{0}$, we note that by assumption, we have chosen $\alpha_{min}$  and $\alpha_{max}$ so that $\frac{\lambda(R_{0}(\alpha)-1)}{\beta(\alpha) - \alpha} > 0$ for all $\alpha \in [\alpha_{min},\alpha_{max}]$.  Thus, all values of  $\alpha_{0}$ such that $\phi'(\alpha_{0}) = 0$ satisfy $R_{0}'(\alpha) = 0$.

On the other hand, we recall that $R_{0}(\alpha) = \frac{\beta(\alpha)}{\delta+\alpha+\gamma}$, so that
\begin{equation}\label{RDER}
	R_{0}'(\alpha) = \frac{\beta'(\alpha)}{\delta+\alpha+\gamma}
		- \frac{\beta(\alpha)}{(\delta+\alpha+\gamma)^{2}},
\end{equation}
and thus also $R_{0}(\alpha_{0}) = \beta'(\alpha_{0})$.

We next observe that 
\begin{equation}\label{DDR0}
	R''(\alpha_{0}) = \frac{\beta'(\alpha_{0})\beta''(\alpha_{0})}{\beta(\alpha_{0})}
	= \frac{\beta''(\alpha_{0})}{\delta+\alpha_{0}+\gamma}.
\end{equation}
which depends on the choice of tradeoff function $\beta(\alpha)$.  We note briefly that if we assume $\beta(\alpha)$ is increasing, then this is a local maximum if and only if $\beta''(\alpha_{0}) < 0$.   In particular, if we assume that $R_{0}(\alpha)$ has a unique global maximum, then it must occur at $\alpha_{0}$.   We will henceforth make this assumption (\nb we don't have to assume this in general, but to apply Laplace's method, we require that $\alpha_{0}$ be a global maximum).

We then have 
\[
	\phi''(\alpha) =
	I_{\text{eq}}'(\alpha)
		 \frac{R_{0}'(\alpha)}{R_{0}(\alpha)} + I_{\text{eq}}(\alpha)
		\left( \frac{R_{0}''(\alpha)}{R_{0}(\alpha)}-\left( \frac{R_{0}'(\alpha)}{R_{0}(\alpha)}\right)^{2}\right)
\]
so that 
\[
	\phi''(\alpha_{0}) 
%		= n \frac{\lambda(R_{0}(\alpha_{0})-1)}{\beta(\alpha_{0}) - \alpha_{0}}
%		 \frac{R_{0}'(\alpha)}{R_{0}(\alpha)}
		= I_{\text{eq}}(\alpha_{0}) \frac{R_{0}''(\alpha_{0})}{R_{0}(\alpha_{0})}.
\]
Thus,
\[
	Z = \int_{\alpha_{min}}^{\alpha_{max}}\frac{1}{\sigma^{2}(\alpha)} 
		e^{\int \frac{2\mu(\alpha)}{\sigma^{2}(\alpha)}\, d\alpha}\, d\alpha
	= \sqrt{\frac{2\pi}{n I_{\text{eq}}(\alpha_{0}) \frac{\abs{R_{0}''(\alpha_{0})}}{R_{0}(\alpha_{0})}}} 
		e^{n \phi(\alpha_{0})}
		\left(\frac{1}{\sigma^{2}(\alpha_{0})\beta(\alpha_{0})} + \BigO{\frac{1}{n}}\right).
\]

We then have
\[
	\frac{1}{Z} \frac{1}{\sigma^{2}(\alpha)} e^{\int \frac{2\mu(\alpha)}{\sigma^{2}(\alpha)}\, d\alpha}
	= \frac{\sigma^{2}(\alpha_{0})\beta(\alpha_{0})}{\sigma^{2}(\alpha)\beta(\alpha)} \frac{1}{\sqrt{\frac{2\pi}{n I_{\text{eq}}(\alpha_{0})\frac{\abs{R_{0}''(\alpha_{0})}}{R_{0}(\alpha_{0})}}}}
		e^{n(\phi(\alpha)-\phi(\alpha_{0}))}\left(1 + \BigO{\frac{1}{n}}\right).
\]
Next, recalling that $\phi'(\alpha_{0}) = 0$, a Taylor expansion gives
\[
	n(\phi(\alpha)-\phi(\alpha_{0})) 
		= \frac{1}{2} n \left(\phi''(\alpha_{0})(\alpha-\alpha_{0})^{2} 
		+ \BigO{(\alpha-\alpha_{0})^{3}}\right).
\]
Now, for $\alpha$ close to $\alpha_{0}$, $(\alpha-\alpha_{0})^{3}$ will be quite small, so we can locally approximate our full stationary distribution by a process that is almost a Gaussian with mean $\alpha_{0}$ and variance 
\[
	\frac{1}{n I_{\text{eq}}(\alpha_{0}) \frac{\abs{R_{0}''(\alpha_{0})}}{R_{0}(\alpha_{0})}},
\]
except for a pre-factor of $\frac{\sigma^{2}(\alpha_{0})\beta(\alpha_{0})}{\sigma^{2}(\alpha)\beta(\alpha)}$, which skews the distribution:
\begin{multline*}
	\frac{1}{Z} \frac{1}{\sigma^{2}(\alpha)} e^{\int \frac{2\mu(\alpha)}{\sigma^{2}(\alpha)}\, d\alpha}
	 \approx 
	 \frac{\sigma^{2}(\alpha_{0})\beta(\alpha_{0})}{\sigma^{2}(\alpha)\beta(\alpha)} 
	 \frac{1}{\sqrt{\frac{2\pi}{n I_{\text{eq}}(\alpha_{0}) \frac{\abs{R_{0}''(\alpha_{0})}}{R_{0}(\alpha_{0})}}}}
	e^{-\frac{1}{2} n I_{\text{eq}}(\alpha_{0})\frac{\abs{R_{0}''(\alpha_{0})}}{R_{0}(\alpha_{0})}(\alpha-\alpha_{0})^{2}}\\
	= \frac{\eta(\alpha_{0})}{\eta(\alpha)} \frac{\nu(\alpha_{0})}{\nu(\alpha)} 	 \frac{\beta(\alpha_{0})}{\beta(\alpha)}\
	  \frac{1}{\sqrt{\frac{2\pi}{n I_{\text{eq}}(\alpha_{0}) \frac{\abs{R_{0}''(\alpha_{0})}}{R_{0}(\alpha_{0})}}}}
	e^{-\frac{1}{2} n I_{\text{eq}}(\alpha_{0})\frac{\abs{R_{0}''(\alpha_{0})}}{R_{0}(\alpha_{0})}(\alpha-\alpha_{0})^{2}}.
\end{multline*}
Again assuming that $\eta(\alpha)$ and $\nu(\alpha)$ are constants independent of $\alpha$, this simplifies to 
\begin{equation}\label{LAPLACEAPPROX}
	\psi_{\text{approx}}(\alpha) = \frac{\beta(\alpha_{0})}{\beta(\alpha)}\
	  \frac{1}{\sqrt{\frac{2\pi}{n I_{\text{eq}}(\alpha_{0}) \frac{\abs{R_{0}''(\alpha_{0})}}{R_{0}(\alpha_{0})}}}}
	e^{-\frac{1}{2} n I_{\text{eq}}(\alpha_{0})\frac{\abs{R_{0}''(\alpha_{0})}}{R_{0}(\alpha_{0})}(\alpha-		\alpha_{0})^{2}}.
\end{equation}

\begin{rem}
Applying \eqref{HIGHERORDER} with $f(\alpha) g(\alpha)$ in place of $g(\alpha)$ for an arbitrary differentiable function $f(\alpha)$, we may estimate the error in integrating $f(\alpha)$ versus the true and approximate densities for the stationary distribution:
\begin{multline*}
	\abs{\frac{1}{Z}  \int_{\alpha_{min}}^{\alpha_{max}} f(\alpha) g(\alpha) 
		e^{-n \int \phi(\alpha)\, d\alpha}\, d\alpha
	-  \frac{1}{\sqrt{\frac{2\pi}{n |\phi''(\alpha_{0})|}}g(\alpha_{0})}
	\int_{\alpha_{min}}^{\alpha_{max}} f(\alpha) g(\alpha) 
		e^{-n\frac{\phi''(\alpha_{0})}{2}(\alpha-\alpha_{0})^{2}}\, d\alpha}\\
	= \frac{1}{n} \abs{\frac{1}{2} \frac{f'(\alpha_{0}) \phi'''(\alpha_{0})}{(\phi''(\alpha_{0}))^{2}}
	- \frac{1}{2} \frac{f(\alpha_{0}) g''(\alpha_{0})}{g(\alpha_{0}) |\phi''(\alpha_{0})|}
	}%- \frac{1}{8} \frac{f'(\alpha_{0}) g(\alpha_{0}) \phi^{(4)}(\alpha_{0})}{(\phi'''(\alpha_{0}))^{2}}}
	+ \BigO{\frac{1}{n^{2}}}
\end{multline*}
From this, see that in the bounded Lipschitz metric on probability measures, the difference between the true stationary approximation $\psi(d\alpha)$ and the Laplace approximation $\psi_{\text{approx}}(d\alpha)$ satisfies
\[
	\lim_{n \to \infty} d_{\text{BL}}(\psi,\psi_{\text{approx}})
	\leq  \frac{1}{2} \abs{\frac{\phi'''(\alpha_{0})}{(\phi''(\alpha_{0}))^{2}}} + 
	\frac{1}{2} \abs{\frac{g''(\alpha_{0})}{g(\alpha_{0}) |\phi''(\alpha_{0})|}}.
	%+ \BigO{\frac{1}{n^{2}}}
\]
Unfortunately, we cannot similarly bound the total variation distance: for any $M > 0$ the function $f(\alpha) := e^{-M (\alpha-\alpha_{min})^{2}}$ satisfies 
\[
	\sup_{\alpha \in [\alpha_{min},\alpha_{max}]} \abs{f(\alpha)} \leq 1,
\]
but, since 
\[
	\sup_{\alpha \in [\alpha_{min},\alpha_{max}]} \abs{f(\alpha)} = \sqrt{2M} e^{-\frac{1}{2}},
\]
the bound 
\[
	\frac{1}{n} \abs{\frac{1}{2} \frac{f'(\alpha_{0}) \phi'''(\alpha_{0})}{(\phi''(\alpha_{0}))^{2}}
	- \frac{1}{2} \frac{f(\alpha_{0}) g''(\alpha_{0})}{g(\alpha_{0}) |\phi''(\alpha_{0})|}}
\] 
may be made arbitrarily large. 
\end{rem}

\subsection{Mean \& Mode of the Stationary Distribution}

We observe that, whilst the stationary distribution is closely related to a Gaussian centred at 
$\alpha_{0}$, the value of the virulence that maximises $R_{0}(\alpha)$, the full stationary distribution does not have mean $\alpha_{0}$, and $\alpha_{0}$ is not the most probable value of the virulence.

\subsubsection{Estimating the Mean}

Applying Laplace's method with $g(\alpha)$ replaced by $\alpha g(\alpha)$ allows us to estimate the mean of the stationary distribution; to lowest order, we have 
\[
	\int_{\alpha_{min}}^{\alpha^{max}} \alpha g(\alpha) e^{n \phi(\alpha)}\, d\alpha 
	= \sqrt{-\frac{2\pi}{n \phi''(\alpha_{0})}} \alpha_{0} g(\alpha_{0}) e^{n \phi(\alpha_{0})}
	\left(1+ \BigO{\frac{1}{n}}\right),
\]
so that, normalising by our prior estimate of $Z$, we find that the mean is $\alpha_{0}$ to order $\BigO{\frac{1}{n}}$.  To observe the effects of a finite population size, we can the use higher order corrections to Laplace's method to obtain the $\BigO{\frac{1}{n}}$ terms in both the integral above and in $Z$; we omit the calculations, but remark that the mean can then be shown to be
\begin{multline*}
	\alpha_{0} - \frac{1}{n} \left(\frac{g'(\alpha_{0})}{g(\alpha_{0})\phi''(\alpha_{0})}
		- \frac{\phi'''(\alpha_{0})}{2 (\phi''(\alpha_{0}))^{2}}\right) 
		+ o{\textstyle\left(\frac{1}{n}\right)}\\
		= \alpha_{0} + \frac{1}{n I_{\text{eq}}(\alpha_{0}) \frac{\abs{R_{0}''(\alpha_{0})}}{R_{0}(\alpha_{0})}} 
		\left(\frac{g'(\alpha_{0})}{g(\alpha_{0})}
		- \frac{\phi'''(\alpha_{0})}{2 \phi''(\alpha_{0})}\right) 
		+ o{\textstyle\left(\frac{1}{n}\right)}
\end{multline*}
where
\begin{align*}
	\frac{g'(\alpha_{0})}{g(\alpha_{0})} 
%	&= \frac{d}{d\alpha}\bigg\vert_{\alpha = \alpha_{0}} \ln{g(\alpha)}\\
	&=  -\frac{\eta'(\alpha_{0})}{\eta(\alpha_{0})} - \frac{\nu'(\alpha_{0})}{\nu(\alpha_{0})} 
	- \frac{\beta'(\alpha_{0})}{\beta(\alpha_{0})} \\
	&=  -\frac{\eta'(\alpha_{0})}{\eta(\alpha_{0})} - \frac{\nu'(\alpha_{0})}{\nu(\alpha_{0})} 
	- \frac{1}{\delta+\alpha_{0}+\gamma},
	%+ \frac{\lambda(R_{0}(\alpha_{0})-1)}{\beta(\alpha_{0}) - \alpha_{0}}
	%\frac{\nu'(\alpha_{0})}{\nu(\alpha_{0})} - \frac{1}{\nu(\alpha_{0})}\\
	%&= 2\frac{\beta'(\alpha_{0})}{\beta(\alpha_{0})}  + \frac{1}{n} I_{\text{eq}}(\alpha_{0}) 
%		\frac{\nu'(\alpha_{0})}{\nu(\alpha_{0})} - \frac{1}{\nu(\alpha_{0})}
\end{align*}
which simplifies to $- \frac{1}{\delta+\alpha_{0}+\gamma}$ in the case when $\eta$ and $\nu$ are independent of $\alpha$,
and
\[
	\frac{\phi'''(\alpha_{0})}{\phi''(\alpha_{0})} 
	=  \frac{R_{0}'''(\alpha_{0})}{R_{0}''(\alpha_{0})} 
	- \frac{\beta'(\alpha_{0})-1}{\beta(\alpha_{0}) - \alpha_{0}} = \frac{I_{\text{eq}}'(\alpha_{0})}{I_{\text{eq}}(\alpha_{0})} - \frac{R_{0}'''(\alpha_{0})}{\abs{R_{0}''(\alpha_{0})}}.
\]
We note that this order $\BigO{\frac{1}{n}}$ term is proportional to the variance of the best-fit Gaussian of the previous section.

One may similarly show that the variance and skewness of the stationary distribution are
\[
	\frac{1}{n} \frac{1}{\phi''(\alpha_{0})} + o{\textstyle\left(\frac{1}{n}\right)}
	= \frac{1}{n I_{\text{eq}}(\alpha_{0}) \frac{\abs{R_{0}''(\alpha_{0})}}{R_{0}(\alpha_{0})}} 
		+ o{\textstyle\left(\frac{1}{n}\right)}
\]
and
\[
	 \frac{1}{n} \frac{3 g'(\alpha_{0})}{g(\alpha_{0})(\phi''(\alpha_{0}))^{2}} 
	 + o{\textstyle\left(\frac{1}{n}\right)}
\]
respectively.  We note that if $\eta'(\alpha_{0}) \geq 0$ and $\nu'(\alpha_{0}) \geq 0$ (for example, if both rates are independent of $\alpha$), then the stationary distribution has negative skew.

\subsubsection{Estimating the Mode}

We begin by observing the density function of the stationary distribution \eqref{STATIONARY} may be written as
\[
	 e^{\int \frac{2\mu(\alpha)}{\sigma^{2}(\alpha)}\, d\alpha - 
		\ln{\frac{\sigma^{2}(\alpha)}{Z}}} \, d\alpha
\]
and thus has its maximum where 
\[
	-\frac{2\mu(\alpha)}{\sigma^{2}(\alpha)} 
		= \frac{d}{d\alpha} \ln{\frac{\sigma^{2}(\alpha)}{Z}},
\]
or equivalently,  for the value of $\alpha^{\star}$ such that 
\[
	n\phi'(\alpha^{\star}) =  -\frac{g'(\alpha^{\star})}{g(\alpha^{\star})} 
\]
where $\phi(\alpha)$ is given by \eqref{PHI}.   

When $\sigma^{2}(\alpha)$ is constant (\ie the variance of the mutation kernel is independent of the resident variance) then this reduces to 
\[
	\frac{2\mu(\alpha^{\star})}{\sigma^{2}} = 0,
\]
and thus $\mu(\alpha^{\star}) = 0$.  Recalling \eqref{DRIFT}, this tells us that 
\[
	\frac{1}{\delta+\alpha^{\star}+\gamma} - \left(1-\frac{1}{n I_{\text{eq}}(\alpha^{\star})} \right)
	\frac{\beta'(\alpha^{\star})}{\beta(\alpha^{\star})} = 0,
\]
and thus
\begin{equation}\label{MVT}
	\beta'(\alpha^{\star}) = R_{0}(\alpha^{\star})\left(1+\frac{1}{n I_{\text{eq}}(\alpha^{\star})-1}\right).
\end{equation}
On the other hand, \eqref{RDER} tells us that
\[
		\frac{R_{0}'(\alpha)}{R_{0}(\alpha)} = \frac{\beta'(\alpha)}{\beta(\alpha)}
			- \frac{1}{\delta+\alpha+\gamma},
\]
so that 
\[
	\frac{R_{0}'(\alpha^{\star})}{R_{0}(\alpha^{\star})} 
		= \frac{1}{n I_{\text{eq}}(\alpha^{\star})} \frac{\beta'(\alpha^{\star})}{\beta(\alpha^{\star})}
		> 0,
\]
since $\beta(\alpha)$ is increasing.  In particular, since $R_{0}'(\alpha)$ is maximized at $\alpha_{0}$, we see immediately that $\alpha^{\star} < \alpha_{0}$. 

Even in this special case, we cannot solve for $\alpha^{\star}$ exactly.   Instead we will seek a perturbative solution to 
\begin{equation}\label{PHIPERTURB}
	\phi'(\alpha^{\star}) = -\frac{1}{n} \frac{g'(\alpha^{\star})}{g(\alpha^{\star})} 
\end{equation}
in the general case. We already know that $\phi'(\alpha_{0}) = 0$; we thus seek a solution of the form
\[
	\alpha^{\star} = \alpha_{0} + \sum_{i=1}^{\infty} \frac{\alpha_{i}}{n^{i}}.
\]
Substituting this into \eqref{PHIPERTURB} and Taylor expanding right and left, we find that
\[
	\phi'(\alpha_{0}) + \frac{1}{n} \phi''(\alpha_{0})  \alpha_{1}
	= -\frac{1}{n} \frac{g'(\alpha_{0})}{g(\alpha_{0})} 
		+ o{\textstyle\left(\frac{1}{n}\right)}
\]
\ie that 
\[%begin{multline*}
	\alpha^{\star} =  \alpha_{0} 
		- \frac{1}{n} \frac{g'(\alpha_{0})}{g(\alpha_{0})\phi''(\alpha_{0})}	
		+ o{\textstyle\left(\frac{1}{n}\right)}
	= \alpha_{0} + \frac{1}{n I_{\text{eq}}(\alpha_{0}) \frac{\abs{R_{0}''(\alpha_{0})}}{R_{0}(\alpha_{0})}} 
		\frac{g'(\alpha_{0})}{g(\alpha_{0})}
		+ o{\textstyle\left(\frac{1}{n}\right)}
\]%end{multline*}
which may be expanded using the expression for $\frac{g'(\alpha_{0})}{g(\alpha_{0})}$ given in the previous sections.  In particular, when when $\eta(\alpha)$ and $\nu(\alpha)$ (and thus $\sigma^{2}(\alpha)$) are constants independent of $\alpha$, we have that 
\begin{equation}\label{MODE}
	\alpha^{\star} =  \alpha_{0} 
		- \frac{1}{n I_{\text{eq}}(\alpha_{0}) \frac{\abs{R_{0}''(\alpha_{0})}}{R_{0}(\alpha_{0})}} 
		\frac{1}{\delta+\alpha_{0}+\gamma} + o{\textstyle\left(\frac{1}{n}\right)},
\end{equation}
so that to first order, the modal virulence is the  virulence maximizing $R_{0}(\alpha)$ less the product of the variance of the best-fit Gaussian and the expected infectious period when the virulence is $\alpha_{0}$. 

\section{Some Rigorous Demonstrations}

\subsection{Proof of Proposition \ref{prop:Pi_i}}\label{BOUNDING}

Substituting $f$ with $\Pi_i$ in It\^o's formula \eqref{ITOP} for jump processes yields
\begin{multline}
\label{ITOP2}
	P^{(n)}_{i}(t) = P^{(n)}_{i}(0)+\int_0^t \sum_{j = 1}^{d} 
	\frac{\partial \Pi_{i}}{\partial x_{j}}(\bar{\bm{I}}^{(n)}(s)) F^{(n)}_{j}(\bar{\bm{E}}^{(n)}(s))
	+ \frac{1}{2} \sum_{j,k = 1}^{d} a^{(n)}_{jk}(\bar{\bm{E}}^{(n)}(s))
		\frac{\partial \Pi_{i}}{\partial x_{j}\partial x_{k}} (\bar{\bm{I}}^{(n)}(s))
	 \, ds\\
	+ \frac{1}{n} \int_{0}^{t}  \sum_{j = 1}^{d} \frac{\partial \Pi_{i}}{\partial x_{j}}(\bar{\bm{I}}^{(n)}(s))\, 
		dM^{(n)}_{j}(s) + \varepsilon^{(n)}_{i}(t),
\end{multline}
where $\varepsilon^{(n)}_{i}(t)$ can be expressed thanks to Equation \eqref{ERROR} as
\begin{multline*}
	\varepsilon^{(n)}_{i}(t) =  \sum_{s < t} \Pi_{i}(\bar{\bm{I}}^{(n)}(s)) - \Pi_{i}(\bar{\bm{I}}^{(n)}(s-)) 
	- \sum_{j = 1}^{d} \frac{\partial \Pi_{i}}{\partial x_{j}} (\bar{\bm{I}}^{(n)}(s-)) \Delta \bar{I}^{(n)}_{j}(s) \\
	- \frac{1}{2} \sum_{j,k = 1}^{d}  \frac{\partial \Pi_{i}}{\partial x_{j}\partial x_{k}}(\bar{\bm{I}}^{(n)}(s-))
	\Delta \bar{I}^{(n)}_{j}(s) \Delta \bar{I}^{(n)}_{k}(s).
\end{multline*}

%\begin{rem}
%Here, the integrals with respect to the $M^{(n)}_{j}(t)$ are Riemann-Stieltjes integrals, where, as with the It\^o integral with respect to Brownian motion, the integrand is always evaluated at the left endpoint of each interval in the partition, and the limit over approximating sums is in probability.
%\end{rem}

Now some elementary computations yield
\[
	\frac{\partial \Pi_{i}}{\partial x_{j}} 
	= \frac{1}{\sum_{l=1}^{d} x_{l}}\left(\mathbbm{1}_{\{i=j\}} - \frac{x_{i}}{\sum_{l=1}^{d} x_{l}}\right)
	\quad \text{and} \quad
	\frac{\partial \Pi_{i}}{\partial x_{j}\partial x_{k}}
	 = -\frac{1}{\left(\sum_{l=1}^{d} x_{l}\right)^{2}}\left(\mathbbm{1}_{\{i=j\}} 
	 	+ \mathbbm{1}_{\{i=k\}} - 2\frac{x_{i}}{\sum_{l=1}^{d} x_{l}}\right),
\]
where $\mathbbm{1}_{\{i=j\}}$ is equal to 1 if $i=j$ and 0 otherwise.
Substituting these and \eqref{AA} into \eqref{ITOP2} yields after some simplification Equation \eqref{PIF} in Proposition \ref{prop:Pi_i}.

Now it remains to prove that $n^2\varepsilon^{(n)}_{i}(t)$ is uniformly bounded with high probability.  
Taylor's theorem tells us that
\begin{multline*}
 	\Pi_{i}(\bar{\bm{I}}^{(n)}(s)) - \Pi_{i}(\bar{\bm{I}}^{(n)}(s-)) - \sum_{j = 1}^{d}  
	\frac{\partial \Pi_{i}}{\partial x_{j}}(\bar{\bm{I}}^{(n)}(s-)) \Delta \bar{I}^{(n)}_{j}(s) 
	- \frac{1}{2} \sum_{j,k = 1}^{d}  
	\frac{\partial \Pi_{i}}{\partial x_{j} \partial x_{k}}(\bar{\bm{I}}^{(n)}(s-))
		\Delta \bar{I}^{(n)}_{j}(s) \Delta \bar{I}^{(n)}_{k}(s)\\
	= \sum_{j,k=1}^{d} g_{ij}(\bar{\bm{I}}^{(n)}(s),\bar{\bm{I}}^{(n)}(s-))
		\Delta \bar{I}^{(n)}_{j}(s) \Delta \bar{I}^{(n)}_{k}(s),
\end{multline*}
where the functions $g_{ij}(\bm{x},\bm{y})$ satisfy 
\[
	\lim_{\bm{x} \to \bm{y}} g_{ij}(\bm{x},\bm{y}) = 0 
\]
uniformly on compact sets. 

Now, recalling 
\[
	\bar{I}^{(n)}_{i}(t) = \begin{multlined}[t] \bar{I}^{(n)}_{i}(0) + \frac{1}{n} P_{-\bm{e}_{0} + \bm{e}_{i}}\left(n \int_{0}^{t} \frac{\beta^{(n)}_{i} \bar{S}^{(n)}(s) \bar{I}^{(n)}_{i}(s) }{\bar{N}^{(n)}(s)}\, ds\right) \\
		- \frac{1}{n} P_{-\bm{e}_{i} - \bm{e}_{d+1}}\left(n \int_{0}^{t}  (\delta^{(n)} + \alpha^{(n)}_{i}) \bar{I}^{(n)}_{i}(s)\, ds\right)
			 - \frac{1}{n} P_{-\bm{e}_{i}}\left(n \int_{0}^{t}  \gamma^{(n)}_{i} \bar{I}^{(n)}_{i}(s)\, ds\right), \end{multlined}
\]
we see that $\Delta \bar{I}^{(n)}_{i}(s)$ is non-zero only at the jump-times of the Poisson processes and are always of magnitude $\frac{1}{n}$.  In particular, since $f_{i}(\bm{x})$ is smooth outside of a neighbourhood of $\bm{0}$, we can conclude that $g_{ij}$ is bounded above by a constant multiple of $\norm{\bm{x}-\bm{y}}$; this allows us to conclude that $|g_{ij}(\bar{\bm{I}}^{(n)}(s),\bar{\bm{I}}^{(n)}(s-))| \leq \frac{C}{n}$ and 
\[
	|\varepsilon^{(n)}_{i}(t)| 
	\leq \frac{C}{n} \sum_{s < t} \sum_{j,k=1}^{d} |\Delta \bar{I}^{(n)}_{j}(s)||\Delta \bar{I}^{(n)}_{k}(s)|.
\]
Further, $|\Delta \bar{I}^{(n)}_{j}(s)| |\Delta \bar{I}^{(n)}_{k}(s)|$ is non-zero if some pair of processes $P_{j,\cdot}$ and $P_{k,\cdot}$ jump simultaneously; if $j \neq k$, the Poisson processes are independent, and probability of such an event in an interval $[t,t+\Delta t)$ is $\BigO{\Delta t^2}$, and thus tends to 0 if as $\Delta t \to 0$ \ie  $|\Delta \bar{I}^{(n)}_{j}(s)| |\Delta \bar{I}^{(n)}_{k}(s)| \neq 0$ if and only if $j \neq k$.   Moreover, the processes $P_{S,j}$, $P_{j,-}$ and $P_{-\bm{e}_{i}}$ are also independent, and thus cannot jump simultaneously, so that $|\Delta \bar{I}^{(n)}_{j}(s)| |\Delta \bar{I}^{(n)}_{k}(s)| \neq 0$ (and is thus equal to $\frac{1}{n^2}$) at exactly the jump times of these Poisson processes; \ie
\[
	\sum_{s < t} \sum_{j,k=1}^{d} |\Delta \bar{I}^{(n)}_{j}(s)| |\Delta \bar{I}^{(n)}_{k}(s)| 
	= \begin{multlined}[t]  \frac{1}{n^{2}} P_{-\bm{e}_{0} + \bm{e}_{i}}\left(n \int_{0}^{t} \frac{\beta^{(n)}_{i} \bar{S}^{(n)}(s) \bar{I}^{(n)}_{i}(s) }{\bar{N}^{(n)}(s)}\, ds\right)\\
	+ \frac{1}{n^{2}} P_{-\bm{e}_{i} - \bm{e}_{d+1}}\left(n \int_{0}^{t}  (\delta^{(n)} + \alpha^{(n)}_{i}) \bar{I}^{(n)}_{i}(s)\, ds\right)\\
	+ \frac{1}{n^{2}} P_{-\bm{e}_{i}}\left(n \int_{0}^{t}  \gamma^{(n)}_{i} \bar{I}^{(n)}_{i}(s)\, ds\right)
		\end{multlined}
\]
We seek an upper bound on this quantity.  To that end, we begin by observing that $\bar{S}^{(n)}(t)$ and each $\bar{I}^{(n)}_{i}(t)$ is bounded above by $\bar{N}^{(n)}(t)$, and that
\[
	\bar{N}^{(n)}(t) \leq \bar{N}^{(n)}(0) + \frac{1}{n} P_{\bm{e}_{0} + \bm{e}_{d+1}}(n \lambda^{(n)} t),
\]
and, since this Poisson process is increasing in $t$, we have that for $t \leq T$,  
\[
	\bar{N}^{(n)}(t) \leq \bar{N}^{(n)}(0) + \frac{1}{n} P_{\bm{e}_{0} + \bm{e}_{d+1}}(n \lambda^{(n)} T).
\]
Now, 
\[
	\mathbb{E}\left[\frac{1}{n} P_{\bm{e}_{0} + \bm{e}_{d+1}}(n \lambda^{(n)} T)\right] = \lambda^{(n)} T,
\]
and, applying Chebyshev's inequality, we see that for any $C > 0$, 
\begin{multline*}
	\mathbb{P}\left\{\abs{\frac{1}{n} P_{\bm{e}_{0} + \bm{e}_{d+1}}(n \lambda^{(n)} T) - \lambda^{(n)} T} > C\right\}
	= \mathbb{P}\left\{|P_{\bm{e}_{0} + \bm{e}_{d+1}}(n \lambda^{(n)} T) - n\lambda^{(n)} T| > C n\right\}\\
	\leq \frac{\mathbb{E}\left[\left(P_{\bm{e}_{0} + \bm{e}_{d+1}}(n \lambda^{(n)} T) - n\lambda^{(n)} T \right)^{2}\right]}{C^{2}n^{2}} =  \frac{\lambda^{(n)} T}{C^{2} n} \to 0
\end{multline*}
as $n \to \infty$. Thus, for any fixed $T > 0$,  $\bar{N}^{(n)}(t)$ is bounded above and below on $[0,T]$ by \eg $\bar{N}^{(n)}(0) + \lambda^{(n)} T \pm 1 $, with probability that approaches one as $n$ tends to infinity.  

Thus, for example, 
\[
	\int_{0}^{t} \frac{\beta^{(n)}_{i} \bar{S}^{(n)}(s) \bar{I}^{(n)}_{i}(s) }{\bar{N}^{(n)}(s)}\, ds \leq \frac{\beta^{(n)}_{i} (\bar{N}^{(n)}(0) + \lambda^{(n)} T + 1)^{2} T}{\bar{N}^{(n)}(0) + \lambda^{(n)} T - 1},
\]
and we may proceed exactly as above to conclude that fo $t \leq T$,
\[	
	\frac{1}{n} P_{-\bm{e}_{0} + \bm{e}_{i}}\left(n \int_{0}^{t} \frac{\beta^{(n)}_{i} \bar{S}^{(n)}(s) \bar{I}^{(n)}_{i}(s) }{\bar{N}^{(n)}(s)}\, ds\right)
\]
is bounded above with probability approaching 1 as $n \to \infty$, and similarly for the other Poisson processes, from which we conclude that there exists some constant $C'$ such that 
\[
	|\varepsilon^{(n)}_{i}(t)| \leq \frac{C'}{n^{2}}
\]
with high probability.

\subsection{Proof of Proposition \ref{SSPF}}\label{PROOF}

In this section, we will make the heuristic argument of Section \ref{STRONG} rigorous using the technique of coupling (see \citet{Ball1995} for a very good introduction):  we start by constructing birth and death processes that bound $I^{(n)}_{1}(t)$ above and below provided $\bar{S}^{(n)}(t)$ and $\bar{N}^{(n)}(t)$ remain within $\varepsilon$ of the endemic equilibrium, and such that the upper and lower bounds approach one another as $\varepsilon \to 0$.  Finally, we show that the probability that $\bar{S}^{(n)}(t)$ and $\bar{N}^{(n)}(t)$ depart a $\varepsilon$-neighbourhood of the endemic equilibrium before strain $2$ has either successfully invaded or gone extinct goes to 0 as $n \to \infty$.  Since $\varepsilon$ is arbitrary, we recover the na\"\i ve branching process result.

\subsubsection{Macroscopic Initial Frequencies}
In this section we prove the following extinction of part (i) of Proposition \ref{SSPF}, where the population is infected with $d \geq 2$ strains.
\begin{prop}
Suppose that $R_{0,1} > R_{0,i}$ for all $i > 1$.  If $\bar{I}^{(n)}_{1}(0) \to I_{1}(0) > 0$, then all strains $i > 1$ will go extinct with high probability.   
\end{prop}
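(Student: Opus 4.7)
The strategy is to combine the deterministic law of large numbers (Proposition \ref{KURTZ}) with a sub-critical branching-process comparison. The key observation is that, at the endemic equilibrium $\bm{E}^{\star,1}$, one has $S^{\star,1}/N^{\star,1}=1/R_{0,1}$, so for every $i>1$ the per-capita net growth rate of strain $i$ is
\[
\beta_i\,\frac{S^{\star,1}}{N^{\star,1}}-(\delta+\alpha_i+\gamma_i)=(\delta+\alpha_i+\gamma_i)\!\left(\frac{R_{0,i}}{R_{0,1}}-1\right)<0,
\]
so once the process is close to $\bm{E}^{\star,1}$ every minority strain behaves like a sub-critical birth--death process and must die out.

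To realise this, fix $\varepsilon>0$ small enough that whenever $\|\bar{\bm{E}}^{(n)}(t)-\bm{E}^{\star,1}\|<2\varepsilon$ one has $\beta_i\bar{S}^{(n)}(t)/\bar{N}^{(n)}(t)\le(1-\eta)(\delta+\alpha_i+\gamma_i)$ for every $i>1$ and some $\eta=\eta(\varepsilon)>0$. The Bremermann--Thieme argument already given in the text shows that the deterministic solution $\bm{E}(t)$ converges to $\bm{E}^{\star,1}$ from any initial state with $I_{1}(0)>0$, so I can pick $T^{*}$ with $\|\bm{E}(T^{*})-\bm{E}^{\star,1}\|<\varepsilon/2$. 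Applying Proposition \ref{KURTZ} on $[0,T^{*}]$ then gives, with probability tending to one, $\|\bar{\bm{E}}^{(n)}(T^{*})-\bm{E}^{\star,1}\|<\varepsilon$, and in particular $\bar{I}^{(n)}_{i}(T^{*})<\varepsilon$ for each $i>1$.

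Next, define the exit time $\tau^{(n)}:=\inf\{t\ge T^{*}:\|\bar{\bm{E}}^{(n)}(t)-\bm{E}^{\star,1}\|\ge 2\varepsilon\}$. On the random interval $[T^{*},\tau^{(n)}]$ each $I^{(n)}_{i}$ can be sample-path dominated, via a standard thinning coupling of the Poisson processes driving births and deaths, by a linear birth--death process $Z_{i}$ with per-capita birth rate $(1-\eta)(\delta+\alpha_{i}+\gamma_{i})$ and death rate $\delta+\alpha_{i}+\gamma_{i}$, started from $Z_{i}(T^{*})=I^{(n)}_{i}(T^{*})\le\varepsilon n$. Since $Z_{i}$ is strictly sub-critical, $\mathbb{E}[Z_{i}(T^{*}+t)]\le\varepsilon n\,e^{-\eta(\delta+\alpha_{i}+\gamma_{i})t}$, so Markov's inequality yields $Z_{i}(T^{*}+C\log n)=0$ with probability tending to one, for any sufficiently large constant $C=C(\varepsilon,\eta)$.

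The main technical obstacle is to show that $\tau^{(n)}>T^{*}+C\log n$ with probability tending to one, so that the domination is actually valid long enough for each $Z_{i}$ to become extinct; this goes beyond Proposition \ref{KURTZ}, which only controls the process on fixed time intervals. I would argue it using the local exponential stability of $\bm{E}^{\star,1}$ (whose linearisation in the $(\bar{S},\bar{I}_{1},\bar{N})$-coordinates is Hurwitz, as is consistent with the global asymptotic stability of \eqref{SYSTEM}) together with the fact that the noise terms in \eqref{SDEE} are of order $n^{-1/2}$: a Freidlin--Wentzell large-deviation bound, or more elementarily a quadratic Lyapunov function combined with an exponential martingale inequality applied to the compensated Poisson noises $M^{(n)}_{\bm{l}}$, shows that the expected exit time from the $2\varepsilon$-ball around $\bm{E}^{\star,1}$ grows exponentially in $n$, and in particular is $\gg\log n$. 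Combining the three pieces yields extinction of every strain $i>1$ by time $T^{*}+C\log n$ with probability tending to one.
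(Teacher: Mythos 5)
Your architecture is the same as the paper's: bring the process into an $\varepsilon$-neighbourhood of $\bar{\bm{E}}^{\star,1}$ (the paper simply assumes this as the starting point, invoking the deterministic analysis, rather than running Proposition \ref{KURTZ} up to a time $T^{*}$ as you do), couple each minority strain to a strictly subcritical linear birth--death process while the process stays in that neighbourhood, kill the dominating processes in time $O(\log n)$ by Markov's inequality, and then do the real work of showing the exit time exceeds $O(\log n)$. You correctly identify that last step as the crux. Where you diverge is in how you propose to prove it: you invoke Freidlin--Wentzell large deviations or a Lyapunov function plus an exponential martingale inequality to get an exit time that is exponential in $n$. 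The paper does something lighter: it linearises via Duhamel's principle around the stable fixed point, bounds $\Vert e^{tA}\Vert \le C e^{-\alpha t}$, and controls the convolved noise integrals $\int_{0}^{t} e^{-\alpha(t-s)}\,dM^{(n)}_{\bm{l}}(s)$ with Doob's $L^{2}$ maximal inequality (Lemma \ref{intest}), which only yields persistence on time scales $t_{n} \ll n$ --- but that is all that is needed since extinction occurs by time $O(\log n)$. Your route is valid but proves more than necessary with heavier machinery. One small point your sketch leaves implicit: the exit from the $2\varepsilon$-ball could in principle be triggered by a minority strain $\bar{I}^{(n)}_{i}$ itself climbing to $2\varepsilon$ before dying, and a fixed-time Markov bound on $Z_{i}$ does not control its running supremum. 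The paper handles this with the gambler's-ruin martingale $a_{i}^{Z^{+}_{i}(t)}$ (giving hitting probability $(a_{i}^{\lfloor B\varepsilon n\rfloor}-1)/(a_{i}^{\lfloor\varepsilon n\rfloor}-1)\to 0$ with an initial buffer $B<1$); the same exponential martingale closes this in your setup, since you start $Z_{i}$ at level $\le \varepsilon n$ and need it to avoid $2\varepsilon n$, but it should be stated.
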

In light of the results in Section \ref{ASYMPT}, we might, without loss of generality, assume that at time $t = 0$, the process is in some neighbourhood of endemic fixed point $\bar{\bm{E}}^{\star,1} = (\bar{S}^{\star,1},\bar{I}^{\star,1}_{1},\ldots,\bar{I}^{\star,1}_{d},\bar{N}^{\star,1})$, as defined by \eqref{EQ}.
%where 
%\[
%	\bar{S}^{\star} = \frac{\lambda}{\delta R_{0,1}}
%		\left(1-\frac{\alpha_{1}(R_{0,1}-1)}{\beta_{1}-\alpha_{1}}\right), \quad
%	\bar{I}^{\star}_{i} = \begin{cases} 
%		\frac{\lambda(R_{0,1}-1)}{\beta_{1}-\alpha_{1}} & i = 1,\\ 
%		0 &  i > 1,
%	\end{cases} \quad \text{and} \quad
%	\bar{N}^{\star} = R_{0,1}\bar{S}^{\star}.
%\]
In particular, fix $\varepsilon > 0 $ such that 
\[
	1 < \frac{\bar{S}^{\star,1} - \varepsilon}{\bar{N}^{\star,1} + \varepsilon} 
	< \frac{1}{R_{0,1}}  = \frac{\bar{S}^{\star,1}}{\bar{N}^{\star,1}} 
	< \frac{\bar{S}^{\star,1} + \varepsilon}{\bar{N}^{\star,1} - \varepsilon} 
	< \frac{1}{R_{0,2}}
	< \frac{1}{R_{0,3}} 
	< \cdots 
	< \frac{1}{R_{0,d}}.
\]
For reasons that will become transparent below, we will assume that 
\[
	\|\bar{\bm{E}}^{(n)}(0) - \bar{\bm{E}}^{\star,1}\| < B \varepsilon.
\]
for some constant $0 < B < 1$ that will be determined later.

Let 
\[
	\tau^{(n)}_{\varepsilon} 
	\defn \inf\left\{ t : \|\bar{\bm{E}}^{(n)}(t) - \bar{\bm{E}}^{\star,1}\| > \varepsilon\right\},
\]
where we adopt the convention that $\tau^{(n)}_{\varepsilon} = \infty$ if $\|\bar{\bm{E}}^{(n)}(t) - \bar{\bm{E}}^{\star,1}\| < \varepsilon$ for all $t$.   

Provided $t < \tau^{(n)}_{\varepsilon}$, we have that 
\[
	\frac{\bar{S}^{\star,1} - \varepsilon}{\bar{N}^{\star,1} + \varepsilon} 
	< \frac{\bar{S}^{(n)}(t)}{\bar{N}^{(n)}(t)} 
	< \frac{\bar{S}^{\star,1} + \varepsilon}{\bar{N}^{\star,1} - \varepsilon}.
\]
Next, fix $\eta > 0$ sufficiently small that 
\[
	 (\beta_{i} + \eta) \frac{\bar{S}^{\star,1} + \varepsilon}{\bar{N}^{\star,1} - \varepsilon} 
	 	- (\delta+\alpha_{i}+\gamma_{i} - 3 \eta)
	 = \beta_{i} \left(\frac{\bar{S}^{\star,1} + \varepsilon}{\bar{N}^{\star,1} - \varepsilon} 
	 	- \frac{1}{R_{0,i}}\right)  
		+ \eta \left(3 + \frac{\bar{S}^{\star,1} + \varepsilon}{\bar{N}^{\star,1} - \varepsilon} \right)
	 < 0 
\]

Since $I^{(n)}_{i}(0) \to I_{i}(0)$ for $i > 1$ and $\beta^{(n)}_{i} \to \beta_{i}$, \etc we can assume that $n$ is sufficiently large that
\[
	|\beta^{(n)}_{i} - \beta_{i}| < \eta, 
	\quad |\delta^{(n)} - \delta| < \eta, 
	\quad |\alpha^{(n)}_{i} - \alpha_{i}| < \eta, 
	\quad \text{and} \quad
	|\gamma^{(n)}_{i} - \gamma_{i}| < \eta
\]
for all $i > 1$.

Thus, for $t < \tau^{(n)}_{\varepsilon}$, the per-infective transmission rate for strain $i$ satisfies
\[
	(\beta_{i} - \eta) \frac{\bar{S}^{\star,1} - \varepsilon}{\bar{N}^{\star,1} + \varepsilon}
	 < \frac{\beta^{(n)}_{i} S^{(n)}(t-)}{N^{(n)}(t-)} 
	 < (\beta_{i} + \eta) \frac{\bar{S}^{\star,1} + \varepsilon}{\bar{N}^{\star,1} - \varepsilon}
\]
whereas the total per-infective rate of removal of strain $i$ satisfies 
\[
	\delta+\alpha_{i}+\gamma_{i} + 3 \eta > \delta^{(n)}+\alpha^{(n)}_{i}+\gamma^{(n)}_{i} 
		> \delta+\alpha_{i}+\gamma_{i} - 3 \eta.
\]

\begin{lem}
Provided $t < \tau^{(n)}_{\varepsilon}$, the number of infectives of strain $i$ is stochastically smaller
\footnote{Given random variables $X$ and $X'$, we say that $X$ is \textit{stochastically smaller} than $X'$, denoted $X \preceq X'$ if 
\[
	\mathbb{P}\{X' \geq x\} \geq  \mathbb{P}\{X \geq x\}.
\]
Similarly, a stochastic process $X$ is stochastically smaller than the process $X'$ if $S(t) \preceq X'(t)$ for all $t \geq 0$.  One defines \textit{stochastically greater} analogously.}
than the birth and death process $Z^{+}_{i}(t)$ with $Z^{+}_{i}(0) = I^{(n)}_{i}(0)$ and birth and death rates 
\[
	 (\beta_{i} + \eta) \frac{\bar{S}^{\star,1} + \varepsilon}{\bar{N}^{\star,1} - \varepsilon} 
	 \quad \text{and} \quad
	 \delta+\alpha_{i}+\gamma_{i} - 3 \eta,	
\]
and stochastically greater than the birth and death process $Z^{-}_{i}(t)$ with 
$Z^{-}_{i}(0) = I^{(n)}_{i}(0)$ and birth and death rates 
\[
	 (\beta_{i} - \eta) \frac{\bar{S}^{\star,1} - \varepsilon}{\bar{N}^{\star,1} + \varepsilon} 
	 \quad \text{and} \quad
	 \delta+\alpha_{i}+\gamma_{i} + 3 \eta.	
\]
\end{lem}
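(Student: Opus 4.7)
The plan is to establish the sandwich via a pathwise coupling on a common probability space, leveraging the Poisson-process representation already introduced in Section 3. The basic idea is that on the event $\{t < \tau^{(n)}_\varepsilon\}$ the instantaneous per-capita transmission rate of strain $i$ lies in the interval $\bigl[(\beta_i-\eta)\frac{\bar{S}^{\star,1}-\varepsilon}{\bar{N}^{\star,1}+\varepsilon},\, (\beta_i+\eta)\frac{\bar{S}^{\star,1}+\varepsilon}{\bar{N}^{\star,1}-\varepsilon}\bigr]$, while the per-capita removal rate lies in $[\delta+\alpha_i+\gamma_i-3\eta,\, \delta+\alpha_i+\gamma_i+3\eta]$. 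These intervals are precisely the constant rates of $Z^-_i$ and $Z^+_i$, so a clean thinning/enrichment of the jump mechanism of $I^{(n)}_i$ should yield the two bounding birth-death processes.

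Concretely, I would proceed as follows. First, rewrite $I^{(n)}_i$ using two independent \emph{rate-one, space-time} Poisson random measures $\Xi^{\text{birth}}_i$ and $\Xi^{\text{death}}_i$ on $\mathbb{R}_{+}\times\mathbb{R}_{+}$, so that an atom $(s,u)$ of $\Xi^{\text{birth}}_i$ produces a birth at time $s$ iff $u\le \frac{\beta^{(n)}_i S^{(n)}(s-)}{N^{(n)}(s-)}I^{(n)}_i(s-)$, and similarly for deaths. Next, on the same atoms, drive $Z^{+}_i$ by the rules: birth at $s$ iff $u\le (\beta_i+\eta)\frac{\bar{S}^{\star,1}+\varepsilon}{\bar{N}^{\star,1}-\varepsilon}Z^{+}_i(s-)$; death iff $u\le (\delta+\alpha_i+\gamma_i-3\eta) Z^{+}_i(s-)$; and symmetrically for $Z^{-}_i$ with the lower (resp.\ upper) rates for births (resp.\ deaths). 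One may need to adjoin independent auxiliary Poisson measures to supply the ``extra'' births for $Z^+$ (resp.\ deaths for $Z^-$) that the original process does not generate.

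The bulk of the work is then an induction on the (finite-on-compacts) sequence of jump times of $I^{(n)}_i$, $Z^{+}_i$ and $Z^{-}_i$ to show $Z^{-}_i(t)\le I^{(n)}_i(t)\le Z^{+}_i(t)$ for every $t<\tau^{(n)}_\varepsilon$. The inductive step is the standard monotone-coupling argument: if the ordering holds up to a jump time $s$, then, because the per-capita rates of $I^{(n)}_i$ lie within the constant rate windows of $Z^{\pm}_i$ on $\{s<\tau^{(n)}_\varepsilon\}$, the total birth rate of $I^{(n)}_i$ is at most that of $Z^{+}_i$ (since $I^{(n)}_i(s-)\le Z^{+}_i(s-)$) and its total death rate is at least that of $Z^{+}_i$; thus any birth of $I^{(n)}_i$ is accompanied by a birth of $Z^{+}_i$ under the common driving measure, and any death of $Z^{+}_i$ is accompanied by a death of $I^{(n)}_i$. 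Neither case can break $I^{(n)}_i\le Z^{+}_i$. The argument for $Z^{-}_i$ is dual.

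The main delicate point, and where care is needed, is that $\tau^{(n)}_\varepsilon$ is defined in terms of $\bm{E}^{(n)}$ and hence of the same driving noise, so the ordering can only be asserted up to the stopping time $\tau^{(n)}_\varepsilon$; nothing is claimed beyond it. The coupling is therefore constructed freely on $[0,\infty)$, but the assertion of stochastic domination is proved only on $\{t<\tau^{(n)}_\varepsilon\}$. Aside from that, the proof is a routine application of the graphical/thinning construction of jump Markov processes (see, e.g., Chapter~6 of \citet{Ethier+Kurtz86}), and no new ideas beyond those in Section~3 are needed. Once the lemma is in hand, the probability of extinction (resp.\ invasion) of strain $i$ will be sandwiched between the explicit extinction probabilities of $Z^{\pm}_i$, and letting $\varepsilon,\eta\downarrow 0$ will recover the heuristic branching-process answer of Proposition~\ref{SSPF}.
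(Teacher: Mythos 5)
Your proposal is correct and is essentially the paper's own argument: a pathwise monotone coupling of $I^{(n)}_{i}$, $Z^{+}_{i}$ and $Z^{-}_{i}$ on a common probability space, with the sandwich verified by induction over the aggregated jump times, using the fact that for $t<\tau^{(n)}_{\varepsilon}$ the per-capita transmission and removal rates of strain $i$ lie in the stated windows. The paper implements the coupling by uniformization --- an exponential clock run at an upper-bounding total rate $\rho_{k+1}$ and a single uniform variable whose nested sub-intervals decide which of the three processes jump --- whereas you use space-time Poisson random measures and thinning; these are equivalent constructions, and your auxiliary measures are not even needed since the atoms with mark $u$ between the two thresholds already supply the extra births of $Z^{+}_{i}$. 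One small slip in your inductive step: from $I^{(n)}_{i}(s-)\le Z^{+}_{i}(s-)$ alone you cannot conclude that the \emph{total} death rate of $I^{(n)}_{i}$ dominates that of $Z^{+}_{i}$, since the per-capita rate is larger but the population may be strictly smaller. That comparison is only needed --- and only holds --- in the configuration $I^{(n)}_{i}(s-)=Z^{+}_{i}(s-)$, which is the only one in which a single $\pm 1$ jump could break the ordering; with that case split made explicit the induction closes.
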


\begin{proof}
It suffices to construct coupled versions of $I^{(n)}_{i}(t)$, $Z^{+}_{i}(t)$ and $Z^{-}_{i}(t)$ such that 
\[
	Z^{+}_{i}(t) \geq I^{(n)}_{i}(t) \geq Z^{-}_{i}(t).
\]
We will do so inductively, at each step constructing the processes up to the next among the 
aggregated jump times of all three processes, which we denote 
\[	
	0 = \tau_{0} < \tau_{1} < \tau_{2} < \cdots.
\]
For our underlying probability space, we assume sequences of independent rate 1 exponential random variables $\mathcal{E}_{k}$ and independent uniformly distributed random variables $\mathcal{U}_{k}$ on $[0,1]$, for $k=1,2,\ldots$.

Suppose that $I^{(n)}_{i}(t)$, $Z^{+}_{i}(t)$ and $Z^{-}_{i}(t)$ have been constructed up to $\tau_{k}$ (trivially true for $k = 0$).  Note that 
\[
	\rho_{k+1} \defn (\beta_{i} + \eta) \frac{\bar{S}^{\star,1} + \varepsilon}{\bar{N}^{\star,1} - \varepsilon} 
		Z^{+}_{k}(\tau_{k})
		+ (\delta+\alpha_{i}+\gamma_{i} + 3) Z^{+}_{k}(\tau_{k})
\]
is an upper bound on the combined rate of all transitions for all three processes.  Set 
\[
	\tau_{k+1} = \frac{\mathcal{E}_{k+1}}{\rho_{k+1}},
\]
so $\tau_{k+1}$ is a rate $\rho_{k+1}$ exponential random variable. Next, for $t < \tau_{k+1}$ we set
\[
	I^{(n)}_{i}(t) = I^{(n)}_{i}(\tau_{k}), \quad
	Z^{+}_{i}(t) = Z^{+}_{i}(\tau_{k}) \quad \text{and} \quad
	Z^{-}_{i}(t) = Z^{-}_{i}(\tau_{k}).
 \]
Finally, we set
\begin{multline*}
(I^{(n)}_{i}(\tau_{k\!+\!1}),Z^{\!+\!}_{i}(\tau_{k\!+\!1}),Z^{\!-\!}_{i}(\tau_{k\!+\!1}))\\
= \begin{cases}
(I^{(n)}_{i}(\tau_{k})\!+\!1,Z^{\!+\!}_{i}(\tau_{k})\!+\!1,Z^{\!-\!}_{i}(\tau_{k})\!+\!1)
& \text{if $\mathcal{U}_{k\!+\!1} 
\leq \frac{(\beta_{i} \!-\! \eta) \frac{\bar{S}^{\star,1} \!-\! \varepsilon}{\bar{N}^{\star,1} \!+\! \varepsilon}}{\rho_{k\!+\!1}}$}\\
(I^{(n)}_{i}(\tau_{k})\!+\!1,Z^{\!+\!}_{i}(\tau_{k})\!+\!1,Z^{\!-\!}_{i}(\tau_{k}))
& \text{if $\frac{(\beta_{i} \!-\! \eta) \frac{\bar{S}^{\star,1} \!-\! \varepsilon}{\bar{N}^{\star,1} \!+\! \varepsilon}}{\rho_{k\!+\!1}}
\leq \mathcal{U}_{k\!+\!1} < \frac{\beta_{i}\frac{\bar{S}^{(n)}(t)}{\bar{N}^{(n)}(t)}}{\rho_{k\!+\!1}}$}\\
(I^{(n)}_{i}(\tau_{k})\!+\!1,Z^{\!+\!}_{i}(\tau_{k}),Z^{\!-\!}_{i}(\tau_{k}))
& \text{if $\frac{\beta_{i}\frac{\bar{S}^{(n)}(t)}{\bar{N}^{(n)}(t)}}{\rho_{k\!+\!1}} 
< \mathcal{U}_{k\!+\!1} 
\leq  \frac{(\beta_{i} \!+\! \eta) \frac{\bar{S}^{\star,1} \!+\! \varepsilon}{\bar{N}^{\star,1} \!-\! \varepsilon}}{\rho_{k\!+\!1}}$}\\
(I^{(n)}_{i}(\tau_{k})\!-\!1,Z^{\!+\!}_{i}(\tau_{k})\!-\!1,Z^{\!-\!}_{i}(\tau_{k})\!-\!1)
& \text{if $\frac{(\beta_{i} \!+\! \eta) \frac{\bar{S}^{\star,1} \!+\! \varepsilon}{\bar{N}^{\star,1} \!-\! \varepsilon}}{\rho_{k\!+\!1}}
< \mathcal{U}_{k\!+\!1} 
\leq  \frac{(\beta_{i} \!+\! \eta) \frac{\bar{S}^{\star,1} \!+\! \varepsilon}{\bar{N}^{\star,1} \!-\! \varepsilon}
\!+\! \delta\!+\!\alpha_{i}\!+\!\gamma_{i} \!-\! 3\eta}{\rho_{k\!+\!1}}$}\\
(I^{(n)}_{i}(\tau_{k}),Z^{\!+\!}_{i}(\tau_{k})\!-\!1,Z^{\!-\!}_{i}(\tau_{k})\!-\!1)
& \text{if $\frac{(\beta_{i} \!+\! \eta) \frac{\bar{S}^{\star,1} \!+\! \varepsilon}{\bar{N}^{\star,1} \!-\! \varepsilon}
\!+\! \delta\!+\!\alpha_{i}\!+\!\gamma_{i} \!-\! 3\eta}{\rho_{k\!+\!1}}
< \mathcal{U}_{k\!+\!1} 
\leq \frac{(\beta_{i} \!+\! \eta) \frac{\bar{S}^{\star,1} \!+\! \varepsilon}{\bar{N}^{\star,1} \!-\! \varepsilon}
\!+\! \delta^{(n)}\!+\!\alpha^{(n)}_{i}\!+\!\gamma^{(n)}_{i}}{\rho_{k\!+\!1}}$}\\
(I^{(n)}_{i}(\tau_{k}),Z^{\!+\!}_{i}(\tau_{k}),Z^{\!-\!}_{i}(\tau_{k})\!-\!1)
& \text{if $\frac{(\beta_{i} \!+\! \eta) \frac{\bar{S}^{\star,1} \!+\! \varepsilon}{\bar{N}^{\star,1} \!-\! \varepsilon}
\!+\! \delta^{(n)}\!+\!\alpha^{(n)}_{i}\!+\!\gamma^{(n)}_{i}}{\rho_{k\!+\!1}}
< \mathcal{U}_{k\!+\!1} \leq 1$}.	
\end{cases}
\end{multline*}
It is readily verified that the resulting processes have the correct jump rates.
\end{proof}

\begin{rem} Note that given $\varepsilon > \varepsilon_{1} > 0$, choosing $\eta > 0$ as before and $\eta_{1} > 0$ analogously, we can similarly construct processes $Z^{+,1}_{i}(t)$ and $Z^{-,1}_{i}(t)$ such that 
\[
	Z^{+}_{i}(t) \succeq Z^{+,1}_{i}(t) \succeq I^{(n)}_{i}(t) \succeq Z^{-,1}_{i}(t) 
		\succeq Z^{-}_{i}(t),
\] 
\etc We shall apply this with a decreasing sequence of values $\varepsilon_{n} > 0$ below.
\end{rem}

Now, our choice of $\eta$ ensures that $Z^{+}_{i}(t)$ is subcritical for all $i > 1$.  In particular, setting 
\[
	\mu^{+}_{i} 
	\defn  (\beta_{i} + \eta) \frac{\bar{S}^{\star,1} + \varepsilon}{\bar{N}^{\star,1} - \varepsilon} 
	 	- (\delta+\alpha_{i}+\gamma_{i} - 3 \eta) < 0,
\]
we have 
\[
	\mathbb{E}\left[Z^{+}_{i}(t)\right] = Z^{+}_{i}(0) e^{\mu^{+}_{i} t},
\]
and, for any sequence $t_{n} > \frac{1}{|\mu^{+}_{i}|} \ln{n}$, for all $i > 1$, we have, using Markov's inequality
\[
	\mathbb{P}\left\{Z^{+}_{i}(t_{n}) \geq 1\right\}
	\leq \mathbb{E}\left[Z^{+}_{i}(t_{n})\right]
	\leq B \varepsilon n e^{\mu^{+}_{i} t_{n}} \to 0 
\]
as $n \to \infty$.  Thus,  if we show that $\mathbb{P}\left\{\tau^{(n)}_{\varepsilon} > t_{n}\right\} \to 1$ as $n \to \infty$, we can conclude that all strains $i > 1$ vanish after before $t_{n}$ with high probability.

To this end, we start by defining
\[
	\tau^{(n)}_{\varepsilon,i} 
	\defn \inf\left\{ t : |\bar{I}^{(n)}_{i}(t) - \bar{I}^{\star,1}_{i}| \geq \varepsilon n\right\},
\]
for $i = 1,\ldots,d$. We define $\tau^{(n)}_{\varepsilon,0}$ and $\tau^{(n)}_{\varepsilon,d+1}$ similarly, replacing $\bar{I}_{i}$ by $\bar{S}$ or $\bar{N}$ respectively in the above definition (again, $\tau^{(n)}_{\varepsilon,i} = \infty$ should the respective process never exceed $\varepsilon n$).  We then have  
\[
	\tau^{(n)}_{\varepsilon} = \min_{i} \tau^{(n)}_{\varepsilon,i}.
\] 

We continue with a classical result for birth and death processes: for $i > 1$, let 
\[
a_{i} = \frac{\delta+\alpha_{i}+\gamma_{i} - 3 \eta}{(\beta_{i} + \eta) \frac{\bar{S}^{\star,1} + \varepsilon}{\bar{N}^{\star,1} - \varepsilon}} > 1.  
\]
Then, a simple calculation shows that 
\[
	\mathbb{E}\left[a_{i}^{Z^{+}_{i}(t)} \middle\vert Z^{+}_{i}(s)\right] 
		= a_{i}^{Z^{+}_{i}(s)}
\]
\ie $a_{i}^{Z^{+}_{i}(t)}$ is a martingale.  Let 
\[
	\tau^{(n)}_{0,i} = \inf\left\{ t : Z^{+}_{i}(t) = 0\right\}
	\quad \text{and} \quad 
	\tau^{(n)}_{i} = \min\{\tau^{(n)}_{0,i},\tau^{(n)}_{\varepsilon,i}\}.
\]
We saw above that $\tau^{(n)}_{0,i}$ and thus $\tau^{(n)}_{i}$ are with high probability bounded above by any sequence $t_{n} > \frac{1}{\min_{i>1} |\mu^{+}_{i}|} \ln{n}$.  Now,
\[
	a_{i}^{Z^{+}_{i}(0)} = \mathbb{E}\left[a_{i}^{Z^{+}_{i}(\tau^{(n)}_{i})}\right] 
	= a_{i}^{\lfloor\varepsilon n\rfloor} 
		\mathbb{P}\left\{\tau^{(n)}_{0,i} > \tau^{(n)}_{\varepsilon,i}\right\}
	+ \left(1-\mathbb{P}\left\{\tau^{(n)}_{0,i} > \tau^{(n)}_{\varepsilon,i}\right\}\right)
\]
\ie
\[
	\mathbb{P}\left\{\tau^{(n)}_{0,i} > \tau^{(n)}_{\varepsilon,i}\right\}
	= \frac{a_{i}^{Z^{+}_{i}(0)} - 1}{a_{i}^{\lfloor\varepsilon n\rfloor}-1}
	\geq \frac{a_{i}^{\lfloor B \varepsilon n\rfloor} - 1}{a_{i}^{\lfloor\varepsilon n\rfloor}-1},
\]
which converges to 0 as $n \to \infty$ for any $B < 1$.  We thus have 
\[
	\mathbb{P}\left\{\tau^{(n)}_{\varepsilon,i} > t_{n} \right\} \to 1 
\]
as $n \to \infty$.

For the remaining three values $\tau^{(n)}_{\varepsilon,i}$, $i = 0,1,d+1$, we take a different approach, as the values $\bar{S}^{\star,1}$, $\bar{I}^{\star,1}_{1}$, and $\bar{N}^{\star,1}$ are all non-zero and a branching process approach is no longer appropriate.  Instead, we recall the SDE representation of our process (Proposition \ref{prop:SDE}).

Now $\bar{\bm{E}}^{\star,1}$ is a stable fixed point for the dynamical system $\dot{\bar{\bm{E}}} = 
\bm{F}(\bar{\bm{E}})$, so we may write 
\[
	\bm{F}(\bar{\bm{E}}) = A(\bar{\bm{E}}-\bar{\bm{E}}^{\star,1}) 
		+ \bm{G}(\bar{\bm{E}}-\bar{\bm{E}}^{\star,1}),
\]
where $A \defn \bm{D} \bm{F}(\bar{\bm{E}}^{\star,1})$, the Jacobian of $\bm{F}(\bm{x})$ evaluated at the resident endemic equilibrium, is a stable matrix and 
\[
	\norm{\bm{G}(\bar{\bm{E}})} \leq M \norm{\bar{\bm{E}}}^{2}
\]
for some fixed $M > 0$.

Now, let
\[
	\Xi^{(n)}(t) \defn \bar{\bm{E}}^{(n)}(t)-\bar{\bm{E}}^{\star,1}.
\]
Then, using Duhamel's principle, we have that
\begin{multline}\label{Xisde}
	\Xi^{(n)}(t) = e^{tA} \Xi^{(n)}(0)
	+ \int_{0}^{t} e^{(t-s)A} \bm{G}\left(\Xi^{(n)}(s)\right)\, ds\\
	+ \frac{1}{n} \int_{0}^{t} e^{(t-s)A} (\bm{e}_{0} + \bm{e}_{d+1})\, dM^{(n)}_{\bm{e}_{0} + \bm{e}_{d+1}}(t)
	- \frac{1}{n}  \int_{0}^{t} e^{(t-s)A}(\bm{e}_{0} + \bm{e}_{d+1})\, dM^{(n)}_{-\bm{e}_{0} - \bm{e}_{d+1}}(t)\\
	+ \frac{1}{n} \sum_{i=1}^{d} \int_{0}^{t} e^{(t-s)A}(\bm{e}_{i} - \bm{e}_{0})\, dM^{(n)}_{-\bm{e}_{0} + \bm{e}_{i}}(t)
	- \frac{1}{n} \sum_{i=1}^{d} \int_{0}^{t} e^{(t-s)A}(\bm{e}_{i} + \bm{e}_{d+1})\, dM^{(n)}_{-\bm{e}_{i} - \bm{e}_{d+1}}(t)\\
	- \frac{1}{n} \sum_{i=1}^{d}  \int_{0}^{t} e^{(t-s)A}\bm{e}_{i}\, dM^{(n)}_{-\bm{e}_{i}}(t)
	- \frac{1}{n} \int_{0}^{t} e^{(t-s)A}\bm{e}_{d+1}\, dM^{(n)}_{-\bm{e}_{d+1}}(t).
\end{multline}

Let $\alpha_{1},\ldots,\alpha_{d}$ denote the eigenvalues of $A$.  It is a standard result (see \eg \citet{Teschl2012}) that, given any 
\[ 
	\alpha < \min\{ -\Re(\alpha_{j}) : \Re(\alpha_{j}) < 0\},
\]
there exists a constant $C$, depending on $\alpha$ such that, when restricted to $E_{s}$, we have
\[
	\|e^{(t-r)A}\| \leq C e^{-\alpha(t-r)}.
\]
Thus,
\begin{multline*}
	\|\Xi^{(n)}(t)\| \leq  C e^{-\alpha t} \|\Xi^{(n)}(0)\|
	+ \int_{0}^{t} C e^{-\alpha(t-s)} M \|\Xi^{(n)}(s)\|^{2}\, ds\\
	+ \frac{1}{n} \int_{0}^{t} C e^{-\alpha(t-s)}\, dM^{(n)}_{\bm{e}_{0} + \bm{e}_{d+1}}(t)
	+ \frac{1}{n}  \int_{0}^{t} C e^{-\alpha(t-s)}\, dM^{(n)}_{-\bm{e}_{0} - \bm{e}_{d+1}}(t)\\
	+ \frac{1}{n} \sum_{i=1}^{d} \int_{0}^{t} C e^{-\alpha(t-s)}\, dM^{(n)}_{-\bm{e}_{0} + \bm{e}_{i}}(t)
	+ \frac{1}{n} \sum_{i=1}^{d} \int_{0}^{t} C e^{-\alpha(t-s)}\, dM^{(n)}_{-\bm{e}_{i} - \bm{e}_{d+1}}(t)\\
	+ \frac{1}{n} \sum_{i=1}^{d}  \int_{0}^{t} C e^{-\alpha(t-s)}\, dM^{(n)}_{-\bm{e}_{i}}(t)
	+ \frac{1}{n} \int_{0}^{t} C e^{-\alpha(t-s)}\, dM^{(n)}_{-\bm{e}_{d+1}}(t).
\end{multline*}

Now, fix a sequence $\ln{n} \ll t_{n} \ll n$, we observe that 
\begin{multline*}
	\mathbb{P}\left\{\tau^{(n)}_{\varepsilon} < t_{n}\right\}
	= \mathbb{P}\left\{\sup_{t \leq \tau^{(n)}_{\varepsilon} \wedge t_{n}} 
		\|\Xi^{(n)}(t)\| \geq \varepsilon\right\}\\
	\leq \mathbb{P}\left\{C e^{-\alpha t} \|\Xi^{(n)}(0)\| \geq \frac{\varepsilon}{5+3d}\right\}
	+\mathbb{P}\left\{\sup_{t \leq \tau^{(n)}_{\varepsilon}} 
		\int_{0}^{t} C e^{-\alpha(t-s)} M \|\Xi^{(n)}(s)\|^{2}\, ds \geq \frac{\varepsilon}{5+3d}\right\}\\
	+ \mathbb{P}\left\{\frac{1}{n} \sup_{t \leq \tau^{(n)}_{\varepsilon} \wedge t_{n}} 
		\int_{0}^{t} C e^{-\alpha(t-s)}\, dM^{(n)}_{\bm{e}_{0} + \bm{e}_{d+1}}(s) \geq \frac{\varepsilon}{5+3d}\right\}\\
	+ \mathbb{P}\left\{\frac{1}{n}  \sup_{t \leq \tau^{(n)}_{\varepsilon} \wedge t_{n}} 
		\int_{0}^{t} C e^{-\alpha(t-s)}\, dM^{(n)}_{-\bm{e}_{0} - \bm{e}_{d+1}}(s) \geq \frac{\varepsilon}{5+3d}\right\}\\	
	+ \sum_{i=1}^{d} \mathbb{P}\left\{\frac{1}{n} \sup_{t \leq \tau^{(n)}_{\varepsilon} \wedge t_{n}} 
		\int_{0}^{t} C e^{-\alpha(t-s)}\, dM^{(n)}_{-\bm{e}_{0} + \bm{e}_{i}}(s) \geq \frac{\varepsilon}{5+3d}\right\}\\
	+ \sum_{i=1}^{d} \mathbb{P}\left\{\frac{1}{n} \sup_{t \leq \tau^{(n)}_{\varepsilon} \wedge t_{n}} 
		\int_{0}^{t} C e^{-\alpha(t-s)}\, dM^{(n)}_{-\bm{e}_{i} - \bm{e}_{d+1}}(s) \geq \frac{\varepsilon}{5+3d}\right\}\\
	+ \sum_{i=1}^{d} \mathbb{P}\left\{\frac{1}{n} \sup_{t \leq \tau^{(n)}_{\varepsilon} \wedge t_{n}}
		\int_{0}^{t} C e^{-\alpha(t-s)}\, dM^{(n)}_{-\bm{e}_{i}}(s) \geq \frac{\varepsilon}{5+3d}\right\}\\
	+ \mathbb{P}\left\{\frac{1}{n} \sup_{t \leq \tau^{(n)}_{\varepsilon} \wedge t_{n}} 
		\int_{0}^{t} C e^{-\alpha(t-s)}\, dM^{(n)}_{-\bm{e}_{d+1}}(s) \geq \frac{\varepsilon}{5+3d}\right\}
\end{multline*}
Now,
\[
	C e^{-\alpha t} \|\Xi^{(n)}(0)\| \leq C B \varepsilon
\]
and 
\[
	\sup_{t \leq \tau^{(n)}_{\varepsilon}} \int_{0}^{t} C e^{-\alpha(t-s)} M \|\Xi^{(n)}(s)\|^{2}\, ds
	\leq \frac{C M}{\alpha} \varepsilon^{2}.
\]
Thus, provided we choose 
\[
	B < \frac{1}{C(5+3d)} \quad \text{and} \quad \varepsilon < \frac{\alpha}{CM(5+3d)},
\]
then
\[
	\mathbb{P}\left\{C e^{-\alpha t} \|\Xi^{(n)}(0)\| \geq \frac{\varepsilon}{5+3d}\right\}
	= \mathbb{P}\left\{\sup_{t \leq \tau^{(n)}_{\varepsilon}} 
		\int_{0}^{t} C e^{-\alpha(t-s)} M \|\Xi^{(n)}(s)\|^{2}\, ds \geq \frac{\varepsilon}{5+3d}\right\}
	= 0,
\]

Finally, we turn to the integrals $\int_{0}^{t} e^{-\alpha(t-s)}\, dM^{(n)}_{\bm{l}}(s)$.  Recall that each of the integrators $M^{(n)}_{\bm{l}}(t)$ takes the form 
\[
	\tilde{P}_{\bm{l}}\left(n \int \Lambda_{\bm{l}}(\bar{\bm{E}}^{(n)}(t))\, ds\right),
\]
for some continuous function $\Lambda_{\bm{l}}$.  We will thus prove the generic lemma:

\begin{lem}\label{intest}
Let $P$ be a Poisson process, $\Lambda: \mathbb{R}^{d+2} \to \mathbb{R}$ be continuous, and let
\[
	M^{(n)}(t) = \tilde{P}\left(n \int \Lambda(\bar{\bm{E}}^{(n)}(t))\, ds\right).
\]
then for any $\alpha > 0$, any constant $C$, any sequence $t_{n} \ll n$, and $\tau^{(n)}_{\varepsilon}$ as above, we have 
\[
	\mathbb{P}\left\{\frac{C}{n}  \sup_{t \leq \tau^{(n)}_{\varepsilon} \wedge t_{n}}  
		\int_{0}^{t} e^{-\alpha(t-s)}\, dM^{(n)}(s) > R\right\} \to 0 
\]
as $n \to \infty$, for any fixed $R > 0$.
\end{lem}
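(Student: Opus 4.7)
The plan is to reduce the stochastic convolution to a supremum of the martingale $M^{(n)}$, then apply Doob's $L^{2}$ inequality, exploiting the fact that the stopping time $\tau^{(n)}_{\varepsilon}$ keeps $\bar{\bm{E}}^{(n)}$ in a compact set on which $\Lambda$ is bounded. The key observation is that $M^{(n)}$ is a c\`adl\`ag martingale (it is a compensated Poisson process time-changed by an absolutely continuous clock) with predictable quadratic variation
\[
\langle M^{(n)}\rangle_{t} = n \int_{0}^{t} \Lambda(\bar{\bm{E}}^{(n)}(s))\, ds.
\]

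First I would remove the convolution kernel by integration by parts (valid for a deterministic BV integrator against a c\`adl\`ag martingale), which gives
\[
\int_{0}^{t} e^{-\alpha(t-s)}\, dM^{(n)}(s) = M^{(n)}(t) - \alpha \int_{0}^{t} e^{-\alpha(t-s)} M^{(n)}(s)\, ds.
\]
Since $\int_{0}^{t} \alpha e^{-\alpha(t-s)}\, ds \le 1$, taking absolute values and passing to the supremum yields
\[
\sup_{t \le \tau^{(n)}_{\varepsilon} \wedge t_{n}} \left|\int_{0}^{t} e^{-\alpha(t-s)}\, dM^{(n)}(s)\right|
\le 2 \sup_{t \le \tau^{(n)}_{\varepsilon} \wedge t_{n}} |M^{(n)}(t)|,
\]
so it is enough to show that $\frac{C}{n} \sup_{t \le \tau^{(n)}_{\varepsilon} \wedge t_{n}} |M^{(n)}(t)|$ tends to $0$ in probability.

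Next I would use the localisation: on $\{t \le \tau^{(n)}_{\varepsilon}\}$, the process $\bar{\bm{E}}^{(n)}(t)$ lies in the compact ball $\overline{B(\bar{\bm{E}}^{\star,1},\varepsilon)}$, on which the continuous function $\Lambda$ is bounded by some finite $K=K(\varepsilon)$. Therefore
\[
\langle M^{(n)}\rangle_{\tau^{(n)}_{\varepsilon} \wedge t_{n}} \le K n t_{n}
\]
deterministically. Applying Doob's $L^{2}$ maximal inequality to the stopped martingale $M^{(n)}(\cdot \wedge \tau^{(n)}_{\varepsilon} \wedge t_{n})$ then gives
\[
\mathbb{E}\!\left[\sup_{t \le \tau^{(n)}_{\varepsilon} \wedge t_{n}} (M^{(n)}(t))^{2}\right]
\le 4\, \mathbb{E}\!\left[\langle M^{(n)}\rangle_{\tau^{(n)}_{\varepsilon} \wedge t_{n}}\right]
\le 4 K n t_{n}.
\]

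Combining these steps, Chebyshev's inequality produces
\[
\mathbb{P}\!\left\{\frac{C}{n} \sup_{t \le \tau^{(n)}_{\varepsilon} \wedge t_{n}} \left|\int_{0}^{t} e^{-\alpha(t-s)}\, dM^{(n)}(s)\right| > R\right\}
\le \frac{16 C^{2} K t_{n}}{R^{2} n},
\]
which vanishes as $n\to\infty$ because $t_{n} \ll n$. The only delicate point is checking that $\tau^{(n)}_{\varepsilon} \wedge t_{n}$ is a bounded stopping time in the natural filtration of $\bar{\bm{E}}^{(n)}$ so that Doob's inequality and the formula for the stopped quadratic variation apply; this is standard since $\tau^{(n)}_{\varepsilon}$ is a hitting time of a closed set by a c\`adl\`ag process. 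Everything else is a one-line application of integration by parts and Chebyshev, so no serious obstacle is expected.
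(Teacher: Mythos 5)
Your proof is correct, and it takes a genuinely different route from the paper's. The paper keeps the exponential kernel and controls the stochastic convolution by a blocking argument: it writes the integral as $e^{-\alpha t}\int_0^t e^{\alpha s}\,dM^{(n)}(s)$, partitions $[0,t_n]$ into unit intervals, applies Doob's inequality on each block to the martingale $\int_0^{t\wedge\tau^{(n)}_\varepsilon} e^{\alpha s}\,dM^{(n)}(s)$, and relies on the exact cancellation between the prefactor $e^{-2\alpha k}$ and the quadratic variation $\propto e^{2\alpha(k+1)}$ to get an $\BigO{1/n}$ bound per block, hence $\BigO{t_n/n}$ after the union bound over $t_n$ blocks. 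You instead eliminate the kernel at the outset by integration by parts (legitimate here since $e^{-\alpha(t-s)}$ is deterministic, continuous and of finite variation, so the covariation with the jump martingale vanishes), paying only a factor of $2$ to dominate the convolution by $2\sup_{t\le\tau^{(n)}_\varepsilon\wedge t_n}|M^{(n)}(t)|$, and then a single application of Doob's $L^2$ inequality to the stopped martingale together with the deterministic bound $\langle M^{(n)}\rangle_{\tau^{(n)}_\varepsilon\wedge t_n}\le Knt_n$ finishes the job. Both arguments land on a bound of order $t_n/n$ and both need $t_n\ll n$, so nothing is lost quantitatively; your version is shorter, avoids the union bound entirely, and uses nothing about the kernel beyond $\int_0^t\alpha e^{-\alpha(t-s)}\,ds\le 1$, which makes it somewhat more general. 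The one point you flag yourself --- that $\tau^{(n)}_\varepsilon\wedge t_n$ is a bounded stopping time so that optional stopping and $\mathbb{E}[M^2]=\mathbb{E}[\langle M\rangle]$ apply --- is indeed standard and unproblematic, as is the implicit non-negativity of $\Lambda$ (it is a transition rate), which is what makes $K=\sup_{\overline{B_\varepsilon}}\Lambda$ the right bound on the quadratic variation.
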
	

\begin{proof}
We start by observing that for $t \leq \tau^{(n)}_{\varepsilon}$, $\bar{\bm{E}}^{(n)}(t)) \in \overline{B_{\varepsilon}(\bar{\bm{E}}^{\star,1})}$, a compact set, and thus 
\[
	\Lambda(\bar{\bm{E}}^{(n)}(t)) \leq \overline{\Lambda}
\]
for a constant $\overline{\Lambda} > 0$ depending on $\Lambda$, $\varepsilon$, and $\bar{\bm{E}}^{\star,1}$.
\begin{multline*}
	\mathbb{P}\left\{\frac{C}{n}  \sup_{t \leq \tau^{(n)}_{\varepsilon} \wedge t_{n}}  
		\int_{0}^{t} e^{-\alpha(t-s)}\, dM^{(n)}(s) > R\right\}
	= \mathbb{P}\left\{\sup_{t \leq \tau^{(n)}_{\varepsilon} \wedge t_{n}}
		e^{-\alpha t} \int_{0}^{t} e^{\alpha s}\, dM^{(n)}(s) > \frac{n R}{C} \right\}\\
	\leq \mathbb{P}\left\{\sup_{t \leq t_{n}}  e^{-\alpha t} \int_{0}^{t \wedge \tau^{(n)}_{\varepsilon}}
		e^{\alpha s}\, dM^{(n)}(s) >  \frac{n R}{C} \right\} \\
	\leq \sum_{k=0}^{t_{n}-1} \mathbb{P}\left\{\sup_{k < t \leq k+1}  e^{-\alpha t}
		 \int_{0}^{t \wedge \tau^{(n)}_{\varepsilon}} e^{\alpha s}\, dM^{(n)}(s) > \frac{n R}{C}\right\}\\
	\leq \sum_{k=0}^{t_{n}-1} \mathbb{P}\left\{\sup_{k < t \leq k+1}  e^{-\alpha k}
		\int_{0}^{t \wedge \tau^{(n)}_{\varepsilon}} e^{\alpha s}\, dM^{(n)}(s) > \frac{n R}{C}\right\}\\
	= \sum_{k=0}^{t_{n}-1} \mathbb{P}\left\{\sup_{k < t \leq k+1} 
		\int_{0}^{t \wedge \tau^{(n)}_{\varepsilon}} e^{\alpha s}\, dM^{(n)}(s) 
		> \frac{n R}{C} e^{\alpha k}\right\}\\
	\leq \sum_{k=0}^{t_{n}-1} \mathbb{P}\left\{\sup_{t \leq k+1} 
		\int_{0}^{t \wedge \tau^{(n)}_{\varepsilon}} e^{\alpha s}\, dM^{(n)}(s) 
		> \frac{n R}{C} e^{\alpha k}\right\}.
\end{multline*}
Now, applying Doob's inequality,
\begin{multline*}
	\mathbb{P}\left\{\sup_{t \leq k+1} 
		\int_{0}^{t \wedge \tau^{(n)}_{\varepsilon}} e^{\alpha s}\, dM^{(n)}(s) 
		> \frac{n R}{C} e^{\alpha k}\right\}
	\leq \frac{C^{2}}{n^{2}R^{2}} e^{-2\alpha k}\mathbb{E}\left[\left(
		\int_{0}^{k+1 \wedge \tau^{(n)}_{\varepsilon}} e^{\alpha s}\, dM^{(n)}(s)\right)^{2}\right]\\
	= \frac{C^{2}}{n^{2}R^{2}} e^{-2\alpha k} 
		\mathbb{E}\left[\int_{0}^{k+1 \wedge \tau^{(n)}_{\varepsilon}} e^{2\alpha s} 
			n \Lambda(\bar{\bm{E}}^{(n)}(s))\, ds\right]
	\leq \frac{C^{2}}{n R^{2}} e^{-2\alpha k} 
		\mathbb{E}\left[\int_{0}^{k+1} e^{2\alpha s} \overline{\Lambda}\, ds\right]\\
	\leq \frac{C^{2}\overline{\Lambda}}{2\alpha n R^{2}} e^{-2\alpha k} 
		\left(e^{2\alpha (k+1)}-1\right)
	\leq \frac{C^{2}\overline{\Lambda}}{2\alpha n R^{2}} e^{2\alpha} 
\end{multline*}
Thus,
\[
	\mathbb{P}\left\{\frac{C}{n}  \sup_{t \leq \tau^{(n)}_{\varepsilon} \wedge t_{n}}  
		\int_{0}^{t} e^{-\alpha(t-s)}\, dM^{(n)}(s) > R\right\}
		\leq \frac{C^{2}\overline{\Lambda}}{2\alpha n R^{2}} e^{2\alpha} t_{n} \to 0
\]
as $n \to \infty$.
\end{proof}

Applying the lemma, we have 
\[
	\mathbb{P}\left\{\tau^{(n)}_{\varepsilon} < t_{n}\right\} \to 0
\]
and, with high probability, all strains $i > 1$ will vanish before $t_{n}$, and moreover, during this time, the process  $\bar{\bm{E}}^{(n)}(t))$ will remain in $\overline{B_{\varepsilon}(\bar{\bm{E}}^{\star,1})}$.
		
\subsubsection{Novel Strain in Small Number of Copies}

We now consider the possibility that a new strain invades an established population.  From the results of the previous section, we see that generically, the population will eventually be in an $\varepsilon$- neighbourhood of the fixed point $\bar{\bm{E}}^{\star,1}$ for arbitrary $\varepsilon > 0$, and that $I^{(n)}_{i}(t) \equiv 0$ for all $i >1$.

If the new strain has reproductive number less than $R_{0,1}$, then the arguments of the preceding section apply directly, and we can conclude that the novel strain will very rapidly go extinct.

If, however, it has higher reproductive number, the invading strain now has a non-zero probability of establishing itself and replacing the resident strain.  In this section, we adapt the techniques above to deal with this case (by the above, we may take $d=2$).  

We start by defining the quantities $\tau^{(n)}_{\varepsilon,i}$ and $\tau^{(n)}_{\varepsilon}$ as before, with the understanding that $\tau^{(n)}_{\varepsilon,1} = \infty$ if the new strain goes extinct before hitting $\varepsilon n$. 

We now fix $\varepsilon > 0 $ such that 
\[
	\frac{1}{R_{0,1}}
	< \frac{\bar{S}^{\star,1} - \varepsilon}{\bar{N}^{\star,1} + \varepsilon} 
	< \frac{1}{R_{0,2}}  = \frac{\bar{S}^{\star,1}}{\bar{N}^{\star,1}} 
	< \frac{\bar{S}^{\star,1} + \varepsilon}{\bar{N}^{\star,1} - \varepsilon},
\]
and $\eta > 0$ sufficiently small that 
\begin{multline*}
	 \mu^{+}_{2} \defn (\beta_{2} + \eta) \frac{\bar{S}^{\star,1} 
	 	+ \varepsilon}{\bar{N}^{\star,1} - \varepsilon} - (\delta+\alpha_{2}+\gamma_{2} - 3 \eta)\\
	>  \mu^{-}_{2} 
	\defn (\beta_{2} - \eta) \frac{\bar{S}^{\star,1} - \varepsilon}{\bar{N}^{\star,1} + \varepsilon} 
	 	- (\delta+\alpha_{2}+\gamma_{2} + 3 \eta)\\
	 = \beta_{2} \left(\frac{\bar{S}^{\star,1} - \varepsilon}{\bar{N}^{\star,1} + \varepsilon} 
	 	- \frac{1}{R_{0,2}}\right)  
		+ \eta \left(\frac{\bar{S}^{\star,1} - \varepsilon}{\bar{N}^{\star,1} + \varepsilon} - 3\right)
	 > 0, 
\end{multline*}
and suppose that $n$ is sufficiently large that $|\beta^{(n)}_{2} - \beta_{2}| < \eta$, \etc

Again, provided $t < \tau^{(n)}_{\varepsilon}$, we have that 
\[
	\frac{\bar{S}^{\star,1} - \varepsilon}{\bar{N}^{\star,1} + \varepsilon} 
	< \frac{\bar{S}^{(n)}(t)}{\bar{N}^{(n)}(t)} 
	< \frac{\bar{S}^{\star,1} + \varepsilon}{\bar{N}^{\star,1} - \varepsilon},
\]
and thus, if $Z^{+}(t)$ and $Z^{-}(t)$ are birth and death processes with birth and death rates
\[
	 (\beta_{2} + \eta) \frac{\bar{S}^{\star,1} + \varepsilon}{\bar{N}^{\star,1} - \varepsilon} 
	 \quad \text{and} \quad
	 \delta+\alpha_{2}+\gamma_{2} - 3 \eta	
\]
and
\[
	 (\beta_{2} - \eta) \frac{\bar{S}^{\star,1} - \varepsilon}{\bar{N}^{\star,1} + \varepsilon} 
	 \quad \text{and} \quad
	 \delta+\alpha_{2}+\gamma_{2} + 3 \eta	
\]
respectively, then both $Z^{+}(t)$ and $Z^{-}(t)$ are supercritical with Malthusian parameters $\mu^{+}_{2}$ and $\mu^{-}_{2}$, respectively, and 
\[
	Z^{-}(t) < I^{(n)}_{2}(t) < Z^{+}(t)
\]
stochastically for $t < \tau^{(n)}_{\varepsilon}$.

Now, set 
\[
	\overline{q}_{2} \defn
	\frac{\delta+\alpha_{2}+\gamma_{2} - 3 \eta}
		{(\beta_{2} + \eta) \frac{\bar{S}^{\star,1} + \varepsilon}{\bar{N}^{\star,1} - \varepsilon}}
	< \underline{q}_{2} \defn
	\frac{\delta+\alpha_{2}+\gamma_{2} + 3 \eta}
		{(\beta_{2} - \eta) \frac{\bar{S}^{\star,1} - \varepsilon}{\bar{N}^{\star,1} + \varepsilon}} < 1.
\]
Classical results for birth and death processes (\eg \citet{Athreya+Ney1972}) tell us that 
$Z^{+}(t)$ and $Z^{-}(t)$ will hit 0 in finite time with probability 
$\overline{q}_{2}^{Z^{+}(0)}$ and
$\underline{q}_{2}$ respectively, and will grow indefinitely otherwise, and, moreover, that there exist random variables $\overline{W}$ and $\underline{W}$ taking values on $[0,\infty)$ such that
\[
	\mathbb{P}\{\overline{W} = 0\} = \overline{q}_{2}^{Z^{+}(0)}
	\quad \text{and} \quad 
	\mathbb{P}\{\underline{W} = 0\} = \underline{q}_{2}^{Z^{-}(0)} 
\]
and 
\begin{equation}\label{W}
	e^{-\mu^{+}_{2}t}Z^{+}(t) \to \overline{W} 
	\quad \text{and} \quad 
	e^{-\mu^{-}_{2}t}Z^{-}(t) \to \underline{W} 
\end{equation}
both almost surely and in $L^{2}$. %We note also that by virtue of our coupled construction of the processes $Z^{+}(t)$ and $Z^{-}(t)$, we have that 
%\[
%	\{\overline{W} = 0\} \subseteq \{\underline{W} = 0\}.
%\]

Now, as before, fix  $\ln{n} \ll t_{n} \ll n$. Taking logarithms in \eqref{W}, we see that for almost all 
$\omega \not\in \{\overline{W} = 0\}$, we have 
\[
	\frac{\ln Z^{-}(\omega,t_{n})}{t_{n}}  \to \mu^{-}_{2}.
\]
Now, if $Z^{-}(t_{n}) \leq \varepsilon n$ then the right hand side converges to 0.  Moreover, if $I^{(n)}_{2}(t) \leq \varepsilon n$, then necessarily $Z^{-}(t_{n}) \leq \varepsilon n$, and we conclude that
\[
	\limsup_{n \to \infty} \mathbb{P}\left\{t_{n} < \tau^{(n)}_{\varepsilon}\right\} \leq 
		\mathbb{P}\{\underline{W} = 0\} =  \underline{q}_{2}^{Z^{-}(0)} . 
\]
In particular, if we can establish that with high probability 
$\tau^{(n)}_{\varepsilon} = \tau^{(n)}_{\varepsilon,2}$, then this gives a lower bound on the probability that the novel strain invades, as the latter is the probability that $\tau^{(n)}_{\varepsilon}$ is finite.

Now, 
\[
	\mathbb{P}\left\{\tau^{(n)}_{\varepsilon} < \tau^{(n)}_{\varepsilon,2}\right\}
	 =  \mathbb{P}\left\{\tau^{(n)}_{\varepsilon} < \tau^{(n)}_{\varepsilon,2}; \tau^{(n)}_{\varepsilon} 
	 	< t_{n}\right\}
	+ \mathbb{P}\left\{\tau^{(n)}_{\varepsilon} < \tau^{(n)}_{\varepsilon,2}; \tau^{(n)}_{\varepsilon} 
	 	> t_{n}\right\}.
\]
We have already established that the latter is bounded above by $\underline{q}_{1}^{Z^{-}(0)} $ as $n \to \infty$.  The former follows almost exactly as the proof that $\mathbb{P}\left\{\tau^{(n)}_{\varepsilon} < t_{n}\right\} \to 0$ of the previous section;  $\bar{\bm{E}}^{\star,1}$ is now a hyperbolic fixed point rather than a stable fixed point, and $A$ has a positive eigenvalue, but the stable manifold of $\bar{\bm{E}}^{\star,1}$ coincides with the subset of $\mathbb{R}^{d+2}$ with $x_{1} = 0$.  In particular, we may decompose 
\[
	\mathbb{R}^{d+2} = E_{S} \oplus E_{X},
\]
where $E_{S} = \{x_{1} = 0\}$ and $E_{X}$ are $A$ invariant subspaces, corresponding to the sum of the generalised eigenspaces for eigenvalues with negative and positive real parts respectively.  We will write $P_{S}$ and $P_{X}$ for the corresponding projections (\ie $P_{S}$ has image $E_{S}$ and kernel $E_{X}$, and oppositely for $P_{X}$), and note that both $P_{S}$ and $P_{X}$ commute with $A$.  Proceeding exactly as previously, we define
\[
	\Xi^{(n)}(t) \defn \bar{\bm{E}}^{(n)}(t) - \bar{\bm{E}}^{\star,1}
\]
and let $\Xi^{(n)}_{S}(t) = P_{S} \Xi^{(n)}(t)$ denote it's projection onto the stable subspace.   Then,
if 
\[
	\tau^{(n)}_{\varepsilon,S} = \tau^{(n)}_{\varepsilon,0} \wedge \tau^{(n)}_{\varepsilon,2}
		\wedge \tau^{(n)}_{\varepsilon,3}
\]
then 
\[
	\mathbb{P}\left\{\tau^{(n)}_{\varepsilon} 
		< \tau^{(n)}_{\varepsilon,2};  t_{n} < \tau^{(n)}_{\varepsilon} \right\}
	= \mathbb{P}\left\{\tau^{(n)}_{\varepsilon,S} 
		= \tau^{(n)}_{\varepsilon};  t_{n} < \tau^{(n)}_{\varepsilon} \right\}
	= \mathbb{P}\left\{\sup_{t \leq \tau^{(n)}_{\varepsilon} \wedge t_{n}} 
		\|\Xi^{(n)}_{S}(t)\| \geq \varepsilon\right\}.
\]
Applying the arguments of the previous section, using the equation for $\Xi^{(n)}_{S}(t)$ obtained by letting $P_{S}$ act on both sides of \eqref{Xisde}, one obtains almost identically that
\[
	\mathbb{P}\left\{\sup_{t \leq \tau^{(n)}_{\varepsilon} \wedge t_{n}} 
		\|\Xi^{(n)}_{S}(t)\| \geq \varepsilon\right\} \to 0
\]
as $n \to \infty$.

Now, note that for any $\omega \in \{\overline{W} = 0\}$, we must have $Z^{+}(\omega,t) < \varepsilon n$ for all $t$, for some sufficiently large $n$, so $\omega \in \{\tau^{(n)}_{\varepsilon} < \tau^{(n)}_{\varepsilon,2}\}$.  Thus we have
\[
	\overline{q}_{1}^{Z^{+}(0)} = \mathbb{P}\{\overline{W} = 0\}
	\leq \liminf_{n \to \infty} \mathbb{P}\left\{\tau^{(n)}_{\varepsilon} < \tau^{(n)}_{\varepsilon,1}\right\}
\]

Finally, we notice that as $\varepsilon \to 0$ (and thus $\eta \to 0$ also,) both $\overline{q}_{2}$ and
$\underline{q}_{2}$ approach
\[
	q_{2} \defn \frac{\delta+\alpha_{2}+\gamma_{2}}{\beta_{2}
	\frac{\bar{S}^{\star,1}}{\bar{N}^{\star,1}}}
	= \frac{R_{0,1}}{R_{0,2}}.
\]
Since the choice of $\varepsilon$ was arbitrary in our definition of invasion, we conclude that the probability of successful invasion of the new strain 1 is 
\begin{equation}\label{strongfix}
	1-\left(\frac{R_{0,1}}{R_{0,2}}\right)^{I_{2}(0)}.
\end{equation}
Once this has happened, as we note above, Kurtz's deterministic approximation is applicable, and with high probability, the system will approach any arbitrarily small neighbourhood of the endemic fixed point for the new strain 1, at which point, by the argument above, the former resident strain, strain 2, goes extinct with probability approaching 1 as $n \to \infty$.

\section{Proof of Proposition \ref{prop:noneq}}
We will evaluate \eqref{QINTEGRAL} to order $o(\varepsilon)$ using the first and second order terms in the perturbative solution to \eqref{REDUCED}:
\[
	S(t) = S^{\star} + \varepsilon s(t) + o(\varepsilon), \quad
	I_{1}(t) = I_1^{\star} + \varepsilon i_{1}(t) + o(\varepsilon), \quad \text{and} \quad 
	N(0) = N^{\star} + \varepsilon n(t) + o(\varepsilon).
\]

Substituting these into \eqref{REDUCED} and gathering terms of order $\BigO{\varepsilon}$, we get the linear system
\[
	\frac{d}{dt}  \bm{e}(t) = A  \bm{e}(t),
\]
where $A = \bm{D}\bm{F}(\bm{E}^{\star,1})$, and
\[
	 \bm{e}(t) \defn \begin{bmatrix} s(t) \\ i_{1}(t) \\ n(t) \end{bmatrix}. 
\]
This has solution $\bm{e}(t) = e^{tA}  \bm{e}(0)$.

$A$ has eigenvalues $-\delta$ and
\[
	-\frac{1}{2} \frac{\delta R_{0,1}^{2} (\beta_{1} - \alpha_{1}) \pm 
	\sqrt{\delta R_{0,1}(\beta_{1} - \alpha_{1})\left(
	\delta R_{0,1}^{3}(\beta_{1} - \alpha_{1})
	-4\beta_{1}(R_{0,1}-1)(\beta_{1}-R_{0,1}\alpha_{1})\right)}}
	{R_{0,1}(\beta_{1}-R_{0,1}\alpha_{1})}
\]
all of which have negative real part, so $e^{tA}$ is bounded, and thus 	
\[
	\bm{E}(t) - \bm{E}^{\star} - \varepsilon \bm{e}(t)
\]
is bounded and of order $o(\varepsilon)$.  

On the other hand, we can write \eqref{QINTEGRAL} as
\begin{multline*}
	I=\int_{0}^{\infty} e^{-\int_{0}^{s} \beta_{2} \frac{S^{\star}}{N^{\star}} 
		- (\delta+\alpha_{2}+\gamma_{2})\, du}
		\left(1- \varepsilon \int_{0}^{s} \beta_{2}\left(\frac{s(u)}{N^{\star}} 
			- \frac{S^{\star}}{N^{\star}}\frac{n(u)}{N^{\star}}\right)\, du + o(\varepsilon)\right)
	 (\delta+\alpha_{2}+\gamma_{2})\, ds\\
	 = \int_{0}^{\infty} e^{-\int_{0}^{s} \beta_{2} \left(\frac{1}{R_{0,1}} 
	 	-\frac{1}{R_{0,2}}\right)\, du}
		\left(1- \varepsilon \int_{0}^{s} \frac{\beta_{2}}{R_{0,1}} \left(\frac{s(u)}{S^{\star}} 
			- \frac{n(u)}{N^{\star}}\right)\, du\right)
	 (\delta+\alpha_{2}+\gamma_{2})\, ds + o(\varepsilon)\\
	 = \frac{R_{0,2}}{R_{0,1}} - 1
	 -  \varepsilon \int_{0}^{\infty} e^{-\Lambda s} \int_{0}^{s} \bm{v}^{\top} e^{u A} \bm{e}(0)\, du\, ds
		+ o(\varepsilon)
\end{multline*}
for $\Lambda = \beta_{2} \left(\frac{1}{R_{0,1}} -\frac{1}{R_{0,2}}\right)$ and
\[
	\bm{v} = \frac{\beta_{2}^{2}}{R_{0,1}R_{0,2}}
		 \begin{bmatrix} \frac{1}{S^{\star}} \\ 0 \\ -\frac{1}{N^{\star}} \end{bmatrix},
\]
 provided $R_{0,2} > R_{0,1}$ (and thus $\Lambda > 0$). 
 
 Now, $A$ is a stable matrix and thus invertible, so 
 \[
 	\int_{0}^{s} e^{u A}\, du = A^{-1} (\bm{I} - e^{sA}),
\]
and the latter integral reduces to
\begin{align*}
	\int_{0}^{\infty} e^{-\Lambda s} \bm{v}^{\top} A^{-1} (\bm{I} - e^{sA}) \bm{e}(0)\, ds
	&= \frac{\bm{v}^{\top} A^{-1} \bm{e}(0)}{\Lambda}
	- \bm{v}^{\top} \left(\int_{0}^{\infty} e^{-s(\Lambda\bm{I} - A)}\, ds\right)  \bm{e}(0)\\
	&= \frac{\bm{v}^{\top} A^{-1} \bm{e}(0)}{\Lambda}
	- \bm{v}^{\top} A^{-1} (\Lambda\bm{I} - A)^{-1} \bm{e}(0),
\end{align*}
where we note that the inverse exists, as $\Lambda$, which is positive, is not an eigenvalue of $A$.  Evaluating the latter yields
\eqref{NONEQ}.

\putbib
\end{bibunit}

\end{document}